\documentclass[a4paper]{article}

\usepackage{tikzit}

\definecolor{mp}{RGB}{240,240,240}

\definecolor{pinkish}{RGB}{180,140,240}

\tikzstyle{node}=[fill=mp, text centered, text depth = 1 em, text height = 1.4em, draw=black, shape=rectangle, tikzit shape=rectangle,
rounded corners=2ex,  align=center]
\tikzstyle{circle}=[fill=mp, draw=black, shape=circle]
\tikzstyle{her}=[fill=pinkish, text centered, text depth = 1 em, text height = 1.4em, draw=black, shape=rectangle, tikzit shape=rectangle,
rounded corners=2ex,  align=center]

\tikzstyle{implies}=[->, thick]
\tikzstyle{implies-mt}=[->, thick, draw=pinkish]
\tikzstyle{himplies}=[->, thick, dash pattern={on 7pt off 1pt on 1pt off 1pt}]
\tikzstyle{trivial}=[->,thick, densely dotted ]
\tikzstyle{if2}=[<->, thick]

\usepackage[english]{babel}
\usepackage{hyperref}
\usepackage{todonotes}
\usepackage{amsmath, amsfonts,xfrac, amsthm,amssymb,fancyhdr}

\usepackage{mathtools}

\usepackage{setspace}

\usepackage{geometry}
\usepackage{xypic,colortbl}

\usepackage{ifthen, sidecap, wrapfig, xspace}

\usepackage{mathrsfs, stmaryrd}

\usepackage{sectsty}

\usepackage{
 calc, enumitem, makeidx,bbold
}

\usepackage[font=small,format=plain,labelfont=sc,textfont=it]{caption}
\usepackage[T1]{fontenc}
\usepackage[utf8]{inputenc}
\usepackage[all]{xy}
\usepackage[geometry]{ifsym}

\usepackage{palatino}
\usepackage{mathpazo}
\usepackage{comment}

\newcommand{\Oof}{\mathcal{O}}
\renewcommand{\preceq}{\preccurlyeq}
\newcommand{\lowerbound}{\lfloor \frac n 3\rfloor!}

\newcommand{\ind}[1][]{%
  \mathrel{
    \mathop{
      \vcenter{
        \hbox{\oalign{\noalign{\kern-.3ex}\hfil$\vert$\hfil\cr
              \noalign{\kern-1.5ex}
              $\vert$\cr\noalign{\kern-.3ex}}}
      }
    }\displaylimits_{#1}
  }
}

\newcommand{\nind}[1][]{%
  \mathrel{
    \mathop{
      \vcenter{
        \hbox{\oalign{\noalign{\kern-.3ex}\hfil$\nmid$\hfil\cr
              \noalign{\kern-1.5ex}
              $\vert$\cr\noalign{\kern-.3ex}              
              }}
      }
    }\displaylimits_{#1}
  }
}

\newcommand{\St}[1][]{\mathrm{S}^{#1}}
\newcommand{\Types}[1][]{\mathrm{Types}^{#1}}

\linespread{1.05}
\newcommand{\from}{\colon}
\newcommand{\str}[1]{\mathrm{#1}}
\renewcommand{\cal}[1]{\mathcal {#1}}
\newcommand{\CC}{\mathscr C}
\newcommand{\DD}{\mathscr D}
\renewcommand{\le}{\leqslant}
\renewcommand{\ge}{\geqslant}
\renewcommand{\phi}{\varphi}

	


\newcommand{\mathsym}[1]{{}}




{\begin{figure}[#1]
\centering 
\includegraphics[scale=#3]{#2}
}
{\end{figure}}

\newlist{enumeratep}{enumerate}{10}
\setlist[enumeratep]{label=\quad\textit{\arabic*'.},ref=\arabic*',leftmargin=*}




\newenvironment{romanlist'}[0]
{\begin{list}{\makebox[0.5cm][l]{\textit{\roman{enumi}')}}}{\usecounter{enumi}}}
{\end{list}}


\newcommand{\savelabel}[2]{\expandafter\newtoks\csname#1\endcsname
  \global\csname#1\endcsname={#2} \label{#1} #2}
\newcommand{\loadlabel}[1]{\noindent {\bf Lemma~\ref{#1}. } \textit{\the\csname#1\endcsname}
\medskip

}
\renewcommand{\setminus}{-}

\newcommand{\loadlabelthm}[1]{\medskip\noindent {\bf Theorem~\ref{#1}. }
  \noindent  \textit{\the\csname#1\endcsname}
\medskip
}

\newcommand{\loadlabelprop}[1]{\noindent {\bf Proposition~\ref{#1}. }
  \noindent  \textit{\the\csname#1\endcsname}
\medskip

}

\newcommand{\R}{\mathbb{R}}

\newcommand{\Z}{\mathbb{Z}}
\newcommand{\N}{\mathbb{N}}




\renewcommand{\subset}{\subseteq}





\newcommand{\atleast}[1]{{\ge n}}
\newcommand{\less}[1]{{<n}}


	\newcommand{\notacol}[2]{}




\newsavebox{\quoteitbox}

{\hspace*{\fill}{\upshape(\usebox{\quoteitbox})}\end{quote}%
\end{sloppypar}\noindent\ignorespacesafterend}

\newenvironment{quoteit*} 
{\begin{sloppypar}\noindent\slshape\begin{quote}\itshape} 
	{\end{quote}\ignorespaces\end{sloppypar}\noindent\ignorespacesafterend}

\newenvironment{quotetag*}
{~\par
	\begingroup                  
	\begin{equation*}
		 \begin{minipage}[c]{115mm}
			\it\noindent{\par}
}
{
		\end{minipage}
	\end{equation*}
	\endgroup                        
\par
\textnormal
\medskip
}


\newcommand{\Ll}{{\mathcal L}}

\newcommand{\Pp}{{\mathcal P}}

\newcommand{\Rr}{{\mathcal R}}

\newcommand\set[1]{\ensuremath{\{#1\}}}
\newcommand{\setof}[2]{\set{#1\mid#2}}

\DeclareMathOperator{\tp}{tp}
\DeclareMathOperator{\atp}{atp}
\newcommand{\tup}[1]{\bar{#1}}


\newtheoremstyle{theoremstyle}
  {3pt}
  {3pt}
  {\itshape}
  {0pt}
  {\bfseries}
  {.}
  {4pt}
  {}

\DeclareMathOperator{\Av}{Av}

\theoremstyle{theoremstyle}
\newtheorem{theorem}{Theorem}[section]

\newtheorem*{theorem*}{Theorem}

\newtheorem{lemma}[theorem]{Lemma}
\newtheorem*{lemma*}{Lemma}
\newtheorem{corollary}[theorem]{Corollary}
\newtheorem{proposition}[theorem]{Proposition}
\newtheorem*{proposition*}{Proposition}
\newtheorem{claim}{Claim}[section]
\newtheorem*{claim*}{Claim}

\newtheorem{fact}{Fact}

\newtheoremstyle{remarkstyle}
  {3pt}
  {10pt}
  {}
  {0pt}
  {\itshape}
  {}
  {4pt}
  {\thmname{#1}\thmnumber{ #2}\thmnote{ (#3)}.}

\theoremstyle{remarkstyle}
\newtheorem{example}{Example}[section]

\newtheorem{remark}{Remark}[section]

\newtheoremstyle{definitionstyle}
  {3pt}
  {3pt}
  {}
  {0pt}
  {\bfseries}
  {}
  {4pt}
  {\thmname{#1}\thmnumber{ #2}\thmnote{ (#3)}.}

\theoremstyle{definitionstyle}
\newtheorem{definition}{Definition}

\numberwithin{equation}{section}






\newlength{\wideaslength}



\renewcommand{\subset}{\subseteq}

\newcommand{\seta}[1]{}

\def\lsim{\mathrel{\rlap{\lower4pt\hbox{\hskip1pt$\sim$}}
    \raise1pt\hbox{$<$}}}                

\definecolor{gray1}{rgb}{0.99,0.99,0.99}
\definecolor{gray2}{rgb}{0.97,0.97,0.97}
\definecolor{gray3}{rgb}{0.95,0.95,0.95}
\definecolor{gray4}{rgb}{0.93,0.93,0.93}
\definecolor{gray5}{rgb}{0.91,0.91,0.91}
\definecolor{gray6}{rgb}{0.89,0.89,0.89}
\definecolor{gray7}{rgb}{0.87,0.87,0.87}
\definecolor{gray8}{rgb}{0.85,0.85,0.85}
\definecolor{gray9}{rgb}{0.83,0.83,0.83}
\definecolor{gray10}{rgb}{0.81,0.81,0.81}
\definecolor{gray20}{rgb}{0.71,0.71,0.71}
\definecolor{gray40}{rgb}{0.51,0.51,0.51}

\DeclareMathOperator{\tww}{tww}

\title{Ordered graphs of bounded twin-width}
\author{Pierre Simon\quad and\quad Szymon Toruńczyk}

\renewcommand{\todo}[1]{}
\begin{document}
\maketitle
\begin{abstract}
    We consider hereditary classes of graphs equipped with a total order. We provide multiple equivalent characterisations of those classes which have bounded twin-width. In particular, we prove that those are exactly the classes which avoid certain large grid-like structures and induced substructures. From this we derive that
    the model-checking problem for first-order logic is fixed-parameter tractable over a hereditary class of ordered graphs if, and -- under common complexity-theoretic assumptions -- only if the class has bounded twin-width. We also show that bounded twin-width is equivalent to the NIP property from model theory, as well as the smallness condition from enumerative combinatorics. We prove the existence of a gap in the growth of hereditary classes of ordered graphs.
    Furthermore, we prove a grid theorem which applies to all monadically NIP classes of structures (ordered or unordered), or equivalently, classes which do not transduce the class of all finite graphs.
     \end{abstract}

\section{Introduction}\label{sec:intro}

The recently introduced notion of twin-width~\cite{tww1,tww2,tww3}
is a graph width parameter with remarkable properties.
It measures how well a given graph can be recursively decomposed into parts which have  simple interactions with each other (see Sec.~\ref{sec:prelims} for a definition).
%
The notion generalizes to arbitrary relational structures equipped with unary and binary relations.

Many well-studied classes have bounded twin-width: the class of planar graphs, and more generally, any class of graphs excluding a fixed minor; the class of cographs and more generally, any class of bounded cliquewidth; posets of bounded width; and  classes of permutations (viewed as sets with two total orders) omitting a fixed permutation as an induced substructure. 
Moreover, classes of bounded twin-width enjoy good  properties of combinatorial, algorithmic, and logical nature.
For instance, classes of bounded twin-width are closed under first-order transductions,  are small
(contain $n!\cdot 2^{\Oof(n)}$ distinct labelled graphs on $n$ vertices), and are $\chi$-bounded (the chromatic number can be bounded in terms of the clique number) \cite{tww2}.
Furthermore, it is shown that model-checking first-order logic is fixed-parameter tractable on classes of bounded twin-width, assuming the input graph $G$ is given \emph{together with a certificate} of having bounded twin-width (a certain sequence of operations). 
More precisely, given  a first-order sentence $\phi$, a graph $G$, and a certificate that $G$ has twin-width at most $d$, there is an algorithm which determines whether $G$ satisfies $\phi$ in time $f(\phi,d)\cdot |V(G)|^c$ for some computable function $f\from\N\times\N\to\N$ and fixed constant $c\in\N$.

For each of the classes $\CC$ mentioned above there is an algorithm which, given a graph $G\in \CC$, computes 
some certificate that $G$ has  twin-width bounded by a constant, in polynomial time~\cite{tww1}.
Hence, for each of those classes $\CC$, model-checking first-order logic is fixed-parameter tractable, generalizing many previous results.

The appropriate certificate of having bounded twin-width is usually obtained from a suitable ordering of the vertices.
The adjacency matrix of the graph with respect to this order should be simple in a certain sense. The existence of such an order is essential in proving results about classes of bounded twin-width, as well as obtaining efficient algorithms.
This suggests that \emph{ordered graphs} of bounded twin-width are the more fundamental object.

\subsection*{Main result}\label{sec:newmain}
We solve a number of problems which are open for graphs  of bounded twin-width, 
in the case of ordered graphs of bounded twin-width. Among other things, we show that if a class $\CC$ of ordered graphs has bounded twin-width, then for each $G\in\CC$, a certificate that $G$ has twin-width bounded by a constant can be computed in polynomial time.
Consequently, model-checking is fixed-parameter tractable on $\CC$. We also prove that the converse holds, under common complexity-theoretic assumptions.

 More importantly, we give multiple characterisations of classes 
of ordered graphs of bounded twin-width, connecting notions from various areas of mathematics and theoretical computer science, and solving several open problems on the way.

The most tangible characterisation is in terms of certain forbidden substructures, dubbed \emph{semigrids}.
Those are two-dimensional variations of  half-graphs, matchings, and complements of matchings which are depicted in Fig.~\ref{fig:halfgraphs}.
Say that 
two sets $X,Y$ of vertices of an ordered graph $G$ \emph{form an $R$-graph}, where $R\in \set{\le,\ge,=,\neq}$,
if the $i$th smallest element in $X$ is adjacent to the $j$th smallest element in $Y$ if and only if $i R j$, for all $1\le i,j\le |X|$, see Fig.~\ref{fig:halfgraphs}. 
\begin{figure}[h!]
	\centering
	\includegraphics[scale=0.45,page=3]{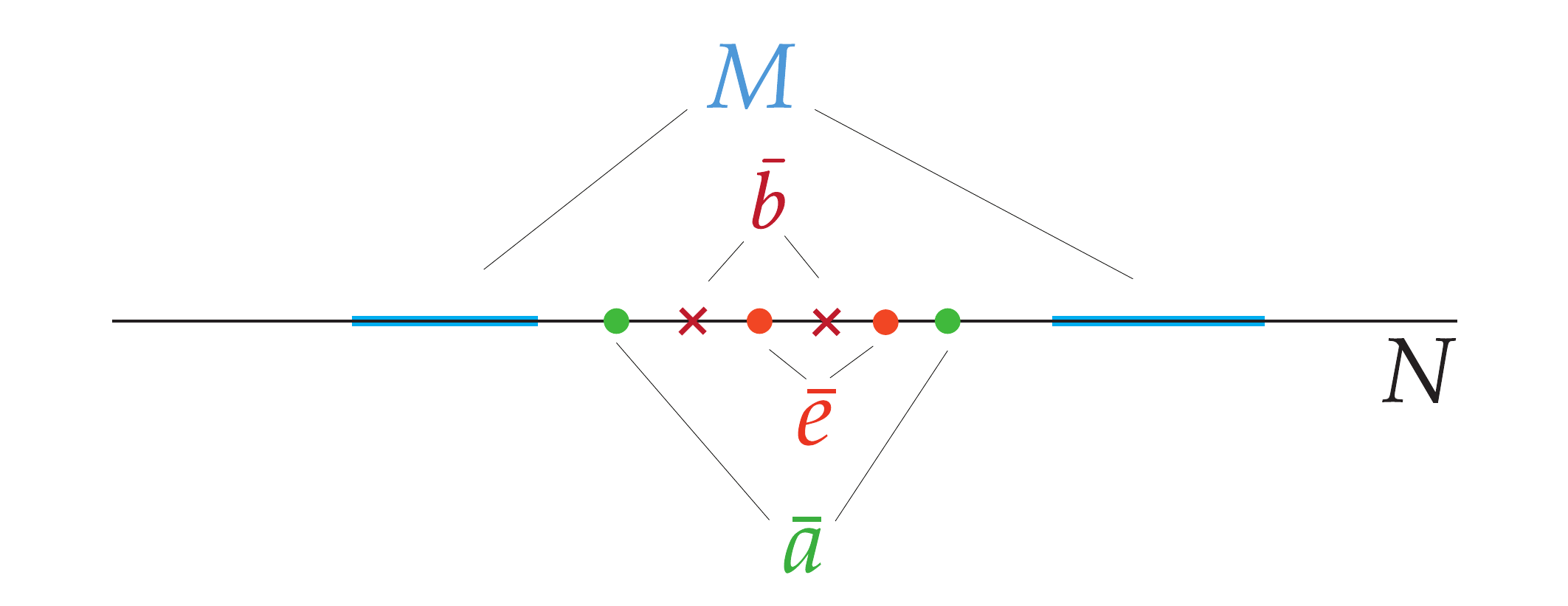}
	\caption{A $\le$-graph, a $\ge$-graph, a $=$-graph, and a $\neq$-graph formed by sets $X$ (six vertices on the top) and $Y$ (six vertices on the bottom), where $X$ and $Y$  are both ordered from left to right. Note that the edges within $X$ and within $Y$ may be arbitrary.
	}\label{fig:halfgraphs}
\end{figure}

An $m\times n$ \emph{semigrid of type $R\in\set{\le,\ge,=,\neq}$} is an ordered graph 
  $G=(V,E,\le)$ whose domain $V$ can be 
  partitioned 
into $m+1$ disjoint intervals (with respect to the order $\le$ on $V$) $I_0,I_1,\ldots,I_m$ with  $n+1$ elements each and $I_1<\ldots<I_m$, such that $I_0$ and $I_i$ form an $R$-graph, for $i=1,\ldots,m$
(cf. Fig.~\ref{fig:semigrids}).
 \begin{figure}[h!]
	\centering
	\includegraphics[scale=0.5,page=6]{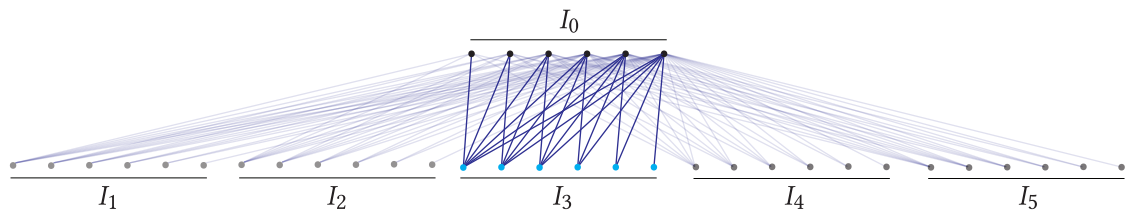}
	\caption{A $5\times 5$-semigrid 
	of type $\ge$. The vertices are ordered left to right in each interval.
	}\label{fig:semigrids}
\end{figure}
Note that  there are many $m\times n$-semigrids of a given type,
as the above specification is not complete:
 the adjacencies within $I_1\cup\cdots\cup I_m$, as well as the adjacencies within $I_0$, and also the relative order between the interval $I_0$ and  the intervals $I_1,\ldots,I_m$, are all left unspecified.
 However, using a Ramsey argument one can prove that  a large \emph{regular} $n\times n$-semigrid can be found as an induced substructure of a sufficiently large semigrid $n'\times n'$ (see Section~\ref{sec:bad} for a definition). 
 In particular, a regular $m\times n$-semigrid is uniquely specified by the dimensions $m,n\ge 1$ and one of $256$ \emph{schemes}.\todo{example}

 Another characterisation is in terms of (simple first-order) \emph{interpretations}.
 Interpretations are a means of producing new structures out of old ones, using formulas. The new structure has the same domain as the old one (or its subset defined by a formula $\delta(x)$)
while each of its relations is defined by a formula $\phi(x,y)$ interpreted in the old structure.  For example, there is an interpretation which transforms a given graph $G$ into its edge-complement (using the formula $\neg E(x,y)$), and an interpretation which transforms $G$ into its square (using the formula $\exists z.E(x,z)\land E(z,y)$).
Transductions are a similar notion, but additionally allow to arbitrarily color the old structure before applying the interpretation and then use the colors in the formulas. 
Say that $\CC$ \emph{interprets}
the class of all graphs if 
there is an interpretation $I$ such that every  (finite) graph $G$ can be obtained as the result of $I$ applied to some structure in $\CC$. Replacing interpretations with transductions, we say that $\CC$ \emph{transduces} the class of all graphs.
It is known that no class of bounded twin-width transduces all graphs.




\medskip




 We may now state our main result, concerning classes of ordered graphs which are \emph{hereditary}, that is, closed under taking induced substructures.
 Among others, it provides a dichotomy result for all such classes: either they have bounded twin-width, and are therefore very well-behaved, or otherwise, they contain arbitrarily large regular $n\times n$-semigrids,
 and are then untamable.

\begin{theorem}\label{thm:intro}
	The following conditions are equivalent for a hereditary class $\CC$ of finite, ordered graphs:
	\begin{enumerate}
		\item\label{iit:tww} $\CC$ has bounded twin-width,
		\item\label{iit:semigrids} $\CC$ does not contain arbitrarily large  regular $n\times n$-semigrids,				
		\item\label{iit:inter} $\CC$ does not interpret the class of all graphs,
		 \item\label{iit:trans} $\CC$ does not transduce the class of all graphs,
		\item\label{iit:fpt} model-checking first-order logic is fixed-parameter tractable on $\CC$ (assuming $\mathrm{FPT\neq AW[\ast]}$),
		\item\label{iit:small} $\CC$ contains at most  $2^{\Oof(n)}$   structures with $n$ elements, up to isomorphism,
		\item\label{iit:not-big}$\CC$ contains fewer than   $\lowerbound$ structures with $n$ elements, up to isomorphism,
		\item \label{iit:simple}	there are $k,t\in\N$ such that for all $G\in\CC$, 
		if the  adjacency matrix of $G$ 
		is cut into $t^2$ zones using $t-1$ vertical and $t-1$ horizontal lines, then there is  a zone with no more than $k$ non-identical rows and no more than $k$ non-identical columns.
	\end{enumerate}
\end{theorem}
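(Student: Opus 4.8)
The plan is to prove the equivalences by establishing a cycle of implications supplemented by a few side-implications, with the combinatorial core being the equivalence of bounded twin-width with the absence of large regular semigrids. I would organize the argument around condition \eqref{iit:simple}, the "grid rank" / mixed-minor-freeness style statement, as the technical hub. First I would show \eqref{iit:tww}$\Rightarrow$\eqref{iit:simple}: a graph of bounded twin-width admits a contraction sequence that, read off in the given vertex order, yields a division of the adjacency matrix whose parts are each constant or nearly so; formally, bounded twin-width implies that the adjacency matrix has bounded "grid number" in the sense of being $O(1)$-mixed-free, and cutting into $t^2$ equal zones for $t$ large forces some zone to be homogeneous up to $k$ exceptional rows and columns. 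Conversely \eqref{iit:simple}$\Rightarrow$\eqref{iit:tww} is the hard direction: from the uniformly bounded local complexity one must actually build a contraction sequence of bounded width. I expect this to require a recursive divide-and-conquer on intervals, using the zone of bounded row/column rank to merge almost all vertices of that zone in a controlled way, then recursing; the subtlety is that the "exceptional" $k$ rows/columns and the zones that are \emph{not} simple must be handled by induction on the number of vertices, and one needs a quantitative bound showing the width stays bounded throughout — this is where a careful potential/measure argument (à la the proof that bounded grid rank implies bounded twin-width for ordered structures) is needed.

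Next I would close the combinatorial loop with semigrids: \eqref{iit:semigrids} is equivalent to \eqref{iit:simple}. For \emph{not} \eqref{iit:semigrids}$\Rightarrow$\emph{not} \eqref{iit:simple}, a large regular $n\times n$-semigrid, by its very definition via $R$-graphs formed by an interval $I_0$ against many intervals $I_i$, forces every coarse cutting of its adjacency matrix into $t^2$ zones to contain a zone meeting $I_0$ and some $I_i$ in many vertices, where the $R$-pattern ($\le,\ge,=,\neq$) exhibits $\Omega(n/t)$ distinct rows and columns; choosing $n \gg t^2 k$ defeats \eqref{iit:simple}. For the converse, if \eqref{iit:simple} fails then for every $t,k$ there is $G\in\CC$ and, after any equipartition into $t^2$ zones, every zone has more than $k$ distinct rows or columns; pushing $t,k\to\infty$ and applying a Ramsey-type extraction (as promised in the excerpt, a large regular semigrid embeds as an induced substructure of a sufficiently large semigrid) one pulls out, inside some zone, a clean half-graph/matching/co-matching pattern in both coordinates, i.e. a large regular semigrid of one of the $256$ schemes; heredity of $\CC$ then places it in $\CC$.

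The remaining implications are comparatively standard and I would slot them in as follows. \eqref{iit:tww}$\Rightarrow$\eqref{iit:small} is quoted from \cite{tww2} (bounded twin-width classes are small, even $n!\,2^{O(n)}$, and for \emph{ordered} graphs the order is already part of the structure so one gets $2^{O(n)}$); \eqref{iit:small}$\Rightarrow$\eqref{iit:not-big} is immediate since $2^{O(n)} < \lowerbound$ eventually; and \eqref{iit:not-big}$\Rightarrow$\eqref{iit:semigrids} is the contrapositive of a counting lemma — a class containing arbitrarily large regular $n\times n$-semigrids of a fixed scheme contains, by varying the unspecified adjacencies inside $I_1\cup\cdots\cup I_m$, at least $\lfloor n/3\rfloor!$ pairwise non-isomorphic structures on $O(n)$ elements (the factorial arising from encoding an arbitrary permutation/ordering in the free part), so it is not small in the strong sense. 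For the logic side, \eqref{iit:tww}$\Leftrightarrow$\eqref{iit:trans} is known (no bounded-twin-width class transduces all graphs, and conversely for ordered graphs the failure of bounded twin-width yields semigrids from which a transduction of all graphs is built — via the regular semigrid encoding an arbitrary bipartite graph in $I_0$-vs-$I_i$ adjacencies); \eqref{iit:inter} sits between \eqref{iit:trans} (trivially, interpreting implies transducing so \eqref{iit:trans}$\Rightarrow$\eqref{iit:inter}) and the reverse which again comes from the semigrid construction being realizable by a bare interpretation without colors, since the order is available for free; and \eqref{iit:fpt} follows from \eqref{iit:tww} by the algorithm of \cite{tww1} once we have shown (as in the Main result paragraph) that a certificate is computable in polynomial time for ordered graphs, while \emph{not} \eqref{iit:tww}$\Rightarrow$\emph{not} \eqref{iit:fpt} uses the semigrid / all-graphs interpretation together with the $W[1]$-hardness of FO model-checking on all graphs and the $\mathrm{FPT}\neq\mathrm{AW}[\ast]$ assumption. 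The single genuinely hard step is \eqref{iit:simple}$\Rightarrow$\eqref{iit:tww}, i.e. converting uniformly bounded zone complexity into an actual bounded-width contraction sequence; everything else is either cited or a Ramsey/counting/encoding argument exploiting that the linear order is part of the signature.
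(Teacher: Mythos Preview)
Your overall architecture is sound, and you correctly identify \eqref{iit:simple}$\Rightarrow$\eqref{iit:tww} as a hard combinatorial step (this is the paper's Theorem~\ref{thm:k-t-simple}, proved via Marcus--Tardos much as you sketch). The side implications \eqref{iit:tww}$\Rightarrow$\eqref{iit:small}$\Rightarrow$\eqref{iit:not-big}, \eqref{iit:tww}$\Rightarrow$\eqref{iit:trans}$\Rightarrow$\eqref{iit:inter}, and the two FPT directions are handled essentially as you say, and $\neg$\eqref{iit:semigrids}$\Rightarrow\neg$\eqref{iit:simple} is indeed easy.

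There is, however, a genuine gap in your route $\neg$\eqref{iit:simple}$\Rightarrow\neg$\eqref{iit:semigrids}. You propose: from ``every zone has more than $k$ distinct rows or columns'', push $k,t\to\infty$ and extract via Ramsey a clean $R$-graph pattern inside some zone, hence a semigrid. This does not work as stated. Many distinct rows in a submatrix carry no usable bipartite structure between two \emph{intervals}: the rows may differ in an unstructured, diffuse way, and Ramsey applied to a single zone cannot manufacture a half-graph or matching aligned with the order. The Ramsey statement you quote from the introduction regularises an \emph{already-found} semigrid; it does not produce one from mixed zones. The paper explicitly remarks that the grid-like obstructions coming out of Marcus--Tardos-type arguments are ``not sufficiently structured to allow to define the `same row' and `same column' relations''.

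The paper closes this gap by a model-theoretic detour, which is the real content of the proof and which your plan does not anticipate. From $\neg$\eqref{iit:simple} one passes, via compactness and Morley sequences over a model $\str M$ in the elementary closure of $\CC$, to a failure of \emph{1-dimensionality}: tuples $\tup a,\tup b$ with $\tup a\ind[\str M]\tup b$ and a single element $c$ such that both $\tup a c\nind[\str M]\tup b$ and $\tup a\nind[\str M]\tup b c$. A lemma essentially due to Shelah (Proposition~\ref{prop:fs dichotomy}) then shows that this configuration forces some quantifier-free formula $\phi(\tup x,\tup y,z)$ to \emph{define large grids}: triples $(A,B,C)$ with $\phi$ the graph of a bijection $A\times B\to C$. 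Only once one has this definable two-dimensional coordinate system does Ramsey enter, namely the product Ramsey theorem for two orders (Lemma~\ref{lem:grid-ramsey}), which homogenises the grid so that atomic types of pairs depend only on coordinate-wise order types; a regular semigrid is then read off as an induced substructure. The intermediate ``defines large grids'' step is not optional in this approach: it is precisely what converts diffuse row/column complexity into a product structure on which product Ramsey can act. (A purely finitary combinatorial route does exist, cf.\ \cite{bonnet2021twinwidth}, but it is substantially more involved than your sketch suggests and is not the argument given here.)
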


The above result connects notions from logic, enumerative combinatorics, parameterized complexity, graph theory and matrix theory.
 However, at the core of our approach are 
 tools and ideas originating from \emph{model theory}.
 
 As our second main result, we provide further characterisations of bounded twin-width classes in terms of notions which originate from model theory, but which also transpire in algorithmic and structural graph theory.
We prove that  generalizations of the conditions~\eqref{iit:semigrids},~\eqref{iit:inter},~and \eqref{iit:simple} hold for arbitrary \emph{monadically NIP classes} $\CC$ of relational structures.
Those can be equivalently characterised as classes (of finite or infinite structures) which do not transduce the class of all finite graphs.
They include all graph classes of bounded twin-width (with or without an order), but also all transductions of nowhere-dense classes (see below).\todo{and much more}

The following theorem is a vast generalization of 
some of the key implications in Theorem~\ref{thm:intro}.
	
\begin{theorem}\label{thm:summary-mt0}
	For any class of structures $\CC$, consider the following statements:
	\begin{enumerate}\setcounter{enumi}8
		\item\label{qit:trans} $\CC$ does not transduce the class of all graphs,
		\item\label{qit:mNIP} $\CC$ is monadically NIP,
		\item\label{qit:grids} $\CC$ does not {define large grids} (cf. Def.~\ref{def:grids}),
		\item\label{qit:1-dim} $\CC$ is $1$-dimensional (cf. Def.~\ref{def:1-dim}),
		\item\label{qit:regular} $\CC$ is a {regular} class (cf. Def.~\ref{def:regular}).
	\end{enumerate}
    Then the implications $\eqref{qit:trans}\leftrightarrow\eqref{qit:mNIP}\rightarrow\eqref{qit:grids}\rightarrow\eqref{qit:1-dim}\rightarrow\eqref{qit:regular}$ hold.
	For classes of binary, ordered structures, the above conditions are all equivalent to:
	\begin{enumerate}[resume]		
		\item the class of all finite induced substructures of structures in $\CC$ has bounded twin-width,
		\item the class of all finite induced substructures of structures in $\CC$ is NIP.
	\end{enumerate}
\end{theorem}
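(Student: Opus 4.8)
The plan is to establish Theorem~\ref{thm:summary-mt0} by proving the chain of implications among (\ref{qit:trans})--(\ref{qit:regular}) for arbitrary classes, and then closing the loop back to twin-width and NIP in the binary ordered case. I would organize the proof into three layers: the general implications, the specialization to ordered binary structures, and the collapse via Theorem~\ref{thm:intro}.

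\textbf{The general implications.} For $\eqref{qit:trans}\leftrightarrow\eqref{qit:mNIP}$: one direction is a folklore consequence of the fact that the class of all finite graphs is not NIP (it has independence property witnessed by the incidence structure between vertices and subsets), and transductions preserve NIP; conversely, if $\CC$ is not monadically NIP then by a characterisation of monadic NIP (Baldwin--Shelah / Braunfeld--Laskowski style results) some monadic expansion of $\CC$ has the independence property, and using the order (or a pairing function) one can arrange a transduction producing all finite graphs — this is essentially Theorem~\ref{thm:summary-mt0} restricted to the first two items and should already be available in the literature on monadically NIP classes, so I would cite it. For $\eqref{qit:mNIP}\rightarrow\eqref{qit:grids}$: if $\CC$ defines large grids in the sense of Definition~\ref{def:grids}, then the grid-like definable structure encodes, via a single formula with parameters, an unbounded family of configurations that one can use to interpret (after coloring) arbitrarily large bipartite graphs, hence $\CC$ transduces all graphs and is not monadically NIP; I would prove the contrapositive, extracting from a large definable grid a set of rows and columns on which the defining formula behaves like a free bipartite pattern, using a Ramsey argument to regularize. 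For $\eqref{qit:grids}\rightarrow\eqref{qit:1-dim}$ and $\eqref{qit:1-dim}\rightarrow\eqref{qit:regular}$: these should be the ``easy'' structural implications, where not defining large grids forces every definable family of subsets to be covered by few one-dimensional (linearly ordered by inclusion, up to bounded error) families, which in turn yields regularity of the class; these are definitional unwindings plus compactness/Ramsey bookkeeping.

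\textbf{The ordered binary case and the collapse.} Here the key new content is that, for binary ordered structures, regularity (or $1$-dimensionality) already implies bounded twin-width of the hereditary closure. I would prove this by showing that the negation of condition~\eqref{iit:simple} of Theorem~\ref{thm:intro} — i.e. for every $k,t$ some $G\in\CC$ whose adjacency matrix, cut into $t^2$ zones, has a zone with more than $k$ non-identical rows and more than $k$ non-identical columns — can be bootstrapped, via a Ramsey/amalgamation argument on the zone structure, to produce arbitrarily large regular $n\times n$-semigrids inside the hereditary closure, contradicting regularity/$1$-dimensionality. Then, invoking Theorem~\ref{thm:intro}, the hereditary closure of $\CC$ has bounded twin-width, which by the same theorem is equivalent to its being NIP (small, etc.), giving the last two listed equivalences; and bounded twin-width of a class that contains $\CC$ cofinally (in the sense of finite induced substructures) immediately feeds back to give $\eqref{qit:regular}\rightarrow\eqref{qit:trans}$ for ordered binary classes, closing the cycle. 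The base equivalences $\eqref{qit:trans}\leftrightarrow\eqref{qit:mNIP}$ restricted to binary ordered structures then come for free from the general layer together with this collapse.

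\textbf{Main obstacle.} The hard part will be the step $\eqref{qit:regular}\Rightarrow$ (bounded twin-width of the hereditary closure) in the ordered binary setting — equivalently, manufacturing large regular semigrids from a failure of the matrix-simplicity condition~\eqref{iit:simple}. The difficulty is that the zone with many non-identical rows and columns gives a ``complicated'' submatrix, but turning complication in the sense of many distinct rows/columns into the highly structured semigrid pattern (one of the $256$ schemes) requires a substantial Ramsey-type extraction that must simultaneously control the order, the bipartition into intervals $I_0,I_1,\dots,I_m$, and the $R$-graph adjacency pattern between $I_0$ and each $I_i$; naively iterating Ramsey blows up the parameters, so the argument needs a careful inductive amalgamation (in the style of the proofs behind Theorem~\ref{thm:intro}) to keep the semigrid dimensions growing. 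Everything else — the preservation of NIP/monadic NIP under transductions, the definitional implications among grids, $1$-dimensionality, and regularity, and the final appeals to Theorem~\ref{thm:intro} — I expect to be routine or citable.
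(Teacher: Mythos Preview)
Your difficulty assessment is inverted, and this points to a genuine gap. You describe $\eqref{qit:grids}\rightarrow\eqref{qit:1-dim}\rightarrow\eqref{qit:regular}$ as ``easy structural implications'' and ``definitional unwindings plus compactness/Ramsey bookkeeping,'' but in the paper these are precisely the substantive model-theoretic content. In particular, $1$-dimensionality (Def.~\ref{def:1-dim}) is \emph{not} a statement about definable families being covered by chains under inclusion; it is a forking-style condition: for every $\str M\prec\str N$, tuples $\tup a,\tup b$ and singleton $c$, if $\tup a\ind[\str M]\tup b$ then $\tup ac\ind[\str M]\tup b$ or $\tup a\ind[\str M]\tup bc$. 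The implication $\eqref{qit:grids}\rightarrow\eqref{qit:1-dim}$ is (a version of) Shelah's lemma and requires building mutually indiscernible sequences from a failure of $1$-dimensionality to extract a grid. The implication $\eqref{qit:1-dim}\rightarrow\eqref{qit:regular}$ is the hardest step: from non-regularity one must construct, via compactness and Morley sequences, a model $\str M$ in the elementary closure together with tuples violating $1$-dimensionality; this is all of Section~\ref{sec:main-proof} and its appendix. Your sketch contains no indication of these arguments.

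Conversely, the step you flag as the ``main obstacle'' --- regularity $\Rightarrow$ bounded twin-width in the ordered binary case --- is in the paper nearly trivial in one direction and purely combinatorial in the other: regularity immediately gives $(k,t)$-simplicity by specializing $\phi,\psi$ to interval formulas $y_1\le x\le y_2$, and then $(k,t)$-simplicity $\Rightarrow$ bounded twin-width is Theorem~\ref{thm:k-t-simple}, a Marcus--Tardos style matrix argument with no semigrids in sight. The semigrid extraction runs in the \emph{opposite} direction (from quantifier-free grids to semigrids, Theorem~\ref{thm:grids}) and serves a different purpose in the cycle. Finally, your plan to invoke Theorem~\ref{thm:intro} to close the loop is circular: the proof of Theorem~\ref{thm:intro} goes through Theorem~\ref{thm:core}, which is exactly the quantifier-free specialization of $\eqref{qit:1-dim}\rightarrow\eqref{qit:regular}$ that you have not supplied.
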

The notion~\eqref{qit:grids} of defining large grids generalizes the notion~\eqref{iit:semigrids} of containing large semigrids 
to arbitrary  structures, while the notion~\eqref{qit:regular} of regularity 
generalizes  condition~\eqref{iit:simple} from Theorem~\ref{thm:intro}. In particular, those notions do not require the structures to be ordered, finite, or binary.
The notion of 1-dimensionality has a somewhat geometric flavor.
It is defined in terms of a variant of forking independence -- a central concept in stability theory, generalizing e.g. independence in vector spaces or algebraic independence.\todo{grids not 1-dim}

\medskip

Theorem~\ref{thm:summary-mt0} provides a key ingredient in our proof of Theorem~\ref{thm:intro} -- a \emph{grid theorem}.
More importantly,
we believe that it may be of independent interest, and possibly of broader applicability than 
just in the context of ordered structures. For example, by Theorem~\ref{thm:summary-mt0}, all graph classes of bounded twin-width (without an order) and all interpretations of \emph{nowhere-dense} classes~\cite{nevsetvril2011nowhere} are regular.

The implications~\eqref{qit:grids}$\rightarrow$\eqref{qit:1-dim}$\rightarrow$\eqref{qit:regular} are proved using model-theoretic tools. They yield a general grid theorem for classes that are not regular in the sense of \eqref{qit:regular}.

\medskip
Very roughly, the proof of Theorem~\ref{thm:intro} can be summarized as follows.
The goal is to show that if $\CC$ is a class of ordered graphs which has unbounded twin-width, then $\CC$ contains arbitrarily large regular semigrids. We do it in the following steps:
\begin{enumerate}
    \item  If a class $\CC$ of ordered graphs satisfies  condition~\eqref{iit:simple} then it has bounded twin-width. 
	This generalizes a result from~\cite{tww1}, and uses similar methods. 
    \item Suppose $\CC$ has unbounded twin-width; then it fails condition~\eqref{iit:simple} and in particular, is not regular. By Theorem~\ref{thm:summary-mt0}, $\CC$ defines large grids.
    \item Large grids are then made more structured using a Ramsey result in order to exhibit arbitrarily large regular semigrids in $\CC$.
\end{enumerate}
Once $\CC$ is known to have large regular semigrids,
 conditions~\eqref{iit:inter}-\eqref{iit:not-big} in Theorem~\ref{thm:intro} are easily shown to fail. 
 
Some parts of our proof are ineffective,
in particular, we use the compactness theorem for first-order logic. Quite remarkably, we are still able to derive effective bounds and algorithms. For instance, 
we get a polynomial-time algorithm approximating the twin-width of a given ordered graph $G$:
\begin{theorem}\label{thm:approx}    
    There is a computable function $f\from\N\to\N$ and an  algorithm which inputs an ordered graph $G$, and computes in  polynomial time  a certificate that $G$ has twin-width at most  $f(\textrm{twin-width}(G))$.
\end{theorem}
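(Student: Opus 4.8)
The plan is to combine the characterisation theorem (Theorem~\ref{thm:intro}) with the constructive step (1) of its proof and a self-reducibility argument. Concretely, the algorithm works as follows. Given an ordered graph $G$, we exhaustively search, for increasing values of a parameter $s=1,2,3,\ldots$, whether there is a partition of the adjacency matrix of $G$ into $s^2$ zones (by $s-1$ vertical and $s-1$ horizontal lines) such that some zone has at most $s$ non-identical rows and at most $s$ non-identical columns. This search is over all $\binom{|V(G)|-1}{s-1}^2$ choices of lines, and for each choice we can check the row/column condition in polynomial time; since $s$ is bounded by $|V(G)|$, the whole loop runs in polynomial time once we know it terminates at some $s$ depending only on $G$. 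When the smallest such $s$ is found, we invoke step~(1) of the proof of Theorem~\ref{thm:intro}: the construction there is effective, so from the witnessing zone decomposition it produces, in polynomial time, an actual certificate (contraction sequence) witnessing that $G$ has twin-width at most $g(s)$ for some fixed computable $g$.

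The correctness and the bound on $s$ come from the \emph{ineffective} direction. Let $d=\textrm{twin-width}(G)$ and let $\CC_d$ be the hereditary closure of the class of all ordered graphs of twin-width at most $d$ — this is a hereditary class of finite ordered graphs of bounded twin-width, so by Theorem~\ref{thm:intro}, condition~\eqref{iit:simple} holds for $\CC_d$: there are $k=k(d)$ and $t=t(d)$ such that every member, in particular $G$, admits a zone decomposition into $t^2$ zones with a zone having at most $k$ non-identical rows and columns. Taking $s=\max(k,t)$ (or, if $k\le t$ is not guaranteed, $s=k+t$), we see that the search above succeeds at some $s\le s(d)$, a value depending only on $d$. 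This bounds the number of iterations and hence the running time, and also bounds the quality of the certificate we output by $g(s(d))$; setting $f:=g\circ s$ gives the desired computable function. Note that $f$ is \emph{computable} even though the proof of Theorem~\ref{thm:intro} uses compactness: the theorem as a mathematical statement asserts the existence of $k(d),t(d)$, and although the proof does not directly furnish a bound, one extracts computability by a standard brute-force / ``search for a proof'' argument — or, more cleanly, because we never need to know $s(d)$ in advance: the algorithm simply increments $s$ until it succeeds, and Theorem~\ref{thm:intro} guarantees it eventually does.

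The main obstacle is precisely this last point: squaring ineffectivity in the proof of Theorem~\ref{thm:intro} with the demand that $f$ be computable. The resolution is that the algorithm itself is unconditionally correct and polynomial-time on \emph{every} input — it never consults $f$ — and terminates because \emph{some} finite $s$ works for $G$ (a purely existential fact, which compactness does deliver). The function $f(d)$ is then defined, externally, as the maximum over all ordered graphs of twin-width $d$ of the value $g(s_G)$ where $s_G$ is the $s$ at which the algorithm halts on input $G$; finiteness of this maximum is exactly condition~\eqref{iit:simple} for the bounded-twin-width class $\CC_d$, and computability follows since one can enumerate all proofs of instances of Theorem~\ref{thm:intro} to obtain, for each $d$, an explicit bound on $k(d),t(d)$. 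A secondary point to handle carefully is that the hereditary closure $\CC_d$ genuinely has bounded twin-width (so that Theorem~\ref{thm:intro} applies): this holds because twin-width is monotone under taking induced subgraphs, so $\CC_d$ is just the class of ordered graphs of twin-width at most $d$, which is trivially of bounded twin-width.
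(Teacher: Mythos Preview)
There are two substantive problems.

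First, you have misread condition~\eqref{iit:simple}: it asserts that for \emph{every} way of cutting the adjacency matrix into $t^2$ zones there is a simple zone, not merely that \emph{some} cut admits a simple zone. Your $\exists$-reading (``admits a zone decomposition into $t^2$ zones with a zone having at most $k$ non-identical rows and columns'') is trivially satisfied already at $s=2$: place the single horizontal and vertical lines after the first row and first column, and the top-left $1\times 1$ block has one row and one column. So your search halts at $s=2$ on every nontrivial input and outputs a constant bound, which is wrong. Relatedly, step~(1) of the proof --- the construction behind Theorem~\ref{thm:k-t-simple} and Corollary~\ref{cor:algo} --- does not turn a single ``witnessing zone decomposition'' into a contraction sequence; it builds the sequence greedily and invokes $(k,t)$-simplicity afresh at every merge step, each time for a different convex partition.

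Second, even if you repair the test to ``$G$ is $(s,s)$-simple'' and verify it by brute force over all $\binom{n-1}{s-1}^{2}$ line placements, this costs $n^{\Theta(s)}$. Since the halting value $s$ grows with $\tww(G)$, you only get an XP algorithm, not a uniformly polynomial one; that is too weak for the theorem and for Corollary~\ref{cor:fpt}. The paper sidesteps both issues by never testing simplicity at all. For $k=1,2,\ldots$ it runs the self-certifying algorithm of Corollary~\ref{cor:algo}, which in time polynomial in $|\str S|$ (uniformly in $k$) either outputs a contraction sequence of red-degree $2^{2^{O(k^2)}}$ or exhibits a $(k,k)$-mixed minor, and stops at the first success. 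Computability of $f$ is then established not via proof enumeration but through a concrete chain $\simplicity\preceq\maxsemigrid\preceq\tww$: the minimal non-$(k{-}1,k{-}1)$-simple structures have size $O(k^3)$ and are enumerable, each contains a regular semigrid whose size one can compute directly, and regular semigrids transduce all bipartite graphs, so their twin-width is computably unbounded.
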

 
 This, combined with the result of~\cite{bonnet2020twin} yields fixed-parameter tractability of the model-checking problem for classes of ordered graphs of bounded twin-width, proving the implication \eqref{iit:tww}$\rightarrow$\eqref{iit:fpt}. We also prove the converse implication, under the common complexity-theoretic assumption FPT$\neq$AW[$\ast$].

Theorem~\ref{thm:intro} is proved in the greater generality of ordered structures over an arbitrary binary signature.

 Our results generalize several known results, including the celebrated Stanley-Wilf conjecture about permutations avoiding a fixed permutation pattern. Moreover, we solve several open questions. This is discussed below.
\subsection*{Related work}

\paragraph{Enumerative combinatorics.}
Enumerative combinatorics is involved in counting the number of labelled 
or unlabelled structures in a given class of structures.
Marcus and Tardos proved~\cite{MARCUS2004153} the Stanley-Wilf conjecture,  that every class of permutations (viewed as sets equipped with two linear orders) avoiding a fixed permutation as a substructure  has growth $2^{\Oof(n)}$, that is, contains at most $2^{\Oof(n)}$ unlabelled structures with $n$ elements (up to isomorphism). In particular, a hereditary class of permutations either has growth $2^{\Oof(n)}$ or has factorial growth $n!$.
The Marcus-Tardos theorem is fundamental in  twin-width theory,
and underlies many of the arguments used in~\cite{tww1,tww2,tww3}, and also in the current paper. It is also an immediate consequence of the equivalence~\eqref{iit:small}$\leftrightarrow$\eqref{iit:semigrids} in Theorem~\ref{thm:intro},
as every permutation is a substructure of any sufficiently large regular semigrid (defined suitably for structures with two linear orders).

Our result proves the existence of a gap in the growth of hereditary classes of ordered graphs: either the class has exponential growth $2^{\Oof(n)}$ or, or otherwise it has at least factorial growth  $\lowerbound$. This generalizes previous results and answers a problem posed by Bolagh, Bollob\'as and Morris~\cite[Sec. 8]{balogh2006hereditary}.

In~\cite{tww2} it is shown that every class of bounded twin-width is small, 
that is, contains at most $n!2^{\Oof(n)}$ \emph{labelled} structures with $n$ elements (that is, with domain $\set{1,\ldots,n$}).
It has been also conjectured that the converse  holds for every hereditary class of graphs. Our equivalence~\eqref{iit:tww}$\leftrightarrow$\eqref{iit:small} in Theorem~\ref{thm:intro} confirms this conjecture in the case of ordered graphs.
Note that a class of ordered graphs is small if and only if it contains at most $2^{\Oof(n)}$ 
unlabelled structures of size $n$ (that is, up to isomorphism)
as any ordered structure yields exactly $n!$ distinct labelled structures.

\paragraph{Transductions and interpretations.} 
The study of transductions in theoretical computer science 
originates from the study of word-like and tree-like structures,
such as graphs of bounded treewidth~\cite{10.1007/3-540-19488-6_105} or graphs of bounded cliquewidth~\cite{COURCELLE199453}.

By results of~\cite{tww1}, classes of bounded twin-width are closed under transductions; in particular, no class of bounded twin-width transduces (nor interprets) the class of all graphs. 

The equivalence \eqref{iit:tww}$\leftrightarrow$\eqref{iit:trans}, characterising hereditary classes of ordered graphs of bounded twin-width 
as precisely those  which do not transduce the class of all graphs, 
is not unlike a result~\cite{COURCELLE200791}
characterizing classes of bounded cliquewidth as precisely those which do not transduce the class of all graphs via some transduction
of counting monadic second-order logic (CMSO,
an extension of first-order logic).


  Note that in our result we require the graphs to be ordered for the implication~\eqref{iit:trans}$\rightarrow$\eqref{iit:tww} to hold:
 the class $\CC$ of graphs of maximum degree $3$ has unbounded twin-width~\cite{tww2}, but  does not transduce the class of all graphs as it is nowhere dense (see below).

 Both the above result for cliquewidth and our results are obtained using appropriate grid theorems.

\paragraph{Grid theorems.}
Grid theorems are dichotomy results in structural graph  theory which state that 
either a structure has a small width with respect to a considered width parameter, or otherwise, a grid-like obstruction can be found in the structure.
For example, this applies to the treewidth parameter and planar grids occurring as minors~\cite{ROBERTSON198692}.  It also applies to cliquewidth and planar grids being definable in CMSO~\cite{COURCELLE200791}.
 As CMSO formulas can  define 
the relations of being in the same row/column in a planar grid, this allows to encode any graph in a subgraph of a sufficiently large grid, implying the result mentioned earlier about classes of bounded cliquewidth.

Our main result also proves an appropriate grid theorem for classes of ordered graphs of bounded twin-width,
as made precise by the notion of regular semigrids.
For such semigrids there is a first-order formula which defines the relation 
of being in the same interval $I_1,\ldots,I_n$ (`same row'), and the relation of having the same distance from the beginning of an interval (`same column').
From this it follows that if a hereditary class has unbounded twin-width then it interprets the class of all graphs.

There are other known grid theorems. The Marcus-Tardos result itself proves a result of this kind, stating that if a square $0-1$ matrix has sufficiently many $1$'s then it must contain a large grid-like pattern formed by $1$'s.
Finally,~\cite{tww1} use the Marcus-Tardos result to exhibit a grid-like structure in classes of unbounded twin-width
(as a subdivision of the adjacency matrix).
However, the grids obtained this way are not sufficiently structured to allow to define the `same row' and `same column' relations.
The large semigrids that we exhibit provide the ultimate grid theorem for classes of unbounded twin-width.

\paragraph{Monadic NIP.}
\todo{rework}
Model theory classifies typically infinite structures according to the combinatorial complexity of families of definable sets. This is usually done through the introduction of tameness properties. The most important such notion is that of stability. A structure is stable if no formula $\phi(x, y)$ encodes arbitrary large half-graphs ($\le$-graphs as in Fig.~\ref{fig:halfgraphs}),
which roughly means that there is no definable order on large subsets of the structure.
Stability captures the tameness properties of families of algebraic sets.
\todo{simple interpretations}
 A related, weaker, notion is that of NIP: a structure is NIP if every definable family of sets has finite VC-dimension. This captures the tameness properties of families of sets arising from geometric settings (for instance families of semi-algebraic sets of bounded complexity have finite VC-dimension).
 
  The notion of monadically NIP is a much stronger requirement which says that the structure is NIP even if every subset of the domain can be used as a unary predicate.
 
 This notion is closely related to notions which are studied in theoretical computer science and structural graph theory.

As mentioned, a class of structures $\CC$ is monadically NIP if and only if it does not transduce the class of all graphs.
Hence, Theorem~\ref{thm:intro} proves that a class  of ordered graphs is monadically NIP if, and only if it has bounded twin-width.

Examples of monadically NIP graph classes include all {nowhere dense} classes.
A class of graphs  $\CC$ is \emph{nowhere dense}
if for all $r\in\N$ there is some $t\in\N$ such that the $r$-subdivision of the $t$-clique is not a subgraph of any graph in~$\CC$.
Examples include the class of graphs with maximum degree bounded by a constant (those classes have unbounded twin-width~\cite{tww2}),
as well as every proper minor-closed graph class (here the twin-width is bounded).
A subgraph-closed class of graphs  is nowhere dense if and only if it is monadically NIP~\cite{adler2014interpreting}.

Monadically NIP classes are closed under transductions, so any transduction of a nowhere dense graph class is also monadically NIP,
but not necessarily nowhere dense. 
 
Theorem~\ref{thm:summary-mt0} provides a grid theorem for any monadically NIP  class of structures -- not necessarily ordered, binary, nor finite.
\todo{about stability theory}

\paragraph{Tractability of model-checking.}
Testing if a given sentence $\phi$ of first-order logic holds in a given graph $G$ 
takes time $\Oof(|V(G)|^{|\phi|})$ using the 
naive algorithm, and it is conjectured that the exponential dependency on $|\phi|$ cannot be avoided. More precisely,
it is conjectured that model-checking first-order logic is not fixed-parameter tractable on the class of all graphs, which is equivalent to the conjecture FPT$\neq$AW[$\ast$] from parameterized complexity theory~\cite{10.5555/1121738}.

There are several known classes $\CC$ of structures  for which model-checking first-order logic is fixed-parameter tractable.
To the best of our knowledge, all known tractable hereditary  classes
are monadically NIP\footnote{Tractable classes that are not hereditary include for example the class of all finite abelian groups~\cite{bova2015first}}.

\medskip
Altogether, the following picture emerges, assuming FPT$\neq$AW[$\ast$]:
\begin{enumerate}
	 \item if $\CC$ is a subgraph-closed class of graphs  then $\CC$ is  monadically NIP if, and only if model-checking first-order logic is fixed-parameter tractable over $\CC$~\cite{grohe2014deciding},
	 \item if $\CC$ is a hereditary class of ordered graphs 
	 then $\CC$ is  monadically NIP if, and only if model-checking first-order logic is fixed-parameter tractable over $\CC$.
\end{enumerate}
The last item is by the equivalence~\eqref{iit:tww}$\leftrightarrow$\eqref{iit:fpt} in our main result, Theorem~\ref{thm:intro}, both implications being new.

\medskip
There are a few hereditary graph classes there are known to be tractable for first-order logic, but are not covered by the items above.
Those include:
\begin{itemize}
	\item  classes of structures of bounded twin-width for which there is an effectively computable certificate of having bounded twin-width (this includes all the examples from the first paragraph),
	\item map graphs~\cite{10.1007/978-3-662-55751-8_17},
	\item first-order transductions of graph classes with bounded maximum degree~\cite{10.1145/3383206}.
\end{itemize}

A conjecture~\cite[Conj. 8.2]{10.1145/3383206} implies that if a class  $\CC$ does not transduce all graphs, equivalently, $\CC$ is monadically NIP, then 
model-checking first-order logic is fixed-parameter tractable on $\CC$
(the conjecture there is actually even stronger). This would generalize all the above.

Theorem~\ref{thm:intro} confirms this conjecture in the case of ordered binary structures. 
Theorem~\ref{thm:summary-mt0} provides a tool for approaching the conjecture for all monadically NIP classes, by providing a grid theorem for such classes.


\paragraph{Independent work.}
A few days before this paper was submitted,  Bonnet, Giocanti,  Ossona de Mendez, and Thomass\'e
reported~\cite{bonnet2021twinwidth}  essentially the same result as our Theorem~\ref{thm:intro} and Theorem~\ref{thm:approx}.
Our results were obtained independently, and most likely using different methods (we were unable to verify this in the short time before submission).
The key differences between our results and the results of~\cite{bonnet2021twinwidth} are summarized below:
\begin{itemize}
	\item Part of our proof is not effective,
	as it uses tools from model theory, including the compactness theorem for first-order logic.
	In particular, we do not obtain an explicit upper bound on the twin-width of a graph avoiding a fixed regular $n\times n$-semigrid. We do, however, obtain a computable bound.	
	It should not be difficult to extract some elementary bound from our proof, but this  falls short of the explicit bounds obtained in~\cite{bonnet2021twinwidth}.

	\item The lower bound $\lowerbound$ in the growth of hereditary classes of unbounded twin-width~\eqref{iit:not-big} in Theorem~\ref{thm:intro} is improved to $n!$ in~\cite{bonnet2021twinwidth}.	
	Proving this requires substantially more effort than our  bound, which falls trivially out of the existence of large regular semigrids.
\item On the other hand,~\cite{bonnet2021twinwidth} does not provide 
any result analogous to our second main result, Theorem~\ref{thm:summary-mt0}, which  applies to all classes of structures -- possibly unordered, of arbitrary arity, and infinite.
\end{itemize}

\todo{organization}

\section{Preliminaries}\label{sec:prelims}

For $n\in\N$ denote $[n]=\set{0,\ldots,n-1}$. 

\paragraph{Order.}Order means total order.
A  subset $C$ of an ordered set $(X,\le)$ is 
\emph{convex}, or an \emph{interval}, if $a,b\in C$ and $a\le c\le b$ imply $c\in C$, for all $a,b,c\in X$.
A \emph{convex partition} of $X$ is a partition into convex subsets.

\paragraph{Structures.}
We consider the setting of relational structures.
Fix a relational signature $\Sigma$.

Let $\str S$ be a $\Sigma$-structure. We identify $\str S$ with its domain
when writing e.g. $a\in \str S$ or $A\subset \str S$. Tuples of elements of $\str S$ are denoted $\tup a\in\str S^k$ or $\tup a\in\str S^{\tup x}$, where $\tup x$ is a finite set of variables. In the latter case, the tuple $\tup a$ can be seen as a valuation of the variables $\tup x$.

If $\str S$ and $\str T$ are two $\Sigma$-structures then $\str S$ is an \emph{induced substructure} of $\str T$  if the domain of $\str S$ is contained in the domain of $\str T$ and for every relation symbol $R\in\Sigma$ of arity $k$ and $\tup a\in \str S^k$, 
$R(\tup a)$ holds in $\str S$ if and only if $R(\tup a)$ holds in $\str T$.

\paragraph{Formulas.}
We only consider first-order $\Sigma$-formulas in what follows.

If $\alpha$ is a formula and $\tup x$ is a finite set of variables then we may write $\alpha(\tup x)$ to indicate that the free variables of $\alpha$ are among $\tup x$.

For 
a formula $\alpha(\tup x)$ and tuple $\tup a\in\str S^{\tup x}$
write $\str S\models \alpha(\tup a)$ to denote that $\tup a$ satisfies $\alpha$ in~$\str S$.
Denote \[\alpha(\str S)=\setof{\tup a\in \str S^{\tup x}}{\str S\models\alpha(\tup a)}.\]

If $\alpha(\tup x)$ is a formula and $\tup x$ is partitioned as $\tup x=\tup u\cup \tup v$
then we may write $\alpha(\tup u;\tup v)$ to signify this partition, and that $\tup v$ will be treated as variables ranging over \emph{parameters}. For instance, if $\str S$ is a structure and $\tup b\in\str S^{\tup v}$ is a tuple then $\theta(\tup x;\tup b)$ denotes the \emph{formula with parameters}  obtained from $\theta$ by replacing the variables $\tup v$ by  $\tup b$, treated as constants.
Denote
\[\alpha(\str S;\tup b)=\setof{\tup a\in\str S^{\tup u}}{\str S\models\alpha(\tup a;\tup b)}.\]

\paragraph{Atomic types.}
An \emph{atomic formula} is a formula of the form $R(x_1,\ldots,x_k)$ where $R\in\Sigma$ is of arity~$k$. 
For a $k$-tuple $\tup a\in \str S^k$, 
its \emph{atomic type}, denoted $\atp(\tup a)$, 
 is conjunction of all  formulas $\alpha(x_1,\ldots,x_k)$
such that $\str S\models \alpha(\tup a)$
and $\alpha$ is either an atomic formula, or its negation.
Up to bijection, $\atp(\tup a)$ is uniquely determined by the isomorphism type of the substructure $\str S'$ of $\str S$ induced by $\set{a_1,\ldots,a_k}$, expanded with $k$ constants interpreted as $a_1,\ldots,a_k$.

An \emph{atomic type} with variables $\tup x$ is a conjunction which contains 
as a conjunct every atomic formula
$R(x_1,\ldots,x_k)$ for $R\in\Sigma$ and $x_1,\ldots,x_k\in \tup x$, or its negation $\neg R(x_1,\ldots,x_k)$. 

\paragraph{$\theta$-types over a set.}

	Let $\theta(\tup u;\tup v)$ be a formula and $\str S$ a structure. For 
	a tuple  $\tup a\in \str S^{\tup u}$ and a set $B\subset \str S$ of \emph{parameters} define the \emph{type of $\tup a$ over $B$} as:
	 \[\tp^\theta(\tup a/B)=\setof{\tup b\in B^{\tup v}}{\str S\models\theta(\tup a;\tup b)}.\]		
Equivalently -- up to bijection -- $\tp^\theta(\tup a/B)$ is the set of formulas $\theta(\tup u;\tup b)$ with parameters $\tup b$ from $B$ that are satisfied by $\tup a$ in $\str S$. 

For a set $A\subset \str S$, denote
\[\Types[\theta](A/B):=\setof{\tp^\theta(\tup a/B)}{\tup a\in A^{\tup u}}.\]
Note that if $\Types[\theta](A/B)\le k$ then
$\Types[\hat\theta](B/A)\le 2^k$,
where $\hat\theta(\tup v; \tup u)=\theta(\tup u;\tup v)$.
\medskip

If $\Sigma$ is a binary signature then denote
\[\Types[\Sigma](A/B):=\setof{\tp^R(a/B)}{\tup a\in A, R\in\Sigma}.\]
Similarly as above, if $\Types[\Sigma](A/B)\le k$ then $\Types[\Sigma](B/A)\le 2^{|\Sigma| k}$.
Moreover, if $\Types[R](A/B)\le k_R$ for all $R\in\Sigma$ then $\Types[\Sigma](A/B)\le \prod_{R\in \Sigma}k_R$.
\paragraph{Homogeneity.}
Let $\str S$ be a structure over a binary relational signature $\Sigma$ and let 
$X,Y\subset \str S$. 
 The pair $X$ and $Y$ is \emph{homogeneous} 
if \[|\Types[\Sigma](X/Y)|=|\Types[\Sigma](Y/X)|=1.\]


 \begin{example}
	 In a graph $G$, a pair $X,Y\subset G$ is homogeneous 
	 if and only if it is homogeneous in the sense considered in Section~\ref{sec:intro}.
	If $G$ is an ordered graph then a pair $X,Y\subset G$ is homogeneous if $X$ and $Y$ are homogeneous in $G$ as an unordered graph, and moreover, either all elements of $X$ are strictly smaller than all elements of $Y$, or vice-versa.
 \end{example}

\paragraph{Interpretations and transductions.}
Let $\Sigma$ and $\Gamma$ be relational signatures. 
An \emph{interpretation} $I\from\Sigma\to\Gamma$ consists of a $\Sigma$-formula $\delta(x)$ and for each symbol $R\in \Gamma$ of arity $k$, a $\Sigma$-formula $\phi_R(x_1,\ldots,x_k)$.	
Given a $\Sigma$-structure $\str S$, define $I(\str S)$ as the $\Gamma$-structure with domain $D=\delta(\str S)$ equipped with the relation $\phi_R(\str S)\cap D^k$ interpreted as $R$, for each $R\in\Gamma$ of arity $k$.
A  \emph{transduction} $T\from\Sigma\to\Gamma$ 
is a $\widehat\Sigma$-interpretation for some $\widehat\Sigma$ extending $\Sigma$ by unary predicates.
Given a $\Sigma$-structure $\str S$, define $T(\str S)$ as the set of $\Gamma$-structures $I(\widehat{{\str S}})$, where $\widehat{\str S}$ ranges over all $\widehat\Sigma$-structures expanding $\str S$ by arbitrarily interpreting the unary predicates in $\widehat\Sigma\setminus\Sigma$ in $\widehat{\str S}$.

Interpretations and transductions are closed under compositions: if 
\[\Sigma_1\stackrel{I_1}\longrightarrow\Sigma_2\stackrel{I_2}\longrightarrow\Sigma_3\] are two interpretations (resp. transductions) then there is an interpretation (resp. transduction)   $I_2\circ I_1\from \Sigma_1\to\Sigma_3$ such that $(I_2\circ I_1)(\str S)=I_2(I_1(\str S))$, for every $\Sigma$-structure $\str S$.

\begin{remark}
	What is defined above is a special case of a more general notion of interpretations considered in model theory. 
	The restricted interpretations that we consider here are sometimes called \emph{simple} interpretations, but we drop this qualifier in this paper. 
\end{remark}

\begin{definition}\label{def:transduction}
    A class of $\Gamma$-structures $\DD$ is an \emph{interpretation} (resp. \emph{transduction}) of a class of $\Sigma$-structures $\CC$ if there is an interpretation (resp. transduction) $T\from\Sigma\to\Gamma$ 
    such that for every structure $\str D\in\DD$ there is some $\str C\in\CC$ with $\str D\in T(\str C)$.In this case we also say that $\CC$ \emph{interprets} (resp. \emph{transduces}) $\DD$.
\end{definition}

\paragraph{Classes of structures.}
We consider classes $\CC$ of structures that are all over a common relational signature $\Sigma$. 
We assume that $\CC$ is closed under isomorphisms, that is, if $\str A$ and $\str B$ are isomorphic structures then either both belong to $\CC$ or both do not belong to $\CC$.
A class of structures is \emph{hereditary} if it is closed under taking induced substructures.

We say that $\CC$ is a class of \emph{binary structures} if $\Sigma$ is a relational signature with relation symbols of arity at most two.
We do not restrict only to classes of finite structures. However, graphs are always assumed to be finite (specifically, when we say that $\CC$ interprets the class of all graphs). By a class of \emph{ordered structures} we mean a class of structures over a signature including the symbol $\le$ which is interpreted as an order in each structure from the class.

\subsection*{Twin-width}

A partition $\Pp$ of the domain of a structure $\str S$ has \emph{red-degree}
at most $d$ (for $d\in\N$) if for every $X\in \Pp$ there are at most $d$ sets $Y\in\Pp$ other than $X$ such that $X$ and $Y$ are not homogeneous.
A \emph{contraction sequence} in a finite structure $\str S$
is a sequence of partitions $\Pp_1,\Pp_{2},\ldots, \Pp_n$ of $\str S$ such that:
	\begin{itemize}
		\item $\Pp_1$ has one part,	
		\item $\Pp_n$ is the partition into singletons, 		
		\item $\Pp_{i+1}$ is obtained from $\Pp_{i}$ by splitting one of the parts of $\Pp_i$ into two, for $i=1,\ldots,n-1$.		
	\end{itemize}

\begin{definition}[Twin-width]Fix $d\in\N$.
	A finite structure $\str S$ over a binary signature has \emph{twin-width} at most~$d$ if
	it has a contraction sequence consisting of partitions of red-degree at most~$d$.
\end{definition}

A class $\CC$ of finite, binary structures  has \emph{bounded twin-width} 
if there is some $d\in\N$ such that every $\str S\in\CC$ has twin-width at most $d$.

\begin{example}
	A class of graphs has twin-width $0$ if and only if it is a class of cographs, that is, graphs that can be obtained from one-vertex graphs by the operations of disjoint union and edge-complement. 
\end{example}

\begin{example}[\cite{bonnet2020twin}]
	The following classes have bounded twin-width:
\begin{itemize}
	\item every proper minor-closed class of finite graphs,
	\item every class of graphs of bounded clique-width,
	\item every class of posets of bounded width,
	\item every class of permutations that excludes a fixed permutation pattern.
\end{itemize}
On the other hand, the following classes do not have bounded twin-width:
\begin{itemize}
	\item the class of all finite graphs,
	\item the class of all bipartite graphs,
	\item the class of all graphs of degree at most $3$ \cite{tww2}.
\end{itemize}
\end{example}

The following is a consequence of \cite{bonnet2020twin}, Theorems 10 and~14.
\begin{fact}[\cite{bonnet2020twin}]\label{fact:grid-char}
	The following conditions are equivalent for a class of finite binary structures $\CC$:
	\begin{enumerate}
	\item $\CC$ has bounded twin-width,
	
	\item There is some $t\in\N$ such that for every structure $\str S\in \CC$ there is an order on $\str S$ such that for every two convex partitions $\Ll$ and $\Rr$ of $\str S$ into $t$ parts, there are  $L\in\cal L$ and $R\in\cal R$ such that 
	$|\Types[\Sigma](L/R)|=1$  or $|\Types[\Sigma](R/L)|=1$.
	 \end{enumerate}
\end{fact}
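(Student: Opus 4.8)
The plan is to prove $(1)\Rightarrow(2)$ and $(2)\Rightarrow(1)$ separately, using the grid-minor corollary of the Marcus--Tardos theorem \cite{MARCUS2004153}. Say that a pair of sets $(L,R)$ is \emph{mixed} if $|\Types[\Sigma](L/R)|\ge 2$ and $|\Types[\Sigma](R/L)|\ge 2$, so that the conclusion of~(2) fails for convex partitions $\Ll,\Rr$ precisely when every cell $L\times R$ with $L\in\Ll$ and $R\in\Rr$ is mixed; call such a configuration a \emph{$t$-mixed division} when $|\Ll|=|\Rr|=t$. The one elementary fact used repeatedly is that mixedness is \emph{monotone}: if $(L,R)$ is mixed, $L\subseteq L'$ and $R\subseteq R'$, then $(L',R')$ is mixed, since enlarging the first coordinate cannot decrease the number of realised types and enlarging the parameter set only refines them. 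In particular, merging consecutive parts of a convex partition preserves mixedness of the cells.

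\textbf{$(1)\Rightarrow(2)$.} Fix $d$ such that every $\str S\in\CC$ has twin-width at most $d$, and let $\Pp_1,\dots,\Pp_n$ be a contraction sequence of $\str S$ of red-degree at most $d$. The parts occurring in it form a laminar family, that is, a binary \emph{contraction tree} $T$ whose leaves are the elements of $\str S$; order $\str S$ by a depth-first traversal of $T$, so that every part of every $\Pp_i$ is an interval. We claim that with this order $\str S$ has no $t$-mixed division, for a suitable $t=f(d)$. Suppose $\Ll,\Rr$ were such a division. Let $i^\ast$ be least with $\Pp_{i^\ast}$ refining the common refinement of $\Ll$ and $\Rr$; then the single part $P\in\Pp_{i^\ast-1}$ split at step $i^\ast$ straddles exactly one boundary of $\Ll$ or of $\Rr$ and is contained in a single part on the other axis. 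Following which of the mixed cells incident to the two sides of this boundary cannot be resolved without splitting parts of $\Pp_{i^\ast-1}$, one locates a part of some $\Pp_i$ that is non-homogeneous with more than $d$ other parts, contradicting red-degree at most $d$; this is where the value $f(d)$ is determined. (This back-tracking along the contraction sequence is the technical content of this direction.)

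\textbf{$(2)\Rightarrow(1)$.} Fix, for each $\str S\in\CC$ with $|\str S|=n$, an order with no $t$-mixed division. For $\ell\in\N$ let $D_\ell$ be the partition of $\str S$ into $\ell$ consecutive intervals of sizes $\lfloor n/\ell\rfloor$ and $\lceil n/\ell\rceil$. First, for every $\ell$ the number of mixed cells of $D_\ell\times D_\ell$ is at most $c_t\cdot\ell$ for a constant $c_t$ depending only on $t$ and $\Sigma$: form the $\ell\times\ell$ zero--one matrix whose ones mark the mixed cells; if it had more than $c_t\ell$ ones, the grid-minor form of the Marcus--Tardos theorem would give $t$ intervals of rows and $t$ intervals of columns with a mixed cell in every resulting block, and enlarging each such cell to its block (by monotonicity of mixedness) would produce a $t$-mixed division, a contradiction. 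Second, and this is the main work, one turns this per-scale bound into a contraction sequence of bounded red-degree: it is built from the singleton partition $D_n$ upwards by repeatedly merging consecutive parts so as to pass through refinements of $D_{2^k},D_{2^{k-1}},\dots,D_1$, one merge at a time. The red-degree is controlled because a cell that is non-homogeneous but not mixed has at most $2^{\Oof(|\Sigma|)}$ realised types in whichever of the two directions is non-trivial, so such non-homogeneities are ``essentially one-dimensional'' and can be amortised across the $\Oof(\log n)$ scales, while the genuinely two-sided non-homogeneities are precisely the mixed cells and hence linearly bounded at each scale. This yields red-degree at most $g(t,|\Sigma|)$ throughout, hence twin-width at most $g(t,|\Sigma|)$, uniformly over $\CC$.

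\textbf{Main obstacle.} The crux is the last step of $(2)\Rightarrow(1)$: converting a per-scale Marcus--Tardos count into an actual bounded-red-degree contraction sequence requires a delicate amortised analysis of how non-homogeneity between parts behaves under merging across the dyadic scales, and is essentially the main technical contribution of \cite{bonnet2020twin}, building on \cite{tww1}. By comparison, the back-tracking argument in $(1)\Rightarrow(2)$ and the monotonicity-plus-Marcus--Tardos reduction opening $(2)\Rightarrow(1)$ are comparatively routine.
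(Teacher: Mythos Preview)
The paper does not prove this statement; it is stated as a Fact and attributed to \cite{bonnet2020twin}, Theorems~10 and~14, with no proof given here. So there is no ``paper's own proof'' to compare against. The paper does, however, prove the closely related Theorem~\ref{thm:k-t-simple} (that $(k,t)$-simplicity implies bounded twin-width) in Section~\ref{sec:grid-and-tww}, and it proves the order-construction half of $(1)\Rightarrow(2)$ as part of Fact~\ref{fact:properties}\eqref{prop:orders}. Your outline for $(2)\Rightarrow(1)$ is broadly in line with that argument: Marcus--Tardos to bound the number of mixed cells at each scale, then a careful merging procedure to produce a contraction sequence. You correctly identify the amortisation across scales as the hard part and defer to \cite{bonnet2020twin}; that is honest, though it means your sketch is not self-contained.

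Your $(1)\Rightarrow(2)$ direction, by contrast, has a genuine gap. The order construction via the contraction tree is fine (and matches the paper's proof of Fact~\ref{fact:properties}\eqref{prop:orders}). But the step where you ``follow which of the mixed cells incident to the two sides of this boundary cannot be resolved without splitting parts of $\Pp_{i^\ast-1}$, and locate a part of some $\Pp_i$ that is non-homogeneous with more than $d$ other parts'' is not an argument; it is a description of what you hope happens. Concretely: you have a $t$-mixed division $(\Ll,\Rr)$ and a first index $i^\ast$ at which the contraction partition refines $\Ll\wedge\Rr$. You need to exhibit, for some $i$, a part $X\in\Pp_i$ and $d{+}1$ other parts of $\Pp_i$ each non-homogeneous with $X$. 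Nothing in your sketch produces these witnesses. The fact that the cell containing the split part $P$ is mixed tells you there are two distinct types in each direction somewhere in that cell, but those witnesses need not lie in $P$, nor need they correspond to parts of $\Pp_{i^\ast-1}$; and even if they did, that gives you one non-homogeneous pair, not $d{+}1$. The actual argument in \cite{bonnet2020twin} is not a single back-track but a counting argument over many parts of a suitable $\Pp_i$, and you have not reproduced it.
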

In the case of graph classes $\CC$, the second condition above means the following:
for every $G\in \CC$ there is an order $\le$ on $G$ such that in the adjacency matrix $M$ of $G$ along this order, for every partition $\cal R$ of the rows of $M$ and partition $\cal C$ of the columns of $M$ into convex intervals, there is some pair $R\in\cal R$ and $C\in \cal C$ such that the submatrix $M[R\times C]$ has either all rows equal, or all columns equal.

\begin{fact}[\cite{bonnet2020twin}]\label{fact:properties}
	Classes of bounded twin-width are closed under:
	\begin{enumerate}
		\item\label{prop:monadic} \emph{expanding by unary predicates:}
		if $\CC$ is a class of bounded twin-width and $\CC'$ is a class of structures such that each $\str S'\in\CC'$ is an expansion of some $\str S\in\CC$ by unary predicates, then $\CC'$ has bounded twin-width.
		\item\label{prop:induced} \emph{taking induced substructures}:
        if $\CC$ has bounded twin-width and $\CC'$ is a class such that every structure in $\CC'$ is an induced substructure of some structure in $\CC$, then $\CC'$ has bounded twin-width.
        \item\label{prop:transductions}\emph{first-order transductions:} if $\CC$ is a class of bounded twin-width and $T$ is a first-order transduction, then $T(\CC)$ is a class of bounded twin-width;
		\item\label{prop:orders} \emph{expanding by compatible  orders:}
		if $\CC$ is a class of bounded twin-width then there  
		is an ordered expansion $\CC'$ of $\CC$ (that is, a class of ordered structures $\CC'$ such that for each structure in $\str S\in\CC$, the structure $\str S$ with some order belongs to $\CC'$)
		such that $\CC'$ has bounded twin-width.
	\end{enumerate}	
\end{fact}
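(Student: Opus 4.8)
Fact~\ref{fact:properties} is quoted from~\cite{bonnet2020twin}; were we to reprove it, the plan would be as follows. Items~\eqref{prop:monadic} and~\eqref{prop:induced} I would prove by directly manipulating a witnessing contraction sequence. For~\eqref{prop:induced}, given a red-degree-$\le d$ contraction sequence $\Pp_1,\ldots,\Pp_n$ of $\str S$ and an induced substructure $\str S'$ of $\str S$, I would replace each $\Pp_i$ by $\{X\cap\str S': X\in\Pp_i,\ X\cap\str S'\neq\emptyset\}$ and delete the steps that become trivial; since passing from a set to a subset cannot increase $|\Types[\Sigma](X/Y)|$, homogeneity is preserved and the red-degree stays $\le d$. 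For~\eqref{prop:monadic}, if $\str S'$ expands $\str S$ by $k$ unary predicates, carrying a colouring $c\from\str S\to[2^k]$, I would start from a red-degree-$\le d$ contraction sequence of $\str S$ and refine every partition to its common refinement with the partition into colour classes — each original split then becoming at most $2^k$ single splits — using that two colour-monochromatic subsets of a homogeneous pair of parts of $\str S$ are still homogeneous in $\str S'$; this yields red-degree bounded in terms of $d$ and $k$.

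Item~\eqref{prop:transductions} is the substantial one. Since a transduction is an interpretation applied after an expansion by unary predicates, item~\eqref{prop:monadic} reduces it to the case of an interpretation $I=(\delta,(\phi_R)_R)$. I would then decompose $I$ and handle in turn: (i) replacing every binary relation of $\str S$ by the relation $\phi_R(\str S)$ on the same domain; (ii) adjoining the unary relations $\phi_R(\str S)$, via~\eqref{prop:monadic}; and (iii) passing to the induced substructure on $\delta(\str S)$, via~\eqref{prop:induced} (the $\phi_R$ must be evaluated in $\str S$ before the domain is cut, which is why (iii) comes last). The heart is (i), which I would reduce to the following lemma (from~\cite{tww1,bonnet2020twin}): for every first-order formula $\phi(x,y)$ there is a function $g_\phi$ such that any red-degree-$\le d$ contraction sequence of $\str S$ can be converted, by refining each partition into boundedly many pieces and threading these refinements through the sequence, into a red-degree-$\le g_\phi(d)$ contraction sequence that additionally has bounded red-degree for the relation $\phi(\str S)$. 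The proof goes by induction on the quantifier rank of $\phi$; the Boolean cases are routine, and the quantifier step uses a counting argument — over a partition of bounded red-degree for $\psi$, only boundedly many ``$\exists y\,\psi$-types'' are realised inside each part, so a further bounded refinement restores bounded red-degree for $\exists y\,\psi$.

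Item~\eqref{prop:orders} I would prove directly, exploiting the tree structure of a contraction sequence. The family $\bigcup_i\Pp_i$ of all parts occurring in a contraction sequence $\Pp_1,\ldots,\Pp_n$ of $\str S$ is laminar and forms a binary tree rooted at $\str S$ whose leaves are the singletons; choosing at each internal node which child is the left one induces a linear order $\le_{\str S}$ on $\str S$ for which every part of every $\Pp_i$ is an interval. For $\CC$ of red-degree $\le d$ I would take $\CC'=\{(\str S,\le_{\str S}):\str S\in\CC\}$ (closed under isomorphism) and run the same contraction sequence in $(\str S,\le_{\str S})$: any two distinct parts $X,Y$ of some $\Pp_i$ are disjoint intervals, so one lies entirely below the other, making the $\le$-type of the elements of $X$ over $Y$ constant (equal to $Y$, or to $\emptyset$) and likewise in the other direction; hence adjoining $\le$ changes neither $|\Types[\Sigma](X/Y)|$ nor $|\Types[\Sigma](Y/X)|$, the red-degree stays $\le d$, and $\CC'$ has bounded twin-width.

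The main obstacle is item~\eqref{prop:transductions}: controlling the red-degree with respect to $\phi(\str S)$ \emph{uniformly along the whole contraction sequence}, and in particular executing the quantifier step of the induction, is where the real work lies. The other three items I expect to be bookkeeping once the contraction-sequence formalism, together with its tree representation, is in place.
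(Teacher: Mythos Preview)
Your proposal is correct. For items~\eqref{prop:induced} and~\eqref{prop:orders} you do exactly what the paper does: restrict the contraction sequence to the substructure for~\eqref{prop:induced}, and for~\eqref{prop:orders} read off a linear order from the laminar/binary-tree structure of the contraction sequence so that every part is an interval, whence homogeneity is unaffected by adjoining~$\le$ and the red-degree is preserved. The only real difference is organisational: the paper does not reprove~\eqref{prop:transductions} but simply cites \cite[Theorem~39]{bonnet2020twin}, and then observes that~\eqref{prop:monadic} and~\eqref{prop:induced} are special cases of~\eqref{prop:transductions}. You instead give a direct colour-refinement argument for~\eqref{prop:monadic} and sketch the quantifier-rank induction behind~\eqref{prop:transductions}; that sketch is indeed the argument of the cited theorem, so you are expanding what the paper black-boxes rather than taking a different route.
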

\begin{proof}
	\eqref{prop:transductions} follows from~\cite[Theorem 39]{bonnet2020twin}. \eqref{prop:monadic} and~\eqref{prop:induced} are special cases (also \eqref{prop:induced} is immediate by definition:
	when taking a substructure induced by $A$,
	the same contraction sequence, restricted to $A$, works). 
	
	We show \eqref{prop:orders}.
	Let $\str S$ be a structure and let $\cal P_1,\ldots,\cal P_n$ be its contraction sequence. There is an order $\le$ on $\str S$ such that each partition $\cal P_i$ is convex.
	Indeed, order the parts of $\cal P_i$ by induction on~$i$:
	$\cal P_1$ has just one part, so there is nothing to do,
	whereas $\cal P_{i+1}$ is obtained from $\cal P_i$ by merging some two part $X_1,X_2$ of $\cal P_i$ into one  parts $X_1\cup X_2$ of $\cal P_{i+1}$. By induction, $X_1$ and $X_2$ are already ordered, and order $X_1\cup X_2$ by declaring (arbitrarily) that $X_1<X_2$. 

	As each partition $\cal P_i$ is convex with respect to $\le$, a pair of its parts that was homogeneous in $\str S$ remains homogeneous in $(\str S,\le)$. Hence, the contraction sequence $\cal P_1,\ldots,\cal P_n$ has the same red-degree in $\str S$ as in $(\str S,\le)$.

	Given a class of structures $\CC$ of twin-width bounded by $d$, pick a contraction sequence of red-degree at most $d$ for each structure $\str S\in\CC$ and then expand $\str S$ by the order as defined above. This yields a class of ordered structures $\CC'$ of red-degree at most $d$.
\end{proof}
Morally, an ordered expansion $G_\le$ of $G$ with $\tww(G_\le)$ with twin-width bounded by a function of $\tww(G)$
    is the same as $G$ together with a contraction sequence.
    In one direction, given $G$ together with a contraction sequence we can compute (in linear time) a total order $\le$ on $G$ such that $\tww(G_\le)=\tww(G)$, described as above.
    The converse direction is not that clear, however.
	We will prove (cf. Section~\ref{sec:algorithms})
	there is a polynomial-time algorithm which inputs a totally ordered structure $G_\le$ and outputs its contraction sequence 
    of red-degree $f(\tww(G_\le))$, for some computable function $f$. A contraction sequence of $G_\le$ is also a contraction sequence of $G$.

\section{Proof outline}\label{sec:outline}
Our proof proceeds in several steps,
which utilise various notions which turn out to be equivalent to bounded twin-width. 
We explain those notions and outline the proof below. Figure~\ref{roadmap-small} may be helpful in tracing the implications.

\begin{figure}
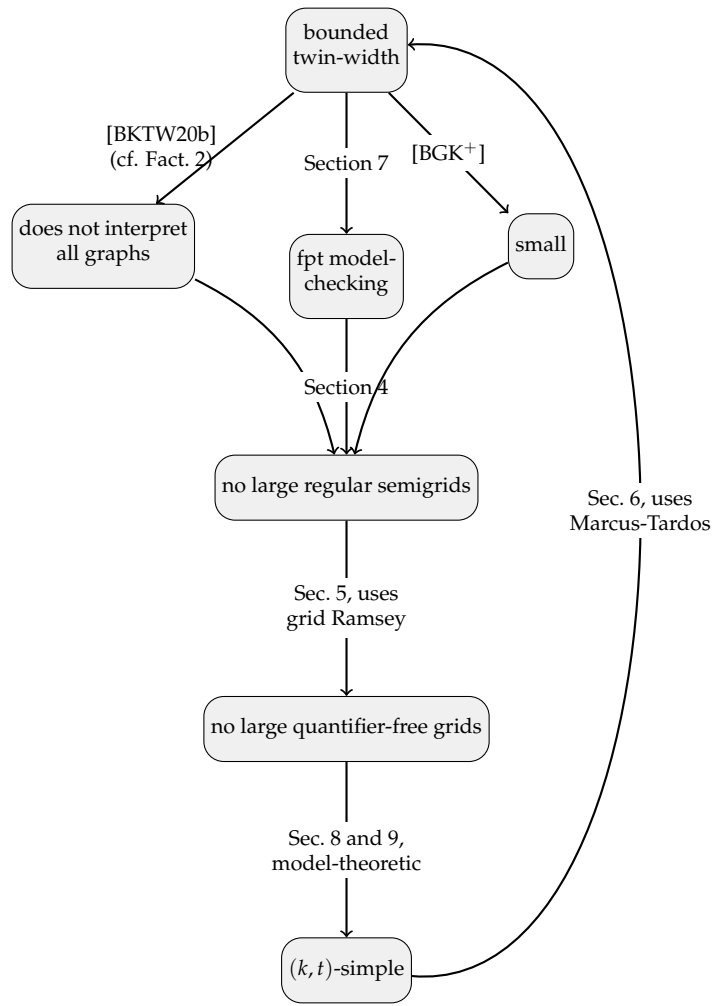

	\begin{centering}
		\footnotesize
	\ctikzfig{tikz-small}
	\end{centering}\caption{Implications among the conditions in Theorem~\ref{thm:intro}.
	}\label{roadmap-small}
	\end{figure}

\medskip


The first notion is as follows.

\begin{definition}[$(k,t)$-simple]An ordered binary structure $\str S$ is \emph{$(k,t)$-simple} if for every  pair of convex partitions $\cal L,\cal R$ with $|\cal L|=|\cal R|\ge t$ of $\str S$, there are  $L\in \cal L$ and $R\in \cal R$ such that  $|\Types[\Sigma](L/R)|\le k$ and 
	${|\Types[\Sigma](R/L)|\le k}$.
\end{definition}
For an ordered graph $G$ this means that if the set of rows and set of  columns of the adjacency matrix of $G$ are  partitioned into $t$ intervals each, then 
there there is a submatrix induced by some two of those intervals which has at most $k$ distinct rows and at most $k$ distinct columns. This perspective can be adapted to binary structures,
by suitably defining their adjacency matrices (cf. Sec.~\ref{sec:grid-and-tww}).


As a first step, we prove that $(k,t)$-simple classes have bounded twin-width.
\begin{theorem}\label{thm:k-t-simple}
	Every $(k,t)$-simple finite ordered binary structure has twin-width $2^{2^{\Oof(kt)}}$.
\end{theorem}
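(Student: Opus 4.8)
\textbf{Proof plan for Theorem~\ref{thm:k-t-simple}.}

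The goal is to upgrade the $(k,t)$-simplicity hypothesis into an actual contraction sequence of bounded red-degree. The natural strategy is to follow the approach of~\cite{tww1} used to prove that matrices which are "mixed-free" have bounded twin-width, adapting it to the $(k,t)$-simple setting, and then to iterate it across the whole structure. First I would fix the given total order on $\str S$ and think of $\str S$ as (a structure with) an adjacency matrix $M$ with rows and columns indexed by $\str S$ in this order. The key observation to extract from $(k,t)$-simplicity is a \emph{local} consequence: for \emph{every} sufficiently large interval (in rows and columns) one can find, somewhere inside it, a sub-block in which only $\le k$ distinct row-types and $\le k$ distinct column-types occur. This is exactly the kind of hypothesis that drives a divide-and-conquer contraction scheme.

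The core of the argument is a recursive/greedy partition refinement. I would proceed roughly as follows. Starting from the trivial partition, repeatedly cut the structure along the order into $t$ convex pieces; $(k,t)$-simplicity guarantees a homogeneous-up-to-$k$-types block $L\times R$. Inside such a block, having only $k$ distinct row patterns and $k$ distinct column patterns means the part $L$ (respectively $R$), as seen from $R$ (respectively $L$), splits into at most $k$ homogeneous sub-parts; one can therefore contract $L$ and $R$ toward each other using only a number of "red" (non-homogeneous) relations bounded by a function of $k$. The subtlety is that we must do this \emph{simultaneously and coherently} over a whole quadtree-like decomposition of $M$, so that at each stage of the contraction sequence the red-degree of every part stays bounded. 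I would organise this as a decomposition of $\str S$ into a balanced hierarchy of convex intervals (a "dyadic" or $t$-adic tree of depth $\Oof(\log|\str S|)$), and then build the contraction sequence bottom-up: first contract within the small leaves, then merge siblings, using the $k$-type bound at each merge to certify that only $\Oof(k)$-many new red pairs are created per part, and crucially that red pairs created at one level get "absorbed" (become homogeneous, hence disappear) once we zoom out far enough — this absorption is what keeps the red-degree from accumulating over the $\Oof(\log |\str S|)$ levels. The precise bookkeeping here is where the exponential tower $2^{2^{\Oof(kt)}}$ comes from: one exponential from converting $\Types[\Sigma](L/R)\le k$ into a bound on the number of homogeneous refinements of $R$ (of order $2^{|\Sigma|k}$, using the $\Types[\Sigma](B/A)\le 2^{|\Sigma|k}$ estimate recalled in the preliminaries), and a second exponential from the recursion across the $t$-fold cuts / the need to take a common refinement of the cut points produced for each of the $\Oof(t)$ levels of coarseness.

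The main obstacle, and the part requiring genuine care, is the \emph{global coherence} of these local contractions: a block $L\times R$ flagged as "good" by $(k,t)$-simplicity need not align with blocks flagged elsewhere, so one cannot naively paste together local contraction sequences. The standard fix (following~\cite{tww1}) is to run the argument on a \emph{fixed} balanced partition tree and to show that $(k,t)$-simplicity, applied to the partition into the $t$ children of a node, forces a good block \emph{among the children of that node}; then one contracts greedily and shows that after contracting the two children of the good block into a single part, the remaining children plus the new merged part still form a $(k',t)$-simple-like configuration with $k' = k'(k,t,|\Sigma|)$, allowing the induction to continue. Making the invariant that is maintained through the contraction precise — essentially "every part is an interval, and the current partition has red-degree $\le D(k,t,|\Sigma|)$, and moreover it is still simple in the appropriate sense" — and checking it survives one contraction step, is the technical heart of the proof. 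Once the invariant is set up correctly, the red-degree bound is a matter of carefully counting types using the two exponential blow-ups described above, yielding $\tww(\str S)\le 2^{2^{\Oof(kt)}}$. I would also remark that ineffective tools are not needed here: the whole argument is a finite, constructive manipulation of $M$, which is consistent with the paper's later claim that Theorem~\ref{thm:approx} can be made algorithmic on the $(k,t)$-simple side.
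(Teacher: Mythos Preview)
Your proposal has the right overall shape — a greedy/recursive contraction scheme driven by the simplicity hypothesis, in the spirit of~\cite{tww1} — but it is missing the one indispensable ingredient: the Marcus--Tardos theorem. The paper's proof (and the original proof in~\cite{tww1} that it generalizes) does \emph{not} work by showing that red pairs ``get absorbed once we zoom out far enough''; that absorption claim is exactly the hard part, and it is false without further input. The actual mechanism is as follows. One runs a greedy merge on convex partitions $(\cal R_i,\cal C_i)$ and must show that at each step some merge is possible while maintaining an invariant. To do this, for each row interval $R$ one marks the columns where a ``minimal bad interval'' (a minimal column interval over which $R$ first reaches $k$ distinct rows) begins, producing an auxiliary $0$--$1$ matrix $N$ indexed by $\cal R\times\cal C$. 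Marcus--Tardos applied to $N$ gives the dichotomy: either some row of $N$ has fewer than $2c_{tk}$ ones (this row interval is the one to merge, and the few bad columns are the exceptional set $B$), or $N$ has a $tk$-grid minor, which one then converts into a $(k,t)$-mixed minor of $M$, contradicting $(k,t)$-simplicity. None of this is visible in your plan, and without it the ``global coherence'' step has no engine.

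Two further points. First, your account of where the double exponential comes from is off: it is not two successive type-counting blow-ups, but rather the Marcus--Tardos constant $c_{tk}=2^{\Oof(tk)}$ (one exponential) followed by the bound $k^{c_{tk}}$ on the number of distinct rows outside the exceptional set (the second exponential). Second, the paper does not fix a balanced $t$-adic tree in advance; it runs a purely greedy merge (always merge in the larger of the two partitions, using the lemma above to find a safe pair), and only in a second pass refines the resulting convex partitions into possibly non-convex ones to get an honest bounded-red-degree contraction sequence. Your fixed-tree picture would need the good block to appear \emph{among the children of each node}, but $(k,t)$-simplicity only promises one good block among \emph{all} $t^2$ pairs, which is why the greedy-plus-Marcus--Tardos route is taken instead.
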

This is shown in Section~\ref{sec:grid-and-tww}, and follows the ideas present in~\cite{bonnet2020twin}. The proof is combinatorial, and uses the Marcus-Tardos result about 0-1 matrices with many entries equal to $1$, which is the cornerstone  of twin-width theory. Our proof generalizes a result and a construction of Bonnet et al.~\cite[Theorem 14]{bonnet2020twin}.

\medskip
Another key notion is that of defining grids by a first-order formula. It makes sense in any class of structures over any signature $\Sigma$.

Fix a first-order formula $\phi(\tup x,\tup y, z)$, where $\tup x$ and $\tup y$ are sets of variables and $z$ is a single variable. An \emph{$m\times n$ grid}  defined by $\phi$  in a structure $\str S$
is a triple of sets $A\subset \str S^{\tup x}$, $B\subset  \str S^{\tup y}$ and $C\subset\str S$ with $|A|=m$, $|B|=n$ and $|C|=m\times n$, such that the relation 
\[\setof{(\tup a,\tup b,c)\in A\times B\times C}{\str S\models\phi(\tup a,\tup b,c)}\]
is the graph of a bijection between $A\times B$ and $C$. 
More explicitly, for each $(\tup a,\tup b)\in A\times B$ there is a unique $c\in C$ such that $\str S\models \phi(\tup a,\tup b,c)$, and conversely, for each $c\in C$ there is a unique $(\tup a,\tup b)\in A\times B$ such that $\str S\models \phi(\tup a,\tup b,c)$.

\begin{definition}[Defining large grids]\label{def:grids}
	A class of structures $\CC$ \emph{defines large grids} if there is a formula $\phi(\tup x,\tup y,z)$ such that for all $n\in\N$  $\phi$ defines an  $n\times n$ grid in some structure $\str S\in \CC$. 
\end{definition}
\begin{example}
	Consider an $m\times n$-semigrid as discussed in the introduction. It consists of $m+1$ intervals $I_0,\ldots,I_m$ of size $n+1$ each, and with $I_1<\ldots<I_m$.
	We show that there is a formula $\phi(x_1,x_2; y_1,y_2;z)$  which defines an $m\times n$-grid in an $m\times n$-semigrid $G$ of a given type $R\in\set{\le,\ge,=,\neq}$.

	The set $A\subset G^2$ consists of the pairs $\tup a_i=(\min I_i,\max I_i)$ of endpoints of the intervals $I_i$, for $i=1,\ldots,n$.
	Suppose the semigrid has type $=$.
	Then we take $B=I_0\subset G$,
	 $C=I_1\cup\cdots\cup I_n$, and 
	 $\phi(x_1,x_2;y;z)\equiv (x_1\le z\le x_2)\land E(x,z).$
	 The reader is invited to check that this formula defines an $n\times (n+1)$-grid $(A,B,C)$. 
	 
	 The case of a $\le$-semigrid is only slightly different. Note that in a $\le$-graph formed by two sets $X$ and $Y$ (cf. Fig.~\ref{fig:halfgraphs}), the $i$th vertex $y\in Y$ in the bottom can be characterized in terms of $X$ by saying that it is adjacent to the $i$th vertex in $X$ and non-adjacent to the $(i+1)$st vertex in $X$. Because of this, we now take $B\subset G^2$ to consist of pairs of consecutive elements of $I_0$,
	 and for $A$ and $C$ as above, use the formula $\psi(x_1,x_2;y_1,y_2;z)\equiv (x_1\le z\le x_2)\land (E(y_1,z)\land\neg E(y_2,z))$. Then $\psi$ defines an $n\times n$-grid $(A,B,C)$.

	 The cases of $\ge$ and $\neq$-semigrids are similar.
\end{example}

The core of our proof is the following:
\begin{theorem}\label{thm:core}
	If $\CC$ is a class of ordered binary  structures which is not $(k,t)$-simple for any $k,t\in\N$, then some quantifier-free formula $\phi(\tup x,\tup y,z)$ defines large grids in $\CC$.
\end{theorem}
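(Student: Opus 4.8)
The plan is to extract the grid by a direct combinatorial argument that naturally produces a half-graph–style \emph{quantifier-free} formula; this runs parallel to the model-theoretic grid theorem behind Theorem~\ref{thm:summary-mt0} and, in the ordered binary case, short-circuits it. First I would set up the data. By hypothesis, for every $k,t\in\N$ there are $\str S=\str S_{k,t}\in\CC$ and convex partitions $\cal L,\cal R$ of $\str S$ with $|\cal L|=|\cal R|\ge t$ such that every block pair $(L,R)\in\cal L\times\cal R$ satisfies $|\Types[\Sigma](L/R)|>k$ or $|\Types[\Sigma](R/L)|>k$. Coarsening a convex partition only enlarges these type-counts (the type of an element over a parameter set is unchanged when the set shrinks, and refines when it grows), so we may assume $|\cal L|=|\cal R|=t$ exactly, say $\cal L=(L_1<\dots<L_t)$ and $\cal R=(R_1<\dots<R_t)$. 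Since $\Sigma$ is binary and finite, pigeonholing on the symbol witnessing the excess lets us fix a single binary relation $\rho\in\Sigma$; two-colouring the $t\times t$ array of block pairs by which disjunct holds and applying a Ramsey argument on this array, we pass to an $s\times s$ subarray of blocks on which the colour is constant, and after possibly exchanging the roles of $\cal L$ and $\cal R$ (and replacing $\rho$ by $\hat\rho$) we reach the clean situation: an $s\times s$ division of some $\str S\in\CC$ in which every zone $(L_i,R_j)$ is \emph{row-rich}, i.e. $|\Types[\rho](L_i/R_j)|>k'$, with $s$ and $k'$ as large as desired (by taking $t,k$ large).

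Next, the combinatorial core: turn a large uniformly row-rich division into a quantifier-free grid. Inside one row-rich zone $(L_i,R_j)$, having $>k'$ distinct $\rho$-rows means there are $>k'$ distinct traces $\{z\in R_j:\str S\models\rho(a,z)\}$ as $a$ ranges over $L_i$; a standard threshold/Sauer--Shelah-plus-order argument then yields, inside the zone, a \emph{ladder} — elements $a_1,\dots,a_r\in L_i$ and $b_1<\dots<b_r$ in $R_j$ realising one of the canonical patterns $\rho(a_p,b_q)\Leftrightarrow p\le q$, $\Leftrightarrow p=q$, or $\Leftrightarrow p\ne q$ — of length $r$ growing with $k'$. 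I would then run a multidimensional Ramsey regularisation over the $s\times s$ array of zones (there being only finitely many zone patterns of bounded size) so that the ladders are chosen coherently: the same pattern type in all surviving zones, and a common set $Y$ of ladder-tops drawn from the row blocks that simultaneously forms a ladder of length $n$ with suitable $n$-element subsets in each of $n$ surviving column blocks. The outcome is $n$ pairwise $\le$-separated intervals $U_1<\dots<U_n$, inside each $U_i$ a chosen $n$-element set, and $n$ selector pairs $(y^\ell_1,y^\ell_2)_{\ell\in[n]}$ formed from $Y$, such that for each $i$ the chosen elements of $U_i$ are exactly the $z\in U_i$ admitting a unique $\ell$ with $\str S\models\rho(y^\ell_1,z)\wedge\neg\rho(y^\ell_2,z)$, this $\ell$ running bijectively over $[n]$. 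Setting $A=\{(\min U_i,\max U_i):i\in[n]\}$, $B=\{(y^\ell_1,y^\ell_2):\ell\in[n]\}$, $C=$ the chosen elements, and
\[\phi(x_1,x_2;y_1,y_2;z)\ \equiv\ (x_1\le z\le x_2)\ \wedge\ \rho(y_1,z)\ \wedge\ \neg\rho(y_2,z)\]
(with $\rho$ replaced by $\hat\rho$ and/or the two coordinate blocks swapped as dictated by the reductions, and a simpler variant $(x_1\le z\le x_2)\wedge\rho(y,z)$ in the matching case), one checks that $\phi$ defines an $n\times n$ grid $(A,B,C)$ in $\str S$. Since only finitely many choices of $\rho$ and pattern type arise, a final pigeonhole over $n\to\infty$ fixes one quantifier-free $\phi$ working for all $n$, i.e. $\CC$ defines large grids.

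I expect the main obstacle to be the coherence step: a single selector set $Y$ must address the chosen elements \emph{uniformly across all $n$ intervals} — this is exactly what forces a genuine product (grid) structure rather than $n$ unrelated addressings — so one needs a careful Ramsey argument on the two-dimensional array of zones, keeping the selector tuples of bounded length (which is what then permits the concluding pigeonhole over formulas). This is precisely the sort of ``regular array'' extraction handled abstractly by the $1$-dimensionality analysis underlying Theorem~\ref{thm:summary-mt0}; an alternative route to the statement is therefore to invoke that theorem directly — noting that the failure of $(k,t)$-simplicity for all $k,t$ is exactly the failure of regularity, so that its contrapositive yields \emph{some} formula defining large grids — and then to regularise that formula down to a quantifier-free one using Ramsey in the ordered binary setting. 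The remaining ingredients (monotonicity of type-counts under coarsening, the single-relation reduction, the array Ramsey bound, the in-zone ladder extraction, and the verification that $\phi$ yields a bijection) are routine and I would not dwell on them.
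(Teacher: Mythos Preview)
Your combinatorial route is genuinely different from the paper's, but the step you correctly flag as the main obstacle --- the coherence step --- is not filled, and I do not see how the Ramsey argument you gesture at would fill it. ``Finitely many zone patterns of bounded size'' lets you uniformise the \emph{type} of ladder appearing in each zone, but it does nothing to pin down a \emph{common selector set} $Y$: the ladder in zone $(L_{i_0},R_j)$ uses some $r$-subset of $L_{i_0}$, and for different $j$'s these subsets may be entirely unrelated. No bound on $|L_{i_0}|$ is available, so you cannot pigeonhole on the selector set itself, and Ramsey over the array of zones colours pairs of \emph{indices}, not elements. A correct combinatorial argument along these lines exists (this is essentially what Bonnet et al.\ do in the independent work the paper cites), but it requires a much more delicate iterated extraction --- it is not ``routine''. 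Your alternative route has a separate gap: invoking Theorem~\ref{thm:summary-mt0} yields \emph{some} formula defining large grids, but ``regularising it down to quantifier-free using Ramsey'' is not a known move. The use of quantifier-freeness in Section~\ref{sec:regularise} is precisely that from $\phi(\tup a,\tup b,c)\wedge\neg\phi(\tup a,\tup b',c)$ one concludes $\atp(\tup b,c)\neq\atp(\tup b',c)$; with a quantified $\phi$ this fails, and homogenising the grid does not repair it.

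The paper takes an entirely different route: it never attempts a finitary coherence argument, and it never tries to strip quantifiers post hoc. Instead it runs the full model-theoretic proof of the implication (regular)$\leftarrow$(1-dimensional)$\leftarrow$(no large grids) from Theorem~\ref{thm:summary-mt0}, but \emph{tracks the specific formulas} that witness each failure. Starting from the failure of $(k,t)$-simplicity one takes $\phi(x;y_1,y_2)=\psi(x;z_1,z_2)=(y_1\le x\le y_2)$ and $\theta$ an atomic relation as the witnesses to non-regularity (Remark~\ref{rem:qf-1}); compactness and a Morley-sequence construction (Lemma~\ref{lem:preparation}) produce $\str M\prec\str N$ and tuples $\tup a_0,\tup a_1,\tup b_0$ with $\tup a_1\ind[\str M]\tup a_0\tup b_0$ but with infinitely many $\theta$-types between the two defined intervals; an inductive independence argument then shows $\str M$ is not $1$-dimensional. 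The formulas witnessing this failure are (i) a symmetric difference of instances of $\theta$, (ii) $\psi$ itself, and (iii) the formula $\zeta(\tup y;a,\tup e)=\phi(a;\tup y)\wedge\neg\exists x(\phi(x;\tup y)\wedge\phi(x;\tup e))$. The crucial observation is that when $\phi$ is the interval formula, $\zeta$ is \emph{equivalent to a quantifier-free formula}: two closed intervals are disjoint iff their endpoints are suitably ordered. Hence all three witnesses are quantifier-free, and Proposition~\ref{prop:fs dichotomy} then outputs a boolean combination of their instances as the grid-defining formula --- automatically quantifier-free. The coherence you are struggling to manufacture by Ramsey is handled for free by finite satisfiability: the Morley-sequence machinery guarantees $\tup a_1\ind[\str M]\tup a_0\tup b_0$ globally, and the inductive Claim (adding one $a$ or one $b$ at a time) propagates this to arbitrary tuples, which is exactly the ``same selectors work in every interval'' statement you need.
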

This is proved using model-theoretic methods. In particular, we use compactness  to construct an infinite ordered binary structure $\str N$ which defines infinite grids.
	This is done in Sections~\ref{sec:equivalences} and \ref{sec:main-proof}.

	Model-theoretic ideas are at the heart of our approach. We provide further characterisations of bounded twin-width classes in terms of notions originating from model-theory in Section~\ref{sec:equivalences}.
	In particular, some of the implications in Theorem~\ref{thm:intro} are stated and proved in much greater generality, for arbitrary classes of structures.

	\medskip

Finally, we prove:

\begin{theorem}\label{thm:grids}Let $\CC$ be a hereditary class of ordered binary structures.
	If $\CC$ defines large grids using a quantifier-free formula, then $\CC$ contains arbitrarily large regular $n\times n$-semigrids.
\end{theorem}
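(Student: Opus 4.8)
The plan is to start from a quantifier-free formula $\phi(\tup x,\tup y,z)$ which defines arbitrarily large grids in $\CC$, and to progressively normalize these grids until they become (essentially) regular semigrids. Fix $n$ large and take $G\in\CC$ together with sets $A\subset G^{\tup x}$, $B\subset G^{\tup y}$, $C\subset G$ with $|A|=|B|=N$, $|C|=N^2$, such that $\phi$ is the graph of a bijection $A\times B\to C$, where $N$ is a very large Ramsey-type number depending on $n$, $|\tup x|$, $|\tup y|$, and $|\Sigma|$. Since $\CC$ is hereditary, we are free to pass to induced substructures, in particular to restrict to the (finite) subset of $G$ consisting of all coordinates of tuples in $A$, all coordinates of tuples in $B$, and all elements of $C$; so without loss of generality $G$ is this set, equipped with its induced order and relations.

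The first step is a Ramsey/pigeonhole cleanup of the grid. Because $\phi$ is quantifier-free over a binary signature, whether $\str S\models\phi(\tup a,\tup b,c)$ depends only on the atomic type of the tuple $(\tup a,\tup b,c)$, i.e. on the order pattern of the (boundedly many) coordinates together with the finitely many binary relations among them. I would first apply a product Ramsey theorem (Ramsey on $A$, on $B$, and on $C$, with colours recording atomic types of small sub-configurations, plus the order position of each coordinate of a tuple relative to the others) to extract large sub-grids $A'\subset A$, $B'\subset B$, $C'\subset C$ on which everything is ``homogeneous'': the relative order of the coordinate-blocks of tuples in $A'$ is fixed and likewise for $B'$; the order interleaving between the blocks of an $A'$-tuple, a $B'$-tuple and a $C'$-element follows one fixed pattern; and all ``diagonal'' atomic types (between the $k$-th smallest tuple of $A'$, the $\ell$-th smallest of $B'$, and the element of $C'$ they map to) are constant. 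The key consequence of condition~\eqref{def:grids}, exploited here, is that after this cleanup the unique witness $c$ for a pair $(\tup a,\tup b)$ is pinned down by the atomic type of $(\tup a,\tup b,c)$: since $\phi$ is the graph of a bijection on $A'\times B'$ and is atomic, the element $c$ paired with $(\tup a,\tup b)$ is determined by finitely much ``local order/adjacency'' data relating $c$ to $\tup a$ and $\tup b$. From this one reads off that $C'$ splits, with respect to the order, into consecutive intervals indexed by (the order-position of) elements of $A'$ — giving the intervals $I_1<\dots<I_m$ — with the coordinate of $\tup b$ in $B'$ selecting, within each such interval, an element via the fixed adjacency pattern; this is precisely the $R$-graph structure between $I_0$ (a copy of $B'$, or of a pair of consecutive $B'$-elements, as in the example with $\le$-semigrids) and each $I_i$.

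The second step is to identify the type $R\in\set{\le,\ge,=,\neq}$ and establish regularity. Having fixed the atomic type of $(\tup a_k,\tup b_\ell, c_{k,\ell})$, the adjacency of $c_{k,\ell}$ to the relevant coordinate(s) of $\tup b_\ell$ and of $\tup b_{\ell'}$ tells us how ``the $i$-th element of $I_k$'' relates to ``the $j$-th element of $I_0$''; monotonicity forced by the order (after the Ramsey step) shows this relation is one of $iRj$ for a fixed $R$, up to the bookkeeping distinguishing whether a single $B'$-element or a consecutive pair encodes the column index. One then invokes the Ramsey argument alluded to in the introduction (``Section~\ref{sec:bad}'') to further thin $A'$, $B'$, $C'$ so that the leftover unspecified data — adjacencies inside $I_0$, inside $I_1\cup\dots\cup I_m$, and the relative order of $I_0$ with respect to $I_1,\dots,I_m$ — become constant in the appropriate regular sense, yielding an honest regular $n\times n$-semigrid of type $R$ as an induced substructure of $G$, hence a member of $\CC$. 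Letting $n\to\infty$ gives arbitrarily large regular semigrids, as required.

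The main obstacle I anticipate is the bookkeeping in the first step: because $\tup x$ and $\tup y$ are tuples rather than single variables, a tuple in $A'$ occupies several positions in the order, and the ``grid'' structure on $C'$ need not be visible until one has chosen, via Ramsey, a consistent ordering scheme for these coordinate-blocks and a consistent interleaving with $C'$-elements. Verifying that, after this cleanup, the bijection $\phi$ really does organize $C'$ into order-intervals indexed by $A'$ with an $R$-graph to $B'$ — rather than some more tangled incidence pattern — is the crux; it is exactly where quantifier-freeness of $\phi$ (locality of the definition) and the bijectivity clause of Definition~\ref{def:grids} are both essential. The remaining steps (identifying $R$, and the final regularization) are then routine Ramsey arguments of the kind already used for semigrids.
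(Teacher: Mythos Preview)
Your overall strategy---Ramsey-homogenize the grid, then read off a semigrid---matches the paper's, and you correctly flag the crux: why, after cleanup, does $C$ break into \emph{order-consecutive} intervals indexed by $A$? But you do not actually resolve this, and the paper's argument here is sharper and slightly different from what you sketch.

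The paper applies the product Ramsey theorem not to $A$, $B$, $C$ separately, but to the ordered grid $(C,\le_1,\le_2)$ (where $\le_1,\le_2$ are the two quasi-orders on $C$ pulled back from the lex orders on $A$ and $B$ via the bijection $\alpha$), colouring each \emph{pair} $(c,c')\in C^2$ by the full atomic type of the sextuple $(\tup a,\tup b,\tup a',\tup b',c,c')$ in $\str S$. Homogeneity then means this atomic type depends only on the $(\le_1,\le_2)$-type of $(c,c')$---nine possibilities. In particular, the \emph{order} relation $c\le c'$ in $\str S$ is determined by that $(\le_1,\le_2)$-type. The key claim, proved by inspecting four well-chosen points, is that any such order must coincide with one of the eight lexicographic orders built from $(\le_A$ or $\ge_A)$ and $(\le_B$ or $\ge_B)$, with one of the two priorities. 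Once $\le$ on $C$ is lexicographic with (say) $A$-priority, the $A$-fibres of $C$ are automatically consecutive intervals $I_1<\cdots<I_m$. This is the step you identified but did not supply.

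Two further points where the paper is cleaner than your sketch. First, to produce $I_0$: rather than taking ``a copy of $B'$ or a pair of consecutive $B'$-elements'', the paper observes that since $\phi$ is quantifier-free over a binary signature and distinguishes $(\tup a,\tup b,c)$ from $(\tup a,\tup b',c)$ while $\atp(\tup a,\tup b)=\atp(\tup a,\tup b')$ by homogeneity, there must be a \emph{single} coordinate $y\in\tup y$ with $\atp(\tup b(y),c)\neq\atp(\tup b'(y),c)$; then $I_0:=\pi_y(B)$. Second, the paper does not identify a type $R\in\{\le,\ge,=,\neq\}$ at all---that classification is specific to the graph signature. For general binary $\Sigma$ it defines regular semigrids by three dependency conditions (atomic types within $B$, within $C\setminus B$, and between them, each depending only on the relevant grid-type) plus a non-constancy clause; the non-constancy is exactly what the previous paragraph establishes, and the other dependencies are immediate from homogeneity.
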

This uses a result from Ramsey theory to regularize the obtained grids. It crucially relies on the grids being defined by a quantifier-free formula. 
	This is proved in Section~\ref{sec:regularise}.

	As a consequence of our proof, we obtain Theorem~\ref{thm:approx}, yielding a polynomial-time approximation algorithm for the twin-width of a given ordered binary structure. This in turn yields Corollary~\ref{cor:fpt}, about fixed-parameter tractability of model-checking of first-order logic on ordered structures of bounded twin-width.
	This is shown in Section~\ref{sec:algorithms}.

	\medskip

From here, it is all downhill. The following result is relatively straightforward.
\begin{proposition}\label{prop:easy}
	Let $\CC$ be a hereditary class of ordered graphs which contains arbitrarily large regular $n\times n$-semigrids.
	Then:
	\begin{itemize}
		\item $\CC$ interprets the class of all graphs,
		\item model-checking first-order logic on $\CC$ is as hard as on the class of all graphs,
		\item $\CC$ contains at least $\lowerbound$ non-isomorphic structures with $n$ vertices.
	\end{itemize}
\end{proposition}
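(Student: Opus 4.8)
The plan is to prove the three items of Proposition~\ref{prop:easy} separately, exploiting the fact that a regular $n\times n$-semigrid comes with definable ``same row'' and ``same column'' relations. First I would fix the notion of a regular semigrid and extract the crucial combinatorial feature: in a regular $n\times n$-semigrid of any of the $256$ schemes, there is a quantifier-free formula $\rho(x_1,x_2;z)$ which, given the endpoints $\tup a_i=(\min I_i,\max I_i)$ of the $i$th block, selects exactly the elements of $I_i$ (this is the ``$x_1\le z\le x_2$'' part of the formulas already worked out in the grid example), and a quantifier-free formula $\gamma(\tup b;z)$ using finitely many parameters $\tup b$ from $I_0$ which selects the $j$th element of each block (the adjacency pattern to $I_0$, possibly to one or two consecutive elements of $I_0$ as in the $\le$-case). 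Thus ``$z$ and $z'$ are in the same block'' and ``$z$ and $z'$ are at the same position in their blocks'' are both definable with parameters ranging over a set of size $O(n)$.

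For the first item, I would build the interpretation of the class of all graphs. Given a graph $H$ on vertex set $\{1,\dots,n\}$, take a regular $n\times n$-semigrid $G\in\CC$ (large $n$ exists by hypothesis; here I may pass to $n' \ge \binom n2$ and use a regular sub-semigrid, which is itself in $\CC$ by heredity). Use the ``block index'' and ``position in block'' coordinates to label the $n^2$ grid cells $(i,j)$, and encode the edge $\{i,j\}\in E(H)$ by whether the cell $(i,j)$ is, say, adjacent to a designated marker — but since a bare interpretation has no colours, I would instead encode $H$ by a suitable choice of the unspecified adjacencies of the semigrid (adjacencies within $I_1\cup\dots\cup I_m$, within $I_0$, and the relative order of $I_0$): these are exactly the degrees of freedom the definition of semigrid leaves open, so for each $H$ there is a semigrid $G_H\in\CC$ whose free bits encode $H$. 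Then a single formula $\phi_E(x,y)$, using the definable coordinates, recovers $E(H)$ from $G_H$, and the domain formula $\delta$ picks out one representative per vertex $i$ (e.g. the minimum of $I_i$). This gives $\CC$ interprets all graphs.

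For the second item, hardness of model-checking, I would observe that the interpretation from the first item is computable and maps in polynomial time: given $H$, one constructs $G_H$ explicitly (it has $O(n^2)$ vertices), and then model-checking $\phi$ on $H$ reduces to model-checking $\phi\circ I$ on $G_H\in\CC$; since first-order model-checking on the class of all graphs is the canonical AW[$*$]-complete problem, this transfers the hardness to $\CC$. For the third item, counting, I would argue directly: the semigrid $G_{n}$ on $N=O(n^2)$ vertices has at least $\lfloor N/3\rfloor!$ pairwise non-isomorphic substructures in $\CC$, because one can, for instance, permute the positions inside one distinguished block relative to the adjacency pattern to $I_0$, or more robustly, realize $\lfloor N/3\rfloor!$ distinct ordered bipartite-type patterns between two long consecutive intervals of the grid; heredity keeps all these in $\CC$, and the lower bound $\lowerbound$ on the number of non-isomorphic $N$-element structures follows. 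I expect the main obstacle to be the bookkeeping in the first item — namely making the encoding of an arbitrary $H$ into the ``free bits'' of a regular semigrid both uniform (one interpretation works for all $H$) and consistent with regularity (so that the resulting $G_H$ still lies in $\CC$, which is only assumed to contain regular semigrids, not arbitrary ones); this is where one must be careful about which scheme is available and may need a Ramsey-type argument, already invoked in the introduction, to pin down a single scheme occurring for all $n$.
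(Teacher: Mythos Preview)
Your proposal has a genuine gap in the encoding mechanism for the first item, and this gap propagates to the other two items.

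You propose to ``encode $H$ by a suitable choice of the unspecified adjacencies of the semigrid (adjacencies within $I_1\cup\dots\cup I_m$, within $I_0$, and the relative order of $I_0$): these are exactly the degrees of freedom the definition of semigrid leaves open.'' But this is precisely what \emph{regularity} removes: a regular semigrid of a fixed scheme $\sigma$ is completely determined by its dimensions; there are no free bits left. The hypothesis only gives you regular semigrids in $\CC$, so you cannot vary adjacencies to encode $H$. You recognise this tension in your last sentence but do not resolve it, and it cannot be resolved along the lines you suggest.

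The paper's key idea, which you are missing, is to encode a relation $S\subset\{1,\dots,m\}\times\{1,\dots,n\}$ not by varying edges but by \emph{deleting vertices}: from the full regular $m\times n$-semigrid $G$ one passes to the induced substructure $G^S$ on the set $\{\ast\}\cup\cal R\cup\cal C\cup S$, where $\ast=(0,0)$, $\cal R=\{\min I_1,\dots,\min I_m\}$, and $\cal C=I_0\setminus\{\ast\}$. Since $\CC$ is hereditary, $G^S\in\CC$. One then shows (by a short case analysis on the scheme) that first-order formulas can recover $\cal R$, $\cal C$, and the relation $S\subset\cal R\times\cal C$ from $G^S$ alone, yielding the interpretation into ordered bipartite graphs and hence into all graphs. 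Model-checking hardness follows since $G^S$ is computable in polynomial time from $S$. The counting lower bound also comes from this: for $S$ ranging over $k\times k$ permutation matrices one gets $k!$ pairwise non-isomorphic structures $G^S$ of size $1+k+k+k\le 3k+1$, giving the $\lfloor n/3\rfloor!$ bound. Your counting sketch (``permute the positions inside one distinguished block'', ``realize distinct ordered bipartite-type patterns'') again tries to vary structure inside a regular semigrid, which is not possible.
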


 Conversely, by the results of~\cite{tww1} and~\cite{tww2} classes   of ordered graphs of bounded twin-width:
 \begin{itemize}
	 \item are preserved under transductions,	 
	  \item have at most $2^{\Oof(n)}$ structures of size $n$, up to isomorphism,
 \end{itemize}

 This closes the loop (cf. Fig.~\ref{roadmap-small}) and proves the equivalence of all the notions considered above, yielding our first main result,
Theorem~\ref{thm:intro}. 

\medskip
Our second main result, Theorem~\ref{thm:summary-mt0}, applies to arbitrary  classes of structures, not necessarily finite, ordered or binary. Similarly as above, at its core is a suitable grid theorem. We define a notion \emph{regular} classes $\CC$, which generalizes the notion of $(k,t)$-simple classes above to arbitrary structures. 
We then show that if a class $\CC$ is not regular then it defines large grids.
This is proved in Section~\ref{sec:equivalences}, alongside  with Theorem~\ref{thm:core}. In the process, we exhibit various properties of monadically NIP classes of structures,
in particular, that they are \emph{1-dimensional}.

\section{Regular semigrids}
\label{sec:bad}
In this section, we define the notion of regular semigrids,
and prove that they exhibit bad behaviour,
proving Proposition~\ref{prop:easy}.
In particular, they do not have bounded twin-width.

We start with defining regular semigrids.
The definition below applies only to ordered graphs, rather than to ordered binary structures.
The complete definition of regular semigrids for ordered binary structures is deferred to Section~\ref{sec:regularise}. Proposition~\ref{prop:easy} will be easily lifted to arbitrary binary signatures (cf. Corollary~\ref{cor:fails-sigma}).

\newcommand{\direction}{\mathrm{dir}}

\medskip
Fix an $m\times n$-semigrid 
consisting of disjoint intervals $I_0,\ldots,I_m$ of length $n+1$ each, with $I_1<\ldots<I_m$.
 It will be convenient to identify 
 the elements of the $m\times n$ semigrid with the points in $[m+1]\times [n+1]$ so that for each $i\in[m+1]$, the elements of $I_i$ 
	 are identified with the elements in the $i$th row $\set i\times [n+1]$ in increasing order 
	 (that is, the smallest element in $I_i$ is identified with $(i,0)$, the second smallest with $(i,1)$, etc.).
	 In other words, the intervals $I_0,\ldots,I_m$ are 
	 arranged in an $(m+1)\times (n+1)$ matrix by putting $I_0$ in the first row,
	 $I_1$ in the next one, and so on.
	 \todo{pic} Note that the order $\le$ on $I_1\cup\cdots\cup I_m$ (with the interval $I_0$ omitted) agrees with the lexicographic order on $\set{1,\ldots,m}\times\set{0,\ldots,n}$.
	 For $p,q\in I_1\cup\cdots\cup I_m$, with $p=(i,j)$ 
	 in the $i$th row and $j$th column and $q=(i',j')$ in the $i'$th row and $j'$th column, 
	 and $p< q$ lexicographically, let
	 $\direction(p,q)\in\set{\rightarrow,\downarrow,\searrow,\swarrow}$ be equal to:	 
	 \begin{quote}
		 $\rightarrow$ if $i=i'$, $j<j'$,\qquad
		$\downarrow$ if $j=j'$, $i<i'$,
	   \qquad $\searrow$ if $i<i'$,$j>j'$,\qquad
		$\swarrow$ if $i<i'$, $j'<j'$.
	 \end{quote}		 	
An $m\times n$-semigrid is~\emph{regular}~if:
 \begin{enumerate} 
	\item
 $I_0<I_1<\ldots<I_m$ or  $I_1<\ldots<I_m<I_0$, 
 \item $I_0$ forms a clique or an independent set, and
 \item for $p,q\in I_1\cup\cdots\cup I_n$ with $p< q$, the adjacency between $p$ and $q$
 depends only $\direction(p,q)$.
\end{enumerate}
\newcommand{\schemes}{\mathrm{Schemes}}
\newcommand{\regsigma}{\mathscr R_\sigma}
	A \emph{scheme} of a semigrid consists of: the type $R\in\set{\le,\ge,\neq,=}$, a parameter $S\in \set{<,>}$ 
corresponding to the choice (1) above,
a parameter $T\in\set{\text{clique},\text{independent}}$ corresponding to the choice (2)  above,
and a subset $X\subset \set{\rightarrow,\downarrow,\swarrow,\searrow}$ corresponding to the choices (3) above, where a pair $p,q$ with $p<q$ is adjacent if and only if $\direction(p,q)\in X$.
 Denote by \[\schemes=\set{\le,\ge,\neq,=}\times \set{<,>}\times \set{\text{clique},\text{independent}}\times 2^{\set{\rightarrow,\downarrow,\swarrow,\searrow}}\] the set of all possible schemes, so $|\schemes|=2^8$. 
For each $\sigma\in \schemes$, let $\regsigma$ denote the class of all regular $m\times n$ semigrids conforming to $\sigma$, for all $m,n\in\N$. 

Note that a regular $m\times n$-semigrid $G\in\regsigma$
contains as induced substructures 
every regular $m'\times n'$-semigird $G'\in \regsigma$, for all $m'\le m$ and $n'\le n$ (by removing some rows and columns of $G$),
so if a hereditary class $\CC$ contains arbitrarily large regular $n\times n$-semigrids then $\CC\supseteq\regsigma$ for some $\sigma\in\schemes$.

The definition of regular semigrids is generalized to arbitrary binary structures in Section~\ref{sec:regularise}.

\medskip
Proposition~\ref{prop:easy} is restated below.
\begin{proposition}\label{prop:fails}Let $\CC$ be a hereditary class of ordered graphs.
	If $\CC$ contains arbitrarily large regular semigrids then:
	\begin{enumerate}
		\item\label{it:fails-inter} $\CC$ interprets  the class of all graphs,
		\item\label{it:fails-mc} model-checking first-order logic on $\CC$ is AW[$\ast$]-hard,
		\item\label{it:fails-small} $\CC$ contains at least  $\lowerbound$ non-isomorphic structures with $n$ elements.
	\end{enumerate}
\end{proposition}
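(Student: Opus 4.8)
\textbf{Proof plan for Proposition~\ref{prop:fails}.}
The plan is to work scheme by scheme: since $\CC$ is hereditary and contains arbitrarily large regular semigrids, there is a fixed scheme $\sigma\in\schemes$ with $\regsigma\subseteq\CC$, and it suffices to prove the three assertions for $\regsigma$. The heart of the matter is item~\eqref{it:fails-inter}, constructing a single interpretation $I$ such that every finite graph $H$ arises as $I(G)$ for some $G\in\regsigma$. For this I would first isolate, from the scheme $\sigma$, a quantifier-free formula that recovers the two \emph{coordinate relations} on the ``body'' $I_1\cup\cdots\cup I_m$ of a regular semigrid: the \emph{same-row} relation ``$p,q$ lie in the same $I_i$'' and the \emph{same-column} relation ``$p,q$ have the same distance from the start of their interval''. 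The key observation is that in a regular semigrid these are first-order (indeed quantifier-free, using $\le$ and $E$) definable: two elements $p<q$ of the body are in the same row iff $\direction(p,q)=\rightarrow$, and one can distinguish $\rightarrow$ from $\downarrow,\searrow,\swarrow$ by testing adjacency to suitably chosen body elements and to endpoints of the intervals $I_i$ (whose first/last elements are themselves definable via the $R$-graph condition relating $I_0$ to each $I_i$, exactly as in the worked Example following Definition~\ref{def:grids}). Likewise ``same column'' is captured by $\direction(p,q)=\downarrow$. Some care is needed because $X\subseteq\set{\rightarrow,\downarrow,\swarrow,\searrow}$ may fail to separate $\rightarrow$ from the others by adjacency alone; this is where one uses the order $\le$ together with the endpoints of intervals as ``landmarks'' to pin down which of the four directions a pair realises.

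Granting the same-row and same-column relations, the interpretation is the standard grid-encoding: given a finite graph $H$ on vertex set $\set{1,\ldots,n}$, take $m=n$ and work in the $n\times n$ body of a regular semigrid in $\regsigma$, which we view via the identification with $[n]\times[n]$. Select the diagonal elements $D=\setof{(i,i)}{1\le i\le n}$ as the copies of the vertices of $H$ --- $D$ is first-order definable as ``$p$ is the unique body element in its row that is also in its own column'', or more simply cut down to $D$ by a $\delta(x)$ built from the same-row/same-column relations. Now for the single off-diagonal entry $(i,j)$ with $i<j$ I can force, by choosing the right row/column of the semigrid or by a further Boolean combination, the edge relation of $H$ to be read off from adjacency of $(i,j)$ to the landmarks; concretely, one encodes $E_H(i,j)$ into whether some chosen body vertex aligned with row $i$ and column $j$ is adjacent to a chosen vertex of $I_0$, which is free to be set arbitrarily since the adjacencies \emph{within} $I_0$ and between $I_0$ and the body's columns in a semigrid of a fixed type are only constrained at the $R$-graph level --- wait, in a \emph{regular} semigrid these are fixed by $R$, so instead I encode $E_H$ by which of $n$ many candidate regular semigrids (differing only in which rows/columns are retained, hence genuinely all in $\regsigma$) we pick. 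The cleanest route: since $\regsigma$ is closed under deleting rows and columns, build $H$ by starting from a large regular semigrid and deleting just enough structure that the surviving adjacency pattern on the diagonal-plus-one-auxiliary-vertex-per-edge realises $H$; then $I$ is: domain $=$ the diagonal $D$, and $E_I(x,y)$ $=$ ``the body vertex at row of $x$, column of $y$ is adjacent to the body vertex at row of $y$, column of $x$'', a quantifier-free formula using only same-row, same-column and $E$. One checks this yields exactly $H$. This same construction immediately gives item~\eqref{it:fails-mc}: model-checking FO on the class of all graphs reduces to model-checking FO on $\regsigma$ by pulling back sentences along $I$ (an interpretation induces a polynomial, even linear, time reduction on the model-checking problem), and FO model-checking on all graphs is AW[$*$]-hard.

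For item~\eqref{it:fails-small} the plan is a direct counting argument, not going through the interpretation. In a regular $n\times n$-semigrid, fix the scheme; the only remaining freedom lies in nothing --- a regular semigrid of given dimensions and given scheme is unique. So instead I count induced substructures: inside one large regular semigrid $G\in\regsigma$ of dimensions roughly $n\times n$, I exhibit at least $\lowerbound=\lfloor n/3\rfloor!$ pairwise non-isomorphic induced substructures on $\le n$ vertices. The idea is the usual permutation-encoding: pick $k=\lfloor n/3\rfloor$ rows $i_1<\cdots<i_k$ and, in each, a single body element at column $c_{i_t}$; varying the map $t\mapsto c_{i_t}$ over the $k!$ permutations of $\set{1,\ldots,k}$ produces $k!$ induced subsets whose $\le$-and-$E$ structure (the relative order of the chosen points across rows versus columns, witnessed by $\direction$) distinguishes the permutations --- here one uses that $\direction(p,q)$, hence by regularity the adjacency, records whether the column-order agrees or disagrees with the row-order, so two different permutations give non-isomorphic ordered graphs. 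To be safe I add $O(k)$ auxiliary ``landmark'' vertices (endpoints of the chosen rows, a couple of $I_0$ vertices) so that the isomorphism type actually pins down the permutation; the total size stays $O(n)$, and after rescaling $n$ this gives $\ge\lfloor n/3\rfloor!$ non-isomorphic structures on $n$ elements, as claimed.

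\textbf{Main obstacle.} The genuinely delicate step is the first one: showing that for \emph{every} scheme $\sigma$ --- all $2^8$ of them, including degenerate ones where $X=\emptyset$ or $X=\set{\rightarrow,\downarrow,\searrow,\swarrow}$ so that body-adjacency carries no information --- the same-row and same-column relations remain first-order definable, using the order and the $I_0$-landmarks as a crutch. One must check that the endpoints $\min I_i,\max I_i$ are always recoverable (this uses the $R$-graph condition with $R\in\set{\le,\ge,=,\neq}$, each of which, as in the Example, lets one address the $i$-th element of an interval) and that from these landmarks and $\le$ one can decide, for a pair $p<q$ in the body, which of $\rightarrow,\downarrow,\searrow,\swarrow$ holds --- e.g. $p,q$ are in the same row iff there is no landmark strictly between them, same column iff $q$ follows $p$ by exactly one ``interval-length'' in the lexicographic order, which is expressible since interval endpoints are definable. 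Once this uniform definability is in hand, the interpretation and the counting are routine.
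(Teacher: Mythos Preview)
Your overall strategy---fix a scheme $\sigma$ with $\regsigma\subseteq\CC$, encode an arbitrary relation by passing to an induced substructure of a large regular semigrid, and decode via a fixed interpretation---is exactly right, and the reductions for \eqref{it:fails-mc} and \eqref{it:fails-small} follow from \eqref{it:fails-inter} as you say. The gaps are in how you decode.

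First, the edge formula you propose, ``the body vertex at (row of $x$, column of $y$) is adjacent to the body vertex at (row of $y$, column of $x$)'', cannot carry any information about $H$: in a regular semigrid every body-to-body adjacency is determined by $\direction$, hence by $\sigma$ alone, regardless of which body vertices you retained. What encodes $H$ is \emph{whether} a body vertex is present at position $(i,j)$; the edge formula must be existential. Second, your definition of same-column, ``$q$ follows $p$ by exactly one interval-length'', is not first-order, and your fallback of reading $\direction(p,q)$ from body adjacency fails precisely for the schemes you worry about (e.g.\ $X=\emptyset$). You identify this as the main obstacle but do not resolve it.

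The paper avoids defining same-row/same-column on the body altogether. It keeps, alongside the selected body vertices $S\subset\{1,\ldots,m\}\times\{1,\ldots,n\}$, explicit representatives $\mathcal R=\{\min I_1,\ldots,\min I_m\}$ for rows and $\mathcal C=I_0\setminus\{\min I_0\}$ for columns. The row of a body element $p$ is the largest element of $\mathcal R$ below $p$ (pure order). The column of $p$ is read off from its adjacency to $\mathcal C$: by the $R$-graph condition between $I_0$ and each $I_i$, the neighbourhood of $p$ in $I_0$ depends only on its column, and a four-case analysis on $R\in\{\le,\ge,=,\neq\}$ singles out one element of $\mathcal C$ (for $R={=}$ the unique neighbour; for $R={\le}$ the unique $c$ with $E(c,p)\wedge\neg E(c^+,p)$; etc.). The interpretation outputs the ordered bipartite graph on $\mathcal R\cup\mathcal C$ with edge $(r,c)$ iff some retained body vertex projects to row $r$ and column $c$; this recovers $S$. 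The counting bound then falls out by taking $S$ to range over $k\times k$ permutation matrices with $k=\lfloor n/3\rfloor$, giving structures of size $3k+1\le n$ that are pairwise non-isomorphic because the interpretation recovers $S$.
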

\begin{proof}

	Fix any $\sigma\in\schemes$ such that $\CC$ contains $\regsigma$. 
	Without loss of generality we may assume that $\sigma$ imposes an independent set on $I_0$ (otherwise we replace edges by non-edges in all graphs in $\CC$ and  replacing `$\mathrm{clique}$' by `$\mathrm{independent}$' in $\sigma$).
	Also without loss of generality we may assume that $\sigma$ imposes $I_0<I_1$ (otherwise reverse the order in all structures in $\CC$ and replace $<$ by $>$ in $\sigma$).

	\newcommand{\rows}{\cal R}
	\newcommand{\cols}{\cal C}

	Fix a semigrid $G\in\regsigma$; recall that its 
	elements are identified with $[m+1]\times[n+1]$.
	Let  $\ast=(0,0)$ denote the smallest element in $G$. This is also the smallest element of $I_0$, by the assumption above. 
	Let $\rows=\set{\min I_1,\ldots,\min I_m}=\set{(1,0),(2,0),\ldots,(m,0)}$ be the set of smallest elements of each of the intervals $I_1,\ldots,I_m$ ($I_0$ is omitted). Those elements represent the rows of $M=\set{1,\ldots,m}\times \set{1,\ldots,n}$.
	Let $\cols=\set{(0,1),\ldots,(0,n)}=I_0\setminus\set\ast$; those elements represent the columns of $M$.\todo{pic}

	For a given $S\subset M$,  denote by $G^S$ the ordered subgraph of $G$ induced by $\set\ast\cup \cols\cup\rows\cup S$. 
	The following is immediate.
\begin{claim}\label{cl:effective-semigrid}
	There is a polynomial-time algorithm which inputs numbers $m,n\in\N$, a relation $S\subset \set{1,\ldots,m}\times\set{1,\ldots,n}$ and in polynomial time outputs the ordered graph $G^S\in\regsigma$.
\end{claim}

	This will be the basis of our reduction from the model-checking problem over the class of arbitrary binary relations to that over $\CC$.
	
	\medskip
	We now show how to interpret the binary relation $S$ in $G^S$ by a first-order formula.
	The following lemma is proved by a simple case analysis, depending on $\sigma\in\schemes$.

\begin{lemma}\label{lem:semigrid-formulas}
	There are first-order formulas $\phi_\cols(x),\phi_\rows(x),\pi_1(x,y),\pi_2(x,y),\rho(x,y)$ 
	such that for every $m\times n$-semigrid $G\in\regsigma$ and $S\subset \set{1,\ldots,m}\times\set{1,\ldots,n}$:
	\begin{align*}		
	\phi_\cols(G^S)&=\cols\subset G^S,\\
	\phi_\rows(G^S)&=\rows\subset G^S,\\
	\pi_1(G^S)&=\setof{((i,j),(i,0))}{(i,j)\in S}\subset S\times \rows\subset  G^S\times G^S\\
	\pi_2(G^S)&=\setof{((i,j),(0,j))}{(i,j)\in S}\subset S\times \cols\subset G^S\times G^S\\
	\rho(G^S)&=\setof{((i,0),(0,j))}{(i,j)\in S}\subset \rows\times\cols\subset  G^S\times G^S.
\end{align*}	
In words, $\pi_1$ and $\pi_2$ define the graphs of the two projections from $S\subset \set{1,\ldots,m}\times \set{1,\ldots,n}$ onto 
$\rows\simeq\set{1,\ldots,m}$ and $\cols\simeq\set{1,\ldots,n}$, while 
$\rho$ defines the relation $S\subset\set{1,\ldots,m}\times \set{1,\ldots,n}$ as a relation on $\rows\times\cols$.

\end{lemma}
\begin{proof}


	If $\sigma$ specifies a type among $\le,\ge,=$, then $\cols$ is the open interval between $\ast$ and the smallest neighbor of $\ast$.
	If $\sigma$ specifies the type $\neq$, then $\cols$ is 
	the open interval between $\ast$ and the smallest neighbor of the successor of $\ast$. In either case, $\cols$ can be described by a first-order formula $\phi_\cols(x)$ not depending on $m,n\in\N$ and $S$.

	If $\sigma$ specifies the type  $=$ or $\ge$, then $\rows$ is the set of neighbors of $\ast$.
	If $\sigma$ specifies the type $\le$ then $\rows$ is the neighborhood of $\ast$ minus the neighborhood of the successor of $\ast$. Finally, if $\sigma$ specifies the type $\neq$ then $\rows$ is  the complement of $\cols\cup \set\ast \cup N(\ast)$ (the set of neighbors of $\ast$). In all cases, $\rows$ can be described by a first-order formula $\psi_\rows(x)$ not depending on $m,n$ and $S$.

	Note that an element $p\in S$ lies in the $i$th row if and only if $(i,0)<p<(i+1,0)$, which can be expressed by a first-order formula. More precisely, consider the first-order formula $\pi_1(x,y)$ which holds if and only if $y\in\rows$, and $y<x<y'$ where $y'$ is the least $y'\in\rows$ with $y> y'$ (unless no such $y'$ exists, then just take $y<x$). Then $\pi_1(p,q)$ holds if and only if $p$ and $q$ are in the same row and $p\in\rows$, defining the required relation.

	Dually, we can define that an element in $S$ is in a column $c\in\cols$. Namely, let $\pi_2(x,y)$ be the first-order formula expressing that $y\in\cols$, 
	and:
	\begin{description}
	\item[(case $=$)] $x$ is a neighbor of $y$,
	\item[(case $\neq$)] $x$ is a non-neighbor of $y$,
	\item[(case $\le$)] $y$ is a neighbor of $x$ but not of the predecessor  of $x$,
	\item[(case $\ge$)] $x$ is a neighbor $y$ but not of the predecessor of $y$.
\end{description}
Finally, the formula $\rho(x,y)$ expresses that $x\in\rows,y\in\cols$ and there exists $z$ such that $\pi_1(z,x)$ and $\pi_2(z,y)$ hold. 
\end{proof}

	\newcommand{\BB}{\mathscr B}
Let $\BB$ be the class of all finite, ordered bipartite graphs $H=(V_1\cup V_2,E,\le_1,\le_2)$ with parts $V_1$ and $V_2$, where $V_1$ is totally ordered by $\le_1$ and $V_2$ is totally ordered by $\le_2$, and $E\subset V_1\times V_2$.
	\begin{lemma}\label{lem:universal-interpretation}
		There is an interpretation $I_\sigma$ such that ${I_\sigma(\CC)=\BB}$ for every hereditary class of ordered graphs $\CC$ containing $\regsigma$.
		More precisely, for $S\subset \set{1,\ldots,m}\times\set{1,\ldots,n}$,
		the structure $I_\sigma(G^S)$ is the bipartite graph with parts $\rows=\set{1,\ldots,m}$ and $\cols=\set{1,\ldots,n}$, each ordered by $\le$, where $i\in\rows$ and $j\in\cols$ are adjacent if and only if $(i,j)\in S$. 	
\end{lemma}
	\begin{proof}
		$I_\sigma$ consists of:
 \begin{itemize}
	 \item the the domain formula $\delta(x)\equiv \phi_\rows(x)\lor \phi_\cols(x)$ defining the domain $\rows\cup\cols$, 
	 \item the formula $\phi_E(x,y)\equiv \rho(x,y)$ defining the edge relation $E$,  
	 \item 
	 the formula $\phi_\rows(x)\land\phi_\rows(y)\land x\le y$ defining the order $\le_1$ on $\rows$, and
	 \item the formula 
	 $\phi_\cols(x)\land\phi_\cols(y)\land x\le y$ defining the order $\le_2$ on $\cols$.
 \end{itemize}
The statement follows from Lemma~\ref{lem:semigrid-formulas}.
\end{proof}

As the class of all graphs interprets in the class of all bipartite graphs via some first-order interpretation $T$, 
the interpretation $I_\sigma$ can be further composed with $T$ obtaining an interpretation $T\circ T_\sigma$ such that $(T\circ T_\sigma)(\CC)$ is the class of all finite graphs. This proves~\eqref{it:fails-inter}.

This also gives a polynomial-time reduction of the model-checking problem on the class of bipartite graphs to the  model-checking problem on $\CC$:
	a sentence $\psi$ holds in a bipartite graph $H=(V_1,V_2,E)$ with $V_1=\set{1,\ldots,m}, V_2=\set{1,\ldots,n}$ if and only if the sentence obtained from $\psi$ by replacing each atom $E(x,y)$ by $\rho(x,y)$ holds in the structure $G^S$, where $G\in \regsigma$
	is the $m\times n$-semigrid and $S=E\cap (V_1\times V_2)$. 
	As $G^S$ can be computed in polynomial time given $H$ (cf. Claim~\ref{cl:effective-semigrid}),
	this yields the required polynomial-time reduction. Since model-checking first-order logic on the class of all bipartite graphs is AW[$\ast$]-hard~\cite{10.5555/1121738}, this yields the same lower bound for $\CC$, proving~\eqref{it:fails-mc}.
	
	\medskip
	We now prove~\eqref{it:fails-small}.
 For a $k\times l$ matrix $M\in\set{0,1}^{k\times l}$ with 0-1 entries 
 let $|M|$ denote $k+l$ plus the number of nonzero entries in $M$.
For $n\in\N$ define 
\[f(n)=\max\setof{|M|}{k,l\in\N, M\in \set{0,1}^{k\times l}}.\]

It is easy to see that $f(n)\ge \lfloor \frac n 3 \rfloor!$.
Indeed: for $n=3k$ 
consider a
$k\times k$ permutation matrix $M$,
having exactly one nonzero entry in each row and each column. Then $|M|=n$, and there are exactly $k!$ such matrices.

This bound can be easily improved:
instead of permutation matrices we can consider all $k\times k$ matrices with exactly one nonzero entry in  each column,
yielding $f(n)\ge k^k$ for $n=3k$.
Further, taking all $k\times k$ matrices with exactly $k$ nonzero entries yields $f(n)\ge {k^2\choose k}$ for $n=3k$.
We expect those lower bounds can be improved further. 
We use the first lower bound 
as it yields the simplest expression among the three.

\begin{lemma}
	$\CC$ contains
	at least $f(n)$ non-isomorphic structures with $n$ vertices, for each $n\in\N$.
\end{lemma}
\begin{proof}
	
	For all $k,l\in\N$ and matrix $M\in \set{0,1}^{k,l}$, consider the $k\times l$-regular semigrid $G\in\regsigma$ and denote by $F(k,l,M)$ the ordered graph $G^S\in \CC$ as constructed above for $S$ being the set of non-zero entries in $M$:\[S=\setof{(i,j)}{1\le i\le k,1\le j\le l, M_{ij}=1}.\]

	It follows from the above that $F$ 
maps distinct matrices to non-isomorphic structures in $\CC$. Indeed: for the transduction $T$ defined above, $T(F(m,n,M))$ uniquely determines $m,n$ and $M$.
This proves the lemma.
\end{proof}

This  completes the proof of Proposition~\ref{prop:fails}, and hence Proposition~\ref{prop:easy}. 
(See Corollary~\ref{cor:fails-sigma} below for a generalization to arbitrary binary signatures).
\end{proof}

\begin{corollary}
	$\CC$ has unbounded twin-width.
\end{corollary}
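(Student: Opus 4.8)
The statement to prove is the final Corollary: if $\CC$ is a hereditary class of ordered graphs containing arbitrarily large regular semigrids, then $\CC$ has unbounded twin-width. The plan is to derive this as an immediate consequence of Proposition~\ref{prop:fails} (equivalently Proposition~\ref{prop:easy}), which has just been proved, together with the basic closure and counting properties of classes of bounded twin-width recalled in Fact~\ref{fact:properties} and in the results of~\cite{tww1,tww2}.

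Concretely, I would argue by contradiction. Suppose $\CC$ has bounded twin-width. By Proposition~\ref{prop:fails}\eqref{it:fails-small}, $\CC$ contains at least $\lowerbound$ non-isomorphic structures with $n$ elements, for every $n\in\N$. On the other hand, by~\cite{tww2} every class of bounded twin-width is small: it contains at most $n!\cdot 2^{\Oof(n)}$ \emph{labelled} structures with $n$ elements, and since an ordered graph on $n$ elements admits exactly one compatible labelling up to the order (equivalently, a class of ordered structures is small iff it has at most $2^{\Oof(n)}$ unlabelled structures of size $n$, as noted in the Related Work section), this gives at most $2^{\Oof(n)}$ non-isomorphic ordered graphs of size $n$ in $\CC$. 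For $n$ large enough, $\lowerbound = \lfloor n/3\rfloor!$ exceeds $2^{\Oof(n)}$, a contradiction. Hence $\CC$ has unbounded twin-width.

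Alternatively — and this is the route I would actually present if I wanted a self-contained argument not relying on the enumerative bound of~\cite{tww2} — one can use Proposition~\ref{prop:fails}\eqref{it:fails-inter}: $\CC$ interprets the class of all (finite) graphs. Since classes of bounded twin-width are closed under first-order transductions (Fact~\ref{fact:properties}\eqref{prop:transductions}), in particular under interpretations, if $\CC$ had bounded twin-width then so would the class of all finite graphs; but the class of all finite graphs has unbounded twin-width (it even contains all bipartite graphs, which already have unbounded twin-width). This contradiction again yields the claim. Either way the proof is one line once the relevant parts of Proposition~\ref{prop:fails} are in hand.

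There is essentially no obstacle here: all the work was done in establishing Proposition~\ref{prop:fails}, and the corollary is a bookkeeping remark recording one of its easiest consequences. The only point to be slightly careful about is which ``closure under transductions'' statement one invokes — one needs it for simple interpretations, which is exactly Fact~\ref{fact:properties}\eqref{prop:transductions} — and, in the counting variant, the translation between labelled and unlabelled counts for ordered structures.
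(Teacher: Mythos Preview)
Your proposal is correct and matches the paper's own proof almost verbatim: the paper simply notes that a class of bounded twin-width does not interpret the class of all graphs~\cite{tww1} and is small~\cite{tww2}, both of which contradict Proposition~\ref{prop:fails}. You have spelled out exactly these two routes, so there is nothing to add.
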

\begin{proof}
	If $\CC$ has bounded twin-width then it does not interpret the class of all graphs~\cite{tww1} and is  small~\cite{tww2}. 
\end{proof}
\section{From quantifier-free grids to regular semigrids}\label{sec:regularise}
In this section we prove 
 Theorem~\ref{thm:grids},
which we recall below, generalized to arbitrary binary signatures $\Sigma$:
\begin{theorem}\label{thm:grids-binary}
    Let $\CC$ be a hereditary class of ordered binary structures graphs.
	If $\CC$ defines large grids using a quantifier-free formula, then $\CC$ contains arbitrarily large regular $n\times n$-semigrids.
\end{theorem}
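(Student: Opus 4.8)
The plan is to start from a quantifier-free formula $\phi(\tup x, \tup y, z)$ together with, for each $n$, a structure $\str S_n \in \CC$ carrying an $n \times n$ grid $(A_n, B_n, C_n)$ defined by $\phi$. I will extract a single regular $n\times n$-semigrid (of some fixed scheme) as an induced substructure of some $\str S_{n'}$ for $n'$ large enough. The engine is Ramsey's theorem applied on several levels. First, since $\phi$ is quantifier-free, whether $\str S \models \phi(\tup a, \tup b, c)$ depends only on the atomic type of the tuple $(\tup a, \tup b, c)$ in $\str S$; and since the signature is binary, that atomic type is determined by the atomic types of the constituent pairs. This is the crucial point where quantifier-freeness is used: it lets the combinatorics of ``which element of $C$ is matched to $(\tup a, \tup b)$'' be controlled by finitely much local data, so that a Ramsey-homogenization of the grid genuinely regularizes the adjacency pattern rather than merely the abstract bijection.

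The key steps, in order. \emph{(1) Normalize the grid.} Using that each $\tup x$ (resp.\ $\tup y$) is a fixed finite tuple of variables, pass to a large sub-grid on which the order type of each $\tup a \in A$, of each $\tup b \in B$, and the interleaving order type of $\tup a \cup \tup b \cup \{c\}$ for the matched $c$, are all constant — this is a finite Ramsey/pigeonhole argument on order types. In particular one may assume each $\tup a$ and each $\tup b$ is increasing, and the matched element $c = c(\tup a, \tup b)$ sits in a fixed position relative to $\tup a \cup \tup b$. \emph{(2) Linearly order $A$, $B$, $C$.} Since $A \subset \str S^{\tup x}$ and everything is ordered, order $A$ lexicographically (and similarly $B$), and pull back the bijection to get $C$ ordered as well; again after passing to a sub-grid, we may assume the map $(i,j) \mapsto c(\tup a_i, \tup b_j)$ has a uniform monotonicity behaviour. \emph{(3) Product Ramsey on pairs.} Apply a bipartite/product Ramsey theorem to the coloring that assigns to a quadruple of indices $(i, i', j, j')$ the atomic type of the relevant witnessing pairs (i.e.\ the pairs between $\tup a_i, \tup a_{i'}, \tup b_j, \tup b_{j'}$ and the matched elements $c(\tup a_i, \tup b_j)$, etc.); on a large monochromatic sub-grid the adjacency between the ``row representative'' of $i$ and the ``column witness'' of $j$ depends only on whether $i < j$, $i = j$ or $i > j$, plus whether various index inequalities hold — this is exactly what produces the four directions $\rightarrow, \downarrow, \searrow, \swarrow$ and a type $R \in \{\le, \ge, =, \neq\}$. \emph{(4) Assemble the semigrid.} Designate $I_0$ as the set of column-witnesses and $I_1, \ldots, I_m$ as blocks built from the matched elements $c(\tup a_i, \tup b_j)$, checking that the intervals can be chosen convex (possibly after another order-type homogenization and after throwing away boundedly many elements), that $I_0$ is homogeneous (a clique or independent set — arrange by one more two-coloring), that $I_0$ sits entirely below or entirely above the $I_i$'s, and that the within-$I_1\cup\cdots\cup I_m$ adjacency is direction-determined. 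Read off the scheme $\sigma \in \schemes$; then $\CC$, being hereditary, contains $\regsigma$ and hence arbitrarily large regular $n \times n$-semigrids. The bound from $n$ to $n'$ is governed by the Ramsey numbers and the (bounded) number of atomic types and order types involved.

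The main obstacle I anticipate is step (4): converting the ``abstract'' grid structure (a bijection $A \times B \to C$ with homogenized local types) into honest \emph{intervals} $I_0 < I_1 < \cdots < I_m$ of equal length with the precise regularity clauses of the definition. The matched elements $c(\tup a_i, \tup b_j)$ for a fixed $i$ need not form an interval of $\str S$ in general, and there may be ``junk'' elements of $\str S$ interleaved; one must argue that after enough homogenization one can either absorb or discard them, and that the relative order of the blocks is itself monotone in $i$. Handling the $\neq$-type (complement of a matching), where the ``column'' is read off as a non-neighbourhood, and correctly extracting the parameter $S \in \{<,>\}$ and $T \in \{\mathrm{clique}, \mathrm{independent}\}$, will require a careful but routine case analysis once the Ramsey-homogenized picture is in place. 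A secondary subtlety is the passage from several variables $\tup x, \tup y$ (so $A, B$ are sets of tuples, not elements) to single representative elements; this is where step (1)'s normalization of interleaving order types does the real work, letting us replace each tuple by, say, its minimum coordinate for the purpose of the order while keeping the adjacency data attached.
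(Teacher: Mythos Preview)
Your approach is essentially the paper's: product Ramsey to homogenize the grid, exploit quantifier-freeness and binarity to pass from tuples to single elements, and read off a regular semigrid on the induced substructure. Two points where the paper is sharper than your plan will help with the part you flag as the main obstacle.

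First, your worry about the $c(\tup a_i,\tup b_j)$ forming honest intervals dissolves once you prove a single structural claim: after homogenization, the order $\le$ restricted to $C$ must coincide with one of the eight lexicographic orders on $A\times B$ (choose $\le_A$ or its reverse, $\le_B$ or its reverse, and which coordinate has priority). This follows because $\le$ between two points of $C$ is part of the atomic type of the pair, hence determined by their grid-positional type; a four-point check then pins down which of the eight it is. Once $\le$ on $C$ is lexicographic with (say) $A$ dominant, the rows $\{c(\tup a_i,\tup b_j):j\}$ are automatically intervals of $C$ in the induced order, and there are no junk elements to discard since $\CC$ is hereditary and you work in the induced substructure on $C\cup I_0$.

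Second, for the tuple-to-element reduction: only $B$ needs to be projected to singletons (to form $I_0$); the $I_1,\ldots,I_m$ are already the singleton elements of $C$. The right coordinate of $\tup y$ is not ``the minimum'' but one chosen to witness non-constancy: from quantifier-freeness, $\phi(\tup a,\tup b,c)\wedge\neg\phi(\tup a,\tup b',c)$ together with $\atp(\tup a,\tup b)=\atp(\tup a,\tup b')$ forces $\atp(\tup b,c)\neq\atp(\tup b',c)$, and then binarity gives a single $y\in\tup y$ with $\atp(\tup b(y),c)\neq\atp(\tup b'(y),c)$. Projecting $B$ onto that coordinate gives exactly condition~(3) in the definition of a regular semigrid. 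Your ``minimum coordinate'' choice would not guarantee this.
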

In particular, we need to define the notion of a regular semigrid for an arbitrary binary signature $\Sigma$.


Before doing this, we introduce the relevant result from Ramsey theory.
Let an \emph{ordered $m\times n$ grid}
be the relational structure with domain
$[m]\times[n]$ and two quasi-orders $\le_1$ and $\le_2$, where two points $(i,j),(i',j')$ 
are related by $\le_1$ 
if and only if $i\le i'$ and are related by $\le_2$ if and only if $j\le j'$.
A \emph{pair coloring} of such an ordered $m\times n$ grid is a function $c\from 
([m]\times[n])^2\to \Gamma$ to a fixed finite set $\Gamma$ of \emph{colors}.
Such a coloring is \emph{homogeneous}
if the color $c(p,q)$ depends only on the atomic type of the pair $(p,q)$. In other words, the color of a pair $(p,q)$ depends only on whether  $p=q$, and if not, on the quadrant or principle semi-axis in $\Z^2$ to which the vector $p-q$ belongs to.
There are $9$ possible types, corresponding to the pairs in $\set{<,=,>}^2$ describing the relations between $p$ and $q$ in each of the two dimensions.

The following result is a special case of the so-called \emph{product Ramsey theorem}
(see e.g.~Proposition 3 in~\cite{bodirsky_2015}
in the special case of the full product of two copies of $(\mathbb Q,\le)$. See also the historical comment following it).
\begin{lemma}\label{lem:grid-ramsey}
Fix a finite set of colors $\Gamma$.
For every $m,n$ there are $m',n'$ such that 
for every coloring $c$ of the ordered $m'\times n'$ there is an induced substructure isomorphic to the $m\times n$ grid for which  the coloring induced by $c$ is homogeneous.
\end{lemma}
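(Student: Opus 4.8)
The statement is a standard ``product Ramsey'' fact, and the plan is to derive it from the classical Ramsey theorem for hypergraphs (or, more conveniently, from the finite version of the Ramsey theorem for colorings of pairs over $\mathbb Q$, as referenced). First I would reformulate the problem so that the objects to be colored are the edges of an ordinary complete graph. Given the target dimensions $m,n$, set $N = m+n$ as a crude bound on how much ``room'' we eventually want, and consider a large ordered grid $[m']\times[n']$. An unordered pair $\{p,q\}$ of distinct grid points determines, once we fix which of $p,q$ comes first in (say) the lexicographic order, an atomic type in $\{<,=,>\}^2$ (nine possibilities, one of which, $(=,=)$, never occurs for distinct points, and the two axis types plus four quadrant types remaining). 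The coloring $c$ thus factors through a coloring of ordered pairs; to apply a symmetric Ramsey theorem I would first make $c$ depend only on the unordered pair together with the type, which is automatic since the type already records the order.

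The key step is to choose $m',n'$ large enough. I would proceed in two stages, one dimension at a time, which is cleaner than a single application and matches the cited ``full product of two copies of $(\mathbb Q,\le)$'' formulation. In the first stage, fix the second coordinate and apply the finite Ramsey theorem to the $m'$ rows: color each pair of rows $\{r,r'\}$ (with $r<r'$) by the function that sends each pair of columns $(s,s')$ to the color $c((r,s),(r',s'))$ --- there are only finitely many such functions, so taking $m'$ huge (a tower-type function of $n'$ and $|\Gamma|$) yields a subset of $m_1$ rows on which this ``row-pair color'' is constant. On this subset, I further homogenize the ``same-row'' pairs $c((r,s),(r,s'))$ and the diagonal relations by a bounded number of further Ramsey applications within the rows, shrinking to $m$ rows. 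In the second stage I repeat the argument on the columns, now using that all the row-structure has already been frozen, so coloring pairs of columns by the induced finite data and applying Ramsey once more delivers $n$ columns. The resulting $m\times n$ subgrid has the property that $c(p,q)$ depends only on the atomic type of $(p,q)$: the quadrant/axis classes have all been made monochromatic by construction, and the only remaining classes ($p=q$, same row, same column) were handled explicitly.

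The main obstacle --- really a bookkeeping one rather than a conceptual one --- is being careful about \emph{which} pair-types get homogenized at which stage, and ensuring that the homogenization achieved in the row-pass is not destroyed by the column-pass. The safe way is to note that every pair of distinct grid points has a well-defined type in $\{<,=,>\}^2\setminus\{(=,=)\}$, partition the set of pairs of the final subgrid according to these $8$ types, and check that each type is the ``responsibility'' of exactly one Ramsey application: the $(\pm,\pm)$ quadrant types and the $(\pm,=)$, $(=,\pm)$ axis types are frozen when we choose the monochromatic sets of rows and of columns respectively, and these choices are nested (columns chosen inside the already-fixed rows), so nothing is undone. Since $\Gamma$ is finite and we only compose a bounded number of Ramsey-type steps, the bound on $m',n'$ is a finite (iterated-exponential) function of $m,n,|\Gamma|$, which is all that is asserted. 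Alternatively, one can simply cite Proposition~3 of~\cite{bodirsky_2015} verbatim: the ordered $m\times n$ grid is exactly an induced substructure of the $n$-fold --- here $2$-fold --- product of $(\mathbb Q,\le)$ restricted to a finite set, its atomic $2$-types are precisely the ``types'' in the statement, and the product Ramsey theorem gives a monochromatic (type-homogeneous) copy of any prescribed finite induced substructure, which is what we want.
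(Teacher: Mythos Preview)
Your proposal is correct and aligns with the paper's approach: the paper does not prove this lemma at all but simply cites it as a special case of the product Ramsey theorem, specifically Proposition~3 in~\cite{bodirsky_2015} for the full product of two copies of $(\mathbb Q,\le)$ --- exactly the reference you invoke at the end. Your additional two-stage Ramsey sketch (homogenize over row-pairs first, then over column-pairs) is the standard proof of product Ramsey and is sound, though the phrase ``further Ramsey applications within the rows'' for the same-row pairs is slightly imprecise: what you need there is a pigeonhole on rows (to make the same-row coloring function $(s,s')\mapsto c((r,s),(r,s'))$ independent of $r$), not a Ramsey on rows, but you clearly have the right picture since you correctly note this shrinks the row set.
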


Fix a formula $\phi(\tup x,\tup y,z)$,
an ordered structure $\str S$ and an $m\times n$ grid $(A,B,C)$ defined by $\phi$ in~$\str S$.
The sets $A\subset \str S^{\tup x}$ and $B\subset \str S^{\tup y}$ are ordered lexicographically with respect to the order on~$\str S$, and some fixed enumeration of the tuples $\tup x$ and $\tup y$.
Those two orders induce two quasi-orders $\le_1,\le_2$ on $C$, via the bijection $\alpha \from A\times B\to C$ defined by the formula $\phi$. Namely, $\alpha(\bar a,\bar b)\le_1 \alpha(\bar a',\bar b')$ if and only if $\bar a\le_\textrm{lex} \bar a'$ 
and $\alpha(\bar a,\bar b)\le_2 \alpha(\bar a',\bar b')$ if and only if $\bar b\le_\textrm{lex} \bar b'$.

Say that the grid $(A,B,C)$ is \emph{homogeneous} if for every $(\bar a,\bar b),(\bar a',\bar b')\in A\times B\subset \str S^{\tup x\tup y}$ the atomic type of $(\bar a,\bar b,\bar a',\bar b',\alpha(\bar a,\bar b),\alpha(\bar a',\bar b'))$ in $\str S$ depends only on the atomic type of $(\alpha(\bar a,\bar b),\alpha(\bar a',\bar b'))$
in $(C,\le_1,\le_2)$.

\begin{lemma}\label{lem:homogenize}
Fix a finite signature $\Sigma$ containing $\le$ and  a $\Sigma$-formula $\phi(\tup x,\tup y,z)$.
For every $m,n$ there are $m',n'$ such that 
if $\phi$ defines an $m'\times n'$ grid in an ordered $\Sigma$-structure $\str S$ then $\phi$ also defines an $m\times n$ homogeneous grid in $\str S$.
\end{lemma}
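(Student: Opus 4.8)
The plan is to derive Lemma~\ref{lem:homogenize} directly from the product Ramsey theorem, Lemma~\ref{lem:grid-ramsey}, by choosing the right finite color set. Fix the signature $\Sigma$, the formula $\phi(\tup x,\tup y,z)$, and the target dimensions $m,n$. The key observation is that ``being homogeneous'' for a grid $(A,B,C)$ defined by $\phi$ is precisely a statement about a certain pair coloring of the ordered $m'\times n'$ grid structure $(C,\le_1,\le_2)$ induced via the bijection $\alpha\from A\times B\to C$: namely, color each pair of grid points $(c,c')\in C^2$ by the atomic type, in $\str S$, of the tuple $(\bar a,\bar b,\bar a',\bar b',c,c')$, where $(\bar a,\bar b)=\alpha^{-1}(c)$ and $(\bar a',\bar b')=\alpha^{-1}(c')$. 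Since $\Sigma$ is finite and the tuple has a fixed length $2(|\tup x|+|\tup y|+1)$, there are only finitely many atomic types; call this finite set $\Gamma$. A grid is homogeneous exactly when this coloring $c\from C^2\to\Gamma$ is homogeneous in the sense of Lemma~\ref{lem:grid-ramsey}.

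First I would, given $m$ and $n$, apply Lemma~\ref{lem:grid-ramsey} to this color set $\Gamma$ and dimensions $m,n$ to obtain $m',n'$. Then I would suppose $\phi$ defines an $m'\times n'$ grid $(A,B,C)$ in an ordered $\Sigma$-structure $\str S$. Using the bijection $\alpha$ and the lexicographic orders on $A$ and $B$, transport structure to identify $(C,\le_1,\le_2)$ with the ordered $m'\times n'$ grid, and define the coloring $c$ above. By Lemma~\ref{lem:grid-ramsey}, there is a subset $C_0\subset C$ such that $(C_0,\le_1,\le_2)$ is isomorphic to the ordered $m\times n$ grid and the restriction of $c$ to $C_0^2$ is homogeneous. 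Pulling $C_0$ back through $\alpha$, let $A_0\subset A$ and $B_0\subset B$ be the projections of $\alpha^{-1}(C_0)\subset A\times B$; because $C_0$ is (isomorphic to) a full $m\times n$ subgrid, $\alpha^{-1}(C_0)=A_0\times B_0$ with $|A_0|=m$, $|B_0|=n$, and $|C_0|=m\cdot n$, and the restriction of $\alpha$ witnesses that $(A_0,B_0,C_0)$ is again a grid defined by $\phi$ in $\str S$. Homogeneity of $c$ on $C_0^2$ is, by construction, exactly homogeneity of the grid $(A_0,B_0,C_0)$.

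The main thing to check carefully — and the only place where something could go wrong — is that the subgrid $(A_0,B_0,C_0)$ obtained from a \emph{full} $m\times n$ induced subgrid $C_0$ of $(C,\le_1,\le_2)$ really does have $\alpha^{-1}(C_0)$ of product form $A_0\times B_0$. This is where it matters that Lemma~\ref{lem:grid-ramsey} returns an induced substructure \emph{isomorphic to the $m\times n$ grid}, not merely any $mn$-element subset: the grid structure forces $C_0$ to be a ``combinatorial rectangle'' $\{i_1,\dots,i_m\}\times\{j_1,\dots,j_n\}$ inside $[m']\times[n']$ under the identification $C\cong[m']\times[n']$, and transporting along $\alpha$ this is exactly $\{\bar a_{i_1},\dots,\bar a_{i_m}\}\times\{\bar b_{j_1},\dots,\bar b_{j_n\}}$. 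I expect this bookkeeping, together with unwinding the definition of ``homogeneous grid'' to match the definition of ``homogeneous pair coloring'', to be the only real content; once that correspondence is set up, the lemma is an immediate application of Lemma~\ref{lem:grid-ramsey}. No compactness or model-theoretic input is needed here.
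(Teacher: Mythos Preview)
Your proposal is correct and follows exactly the paper's approach: define $\Gamma$ as the set of atomic $\Sigma$-types of tuples of length $2(|\tup x|+|\tup y|+1)$, apply Lemma~\ref{lem:grid-ramsey} to obtain $m',n'$, color pairs in $C^2$ by the atomic type of $(\bar a,\bar b,\bar a',\bar b',\alpha(\bar a,\bar b),\alpha(\bar a',\bar b'))$, and extract the homogeneous subgrid. The paper's proof is terser and does not spell out the ``combinatorial rectangle'' bookkeeping you highlight, but the argument is the same.
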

\begin{proof}
Let $\Gamma$ be the set of atomic $\Sigma$-types of tuples of length $2(|\tup x|+|\tup y|+1)$.
Apply Lemma~\ref{lem:grid-ramsey} yielding numbers $m',n'$.

Suppose $\phi$ defines an $m'\times n'$ grid $(A,B,C)$ in $\str S$, and let $\alpha\from A\times B\to C$ be the bijection defined by $\phi$.
Consider $C$ with the two orders $\le_1$ and $\le_2$ as described earlier, then $(C,\le_1,\le_2)$ is isomorphic to the ordered $m'\times n'$ grid.
Color each pair $(\alpha(\tup a,\tup b),\alpha(\tup a',\tup b'))\in C^2$ by the atomic type of the tuple $(\bar a,\bar b,\bar a',\bar b',\alpha(\bar a,\bar b),\alpha(\bar a',\bar b'))$ in $\str S$.

By Lemma~\ref{lem:grid-ramsey}, $(C,\le_1,\le_2)$ contains a copy $C'\subset C$ of a homogeneous ordered $m\times n$-grid. 
Then $\alpha^{-1}(C')\subset A\times B$ and $C'$ form a homogeneous  $m\times n$ grid defined by $\phi$.
\end{proof}

We now generalize the notion of a regular semigrid from Section~\ref{sec:bad} to
an arbitrary relational binary signature $\Sigma$
containing the order symbol $\le$.

Fix $m,n\in\N$ and consider the ordered grid $(m+1)\times (n+1)$-grid $(C,\le_1,\le_2)$ with domain $C=[m+1]\times[n+1]$ as described above.
Let $B=[0]\times[n+1]\subset C$,
 and let $\pi\from C\to B$ be the projection
 mapping $(i,j)\in C$ to $(0,j)$.

A \emph{regular $m\times n$-semigrid} is an ordered $\Sigma$-structure $\str S$ with domain $C$   such that the order $\le$ of $\str S$ agrees with the natural lexicographic order on $C=[m+1]\times [n+1]$ or its inverse,
and for all $a,b\in B$ and $p,q\in C\setminus B$:
\begin{enumerate}    
\item   the atomic type of $(a,b)$ in $\str S$ depends only on the atomic type of $(a,b)$ in $(B,\le)$;
\item the atomic type of  $(p,q)$ in $\str S$ depends only on the atomic type of $(p,q)$ in $(C,\le_1,\le_2)$;
\item  the atomic type of  $(p,b)$ in $\str S$ depends only on the atomic type of $(\pi(p),b)$ in $(B,\le)$, and is non-constant.
\end{enumerate}
A regular semigrid is completely specified by the dimensions $m\times n$,
the information ($<$ or $>$) about whether or not to reverse the lexicographic order,
as well as a tuple of functions describing the dependencies as above.
For example, the dependencies (1) and (3) above 
are each specified by a function from $\set{<,=,>}$ to the set $T$ of atomic $\Sigma$-types of pairs, whereas the dependency (2) is specified by a function from $\set{<,=,>}^2$ to $T$. In total, all the dependencies are described by a tuple  in $T^{3+3+9}=T^{15}$ (with some tuples being excluded due to the non-constant requirement, and some being unsatisfiable in an ordered structure). 
Let $\schemes=\set{<,>}\times T^{15}$ denote the set of possible choices, called \emph{schemes}.

As  $|T|\le 4^{|\Sigma|}$, we get the following.

\begin{lemma}\label{count-semigrids}
For every $n\ge 1$ there are at most $2^{\Oof(|\Sigma|)}$ distinct regular $n\times n$-semigrids, and they can be enumerated in time $2^{\Oof(|\Sigma|)}\cdot \textit{poly}(n)$.
\end{lemma}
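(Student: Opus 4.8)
The plan is to bound the number of regular $n\times n$-semigrids by the number of possible schemes, then argue that every scheme is realizable (or at least that checking realizability is fast) and that passing from a scheme plus the dimensions to the actual structure takes only polynomial time.

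First I would recall that, by the definition just given, a regular $n\times n$-semigrid is completely determined by its dimensions $n\times n$ together with a scheme $\sigma\in\schemes=\set{<,>}\times T^{15}$, where $T$ is the set of atomic $\Sigma$-types of pairs. Hence the number of regular $n\times n$-semigrids is at most $|\schemes|=2\cdot|T|^{15}$. Now I would invoke the bound $|T|\le 4^{|\Sigma|}$ noted just before the lemma: an atomic type of a pair $(x,y)$ is a choice, for each relation symbol $R\in\Sigma$ (of arity $1$ or $2$), of the truth values of the finitely many atoms built from $R$ and the variables $x,y$; since arities are at most $2$ there are at most $4$ such atoms per symbol (in fact the $4^{|\Sigma|}$ bound already accounts for this), giving $|T|\le 4^{|\Sigma|}$. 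Therefore
\[
|\schemes|\;\le\; 2\cdot\bigl(4^{|\Sigma|}\bigr)^{15}\;=\;2\cdot 2^{30|\Sigma|}\;=\;2^{\Oof(|\Sigma|)},
\]
which is independent of $n$. This gives the counting part: there are at most $2^{\Oof(|\Sigma|)}$ distinct regular $n\times n$-semigrids. (One should remark that some schemes are discarded because they violate the non-constant requirement in clause (3) of the definition, or are unsatisfiable in an ordered structure, but this only decreases the count, so the upper bound still holds.)

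For the enumeration claim I would proceed as follows. Enumerate all candidate schemes $\sigma\in\set{<,>}\times T^{15}$; there are $2^{\Oof(|\Sigma|)}$ of them and each can be written down in time $\Oof(|\Sigma|)$ since a single atomic type is described by $\Oof(|\Sigma|)$ bits. For each candidate $\sigma$, the associated structure $\str S_\sigma$ on domain $C=[n+1]\times[n+1]$ is obtained by: ordering $C$ lexicographically or its reverse according to the first coordinate of $\sigma$; and for each pair $(u,v)\in C^2$, reading off its atomic type from $\sigma$ according to which of the three regimes $u,v$ fall into (both in $B$; both outside $B$; one in $B$ and one outside), using the relevant coordinate of the $T^{15}$-part of $\sigma$ indexed by the atomic type of $(u,v)$, resp. of $(\pi(u),\pi(v))$, in the appropriate order structure. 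Each pair is handled in time $\Oof(|\Sigma|)$, and there are $\Oof(n^4)$ pairs, so $\str S_\sigma$ is produced in time $2^{\Oof(|\Sigma|)}\cdot\textit{poly}(n)$; one should also verify in the same time budget that the resulting structure genuinely is a $\Sigma$-structure (consistency of the chosen atoms) and satisfies the non-constant requirement, discarding $\sigma$ otherwise. Ranging over all $\sigma$ multiplies the running time by $2^{\Oof(|\Sigma|)}$, giving total time $2^{\Oof(|\Sigma|)}\cdot\textit{poly}(n)$ as required, and the list so produced contains every regular $n\times n$-semigrid (since each such semigrid arises from its scheme) with no duplicates after a final deduplication pass, which again costs only $2^{\Oof(|\Sigma|)}\cdot\textit{poly}(n)$.

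The only genuinely delicate point — and the one I would flag as the main obstacle — is making precise that the map (scheme, dimensions) $\mapsto$ structure is well defined and injective enough for the count, i.e. that two distinct regular semigrids of the same dimensions really do have distinct schemes. This is essentially immediate from the definition: the scheme of a regular semigrid is defined to be the tuple of dependency functions exhibited by that structure together with the order-direction bit, so recovering the scheme from the structure is automatic, and hence distinct structures on the same domain with the same order direction must differ in some dependency function, i.e. in the scheme. Everything else is bookkeeping about the size of $T$ and the cost of writing down atomic types, which is routine.
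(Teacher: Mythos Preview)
Your proposal is correct and follows exactly the approach the paper intends: the lemma is stated immediately after the observation $|T|\le 4^{|\Sigma|}$ and the description of $\schemes=\set{<,>}\times T^{15}$, and the paper treats it as a direct consequence of these facts without giving a separate proof. Your write-up simply spells out this computation and the routine enumeration argument.
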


\begin{lemma}
Assume $\Sigma=\set{E,\le}$ consists only of the edge symbol $E$ and order symbol $\le$.
Then an ordered graph is a regular semigrid 
in the sense of Section~\ref{sec:bad} if and only if it is a regular semigrid as defined above.
\end{lemma}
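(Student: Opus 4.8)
The plan is to unwind both definitions and check that they describe the same class of ordered graphs when $\Sigma=\set{E,\le}$. Both definitions fix domain identified with $[m+1]\times[n+1]$, with the order being the lexicographic order or its reverse; in the Section~\ref{sec:bad} version this is exactly condition (1), with $I_0$ being the first row $\set 0\times[n+1]$ and $B=[0]\times[n+1]$ in the new terminology. So the order components match immediately, and it remains to match the three ``dependency'' clauses.

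\textbf{From the new definition to the old one.} Suppose $\str S$ is a regular $m\times n$-semigrid in the sense of Section~\ref{sec:regularise}. First I would observe that, since the only non-order relation symbol is $E$ (symmetric, irreflexive), the atomic type of a pair $(a,b)$ in $\str S$ is determined by the order relation between $a$ and $b$ together with the truth value of $E(a,b)$. Clause~(1) then says $E(a,b)$ for $a,b\in B=I_0$ depends only on whether $a<b$, $a=b$, or $a>b$; by symmetry and irreflexivity this means there is a single bit deciding adjacency of all distinct pairs in $I_0$, i.e.\ $I_0$ is a clique or an independent set --- condition~(2) of the old definition. Clause~(2) says that for $p,q\in C\setminus B=I_1\cup\cdots\cup I_m$ the atomic type of $(p,q)$ depends only on the pair of order relations between their coordinates; when $p<q$ lexicographically this pair of relations is exactly what $\direction(p,q)$ records (the four values $\rightarrow,\downarrow,\searrow,\swarrow$ correspond to $(=,<)$, $(<,=)$, $(<,>)$, $(<,<)$ in the two coordinates), so adjacency of $p,q$ depends only on $\direction(p,q)$ --- condition~(3) of the old definition. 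Here I should note that the coordinatewise relations $\le_1,\le_2$ are \emph{definable} from $\le$ on this domain (same row iff between consecutive ``column-zero'' elements, etc.), but more directly: the lexicographic order on $C$ already determines the pair of coordinate comparisons for any two points, so clause~(2) phrased via $(C,\le_1,\le_2)$ is equivalent to phrasing it via $\direction$. Clause~(3), the cross clause, says the atomic type of $(p,b)$ for $p\in C\setminus B$, $b\in B$ depends only on the atomic type of $(\pi(p),b)$ in $(B,\le)$, i.e.\ only on the column index of $p$ versus $b$, and is non-constant --- this is precisely saying that $I_0$ and $I_i$ form an $R$-graph for a fixed relation $R\in\set{\le,\ge,=,\neq}$, uniformly in $i$, and the non-constancy rules out the degenerate ``all adjacent'' / ``all non-adjacent'' cases that are not among the four allowed types. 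So $\str S$ is a regular semigrid in the old sense.

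\textbf{From the old definition to the new one.} Conversely, given a regular $m\times n$-semigrid $G$ in the sense of Section~\ref{sec:bad}, I would run the same three checks in the other direction: condition~(1) gives the order clause; condition~(2) (clique or independent on $I_0$) gives new-clause~(1); condition~(3) (adjacency in $I_1\cup\cdots\cup I_m$ depends on $\direction$) gives new-clause~(2), again using that $\direction(p,q)$ and the pair of coordinate comparisons carry the same information; and the defining property that each $I_0,I_i$ is an $R$-graph for a common $R\in\set{\le,\ge,=,\neq}$ gives new-clause~(3), with non-constancy automatic since each of $\le,\ge,=,\neq$ yields a cross-type that genuinely varies with the column comparison. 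I would also briefly check the bookkeeping point that the old ``scheme'' data $(R,S,T,X)$ and the new scheme data in $\set{<,>}\times T^{15}$ carry compatible information, though this is not strictly needed for the stated equivalence of classes.

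\textbf{Main obstacle.} There is no deep obstacle; this is a definitional compatibility lemma. The one point that needs care is the treatment of the coordinate quasi-orders $\le_1,\le_2$ versus the single order $\le$ and the function $\direction$: one must check that ``depends only on the atomic type of $(p,q)$ in $(C,\le_1,\le_2)$'' (for $p,q$ with $p<q$ lexicographically) is genuinely the same constraint as ``depends only on $\direction(p,q)$'', which comes down to the observation that for two distinct points of $[m+1]\times[n+1]$ the lexicographic comparison together with the coordinatewise comparisons are mutually determined, so the $9$ atomic types over $(C,\le_1,\le_2)$ restricted to the lexicographically-ordered half collapse exactly onto the four directions $\rightarrow,\downarrow,\searrow,\swarrow$ (plus the diagonal $p=q$, which carries no adjacency information). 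The other mild subtlety is matching the ``non-constant'' requirement in new-clause~(3) with the restriction to the four relation types $\set{\le,\ge,=,\neq}$ in the old definition, which I would dispatch by listing the possible column-dependent adjacency patterns and noting which are excluded.
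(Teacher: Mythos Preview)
Your approach—unwinding both definitions and matching the three clauses pairwise—is exactly what the paper does, and your write-up is in fact more detailed than the paper's own proof sketch (which handles one direction in three lines and declares the other ``similar''). Two points deserve comment, however.

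First, your claim that Section~\ref{sec:bad} condition~(1) is ``exactly'' the order being lexicographic or its reverse is not quite right. In the Section~\ref{sec:bad} definition the case $I_1<\cdots<I_m<I_0$ still keeps the natural (increasing-column) order \emph{within} each interval, whereas the reverse lexicographic order reverses columns as well. The paper handles this by the parenthetical ``(in the other case reverse the order of $G$)'': it uses that both classes are closed under global order reversal and thereby reduces to the case $I_0<I_1<\cdots<I_m$ versus lex order, where the identification \emph{is} the identity. Your argument as written needs the same manoeuvre (or an explicit column-reindexing) in the non-lex case.

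Second, the ``mild subtlety'' you flag at the end—matching the non-constancy requirement in clause~(3) to the four types $\{\le,\ge,=,\neq\}$—cannot actually be dispatched by a case check. There are six non-constant maps $\{<,=,>\}\to\{0,1\}$, and the strict-$<$ and strict-$>$ patterns are permitted by the Section~\ref{sec:regularise} definition but are not among the four types in Section~\ref{sec:bad}. So, read literally, the new class is strictly larger by these two extra cross-patterns. The paper's sketch does not address this either; the discrepancy is harmless for every downstream application (any of the six patterns suffices to interpret all graphs), but the ``if and only if'' as stated requires a small adjustment to one of the two definitions, not merely a verification.
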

\begin{proof}[Proof sketch]
    In one direction, suppose $G$ is a regular semigrid in the sense of Section~\ref{sec:bad} with intervals $I_0,\ldots,I_m$ of size $n+1$,
    and assume $I_0<I_1<\ldots<I_m$ (in the other case reverse the order of $G$).
    Arrange $I_0,\ldots,I_m$ in a grid $C=[m+1]\times[n+1]$ 
    by placing $I_0$ in the first row, $I_1$ in the second, etc. Then the order of $G$ on $I_0\cup\ldots\cup I_m$ agrees with the lexicographic order on $C=[m+1]\times[n+1]$, and the conditions (1)-(3) in the definition above hold.

    The other direction is similar.
\end{proof}

\begin{lemma}\label{lem:map-semigrids}
    For any class $\CC$ of ordered $\Sigma$-structures which contains arbitrarily large $n\times n$-semigrids 
there is a quantifier-free interpretation $I\from\Sigma\to \set{E,\le}$ such that $I(\CC)$ is a class of ordered graphs containing arbitrarily large regular $n\times n$-semigrids. 
\end{lemma}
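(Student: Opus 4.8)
The plan is to reduce, via a suitable quantifier-free interpretation, the notion of regular $\Sigma$-semigrid to the notion of regular graph-semigrid from Section~\ref{sec:bad}. First I would observe that a regular $m\times n$-semigrid $\str S$ over $\Sigma=\set{E,\le}$, in the sense of the definition just given, is essentially determined by a scheme $\sigma\in\schemes=\set{<,>}\times T^{15}$, and that the point is to show that each such $\str S$ is \emph{also} a regular semigrid in the Section~\ref{sec:bad} sense — so that the preceding lemma applies and the two notions coincide for binary edge-and-order structures. Concretely: given a large regular $n\times n$-semigrid $\str S\in\CC$ over $\Sigma$, with domain $C=[m+1]\times[n+1]$, base row $B=[0]\times[n+1]$, and projection $\pi\from C\to B$, I want a quantifier-free $I\from\Sigma\to\set{E,\le}$ such that $I(\str S)$ is (up to an induced-substructure selection, permissible since $\CC$ is not required to be hereditary but $I(\CC)$ only needs to \emph{contain} large regular semigrids) a regular graph-semigrid in the Section~\ref{sec:bad} sense.

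The construction of $I$ should be uniform in the scheme, but since there are only $|\schemes|\le 2^{\Oof(|\Sigma|)}$ schemes and we just need \emph{some} interpretation producing arbitrarily large regular graph-semigrids, it suffices to argue by cases on $\sigma$, or better, to fix one scheme $\sigma$ that occurs for infinitely many $n$ (pigeonhole on $n$, using that $\CC$ contains arbitrarily large semigrids and $|\schemes|$ is finite), and build $I=I_\sigma$ for that one scheme. Given $\sigma$, the dependencies (1)--(3) in the definition tell us, quantifier-freely in terms of $\le$ (hence in terms of the atomic order-types of pairs), exactly when two elements of the base row $B$ are $E$-adjacent, when two elements off the base row are $E$-adjacent (as a function of the two coordinatewise comparisons, i.e.\ of $\direction$), and when an off-base element $p$ is adjacent to a base element $b$ (as a function of the comparison of $\pi(p)$ with $b$, and this dependency is non-constant). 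The key remaining issue is to \emph{define} the base row $B$, the coordinate comparisons, and the projection $\pi$ by quantifier-free formulas over $\set{E,\le}$ on the image structure — but this is exactly what Lemma~\ref{lem:semigrid-formulas} did in the graph case, and the same case analysis works here: $B$ is recovered as an interval near the $\le$-least element determined by the adjacency pattern dictated by $\sigma$, the comparison $\pi_2(p)\le b$ is read off the adjacency of $p$ to $b$ (and possibly its $\le$-neighbour), and $\pi_1(p)\le\pi_1(q)$ is read off the $\le$-order on $C$ together with $\pi_2$. Then I set $I$ to output the edge relation $\phi_E(x,y)$ forced by $\sigma$ on the pattern $(x,y)$ and to keep $\le$ on the selected domain.

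The verification splits into: (i) $I$ is quantifier-free — immediate since $\sigma$'s dependencies and all the coordinate/projection gadgets are defined from $\le$ and $E$ without quantifiers (the Section~\ref{sec:bad}-style formulas for $\rows$, $\cols$, $\pi_1,\pi_2$ involve successors/predecessors, which are quantifier-free definable over $\le$, and ``least neighbour'' type conditions, which here become quantifier-free because in a grid the relevant witness is the immediate $\le$-neighbour); (ii) $I(\str S)$, restricted to the definable copy of the grid, satisfies conditions (1)--(3) of the graph-semigrid definition — this is exactly the bookkeeping that $\sigma$'s edge-rules, transported back through $\pi$ and the coordinate comparisons, reproduce a clique/independent set on $I_0$, a direction-dependent adjacency pattern among $I_1\cup\cdots\cup I_m$, and an $R$-graph between $I_0$ and each $I_i$; (iii) therefore $I(\CC)$ contains arbitrarily large regular graph-semigrids. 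The main obstacle I anticipate is precisely step (ii)'s case analysis: one must check that every scheme $\sigma\in\schemes$ for $\Sigma=\set{E,\le}$ actually arises from (or can be massaged into) one of the $256$ schemes of Section~\ref{sec:bad}, including the degenerate cases where the base-row adjacency rule or the off-base rule collapses — and that the non-constancy clause in condition (3) of the new definition is exactly what guarantees the $I_0$-to-$I_i$ interaction is a genuine $R$-graph for some $R\in\set{\le,\ge,=,\neq}$ rather than something trivial. I would handle this by noting that a non-constant function $\set{<,=,>}\to T$ into the two-element set of possible edge-types ($E$ or $\neg E$) takes at least two values, and enumerating the (finitely many) such functions shows each yields one of the four $R$'s after possibly passing to an induced sub-semigrid and/or reversing the order, which $\CC$'s heredity-free ``contains arbitrarily large'' hypothesis tolerates.
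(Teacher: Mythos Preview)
Your approach has a genuine gap: the claim that successors, predecessors, and ``least neighbour'' are quantifier-free over $\le$ is false. The successor of $x$ is the element $y>x$ such that no $z$ satisfies $x<z<y$; expressing this requires a universal quantifier. Likewise, the formulas $\phi_\rows,\phi_\cols,\pi_1,\pi_2$ from Lemma~\ref{lem:semigrid-formulas} explicitly use quantified phrases such as ``the smallest neighbour of $\ast$'' and ``the least $y'\in\rows$ with $y'>y$''. So step~(i) of your verification fails, and with it the whole construction: the interpretation you build is not quantifier-free.

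More fundamentally, you are working much harder than necessary. There is no need to \emph{define} the base row $B$, the projection $\pi$, or any coordinate comparisons inside the target structure. The paper's argument is this: by pigeonhole fix a scheme $\sigma$ occurring for infinitely many $n$ (you do this too). The non-constancy clause in condition~(3) of the $\Sigma$-semigrid definition gives two distinct atomic $\Sigma$-types $\tau_1\neq\tau_2$ realised by pairs $(p,b)$ with $p\in C\setminus B$, $b\in B$. Now take the interpretation with full domain $\delta(x)\equiv(x=x)$, the order unchanged $\phi_\le(x,y)\equiv(x\le y)$, and edge formula $\phi_E(x,y)\equiv\tau_1(x,y)\lor\tau_1(y,x)$. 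This is trivially quantifier-free. Because $I$ is quantifier-free, the atomic $\set{E,\le}$-type of any pair in $I(\str S)$ is a function of its atomic $\Sigma$-type in $\str S$; hence the dependency conditions (1)--(3) for $I(\str S)$ are inherited immediately from those of $\str S$, with no case analysis on the scheme and no need to locate $B$ or $\pi$ in the image. Non-constancy in condition~(3) for $I(\str S)$ holds because $\tau_1$ and $\tau_2$ give different truth values to $\phi_E$. Your proposal to route through the Section~\ref{sec:bad} formulas is a detour that forces quantifiers where none are needed.
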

\begin{proof}[Proof sketch]
    We use the notation from the definition of regular semigrids above.
    Fix $\sigma\in\schemes$ such that there are arbitrarily large $n\times n$-semigrids in $\CC$ conforming to $\sigma$.
    Then $\sigma$ specifies that 
    the atomic type of $(p,q)$ in $\str S$ depends in the same way on the atomic type of $(\pi(p),b)$ in $(B,\le)$;
    in particular, this dependency is non-constant. Hence there are two distinct atomic types $\tau_1,\tau_2$ which are realized as atomic types of pairs of the form $(\pi(p),b)$ for $b\in B,p\in C\setminus B$.

    The interpretation $I$ is the  interpretation 
    with domain formula $\delta(x)=(x=x)$,
    order formula $\phi_\le(x,y)=(x\le y)$ and
     edge formula $\phi_E(x,y)=\tau_1(x,y)\lor \tau_1(y,x)$. 
     
     If $\str S$ is a regular $m\times n$-semigrid of scheme $\sigma$ then $I(\str S)$ is a regular $m\times n$-semigrid which is an ordered graph.
     Indeed, since $I$ is quantifier-free, the atomic type of $(x,y)$ in $I(\str S)$ depends only on the atomic type of $(x,y)$ in $\str S$, and hence the conditions in the definition of regularity are met. The non-constancy condition is also satisfied by the choice of $\tau_1\neq\tau_2$.
\end{proof}

We now exhibit large regular semigrids in any class in which some quantifier-free formula defines large grids.
\begin{proposition}
Let $\CC$ be a hereditary class of ordered structures over a finite binary signature $\Sigma$ and let $\phi(\tup x,\tup y,z)$ be a quantifier-free formula which defines large grids in $\CC$. Then $\CC$ contains arbitrarily large regular $n\times n$-semigrids.
\end{proposition}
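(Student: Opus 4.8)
The plan is to distil a regular semigrid out of a large \emph{homogeneous} grid defined by $\phi$, using quantifier‑freeness precisely to locate the ``top row''.

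\textbf{Step 1: pass to a homogeneous grid.} Since $\phi$ defines arbitrarily large grids in $\CC$, and since a sub‑grid of a grid defined by $\phi$ is again one, Lemma~\ref{lem:homogenize} gives, for every $N$, a structure $\str S\in\CC$ carrying a homogeneous $N\times N$ grid $(A,B,C)$ defined by $\phi$; write $\alpha\from A\times B\to C$ for the bijection, order $A$ and $B$ lexicographically (inducing ``row'' and ``column'' indices) and $C$ by the corresponding quasi‑orders $\le_1,\le_2$. Specializing the homogeneity statement about the tuple $(\bar a,\bar b,\bar a',\bar b',\alpha(\bar a,\bar b),\alpha(\bar a',\bar b'))$ --- taking degenerate instances $\bar a'=\bar a$, $\bar b'=\bar b$, and more generally letting one index vary while observing that the atomic type of a fixed tuple cannot genuinely depend on that auxiliary index --- one obtains the expected type control: $\atp(\bar a)$, $\atp(\bar b)$, $\atp(c)$, $\atp(\bar a,\bar b)$ are constant; $\atp(\bar a,c)$ depends only on the sign of the row‑index difference; $\atp(\bar b,c)$ and $\atp(\bar b,\bar b')$ only on the sign of the column‑index difference; and $\atp(c,c')$ only on the pair of signs of the row‑ and column‑index differences. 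In particular $\le$ restricted to $C$ depends only on those signs, so by totality and transitivity it is one of the grid orders; transposing the grid if necessary (replacing $\phi(\tup x,\tup y,z)$ by $\phi(\tup y,\tup x,z)$, which again defines a homogeneous grid) and relabelling rows and columns, we may assume that $\le$ on $C$ is the lexicographic order with the row index primary, and that $N\ge 3$.

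\textbf{Step 2: find a witnessing coordinate via quantifier‑freeness.} Fix a row $i$ with element $\bar a_i\in A$, and let $c$ range over row $i$, so $c=\alpha(\bar a_i,\bar b_{j'})$. Split the atoms of $\phi$ according to which of the blocks $\tup x$, $\tup y$, $\{z\}$ they involve. By Step 1 every atom is constant on these arguments \emph{except} possibly a binary atom between some coordinate $y_s$ of $\tup y$ and $z$ (or an equality $y_s=z$), whose truth value depends only on $\mathrm{sign}(j-j')$, where $j$ is the column index of $\bar b_j$. Hence $\phi(\bar a_i,\bar b_j,c)$ is a Boolean combination, not depending on $i$, of constants and such atoms, so it is a fixed function of $\mathrm{sign}(j-j')$; since it holds exactly when $j=j'$, that function is non‑constant, and therefore some atom $R_0$ relating a fixed coordinate $y_{s_0}$ of $\tup y$ to $z$ has non‑constant truth value as a function of $\mathrm{sign}(j-j')$.

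\textbf{Step 3: build and verify the semigrid.} Set $B^{*}=\setof{(\bar b_j)_{s_0}}{j}$. By the type control, the atomic type and the $\le$‑relation between $(\bar b_j)_{s_0}$ and an element of $C$ in column $j'$ depend only on $\mathrm{sign}(j-j')$, as does the atomic type between $(\bar b_j)_{s_0}$ and $(\bar b_{j'})_{s_0}$. Combining this with the non‑constancy of $R_0$ one checks that the $(\bar b_j)_{s_0}$ are pairwise distinct (else $B^{*}$ is a singleton and $R_0$ constant) and lie outside $C$; that under $\le$ they are linearly ordered, monotonically in $j$ after reindexing; and --- the crucial point --- that $B^{*}$ is an interval of $\le$ lying entirely before, or entirely after, $C$. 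Indeed, any ``interleaved'' behaviour of $(\bar b_j)_{s_0}$ relative to the columns of $C$ would squeeze some $(\bar b_j)_{s_0}$ strictly between two elements of $C$ whose $\le$‑order is already fixed the opposite way by the row‑primary lexicographic order on $C$ (using that $C$ has at least two rows), a contradiction. It then follows directly from the type control --- which yields conditions (1) and (2) of the definition of a regular semigrid --- together with the non‑constancy of $R_0$ --- which yields the non‑constant dependence in condition (3) --- that the induced substructure of $\str S$ on $B^{*}\cup C$, with $B^{*}$ as the first row and $C$ as the body, is a regular semigrid, its order being the lexicographic one (or its inverse, in the case where $B^{*}$ follows $C$). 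Since $\CC$ is hereditary this semigrid lies in $\CC$, and its dimensions grow with $N$, so $\CC$ contains arbitrarily large regular $n\times n$‑semigrids.

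I expect the main obstacle to be exactly the ordering issue in Step 3: quantifier‑freeness hands us the coordinate $y_{s_0}$ and the non‑constant atom $R_0$, but one still has to exclude that the resulting top row $B^{*}$ is interleaved with the body; the previously established lexicographic structure of $C$, and the fact that it has many rows, is precisely what makes this exclusion work and forces $B^{*}$ to be a clean initial (or final) interval as the definition of a regular semigrid demands.
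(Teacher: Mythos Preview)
Your approach is essentially the paper's: homogenise via Lemma~\ref{lem:homogenize}, pin down $\le$ on $C$ as one of the lexicographic orders, use quantifier-freeness together with binarity of $\Sigma$ to locate a coordinate $y_{s_0}$ of $\tup y$ on which $\atp((\bar b_j)_{s_0},c)$ is non-constant, and take the induced substructure on $B^*\cup C$ as the semigrid. The paper in fact leaves the final verification (that $B'=\pi_y(B)$ sits as a clean initial or final interval) entirely implicit with ``it is now straightforward to verify'', so your Step~3 actually supplies more of that argument than the paper does.
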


\begin{proof}
Fix a number $n$.
By Lemma~\ref{lem:homogenize} $\phi$ defines a homogeneous $(n+1)\times (n+1)$ grid $(A,B,C)$ in some structure $\str S\in \CC$. In particular, the atomic type $\atp(\tup a,\tup b)$ does not depend on the choice of $\tup a\in A$ and $\tup b\in B$.

\newcommand{\lex}{\le_{\mathrm{lex}}}
As in the definition of homogeneity of the grid $(A,B,C)$, order $A\subset \str S^{\tup x}$ and $B\subset \str S^{\tup y}$ lexicographically with respect to the order $\le$ on $\str S$ and the fixed enumeration of $\tup x$ and $\tup y$.
Let $\le_A$ and $\le_B$ be the resulting orders of $A$ and $B$.
This induces, via the bijection $\alpha\from A\times B\to C$ defined by $\phi$, 
eight possible
lexicographic orders on $C$:
\begin{itemize}
    \item we may either order $A$ using  $\le_A$ or its inverse $\ge_A$,
     \item we may either order $B$ using $\le_B$ or its inverse $\ge_B$,
    \item we may then order $C\simeq A\times B$ lexicographically using first the chosen order on $A$ then the chosen order on $B$, or the other way around.
\end{itemize}

\begin{claim}
    The order $\le$ on $C$ coincides with one of the $8$ orders above.
\end{claim}
\begin{proof}
    For $\tup a\in A,\tup b\in B$ denote $[\tup a,\tup b]:=\alpha(\tup a,\tup b)$.
    Pick points $\tup a<_A\tup a'\in A$
and $\tup b<_B\tup b'<_Bb''\in B$.
Assume $[\tup a,\tup b]<[\tup a,\tup b']$; otherwise replace $\le_B$ with $\ge_B$ in the following.
Assume $[\tup a,\tup b']<[\tup a,\tup b']$; otherwise replace $\le_A$ with $\ge_A$ in the following.

Let $C'=\set{[\tup a,\tup b],[\tup a,\tup b'], [\tup a,\tup b''], [\tup a',\tup b']}\subset C\subset A\times B$.\todo{pic}

Compare $[\tup a',\tup  b']$ with $[\tup a,\tup  b'']$.
If $[\tup a',\tup b']<[\tup a,\tup b'']$ then $\le$ coincides on $C'$ with the lexicographic order which first orders according to $\le_B$ and then according to $\le_A$. Otherwise, if $[\tup a,\tup b'']<[\tup a',\tup b']$ then $\le$ coincides on $C'\subset C\simeq A\times B$ with the lexicographic order which first orders according to $\le_A$ and then according to $\le_B$. Let $\lex$ denote this among those two lexicographic orders which agrees with $\le$ on $C'$.

By construction of $C'$,
 for every pair of points $c_1=[\tup a_1,\tup b_1],c_2=[\tup a_2,\tup b_2]$ in $C$, there is a pair $c_1',c_2'$ in $C'$ such that  $[c_1',c_2']$ and $[c_1,c_2]$ have equal atomic types in $(C,\le_1,\le_2)$.
It follows from homogeneity that $\le$ agrees of $\lex$ on all of $C$.
\end{proof}

Suppose that the order $\le$ on $C$ coincides with  one of the four lexicographic orders where $A$ has higher priority than $B$
(otherwise replace $B$ and $A$ in the argument),
and the projection from $C\simeq A\times B$ to $A$ is monotone rather than anti-monotone 
(otherwise replace $\le$ with $\ge$ in the argument).

\medskip
Pick points $\tup a<\tup a'$ in $A$
and $\tup b<\tup b'$ in $B$ and let $c=[\tup a,\tup b]\in C$, so that $\phi(\tup a,\tup b,c)$ holds.

\begin{claim}
 $\atp(\tup b,c)\neq\atp(\tup b',c)$.
\end{claim}
\begin{proof}
 Note that $\atp(\tup a,\tup b)=
\atp(\tup a,\tup b')$ by homogeneity.
If
$\atp(\tup b,c)=\atp(\tup b',c)$
then we would have $\atp(\tup a,\tup b,c)=\atp(\tup a,\tup b',c)$
since the signature is binary.
This is a contradiction since $\phi$ is quantifier-free and holds of $(\tup a,\tup b_,c)$ but not of $(\tup a,\tup b',c).$
\end{proof}

\medskip
As $\Sigma$ is binary, there is a variable $y\in \tup y$ such that 
$\atp(\tup b(y),c)\neq\atp(\tup b'(y),c)$.
Let $B'=\pi_y(B)\subset \str S$ be the projection of 
$B\subset \str S^{\tup y}$ onto the $y$-component of $\tup y$.

It is now straightforward to verify that the substructure of $\str S$ induced by $C\cup B'$ is (isomorphic to) a regular $(n+1)\times n$-semigrid. In particular, it contains a regular $n\times n$-semigrid as an induced substructure.
\end{proof}

\begin{corollary}\label{cor:fails-sigma}
If $\CC$ is a hereditary class of ordered $\Sigma$-structures containing arbitrarily  large regular $n\times n$-semigrids then:
\begin{itemize}
    \item $\CC$ interprets all graphs,
    \item $\CC$ is not small,
    \item model-checking first-order logic is AW[$\ast$]-hard on $\CC$,
    \item $\CC$ has unbounded twin-width.
\end{itemize}
\end{corollary}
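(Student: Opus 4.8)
The plan is to reduce the general binary-signature case to the case of ordered graphs, which was already handled by Proposition~\ref{prop:fails} (equivalently Proposition~\ref{prop:easy}). The key observation is Lemma~\ref{lem:map-semigrids}: since $\CC$ contains arbitrarily large regular $n\times n$-semigrids over $\Sigma$, there is a quantifier-free interpretation $I\from\Sigma\to\set{E,\le}$ such that $I(\CC)$ is a class of ordered graphs containing arbitrarily large regular $n\times n$-semigrids in the sense of Section~\ref{sec:bad} (here one uses the last lemma of this section, identifying the two notions of regular semigrid when $\Sigma=\set{E,\le}$). Write $\DD=I(\CC)$; by Lemma~\ref{lem:map-semigrids} and the closure of ``containing large regular semigrids'' under taking hereditary closures, we may assume $\DD$ is a hereditary class of ordered graphs with arbitrarily large regular semigrids.

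First I would apply Proposition~\ref{prop:fails} to $\DD$, obtaining: $\DD$ interprets the class of all graphs; model-checking first-order logic on $\DD$ is AW[$\ast$]-hard; and $\DD$ contains at least $\lowerbound$ non-isomorphic structures with $n$ vertices, hence is not small. It remains to transport each of these four conclusions back from $\DD$ to $\CC$. For the first point: there is an interpretation $J$ with $J(\DD)$ the class of all graphs, so $J\circ I$ is an interpretation (interpretations compose, by the Preliminaries) with $(J\circ I)(\CC)\supseteq J(I(\CC))=J(\DD)=$ all graphs; thus $\CC$ interprets all graphs. For AW[$\ast$]-hardness: $I$ being a (fixed) first-order interpretation gives a polynomial-time many-one reduction of model-checking on $\DD$ to model-checking on $\CC$ (replace each atom in the sentence by the corresponding $\Sigma$-formula and evaluate in the preimage structure); composing with the known AW[$\ast$]-hard reduction into $\DD$ yields AW[$\ast$]-hardness for $\CC$.

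For non-smallness: by the counting statement in Proposition~\ref{prop:fails}, $\DD$ has at least $\lowerbound$ non-isomorphic structures of size $n$; but each of these is $I(\str C)$ for some $\str C\in\CC$, and a fixed interpretation $I$ can only produce one structure from each input, so $\CC$ must contain at least as many non-isomorphic structures of size $\le\Oof(n)$ — more carefully, one uses the explicit family $F(k,l,M)$ of Section~\ref{sec:bad} lifted through $I$, noting that the underlying $\Sigma$-structures $G^S$ are computed in polynomial time (Claim~\ref{cl:effective-semigrid} generalizes since regular semigrids over $\Sigma$ are enumerable in polynomial time by Lemma~\ref{count-semigrids}). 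This shows $\CC$ is not small. Finally, unbounded twin-width follows exactly as in the corollary at the end of Section~\ref{sec:bad}: a class of bounded twin-width neither interprets all graphs (by~\cite{tww1}) nor is large (by~\cite{tww2}), so $\CC$ having either of the two properties just established forces unbounded twin-width.

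The only mild subtlety — and the one place to be careful rather than a real obstacle — is making sure the lift of the counting bound loses at most a constant factor in the vertex count: the interpretation $I$ has domain formula $\delta(x)=(x=x)$ in Lemma~\ref{lem:map-semigrids}, so $|I(\str S)|=|\str S|$, and no blow-up occurs; the bound $\lowerbound$ therefore transfers verbatim. Everything else is a routine composition of already-established facts, so I expect no genuine difficulty.
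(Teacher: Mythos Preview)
Your proof is correct and follows exactly the paper's approach: apply Lemma~\ref{lem:map-semigrids} to get the quantifier-free interpretation $I$, invoke Proposition~\ref{prop:fails} on $I(\CC)$, and transfer each conclusion back to $\CC$. The paper's own proof is a terse three lines that leaves the transfer implicit; you spell it out, and your observation that the domain formula is $\delta(x)=(x=x)$, so $|I(\str S)|=|\str S|$, is precisely what makes the counting bound carry over verbatim.
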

\begin{proof}
    Let $I$ be the quantifier-free interpretation from Lemma~\ref{lem:map-semigrids}. Then $I(\CC)$ is a class of ordered graphs that contains arbitrarily large semigrids, hence satisfies the above conditions by Proposition~\ref{prop:fails}. This implies the same properties for $\CC$.
\end{proof}

\section{$(k,t)$-simplicity}\label{sec:grid-and-tww}
In this section we prove Theorem~\ref{thm:k-t-simple}, which is repeated below:
\begin{theorem*}[\ref{thm:k-t-simple}]
	Every $(k,t)$-simple finite ordered binary structure has twin-width $2^{2^{\Oof(kt)}}$.
\end{theorem*}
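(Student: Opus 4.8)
The goal is to show that a $(k,t)$-simple finite ordered binary structure $\str S$ has twin-width bounded by $2^{2^{\Oof(kt)}}$; by Fact~\ref{fact:grid-char}, it suffices to exhibit an order on $\str S$ — here we simply use the given order — together with a bound $T=T(k,t)$ such that for any two convex partitions $\Ll,\Rr$ of $\str S$ into $T$ parts, there is a part $L\in\Ll$ and a part $R\in\Rr$ which are homogeneous, i.e. $|\Types[\Sigma](L/R)|=1$ or $|\Types[\Sigma](R/L)|=1$. The hypothesis of $(k,t)$-simplicity only gives us that \emph{some} pair has at most $k$ types in each direction, not exactly one type, so the main work is to boost ``few types'' to ``exactly one type.'' The strategy, following the template of~\cite[Theorem~14]{bonnet2020twin}, is a Marcus--Tardos argument: assume towards a contradiction that for arbitrarily fine convex partitions every pair of parts is inhomogeneous, encode this as a dense $0$--$1$ matrix recording inhomogeneity between blocks, apply the Marcus--Tardos theorem to find a large grid-like pattern of inhomogeneous block-pairs, and then glue adjacent blocks along the grid to produce a pair of convex sets witnessing a failure of $(k,t)$-simplicity with the chosen $k$ and $t$.

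Concretely, I would proceed as follows. First, fix the ``division operation'': repeatedly split each part of the current convex partition into two halves (by the order), so after $\log T$ rounds we have a convex partition into $T$ parts. Define the $T\times T$ block matrix $D$ with $D_{L,R}=1$ if the convex pair $(L,R)$ is \emph{not} homogeneous, i.e. $|\Types[\Sigma](L/R)|\ge 2$ \emph{and} $|\Types[\Sigma](R/L)|\ge 2$. The key sub-claim is a ``merging'' lemma: if a contiguous $p\times q$ sub-block of $D$, corresponding to a convex union $L'=L_{i_1}\cup\cdots\cup L_{i_p}$ of consecutive row-parts and $R'=R_{j_1}\cup\cdots\cup R_{j_q}$ of consecutive column-parts, consists entirely of $1$'s, then $|\Types[\Sigma](L'/R')|$ and $|\Types[\Sigma](R'/L')|$ grow — roughly, each inhomogeneous cell forces at least one additional type either ``vertically'' or ``horizontally,'' so one of the two type-counts is at least $\min(p,q)$ up to a $|\Sigma|$-dependent factor, or more carefully, a $c\times c$ all-ones sub-block forces $\ge c^{1/|\Sigma|}$-ish distinct types in one direction; I would phrase this so that an all-ones sub-block of side $\Omega(k^{|\Sigma|})$ already yields $>k$ types in each direction.

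Next, suppose $\str S$ fails to be $(k,t)$-simple for the target $k$ and $t=T$: then for every pair of convex parts in the size-$T$ partition, $|\Types[\Sigma](L/R)|> k$ or $|\Types[\Sigma](R/L)|> k$; arranging things so that the failure is of the symmetric ``both exceed a smaller threshold $k_0$'' form (using that $|\Types[\Sigma](L/R)|\le k_0$ forces $|\Types[\Sigma](R/L)|\le 2^{|\Sigma|k_0}$, so one can trade a one-sided small bound for a two-sided one with a logarithmic loss), the matrix $D$ has a $1$ in every cell — or at least a constant fraction of cells, after cleaning up the asymmetry — hence $D$ is dense. By Marcus--Tardos, for $T$ large enough relative to a parameter $c$, $D$ contains a $c$-division in which every one of the $c^2$ sub-blocks contains a $1$; equivalently, after grouping the $T$ parts into $c$ consecutive super-parts on each side, each pair of super-parts contains an inhomogeneous block-pair. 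I would then argue that containing an inhomogeneous block-pair inside $(L',R')$ forces $|\Types[\Sigma](L'/R')|\ge 2$ and $|\Types[\Sigma](R'/L')|\ge 2$ (monotonicity of type-counts under enlarging the parameter set and the tuple set), and iterating/bootstrapping the merging lemma along a full $c\times c$ all-inhomogeneous grid pushes both type-counts above $k$ for the super-partition into $c$ parts — contradicting $(k,t)$-simplicity with $t=c$, provided $c\le T$, which we ensure by taking $T$ large. Unwinding the quantitative dependence, Marcus--Tardos gives $T$ polynomial in $c$, the merging step needs $c$ exponential in $k$ and $|\Sigma|$ (a tower of height related to how many times we bootstrap), and the translation to twin-width via Fact~\ref{fact:grid-char} costs one more exponential, yielding the stated bound $2^{2^{\Oof(kt)}}$.

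\textbf{Main obstacle.} The delicate point is the merging lemma: an inhomogeneous block-pair only guarantees $\ge 2$ types, and I must show that assembling many such pairs into a grid multiplies rather than merely adds the type-counts, so that a grid of polynomial side already forces super-polynomially many types and hence $>k$. This requires carefully choosing which direction (``rows of the submatrix collapse'' vs. ``columns collapse'') to exploit in each sub-block, so that across the $c\times c$ grid the witnesses combine consistently in a single direction for a large sub-grid — which is itself another pigeonhole/Ramsey-flavored cleanup on the $0$--$1$ matrix before applying Marcus--Tardos. Handling the asymmetry between $|\Types[\Sigma](L/R)|$ and $|\Types[\Sigma](R/L)|$ cleanly (so that the failure of $(k,t)$-simplicity really does produce a dense matrix, not just a matrix dense ``in one orientation'') is the other place where care is needed, and is where the $2^{|\Sigma|k}$-type blow-up from the excerpt's remarks on $\Types$ gets used.
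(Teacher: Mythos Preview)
Your reduction to Fact~\ref{fact:grid-char} with the given order is flawed, and the merging lemma you isolate as the ``main obstacle'' is in fact false. Consider the ordered graph on $\{1,\ldots,n\}$ with $i\sim j$ iff $i+j$ is odd, with the natural order. For any two disjoint intervals $L,R$ (each of size $\ge 2$) we have $|\Types[\Sigma](L/R)|=|\Types[\Sigma](R/L)|=2$: the $\le$-type is constant and the $E$-type depends only on parity. Hence this structure is $(2,2)$-simple (any two convex partitions into $\ge 2$ parts contain a disjoint pair). On the other hand, \emph{every} pair of intervals of size $\ge 2$ is inhomogeneous, so your matrix $D$ is all $1$'s; yet merging consecutive parts only produces larger intervals, and disjoint intervals still have exactly $2$ types each way. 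No amount of merging pushes the type-count above $2$, so you can never contradict $(2,t)$-simplicity. Equivalently, condition~(2) of Fact~\ref{fact:grid-char} fails for this structure with the given order for every $T$ (once $n\ge 2T$), so your proposed reduction cannot succeed. The ``pigeonhole/Ramsey cleanup'' you suggest does not help: the obstruction is not that the witnesses point in inconsistent directions, but that they are all the \emph{same} two types.

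The paper does not reduce to Fact~\ref{fact:grid-char}; it constructs a contraction sequence directly, and crucially the partitions in that sequence are \emph{not convex} (which is unavoidable, as the chessboard shows). The Marcus--Tardos step is applied to a different $0$--$1$ matrix: for a row-interval $R$, call a column $c$ \emph{bad} if it is the left endpoint of an inclusion-minimal column-interval $I$ with $M[R\times I]$ having $\ge k$ distinct rows; set $N_{R,C}=1$ if the column-interval $C$ contains a bad column for $R$. A $tk$-grid minor in $N$ yields a $(k,t)$-mixed minor in $M$, because \emph{minimality} guarantees that distinct bad intervals starting inside a merged block contribute genuinely distinct columns---this is precisely the multiplicative growth you were hoping for but could not get from mere inhomogeneity. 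When no grid minor exists, some row-interval $R$ has few bad column-intervals, which lets the construction merge two adjacent row-intervals while keeping a bounded number of ``exceptional'' column-intervals; a second step then refines each interval into $\le k^{c_{tk}}$ non-convex parts (grouping rows by their type outside the exceptional columns) to obtain bounded red-degree.
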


We follow the general scheme of the proof of the implication  (2)$\rightarrow$(1) in Fact~\ref{fact:grid-char}, presented in~\cite[Section 5]{bonnet2020twin}.

Instead of working with binary structures, we will consider their adjacency matrices. An ordered $\Sigma$-structure $\str S$ induces a square matrix, called the \emph{adjacency matrix} of $\str S$, whose rows and columns correspond to the elements of $\str S$, and where the entry at $(a,b)$ is the atomic type $\atp(ab)$. 
In the case of ordered graphs, there are five  possible atomic types, depending on whether $a=b,a<b$ or $a>b$, and on the adjacency between $a$ and $b$.
All matrices below have ordered sets of rows and columns.

\begin{definition}[$(k,t)$-mixed minor]
	Fix $k,t\in\N$.
A \emph{$(k,t)$-mixed minor} in a matrix $M$ is a convex partition $\cal R$ of its rows
	and a convex partition $\cal C$ of its columns,
	each with $t$ parts, 
	such that	 
	 for all $R\in \cal R$ and $C\in \cal L$, the submatrix $M[R\times C]$ of $M$ has at least $k$ different columns or at least $k$ different columns.
\end{definition}

Note  that  if a class $\CC$ is $(k+1,t)$-simple  then the adjacency matrix of every $\str S\in \CC$ has no $(k,t)$-mixed minor.

\medskip

We will use the following result due to Marcus and Tardos, concerning a related notion, of \emph{grid minors} in $0$-$1$ matrices with many $1$'s.
\begin{definition}[grid minor]
	 A \emph{$t$-grid minor} in a $0$-$1$ matrix $M$ is a convex partition $\cal R$ of its rows
	 and a convex partition $\cal C$ of its columns, each with $t$ parts, such that
	for all $R\in \cal R$ and $C\in\cal C$,
	the submatrix $M[R\times C]$ has an entry $1$.
\end{definition}

\begin{theorem}\cite{MARCUS2004153}
	For every $t\in\N$ there is a constant $c_t$ such that 
	 every $m\times n$ $0$-$1$ matrix $M$ 
	with at least $c_t\cdot \max(m,n)$ entries $1$  has a $t$-grid minor. 
\end{theorem}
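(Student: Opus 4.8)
The plan is to prove the equivalent statement in contrapositive, quantitative form: there is a constant $c_t$ such that every $m\times n$ $0$--$1$ matrix $M$ \emph{without} a $t$-grid minor has fewer than $c_t\cdot(m+n)$ entries equal to $1$; since $\max(m,n)\ge (m+n)/2$ this yields the theorem (with the constant doubled). I would argue by induction on $m+n$, following the original block-decomposition scheme of Marcus and Tardos. The base case is trivial: if $m+n$ is small (in particular if $m<t$ or $n<t$, when no $t$-grid minor can exist at all) then $\#1(M)\le mn\le t^2(m+n)$, which is below $c_t(m+n)$ once $c_t\ge t^2$.

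\textbf{Block decomposition and the quotient matrix.} Fix the block size $s=t^2$ and, after adding at most $s$ empty rows and columns (which only perturbs the final bound by an additive $O_t(1)$), assume $s\mid m$ and $s\mid n$. Cut the rows of $M$ into $m/s$ consecutive blocks of $s$ rows and the columns into $n/s$ consecutive blocks of $s$ columns; this carves $M$ into $s\times s$ \emph{cells}. Let $M'$ be the $(m/s)\times(n/s)$ matrix with a $1$ in position $(i,j)$ exactly when cell $(i,j)$ is nonempty. The first point is that $M'$ has no $t$-grid minor: a $t$-grid minor of $M'$ merges the block-rows into $t$ convex classes and the block-columns into $t$ convex classes so that every one of the $t^2$ classes-of-cells contains a nonempty cell, and refining these classes back to the original rows and columns of $M$ produces a $t$-grid minor of $M$. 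By the inductive hypothesis $M'$ has at most $c_t(m/s+n/s)$ ones, i.e. at most that many cells of $M$ are nonempty.

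\textbf{Classifying cells.} Call a nonempty cell \emph{row-rich} if its $1$'s lie in at least $t$ distinct rows, \emph{column-rich} if they lie in at least $t$ distinct columns, and \emph{light} otherwise; a light cell has at most $(t-1)^2$ ones. The combinatorial heart of the argument is: each block-row of cells contains fewer than $(t-1)\binom{t^2}{t}$ row-rich cells (and dually for block-columns and column-rich cells). Indeed, otherwise one block-row has $\ge t$ row-rich cells that, by pigeonhole on the at most $\binom{t^2}{t}$ choices of a $t$-element set of relative row-positions, share the same $t$-set $Q=\{q_1<\dots<q_t\}$ of nonempty rows; if these cells sit in block-columns $\gamma_1<\dots<\gamma_t$, then splitting the columns of $M$ into the $t$ convex spans ending at $\gamma_1,\dots,\gamma_t$ and splitting the $s$ rows of that block-row into $t$ intervals each containing one element of $Q$ exhibits a $t$-grid minor of $M$, a contradiction. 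Hence the row-rich cells carry at most $(m/s)\cdot(t-1)\binom{t^2}{t}\cdot s^2 = m\cdot t^2(t-1)\binom{t^2}{t}$ ones in total, the column-rich cells at most $n\cdot t^2(t-1)\binom{t^2}{t}$, and the light cells at most $(t-1)^2$ times the number of nonempty cells, i.e. at most $(t-1)^2c_t(m/s+n/s)=(t-1)^2c_t(m+n)/t^2$.

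\textbf{Closing the induction.} Summing the three contributions, $\#1(M)\le (m+n)\big(t^2(t-1)\binom{t^2}{t}+(t-1)^2c_t/t^2\big)$ up to the additive $O_t(1)$ slack. Since $(t-1)^2/t^2<1$, any $c_t$ with $c_t\ge \tfrac{t^2}{2t-1}\cdot t^2(t-1)\binom{t^2}{t}=t^{O(t)}$ makes the right-hand side at most $c_t(m+n)$, completing the induction. The main obstacle is the richness/pigeonhole step: one must produce an honest $t$-grid minor (not merely a dense submatrix pattern), which forces the separation between the \emph{few} rich cells — whose total count is linear because they are rare \emph{per} block-row or block-column — and the \emph{many} light cells, whose count is controlled only recursively through the quotient matrix $M'$ and each of which is cheap. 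Everything else is routine bookkeeping on the constants.
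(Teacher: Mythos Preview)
The paper does not prove this theorem: it is stated with a citation to Marcus--Tardos \cite{MARCUS2004153} and used as a black box (the subsequent line even cites the sharper bound $c_t=2^{\Oof(t)}$ from elsewhere). So there is no ``paper's own proof'' to compare against.

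That said, your proposal is essentially the original Marcus--Tardos argument and is correct in outline. Two small points of bookkeeping you should tighten if you write it out in full: (i) when you extract a $t$-grid minor from $t$ row-rich cells sharing a common $t$-set $Q$ of rows inside one block-row, the definition of a $t$-grid minor requires a convex partition of \emph{all} rows and \emph{all} columns of $M$, not just those of the block-row; you should say explicitly that the rows outside the block-row are absorbed into the first and last row-intervals, and likewise that the columns beyond block-column $\gamma_t$ go into the last column-interval. (ii) The pigeonhole gives at most $(t-1)\binom{t^2}{t}$ row-rich cells per block-row (not ``fewer than''), a harmless off-by-one. Neither issue affects the validity of the argument.
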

It is known~\cite{DBLP:journals/corr/CibulkaK16} that $c_t=2^{\Oof(t)}$.

\medskip
A row interval $R$ and a column interval $C$ are \emph{homogeneous} if the submatrix  $M[R\times C]$ 
of $M$ has all entries equal to each other. 
We will consider pairs consisting of a convex partition $\cal R$ of the rows and of a convex partition $\cal C$ of the columns.
Such a pair $(\cal R,\cal C)$ has \emph{red-degree} at most $d$ if for each row part $R\in\cal R$ there are at most $d$ column parts $C\in\cal C$ such that $R$ and $C$ are not homogeneous, and symmetrically, for each column part $C\in\cal C$ there are at most $d$ row parts $R\in\cal R$ such that $R$ and $C$ are not homogeneous.

We prove the following, asymmetric variant of Theorem~\ref{thm:k-t-simple}.

\begin{proposition}\label{prop:two-steps}	
	Fix $k$ and $t\in\N$. There is a constant $d=2^{2^{\Oof(tk)}}$ 
	such that for every rectangular matrix $M$
	with no $(k,t)$-mixed minor there is a
	sequence
	\begin{align}\label{seq:step1}
		(\cal R_0,\cal C_0),(\cal R_1,\cal C_1),\ldots, (\cal R_l,\cal C_l)
		\end{align}
	of pairs of partitions (of the rows and columns of $M$, respectively) that is maximal under refinement, and consists of pairs of red-degree at most $d$.
\end{proposition}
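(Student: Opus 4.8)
The statement is essentially a recasting of the key combinatorial step from~\cite[Section 5]{bonnet2020twin} for matrices over an arbitrary finite alphabet of atomic types, and I will follow that blueprint. The strategy is to build the refinement sequence greedily: start with $(\cal R_0,\cal C_0)$ the trivial pair (one row part, one column part), and as long as the current pair $(\cal R_i,\cal C_i)$ is not the all-singletons pair, split one part of $\cal R_i$ or of $\cal C_i$ into two, obtaining $(\cal R_{i+1},\cal C_{i+1})$. The sequence terminates at a pair that is maximal under refinement (necessarily the pair into singletons). The only content is showing that every pair arising this way can be taken to have red-degree bounded by $d=2^{2^{\Oof(tk)}}$; so the real claim is an invariant: every pair $(\cal R,\cal C)$ with sufficiently large red-degree admits a "safe" split that does not blow up the red-degree, and in fact one can always keep the red-degree below $d$.

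**Key steps.** First I would reduce the statement about atomic-type-valued matrices to $0$-$1$ matrices: fix the finite alphabet $T$ of atomic types (size $\le 4^{|\Sigma|}$, a constant here), and for each pair of types $\tau\ne\tau'$ consider the $0$-$1$ matrix $M_{\tau,\tau'}$ recording where $M$ has entry $\tau$ versus $\tau'$; a row/column interval pair is homogeneous for $M$ iff it is "constant" for all these $M_{\tau,\tau'}$ simultaneously, and the notions of mixed minor and grid minor transfer up to constant factors depending only on $|T|$. Second, the core lemma: if $(\cal R,\cal C)$ is a pair of convex partitions of red-degree $> d$, then (after possibly restricting to the offending part) we can find a row part $R$ and column part $C$ such that $M[R\times C]$ already has many distinct rows or columns, and a finer analysis — using the Marcus--Tardos theorem applied to the $0$-$1$ matrix recording the "first place two consecutive rows of $M[R\times C]$ differ" — produces a $t$-grid minor inside $M[R\times C]$, which in turn refines to a $(k,t)$-mixed minor of $M$, contradicting the hypothesis. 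This is exactly the place where the bound $d=2^{2^{\Oof(tk)}}$ comes out: one level of exponential from Marcus--Tardos ($c_t=2^{\Oof(t)}$) applied at scale $k$, and a second exponential from counting distinct rows/columns over the constant-size alphabet and from the bookkeeping needed to convert a high red-degree vertex into a large number of "mixed" cells. Third, I would argue that because no such mixed minor exists, at every stage we can always choose a split (say, split a part that currently participates in a near-maximal red pair) that keeps the red-degree of the new pair at most $d$; iterating gives the desired maximal sequence~\eqref{seq:step1}.

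**Main obstacle.** The delicate part is the bookkeeping in the core lemma: translating "some row part $R$ is non-homogeneous with more than $d$ column parts $C_1,\dots,C_{d+1}$" into a genuine $(k,t)$-mixed minor. Non-homogeneity of $R$ with $C_j$ only gives \emph{two} distinct rows (or columns) in $M[R\times C_j]$, not $k$; so one must amplify. The amplification goes via Marcus--Tardos: mark, in the submatrix $M[R\times (C_1\cup\cdots\cup C_{d+1})]$, the positions witnessing a difference between consecutive rows; having $\gg c_t\cdot(\#\text{rows}+\#\text{columns})$ such marks forces a $t$-grid minor, and within a grid minor each cell has $\ge k$ distinct rows or columns provided the marks are dense enough — which is guaranteed once $d$ is chosen as $2^{2^{\Oof(tk)}}$ so that the pigeonhole over the constant alphabet and the Marcus--Tardos threshold are both met. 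Making the three quantifiers ($k$ distinct lines per cell, $t$ parts, $d$ red-degree) fit together cleanly, and checking that the split can always be performed so as to \emph{maintain} rather than merely not-immediately-violate the red-degree bound, is the technical heart; everything else is routine. I expect this to be carried out almost verbatim as in~\cite[Section 5]{bonnet2020twin}, with the graph adjacency matrix replaced by the $T$-valued adjacency matrix and constants absorbed into the $\Oof(\cdot)$.
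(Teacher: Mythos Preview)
Your high-level instinct (greedy sequence, with Marcus--Tardos supplying the obstruction if the invariant ever fails) matches the paper, but the concrete implementation you sketch diverges from the paper's in two essential ways, and the version you describe has a real gap.

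\textbf{The invariant is not ``red-degree at most $d$''.} The paper does \emph{not} maintain bounded red-degree directly. Instead it proceeds in two steps. In Step~1 (Lemma~\ref{lem:step1}) it builds a sequence of pairs of \emph{convex} partitions, starting from singletons and \emph{merging adjacent} parts, maintaining the weaker invariant: for each row interval $R\in\cal R_i$ there is a small set $B\subset \cal C_i$ (of size $\le 2c_{tk}$) such that $M[R\times\bigcup(\cal C_i\setminus B)]$ has at most $k^{c_{tk}}$ distinct rows (and symmetrically). This is not a red-degree bound; it says nothing about the bad column intervals in $B$. Only in Step~2 (Lemma~\ref{lem:step2}) does the paper \emph{refine} each interval of $\cal R_i$ into its $\le k^{c_{tk}}$ row-equivalence classes, producing non-convex partitions $\cal R_i'$; it is these refined partitions that have bounded red-degree, and the final maximal chain is obtained by interpolating between consecutive $(\cal R_i',\cal C_i')$ using the bookkeeping lemma from~\cite[Lemma~8]{bonnet2020twin}. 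Your one-step plan (``keep red-degree $\le d$ throughout by always choosing a safe split'') skips this detour, and you give no mechanism for why such a safe split always exists; the paper's two-step structure is precisely what makes the argument go through.

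\textbf{The Marcus--Tardos application is different.} Your ``core lemma'' proposes to mark, in $M[R\times\bigcup C_j]$, the first place two consecutive rows differ, and then invoke Marcus--Tardos. The paper's Lemma~\ref{lem:invariant} uses a different and sharper device: for each row interval $R$, a \emph{minimal bad interval} is an inclusion-minimal column interval $I$ with $\ge k$ distinct rows in $M[R\times I]$, and a \emph{bad column} is the first column of such an $I$. One forms the $0$--$1$ matrix $N$ over $\cal R\times\cal C$ recording which column parts contain a bad column for $R$, and applies Marcus--Tardos at scale $tk$ to $N$. The dichotomy is: either some row of $N$ is sparse (giving the Step~1 invariant for that $R$), or $N$ has a $tk$-grid minor. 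In the latter case, grouping the $tk$ column parts into $t$ groups of $k$ and using minimality of the bad intervals (any two bad columns in the same group yield distinct columns of $M$) produces a genuine $(k,t)$-mixed minor. Your ``consecutive rows differ'' marking does not give $k$ distinct rows per cell, and the amplification you describe (``provided the marks are dense enough'') is where your argument is incomplete: density of difference-marks does not by itself yield $k$ distinct rows in each cell of a $t$-grid. The minimal-bad-interval trick is the missing idea.

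Minor point: your indexing is reversed relative to the statement ($\cal R_0$ is the partition into singletons, not into one block), and the paper merges rather than splits; this is cosmetic but worth aligning.
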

Here, maximality under refinement means:
\begin{itemize}
	\item $\cal R_0$ and $\cal C_0$ are both partitions into singletons, 
	\item $\cal R_l$ and $\cal C_l$ are both partitions with one part, and 
	\item if $1\le i<l$ then
	$(\cal R_{i+1},\cal C_{i+1})$ 
	either $\cal R_{i+1}$ is obtained  by merging two parts in $\cal R_i$ into one and leaving $\cal C_{i+1}=\cal C_i$, or vice-versa.
\end{itemize} 

Note that the partitions in Proposition~\ref{prop:two-steps}	  are not necessarily convex.

\medskip

We proceed to the proof of Proposition~\ref{prop:two-steps}.
\textbf{For the rest of Section~\ref{sec:grid-and-tww}, fix constants $k$ and $t$ and a matrix $M$ with no $(k,t)$-mixed minor.}
We proceed in two steps.
The first step is:
\begin{lemma}\label{lem:step1}
	There are constants $b=2c_{tk}$ and $c=k^{c_{tk}}$ depending on $k$ and $t$ only such $M$ has a maximal (under refinement) sequence 
	\begin{align}\label{seq:step1}
	(\cal R_0,\cal C_0),(\cal R_1,\cal C_1),\ldots, (\cal R_l,\cal C_l)
	\end{align}
	of pairs of convex partitions of the rows and columns of $M$ such that the following conditions hold for $0\le i\le l$:
	\begin{itemize}
		\item 
		for each row interval $R\in\cal R_i$ there is a set $B\subset \cal C_{i}$ of at most $b$ column intervals such that $M$ has at most $c$ distinct rows in $R\times \bigcup (\cal C_{i}-B)$, and
		\item symmetrically for columns and rows exchanged.
	\end{itemize}
	\end{lemma}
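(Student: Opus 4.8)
The plan is to recast the hypothesis as a Marcus--Tardos sparsity statement and then build the required sequence by a hierarchical, interleaved coarsening of the rows and columns of $M$.

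First I would record the quantitative content of ``$M$ has no $(k,t)$-mixed minor''. Given a pair of convex partitions $\cal R$ of the rows and $\cal C$ of the columns, call a zone $M[R\times C]$ (with $R\in\cal R$, $C\in\cal C$) \emph{mixed} if it has at least $k$ distinct rows or at least $k$ distinct columns. The key monotonicity is that passing from a zone to a sub-zone can only identify rows (resp.\ columns), never separate them, so a zone that contains a mixed sub-zone is itself mixed. Hence, if the $0$--$1$ pattern on $\cal R\times\cal C$ recording the mixed zones had a $t$-grid minor, merging the corresponding $t$ consecutive groups of parts on each side would yield convex partitions into $t$ parts each of whose zones is mixed --- that is, a $(k,t)$-mixed minor of $M$, a contradiction. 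By the Marcus--Tardos theorem (with $c_t=2^{\Oof(t)}$) it follows that for \emph{every} pair of convex partitions with $m$ and $n$ parts the number of mixed zones is at most $c_{tk}\cdot\max(m,n)$; the index $tk$ leaves the room needed later, when one has to keep track, inside a non-mixed zone, of the at most $k-1$ distinct rows (resp.\ columns) it still carries.

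The sequence itself I would build as a hierarchical ``dyadic'' coarsening performed in alternating rounds on the two sides. Fix nested convex partitions $\cal R^{(0)}\preceq\cal R^{(1)}\preceq\cdots$ of the rows, where $\cal R^{(j)}$ consists of blocks of size about $2^{j}$, and similarly $\cal C^{(0)}\preceq\cal C^{(1)}\preceq\cdots$ of the columns, interleaved so that at every stage the numbers of live row-blocks and column-blocks differ by at most a constant factor. The sequence of pairs passes through the stages $(\cal R^{(j)},\cal C^{(j)})$ and, between consecutive stages, performs the merges one adjacent pair at a time --- first halving the row-blocks, then halving the column-blocks --- so that every partition in the sequence is convex, and the sequence as a whole is maximal under refinement, running from singletons at the bottom to a single block on each side at the top. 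To verify the displayed property at a stage $(\cal R,\cal C)$, fix a row-block $R$ and take $B\subseteq\cal C$ to be the set of column-blocks $C$ with $M[R\times C]$ mixed. Summing the sparsity bound over all row-blocks shows that after a bounded number of auxiliary merges one may arrange that \emph{every} live row-block has $|B|\le 2c_{tk}=b$; on every remaining column-block $R$ is non-mixed, hence carries at most $k-1$ distinct rows there, and because the hierarchy advances each side only once the other has been brought into line, the accumulated number of distinct rows of $M[R\times\bigcup(\cal C-B)]$ stays at most $k^{c_{tk}}=c$. The statement for column-blocks follows by the same argument with the roles of rows and columns exchanged, using the other direction of the Marcus--Tardos bound.

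The step I expect to be the main obstacle is maintaining \emph{both} halves of the invariant simultaneously across merges, in particular across row-merges: merging adjacent row-blocks $R,R'$ into $R\cup R'$ can increase the number of distinct rows on a column-block, and the naive exceptional set $B(R)\cup B(R')$ is a factor $2$ too large. The resolution is to merge only blocks whose exceptional column-sets are already aligned --- essentially forced to be the mixed column-blocks at the matching position, as in the identity-matrix example --- and to interleave the row-rounds and column-rounds so that neither side ever runs ahead of the other; this is precisely what prevents $b$ and $c$ from compounding, and is where the quantitative Marcus--Tardos bound is used. The remaining points (the singleton base case, the single-block top case, and turning the level-by-level description into an honest one-merge-at-a-time refinement sequence) are routine.
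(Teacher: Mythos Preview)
Your proposal has a genuine gap: the exceptional set you choose does not do the job. You set $B$ to be the column-blocks $C$ for which the zone $M[R\times C]$ is mixed, and then argue that on each remaining block $R$ has fewer than $k$ distinct rows. That is true block by block, but it does \emph{not} bound the number of distinct rows of $M[R\times\bigcup(\cal C-B)]$: if there are $n$ non-mixed column-blocks, the union can carry up to $(k-1)^{n}$ distinct rows, and nothing in your argument controls $n$. Your sentence ``because the hierarchy advances each side only once the other has been brought into line, the accumulated number of distinct rows\dots stays at most $k^{c_{tk}}$'' is precisely the step that is missing.

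The paper's fix is to use a finer marking. For a row-block $R$, call a column \emph{bad} if it is the first column of an inclusion-minimal column interval $I$ with $M[R\times I]$ having $\ge k$ distinct rows, and put into $B$ the column-blocks containing a bad column. The point is that the complement of $B$ is then a union of at most $|B|$ \emph{maximal column intervals}, each of which (containing no bad column, hence no minimal bad interval) has fewer than $k$ distinct rows for $R$; this is what gives the $k^{O(|B|)}$ bound. To bound $|B|$, one forms the $0$--$1$ matrix $N$ over $\cal R\times\cal C$ recording whether $C$ contains a bad column for $R$, and shows that a $tk$-grid minor in $N$ (not a $t$-grid minor) would yield a $(k,t)$-mixed minor in $M$; the grouping-by-$k$ of the column-parts is where the factor $k$ in the index $tk$ actually enters, and it is a different argument from your monotonicity of mixedness. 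This is also why your first paragraph is confused about $c_t$ versus $c_{tk}$: the monotonicity you describe excludes only a $t$-grid minor in the ``mixed'' pattern, which gives $c_t$ and is of no use here.

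A second, related gap: Marcus--Tardos bounds the \emph{total} number of $1$'s in $N$, hence the average row-block has few bad column-blocks, but some row-block could have many. Your ``auxiliary merges'' do not help (merging row-blocks can only enlarge $B$). The paper sidesteps this entirely by abandoning the dyadic scheme: at each step one merges a \emph{single} adjacent pair of row-blocks (or column-blocks), namely the one pair whose union is the good $R$ guaranteed by the Marcus--Tardos average. The other row-blocks are untouched and retain their invariant from the previous step; the column invariant survives a row-merge because enlarging $B$ by one element and shrinking the union accordingly only helps.
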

To construct the sequence~\eqref{seq:step1}, we start with the partitions $\cal R_0$ and $\cal C_0$ of the rows and columns into singletons, and then proceed by repeatedly merging either two 
adjacent parts of the row partition 
or of the column partition
so that the condition in the lemma above is maintained.
This is done in a greedy way, and we show that if the process is blocked at some point, then $M$ must have a $(k,t)$-mixed minor, by the Marcus-Tardos result.

The key lemma is:
\begin{lemma}\label{lem:invariant}
	Let $\cal R$ be a convex partition of 
	the rows and $\cal C$ be a convex partition of the columns of $M$, such that $2|\cal R|\ge |\cal C|$.	
	Then for some row interval $R\in \cal R$ there is a set 
	$B\subset \cal C$ of at most $2c_{tk}$ column intervals such that $M$ has at most $k^{c_{tk}}$ distinct rows in $R\times \bigcup (\cal C-B)$.
	
\end{lemma}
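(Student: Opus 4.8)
The plan is to argue by contradiction using the Marcus--Tardos theorem. Suppose that for every row interval $R\in\cal R$ and every set $B\subset\cal C$ of at most $2c_{tk}$ column intervals, the submatrix $M[R\times\bigcup(\cal C-B)]$ has more than $k^{c_{tk}}$ distinct rows. We want to build a $0$-$1$ matrix with few rows, few columns, and many $1$'s, so that Marcus--Tardos produces a $(tk)$-grid minor, which we then upgrade to a $(k,t)$-mixed minor of $M$ --- contradicting our standing assumption on $M$.

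First I would set up the auxiliary matrix. For each pair $(R,C)\in\cal R\times\cal C$, declare the entry to be $1$ if the submatrix $M[R\times C]$ is \emph{not homogeneous} (has at least two distinct columns, say), and $0$ otherwise; call this matrix $N$, of size $|\cal R|\times|\cal C|$. The key combinatorial observation is that a row $R$ of $N$ with at most $2c_{tk}$ ones corresponds, after deleting those $\le 2c_{tk}$ bad column intervals $B$, to a region $R\times\bigcup(\cal C-B)$ in which every column interval is homogeneous; and a region built from $\le |\cal C|$ homogeneous column blocks, each contributing at most $k$ distinct column-patterns when restricted to $R$ (here I use that $M$ has no $(k,t)$-mixed minor, which caps the number of distinct columns \emph{or} rows in the relevant submatrices), yields a bounded number of distinct rows --- this is where the exponent $c_{tk}$ and the base $k$ enter, and it needs to be made precise by a counting argument on how homogeneous blocks combine. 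The upshot is: if row $R$ of $N$ has at most $2c_{tk}$ ones, then the conclusion of the lemma holds for $R$. So by our contradiction hypothesis, \emph{every} row of $N$ has more than $2c_{tk}$ ones, hence $N$ has more than $2c_{tk}\cdot|\cal R|$ ones; combined with $2|\cal R|\ge|\cal C|$, i.e. $|\cal R|\ge|\cal C|/2$, we get that $N$ has at least $c_{tk}\cdot\max(|\cal R|,|\cal C|)$ ones, up to the constant factors.

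Next I would invoke Marcus--Tardos: $N$ has a $(tk)$-grid minor, i.e. convex partitions of its rows and columns into $tk$ parts each, every cell of which contains a $1$ of $N$. Grouping these $tk$ parts into $t$ blocks of $tk/t=k$ consecutive parts each, I obtain convex partitions $\cal R'$ of the rows of $M$ and $\cal C'$ of the columns of $M$ into $t$ parts, such that every cell $R'\times C'$ of this coarse $t\times t$ grid contains a $k\times k$ subgrid of cells of $N$ each carrying a $1$, i.e. $k$ disjoint row-intervals and $k$ disjoint column-intervals whose pairwise intersections are all non-homogeneous submatrices of $M$. The final step is to extract from such a $k\times k$ array of non-homogeneous blocks either $k$ distinct rows or $k$ distinct columns inside $M[R'\times C']$: picking one witnessing pair of unequal columns (or rows) from each non-homogeneous block and using the fact that the blocks are arranged in a grid, a pigeonhole/Ramsey-type argument along one axis yields $k$ pairwise distinct columns or $k$ pairwise distinct rows of $M[R'\times C']$. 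Hence $(\cal R',\cal C')$ is a $(k,t)$-mixed minor of $M$, the desired contradiction.

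The main obstacle I anticipate is the precise bookkeeping in the middle step: carefully defining $N$ (edge-complement vs.\ graph, ``distinct rows'' vs.\ ``distinct columns'' --- one must be consistent since non-homogeneity gives distinct columns \emph{or} distinct rows, and the $(k,t)$-mixed-minor hypothesis is itself an ``or''), and quantifying exactly how deleting $2c_{tk}$ column intervals and having every remaining column block contribute $\le k$ patterns forces $\le k^{c_{tk}}$ distinct rows --- this requires the observation that among $|\cal C|$ homogeneous blocks all but $c_{tk}$ of them contribute a \emph{single} pattern (otherwise another application of Marcus--Tardos inside this region would again produce a grid minor), so that effectively only $c_{tk}$ blocks are ``active'', giving $k^{c_{tk}}$. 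I would make this last point a separate sub-claim, again proved by the Marcus--Tardos theorem applied to the $0$-$1$ matrix recording which blocks are non-constant on which row-subintervals, so that the whole proof is a double application of Marcus--Tardos with the grid-to-mixed-minor extraction at the end.
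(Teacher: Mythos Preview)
Your overall skeleton --- build an auxiliary $0$-$1$ matrix indexed by $\cal R\times\cal C$, apply Marcus--Tardos, and split into ``some row is sparse'' versus ``there is a $tk$-grid minor'' --- matches the paper. But your choice of the auxiliary matrix $N$ is the wrong one, and this breaks both branches.

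You set $N[R,C]=1$ when $M[R\times C]$ has at least two distinct columns. In the sparse-row branch, removing the at most $2c_{tk}$ offending blocks $B$ leaves you with blocks in which all columns agree; but that only says each row of $M[R\times C]$ is constant on $C$, not that different rows agree. The number of distinct rows of $M[R\times\bigcup(\cal C-B)]$ is then governed by the number of blocks in $\cal C-B$, which is unbounded. Your proposed rescue via a second Marcus--Tardos pass (``all but $c_{tk}$ blocks contribute a single pattern'') is not an argument: you are now looking at a single row interval $R$, there is no second dimension along which to apply Marcus--Tardos, and there is no reason the non-constant blocks should form any grid-like structure. In the grid-minor branch, a $k\times k$ array of blocks each with two distinct columns does \emph{not} yield $k$ distinct columns in the enclosing cell: the witnessing pairs sit in disjoint column-blocks but may all coincide as full column-vectors over $R'$. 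No pigeonhole or Ramsey argument fixes this, because the non-homogeneity witnesses carry no information linking different blocks.

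What the paper does instead is define $N$ via \emph{minimal bad intervals}: for each $R\in\cal R$, a minimal bad interval is an inclusion-minimal column interval $I$ with $M[R\times I]$ having $\geq k$ distinct rows, and $N[R,C]=1$ if $C$ contains the \emph{first column} of such an interval. This single definition makes both branches go through. In the sparse-row branch, removing the $\leq 2c_{tk}$ blocks $B$ leaves $\cal C-B$ as a union of at most $|B|+1$ maximal consecutive runs, each of which (being free of bad-interval starts) contains no minimal bad interval and hence has fewer than $k$ distinct rows; multiplying gives the $k^{O(c_{tk})}$ bound. In the grid-minor branch, after grouping the $tk$ column-blocks into $t$ groups of $k$, each big cell contains $k$ bad-interval start columns $c_1<\cdots<c_k$; either some bad interval $I_i$ fits inside the cell (giving $k$ distinct rows directly), or all the $I_i$ extend past the cell, in which case minimality of $I_i$ forces $c_i$ and $c_j$ (for $j>i$, hence $c_j\in I_i$) to be distinct as columns --- removing $c_i$ would otherwise preserve the row count, contradicting minimality of $I_i$.

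So the missing idea is precisely this ``minimal bad interval'' device; your non-homogeneity flag is too coarse to carry enough information in either direction.
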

\begin{proof}
	Fix a row interval $R\in\cal R$.
	A \emph{minimal bad interval} for $R$ is an inclusion-minimal interval of columns $I$ such that $M[R\times I]$ contains at least $k$ distinct rows. The first column in a minimal bad interval for $R$ is a \emph{bad column} for $R$.
	
	Consider the $\cal R\times \cal C$ matrix $N$ 
	whose $(R,C)$-entry is $1$ if $C$ contains a bad column for $R$, and $0$ otherwise.
	By the Marcus-Tardos theorem, at least one of the following holds:
	\begin{enumerate}[label=(\emph{\alph*})]
		\item some row $R\in\cal R$ in $N$ has less than $2c_{tk}$ many $1$'s, or
		\item $N$ has a $tk$-grid minor.
	\end{enumerate}
	
Assume  case (\emph{a}) holds.
Let $B\subset \cal C$ be the set of columns $C$ such  that 
the $(R,C)$-entry is a $1$. By assumption, $|B|<2c_{tk}$.
Consider a maximal interval in $I\subset \cal C$ not containing an element of $B$. Then $I$ does not contain a minimal bad interval for $R$. In particular, $M[R\times I]$ has fewer than $k$ distinct rows. Since $|B|<2c_{tk}$, there are at most $2c_{tk}$ such intervals $I$. It follows that $M[R\times (\bigcup \cal C-B)]$ has fewer than $k^{2c_{tk}}$ distinct rows, yielding the conclusion of the lemma.
It remains to prove that case (\emph{b}) cannot hold.

\medskip
We show that case (\emph{b}) would yield  
 a $(k,t)$-mixed minor in $M$, contrary to the assumption. 
Assume that $N$ has a $tk$-grid minor.
This minor corresponds to a convex coarsening $\cal R'$ of $\cal R$ and a convex coarsening $\cal C'$ of $\cal C$ such that for all $R\in\cal R'$ and $C\in \cal C'$,   a bad column for $R$ can be found in $C$.
Moreover, $\cal C'$ and $\cal R'$ have $t k$ parts each.

Let $\cal C''$ be the coarsening of $\cal C'$ obtained by grouping parts of $\cal C'$ by groups of size $k$, so that each interval in $\cal C''$ is a union of $k$ intervals in $\cal C'$.
Then $\cal C''$ has $t$ parts.

Fix a row interval $R\in\cal R'$ and 
a column interval $C\in\cal C''$.
We claim that $M[R\times C]$ has at least $k$ different columns.

By definition of $\cal C''$, there are $k$ columns $c_1<\ldots<c_k$ in $C$ such that for each $1\le i\le k$,
column $c_i$ is bad for some row interval $R_i\in \cal R$ that is contained in $R$. Let $I_i$ be the corresponding bad interval that starts at column $c_i$.
If some $I_i$ is contained in $C$ then we are done, since $I_i$ witnesses that $M[R_i\times I_i]$ has at least $k$ different columns, so even more so $M[R\times C]$.

Otherwise, every interval $I_i$ starts at $c_i\in C$ and extends beyond $C$. We claim that then the columns $c_1,\ldots,c_k$ of $M[R\times C]$ are pairwise distinct. To this end, pick $1\le i<j\le k$; we show that the columns $c_i$ and $c_j$ 
differ already within $M[R_i\times C]$.

Note that $c_j\in I_i$ since $I_i$ starts at $c_i$ and extends beyond $C$, whereas $c_i<c_j\in C$.
Since $I_i$ is a minimal bad interval for $R_i$, the columns $c_i$ and $c_j$ of $M[R_i\times C]$ differ -- otherwise $M[R_i\times (I_i\setminus \set {c_i})]$ would have the same set of columns as $M[R_i\times I_i]$, contradicting minimality of $I_i$.

This proves that $M[R\times C]$ contains at least $k$ distinct columns, for all $R\in \cal R'$ and $C\in\cal C''$.
Coarsening $\cal R'$ by grouping by $k$ parts, we get a convex partition $\cal R''$ into $t$ parts. The partitions $\cal R''$ and $\cal C''$ witness that $M$ has a $(k,t)$-mixed minor, contradicting the assumption. Hence, case (\emph{b}) cannot hold.
\end{proof}


	\begin{proof}[Proof of Lemma~\ref{lem:step1}]
We construct a sequence $(\cal R_0,\cal C_0),(\cal R_1,\cal C_1),\ldots, (\cal R_i,\cal C_i)$ of pairs of convex partitions,  satisfying the  two conditions in the lemma, as follows. $\cal R_0$ is the partition of the rows into singletons, and $\cal C_0$ is the partition of the columns into singletons. For $i>0$, construct $(\cal R_{i},\cal C_i)$ from $(\cal R_{i-1},\cal C_{i-1})$ as follows. Assume that $|\cal R_{i-1}|\ge |\cal C_{i-1}|$, the other case being symmetric. Group the parts of $\cal R_{i-1}$ by two,
yielding a partition $\cal R$ with $2|\cal R|\ge |\cal C_{i-1}|$. Apply Lemma~\ref{lem:invariant},
yielding some row interval $R\in \cal R$ for which there is a set $B\subset \cal C_{i-1}$ of at most $2{c_{tk}}$ column intervals such that $M$ has at most $k^{c_{tk}}$ distinct rows in $R\times \bigcup (\cal C_{i-1}-B)$. The row interval $R\in\cal R$ corresponds to two adjacent row intervals in $\cal R_{i-1}$. Let $\cal R_i$ be obtained from $\cal R_{i-1}$ by replacing those two adjacent row intervals by their union $R$, and let $\cal C_i=\cal C_{i-1}$. 
The above construction terminates once both $\cal R_i$ and $\cal C_i$ have one part each. 
	\end{proof}

In the second step, we improve the sequence~\eqref{seq:step1}, yielding a sequence of pairs of \emph{not necessarily convex} partitions
with the properties described in the following reformulation of Proposition~\ref{prop:two-steps}.
Below, $b=2c_{tk}$ and $c=k^{c_{tk}}$ are the constants from Lemma~\ref{lem:step1}.
\begin{lemma}\label{lem:step2}
	There
	is a maximal (under refinement) sequence 
	\begin{align}\label{seq:step2}
	(\cal R_0'',\cal C_0''),(\cal R_1'',\cal C_1''),\ldots, (\cal R_l'',\cal C_l'')
	\end{align}
	of pairs of partitions of the rows and columns of $M$ of red-degree at most $d=2bc^2$.
\end{lemma}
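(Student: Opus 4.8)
We begin from the sequence~\eqref{seq:step1} produced by Lemma~\ref{lem:step1}. Each pair $(\cal R_i,\cal C_i)$ there consists of convex partitions, and the lemma guarantees that for every row interval $R\in\cal R_i$ there is a "bad" set $B_R\subset\cal C_i$ of at most $b$ column intervals such that, outside the columns of $B_R$, the submatrix $M[R\times\bigcup(\cal C_i-B_R)]$ has at most $c$ distinct rows — and symmetrically for columns. The crucial observation is that having at most $c$ distinct rows in $R\times\bigcup(\cal C_i-B_R)$ means the columns outside $B_R$ fall into at most $c^2$ equivalence classes according to the pattern they induce on the (at most $c$ distinct) rows of $R$; we want to further split these column intervals, but \emph{not} convexly, grouping columns by the row-pattern they realize. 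Doing this simultaneously over all row intervals and all column intervals is what turns the "at most $c$ distinct rows off a bad set of size $b$" guarantee into a genuine bounded-red-degree statement, at the cost of destroying convexity.

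\textbf{Key steps.} First, I would fix an index $i$ and refine $\cal C_i$ into a (non-convex) partition $\cal C_i''$ as follows: two columns lie in the same part of $\cal C_i''$ iff they lie in the same part $C\in\cal C_i$ \emph{and} for every row interval $R\in\cal R_i$ with $C\notin B_R$, the two columns induce the same function into the $\le c$ distinct rows of $M[R\times\bigcup(\cal C_i-B_R)]$. A priori this could split one $C$ into unboundedly many parts, so instead one refines only with respect to the row intervals $R$ for which $C$ is \emph{not} bad, and one checks that a single $C\in\cal C_i$ is split into at most $c^{(\text{number of } R \text{ with } C\notin B_R)}$ — still too many. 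The correct move, following the Bonnet et al.\ argument, is the reverse: refine each row interval $R$ using only the $\le b$ bad column intervals, so $R$ is split into at most $c^b$ parts according to its behaviour on $\bigcup B_R$, giving $\cal R_i''$, and symmetrically $\cal C_i''$. Then for $R''\in\cal R_i''$ and $C''\in\cal C_i''$ lying inside $R\in\cal R_i$, $C\in\cal C_i$: if $C\notin B_R$ and $R\notin B_C$ then $M[R\times C]$ already has $\le c$ distinct rows and $\le c$ distinct columns, and the further split of $R$ and $C$ makes it constant, hence $R''$ and $C''$ are homogeneous. So the only non-homogeneous pairs come from $C\in B_R$ or $R\in B_C$, of which there are at most $b$ choices of $C$ (times $c^b$ sub-parts) and $b$ choices of $R$ — this bounds the red-degree of $(\cal R_i'',\cal C_i'')$ by $d=O(bc^{2})$; tightening the bookkeeping (each bad $C$ contributes its $\le c^b$ pieces, but one argues more carefully) gives the stated $d=2bc^2$. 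Second, I would check that as $i$ ranges over $0,\ldots,l$ the partitions $\cal R_i'',\cal C_i''$ can be interleaved into a single sequence that is maximal under refinement: $\cal R_0''=\cal C_0''$ are singletons (refinements of singletons are singletons), $\cal R_l''=\cal C_l''$ have one part, and between consecutive $i$ one passes from $\cal R_i''$ to $\cal R_{i+1}''$ by a bounded number of elementary merges, inserting intermediate partitions that all still have red-degree $\le d$ (any partition between two red-degree-$\le d$ partitions in the refinement order also has red-degree $\le d$, since merging parts only merges rows/columns of submatrices and cannot create new non-homogeneity beyond the union). Finally, I would note this is exactly the statement of Proposition~\ref{prop:two-steps} with $d=2^{2^{\Oof(tk)}}$, since $b=2c_{tk}=2^{\Oof(tk)}$ and $c=k^{c_{tk}}=2^{\Oof(tk\log k)}$, so $c^b=2^{2^{\Oof(tk)}}$.

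\textbf{Main obstacle.} The delicate point is the non-convex refinement step and the precise red-degree count: one must choose \emph{which} structure (row intervals or column intervals, bad sets or complements) to refine by, so that (i) each original part splits into only $2^{2^{\Oof(tk)}}$ new parts, and (ii) every pair of new parts not "blamed" on a bad column/row is genuinely homogeneous. Getting the quantifiers right — the bad set $B_R$ depends on $R$, so a column $C$ may be bad for some row intervals and good for others — is where the argument of~\cite[Section 5]{bonnet2020twin} is genuinely used, and is the step I expect to require the most care. A secondary, more routine obstacle is verifying that interleaving the $l+1$ pairs into one refinement-maximal sequence preserves the red-degree bound at the intermediate partitions, which follows from monotonicity of non-homogeneity under coarsening but should be stated explicitly.
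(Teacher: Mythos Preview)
Your refinement step is backwards, and this breaks both the size bound and the homogeneity argument. You propose to split each row interval $R\in\cal R_i$ according to its behaviour on $\bigcup B_R$, claiming this gives at most $c^b$ parts. But Lemma~\ref{lem:step1} gives \emph{no} bound on the number of distinct rows of $M[R\times\bigcup B_R]$; the bound $c$ is on the number of distinct rows of $M[R\times\bigcup(\cal C_i-B_R)]$. The paper does the opposite: it splits $R$ by declaring $r\sim r'$ iff they coincide on every column in $\bigcup(\cal C_i-B_R)$, which gives at most $c$ parts because there are at most $c$ distinct such row-restrictions. With your split, two rows in the same part of $\cal R_i''$ agree on the bad columns but may differ on the good ones, so your subsequent claim ``if $C\notin B_R$ \ldots\ the further split of $R$ \ldots\ makes it constant'' does not follow. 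With the correct split, rows in a part $R'$ agree on every $C\notin B_R$, and then the bad column parts for $R'$ are exactly the $\le c$ sub-parts of each of the $\le b$ intervals in $B_R$, giving the $bc$ in the paper's first claim.

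Your interleaving argument is also wrong. The assertion ``any partition between two red-degree-$\le d$ partitions in the refinement order also has red-degree $\le d$, since merging parts \ldots\ cannot create new non-homogeneity beyond the union'' is false: if $R_1,C$ and $R_2,C$ are each homogeneous but with different constant entries, then $R_1\cup R_2,C$ is not homogeneous, so merging can create new red pairs not accounted for by either endpoint. This is why the paper does not argue by monotonicity; instead it establishes that $\cal R_i'$ is a $2c$-coarsening of $\cal R_{i-1}'$ (each part of the former is a union of at most $2c$ parts of the latter) and then invokes \cite[Lemma~8]{bonnet2020twin}, which is precisely the tool that turns a chain of bounded-coarsening, bounded-red-degree partitions into a maximal refinement chain with red-degree multiplied by the coarsening factor. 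The final bound $d=2bc^2$ arises as $(2c)\cdot(bc)$, not from any $c^b$ term.
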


\begin{proof}
We start with the sequence~\eqref{seq:step1} obtained from Lemma~\ref{lem:step1}.

Fix $0\le i\le l$ and a row interval $R\in \cal R_i$.
Let $B\subset \cal C_i$ be as in the statement of Lemma~\ref{lem:step1}, with $|B|\le b=2 c_{tk}$. Then $M$ has $c=k^{c_{tk}}$ distinct rows in $R\times \bigcup(\cal C_i-B)$.
Partition $R$ into at most $c$ parts, where two rows $r,r'\in R$ are in the same part if they coincide on each column in $\cal C_i-B$. 

By doing this for every row interval $R$, we obtain a refinement $\cal R_i'$  of $\cal R_i$, in which every interval in $\cal R_i$ is partitioned into at most $c$ parts in $\cal R_i'$.
We proceed symmetrically with the columns, yielding a refinement $\cal C_i'$ of $\cal C_i$.

We show that the obtained sequence of pairs of partitions 
satisfies the conditions of Lemma~\ref{lem:step2}, apart from being a maximal chain under refinement. 

By construction of $\cal R_i'$ and $\cal C_i'$, we get:
\begin{claim}
	For each row part $R\in\cal R_i'$ there is a set $B'\subset \cal C_{i}'$ of at most $t'=b\cdot c$ column parts such that for all $C\in \cal C_{i}'-B'$, all entries in $M[R,C]$ are equal.
	Symmetrically for columns and rows exchanged.
\end{claim}

Say that a partition $\cal P_1$ is an $r$-\emph{coarsening} of a partition $\cal P_2$ 
if every part of $\cal P_1$ is union of at most $r$ parts of $\cal P_2$
\begin{claim}$\cal R_i'$ is a $2c$-coarsening of 
	$\cal R_{i-1}'$ and $\cal C_i'$ is a $2c$-coarsening of $\cal C_{i-1}'$. 
\end{claim}
This claim, as well as the rest of the proof, proceed as in~\cite[Proof of Theorem 10]{bonnet2020twin}. In particular, applying Lemma 8 from there yields the sequence \eqref{seq:step2} of red-degree $d=2c\cdot bc$.
\end{proof}

Lemma~\ref{lem:step2} yields Proposition~\ref{prop:two-steps}.
\medskip

To finish the proof of Theorem~\ref{thm:k-t-simple},
we observe that the sequence~\eqref{seq:step2} can be made so that the partitions in each pair are equal, just as in~\cite[Proof of Theorem 14]{bonnet2020twin}.

Observe that the construction above is effective, and yields the following:
\begin{corollary}\label{cor:algo}
	Fix a binary signature $\Sigma$.
	There is an algorithm which,
given an ordered binary structure $\str S$ and numbers $k,t\in\N$,
terminates in time polynomial in $|\str S|$ with one of the following outcomes:
\begin{itemize}
	\item either a contraction sequence  of $\str S$ of red-degree bounded by $2^{2^{\Oof(kt)}}$, or 
	\item a $(k,t)$-mixed minor in the adjacency matrix of $\str S$.
\end{itemize}
\end{corollary}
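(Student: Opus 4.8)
The plan is just to observe that every step in the proof of Theorem~\ref{thm:k-t-simple} is constructive and runs in polynomial time, the single new point being that the Marcus--Tardos dichotomy used inside Lemma~\ref{lem:invariant} becomes the branching point of the algorithm. First I would build the adjacency matrix $M$ of $\str S$ (trivially in polynomial time) and run the greedy construction underlying Lemma~\ref{lem:step1}: start from the partitions $\cal R_0,\cal C_0$ of the rows and columns of $M$ into singletons, and repeatedly merge two adjacent parts of whichever of $\cal R_i,\cal C_i$ currently has more parts. To choose the merge I would execute the argument of Lemma~\ref{lem:invariant}: group the larger partition by pairs to get $\cal R$ with $2|\cal R|\ge|\cal C_i|$, compute all minimal bad intervals, and form the auxiliary $0$-$1$ matrix $N$ indexed by $\cal R\times\cal C_i$. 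Then I would test whether some row of $N$ has fewer than $2c_{tk}$ ones. If so (case (a)), that row names the two adjacent parts to merge and the process continues. If not, then since $2c_{tk}|\cal R|\ge c_{tk}\max(|\cal R|,|\cal C_i|)$, the effective form of the Marcus--Tardos theorem produces a $tk$-grid minor of $N$, and the translation in the proof of Lemma~\ref{lem:invariant} --- coarsen, then group each coordinate into $t$ blocks of $k$ consecutive parts --- yields a $(k,t)$-mixed minor of $M$, which I output, halting.

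If case (b) never occurs, the greedy process halts after $l=\Oof(|\str S|)$ merges with the sequence of pairs of convex partitions of Lemma~\ref{lem:step1}, each merge costing $\mathrm{poly}(|\str S|)$. I would then perform the refinement of Lemma~\ref{lem:step2}: for each $i$ and each $R\in\cal R_i$, using the recorded set $B$ of $\le b$ bad column parts, split $R$ according to the restriction of each of its rows to $\bigcup(\cal C_i-B)$, and symmetrically for columns, obtaining $(\cal R_i',\cal C_i')$ in polynomial time; inserting intermediate partitions to make the resulting chain maximal under refinement (the combinatorial bookkeeping of \cite[Lemma 8]{bonnet2020twin}) is likewise polynomial and keeps the red-degree at $d=2bc^2=2^{2^{\Oof(tk)}}$. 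Finally, as in \cite[Proof of Theorem 14]{bonnet2020twin}, I would refine further so that the two partitions in each pair coincide; a pair $(\cal P,\cal P)$ is then a partition of the domain of $\str S$, homogeneity of the corresponding submatrix of $M$ is exactly homogeneity of the pair of parts in $\str S$, and the maximal refinement chain is precisely a contraction sequence of $\str S$ of red-degree $2^{2^{\Oof(kt)}}$.

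The one point that is not entirely routine is the effectivity of the Marcus--Tardos theorem in the branching step: the algorithm must not merely know that $N$ has a $tk$-grid minor but exhibit one in polynomial time. This is standard, since the inductive proof of Marcus--Tardos (and the sharpened bound $c_t=2^{\Oof(t)}$ of~\cite{DBLP:journals/corr/CibulkaK16}) is algorithmic, but it deserves an explicit remark because it is exactly what guarantees that the algorithm always returns one of the two advertised certificates. Everything else --- computing atomic types and minimal bad intervals, the coarsenings, the refinement of Lemma~\ref{lem:step2}, and the symmetrization --- consists of elementary operations on $M$ and its partitions over $\Oof(|\str S|)$ iterations, so the overall running time is polynomial in $|\str S|$ for fixed $\Sigma,k,t$.
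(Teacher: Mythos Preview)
Your proposal is correct and takes exactly the same approach as the paper, whose entire argument for this corollary is the single sentence ``Observe that the construction above is effective.'' You have simply unpacked that observation in detail, correctly identifying the Marcus--Tardos dichotomy inside Lemma~\ref{lem:invariant} as the branching point where the algorithm either continues merging or outputs a $(k,t)$-mixed minor, and noting that the grid minor must be explicitly exhibited (which is fine since the running time is polynomial in $|\str S|$ for fixed $k,t$).
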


\section{Approximating twin-width}\label{sec:algorithms}
In this section we assume that 
we already know that every hereditary  class $\CC$ of binary structures of unbounded twin-width contains arbitrarily large regular semigrids.
This will be proved in the following sections using entirely different tools, but in this section we focus only on the algorithmic application of this fact. We prove:

\begin{theorem}[\ref{thm:approx}]
    There is a computable function $f\from\N\to\N$ and an  algorithm which inputs an ordered binary structure $\str S$, and computes in time polynomial in $|\str S|$ a contraction sequence of $\str S$ of  red-degree at most $f(\tww(\str S))$.
\end{theorem}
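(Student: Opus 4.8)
The plan is to call the algorithm of Corollary~\ref{cor:algo} repeatedly with growing parameters. Concretely, on input $\str S$ I would run, for $j=1,2,3,\dots$, the procedure of Corollary~\ref{cor:algo} on $(\str S,j,j)$, and stop the first time it returns a contraction sequence rather than a $(j,j)$-mixed minor; that contraction sequence is the output. This loop halts by the time $j=|\str S|$: a $(j,j)$-mixed minor requires convex partitions of the rows and of the columns into $j$ nonempty parts each, so for $j=|\str S|$ the only candidate is the partition into singletons, whose $1\times 1$ submatrices do not have $|\str S|$ distinct rows once $|\str S|\ge 2$, and hence Corollary~\ref{cor:algo} on $(\str S,|\str S|,|\str S|)$ must return a contraction sequence. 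Since the procedure of Corollary~\ref{cor:algo} runs in time polynomial in $|\str S|+k+t$ (evident from its construction) and $j$ never exceeds $|\str S|$, the overall running time is polynomial in $|\str S|$. Writing $j^{\ast}(\str S)$ for the value of $j$ at which the loop stops, the output has red-degree at most $2^{2^{\Oof(j^{\ast}(\str S)^2)}}$.

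The remaining task is to bound $j^{\ast}(\str S)$ by a computable function of $\tww(\str S)$. Set
\[
h(d)=\min\{\,k\in\N : \text{no finite ordered binary }\Sigma\text{-structure of twin-width}\le d\text{ has a }(k,k)\text{-mixed minor}\,\}.
\]
The set on the right is upward closed (a $(k{+}1,k{+}1)$-mixed minor coarsens to a $(k,k)$-mixed minor), so $h(d)$ is well defined once this set is nonempty; and it is nonempty, by the grid theorem assumed at the start of the section. Indeed, if for some $d$ every $k$ admitted a finite structure $\str S_k$ of twin-width $\le d$ with a $(k,k)$-mixed minor, then the hereditary closure $\CC$ of $\{\str S_k : k\in\N\}$ would be hereditary and of twin-width $\le d$ (twin-width does not grow under induced substructures, Fact~\ref{fact:properties}); but since $\str S_k$ has a $(k,k)$-mixed minor it fails to be $(k{+}1,k)$-simple, so $\CC$ is not $(k',t')$-simple for any $k',t'$ (take $k=\max(k',t')$), and hence by the grid theorem (Theorem~\ref{thm:core} together with Theorem~\ref{thm:grids-binary}) $\CC$ would contain arbitrarily large regular semigrids, contradicting Corollary~\ref{cor:fails-sigma}. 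By construction of $h$, the structure $\str S$ has no $(h(\tww(\str S)),h(\tww(\str S)))$-mixed minor, so the loop has stopped by step $j=h(\tww(\str S))$; thus $j^{\ast}(\str S)\le h(\tww(\str S))$ and the output has red-degree at most $f(\tww(\str S))$, where $f(d):=2^{2^{\Oof(h(d)^2)}}$.

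It remains to see that $f$ — equivalently $h$ — is computable. The key point is that ``$\str S$ has a $(k,k)$-mixed minor'' is an existential property witnessed by at most $c(k)$ elements, for a computable $c$ (of order $\Oof(k^4)$, say): such a mixed minor is determined by the $\Oof(k)$ cutpoints of the two convex partitions together with, for each of the $k^2$ pairs of parts, $\Oof(k^2)$ elements certifying that the submatrix has $\ge k$ distinct rows or $\ge k$ distinct columns, and these certificates are retained by every induced substructure containing them (adding one element per part of each partition keeps every part nonempty). Using once more that twin-width is monotone under induced substructures, it follows that some finite structure of twin-width $\le d$ has a $(k,k)$-mixed minor if and only if some $\Sigma$-structure on at most $c(k)$ elements has twin-width $\le d$ and a $(k,k)$-mixed minor — a property that is decidable, since there are finitely many such structures and both twin-width and the presence of a $(k,k)$-mixed minor can be checked by brute force. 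Hence the predicate ``$h(d)\le k$'' is decidable, and $h(d)=\min\{k:h(d)\le k\}$ is computable because $h(d)$ is finite. Plugging this computable $h$ into $f(d)=2^{2^{\Oof(h(d)^2)}}$ yields a computable $f$, completing the proof.

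The main obstacle is the grid theorem invoked for the finiteness of $h$ — precisely the statement assumed at the outset of the section, whose proof (carried out later) is non-effective and relies on the compactness theorem. Everything else is comparatively routine; the one place demanding care is the computability argument, where one must make the ``boundedly many witnesses'' observation precise and keep track of the off-by-one shifts relating $(k,t)$-simplicity, $(k,t)$-mixed minors, and the two possible outputs of Corollary~\ref{cor:algo}, as well as note that the running time of Corollary~\ref{cor:algo} is polynomial uniformly in the parameters $k$ and $t$.
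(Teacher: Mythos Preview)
Your proof is correct and follows the same overall strategy as the paper: iterate the algorithm of Corollary~\ref{cor:algo} with growing parameter $j$, observe that termination is guaranteed for trivial reasons by $j=|\str S|$, and then bound the stopping index $j^\ast(\str S)$ by a computable function of $\tww(\str S)$ using the grid theorem together with the bounded-witness property of mixed minors.

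The organization differs slightly. The paper introduces an intermediate parameter $\maxsemigrid(\str S)$ and proves the chain $\tww\preceq\simplicity\preceq\maxsemigrid\preceq\tww$, where $\simplicity\preceq\maxsemigrid$ uses the finiteness of the family $\cal F_k$ of minimal non-$(k{-}1,k{-}1)$-simple structures (your bounded-witness observation), and $\maxsemigrid\preceq\tww$ uses the quantitative transduction bound from~\cite{bonnet2020twin} directly. You instead define $h(d)$ in one shot, argue its finiteness by contradiction via Theorems~\ref{thm:core} and~\ref{thm:grids-binary} and Corollary~\ref{cor:fails-sigma}, and argue computability by brute-force search over structures of size $\le c(k)$. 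Your route is a bit more direct and avoids naming $\maxsemigrid$; the paper's route is more modular and makes the role of the transduction bound explicit. Both rest on the same ingredients and the same non-effective step (the grid theorem), so neither yields an explicit bound on~$f$.
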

\begin{corollary}\label{cor:fpt}
    Fix a binary signature $\Sigma$. There is a computable function $g\from\N\times \N\to\N$, a constant $c\in\N$ and an algorithm which given a binary ordered structure $\str S$ and a first-order sentence $\phi$,
    determines 
    whether $\str S\models\phi$
    in time $g(|\phi|,\tww(\str S))\cdot |\str S|^c$.
    In other words, model-checking first-order logic is fixed-parameter tractable on ordered, binary structures, with the parameter being the formula $\phi$ and the twin-width of the input structure.
\end{corollary}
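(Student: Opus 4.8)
The plan is to simply chain Theorem~\ref{thm:approx} with the model-checking result of Bonnet et al.~\cite{bonnet2020twin}. Recall that the latter provides a computable function $h\from\N\times\N\to\N$, a constant $c'\in\N$, and an algorithm which, given a finite binary structure $\str S$ together with a contraction sequence of $\str S$ of red-degree at most $d$ and a first-order sentence $\phi$, decides whether $\str S\models\phi$ in time $h(|\phi|,d)\cdot|\str S|^{c'}$. We may freely assume that $h$ is nondecreasing in its second argument (otherwise replace $h(a,d)$ by $\max_{d'\le d}h(a,d')$, which is still computable).

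The algorithm witnessing the corollary is then: on input $(\str S,\phi)$, first run the algorithm of Theorem~\ref{thm:approx} to compute, in time $\Oof(|\str S|^{c''})$ for some fixed constant $c''$, a contraction sequence of $\str S$ whose red-degree $d$ satisfies $d\le f(\tww(\str S))$; then feed $\str S$, this contraction sequence, and $\phi$ to the model-checking algorithm of~\cite{bonnet2020twin}. Correctness is immediate from the correctness of the two invoked algorithms. For the running time, the total is at most
\[
\Oof(|\str S|^{c''}) + h(|\phi|,d)\cdot |\str S|^{c'} \le g(|\phi|,\tww(\str S))\cdot |\str S|^{c},
\]
where $c:=\max(c',c'')$ and $g(a,b):=h(a,f(b))+1$; here we used that $d\le f(\tww(\str S))$ together with monotonicity of $h$. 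Since $f$ is computable (Theorem~\ref{thm:approx}) and $h$ is computable, $g$ is computable, as required.

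There is essentially no obstacle beyond what Theorem~\ref{thm:approx} already packages; the one point worth stressing is that the algorithm never needs to know $\tww(\str S)$. Theorem~\ref{thm:approx} returns a concrete contraction sequence whose red-degree is automatically bounded by $f(\tww(\str S))$, and the model-checking algorithm's running time is governed by that concrete red-degree $d$, not by the unknown quantity $\tww(\str S)$; monotonicity of $h$ then absorbs the gap between $d$ and $f(\tww(\str S))$. The genuine content — the polynomial-time computation of a contraction sequence of boundedly-bounded red-degree for an \emph{arbitrary} ordered binary structure, with no certificate supplied — is Theorem~\ref{thm:approx} itself, which in turn rests on the structural fact (established in the later sections) that a hereditary class of ordered binary structures of unbounded twin-width must contain arbitrarily large regular semigrids.
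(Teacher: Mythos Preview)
Your proof is correct and follows exactly the same approach as the paper: combine Theorem~\ref{thm:approx} with the model-checking algorithm of~\cite{tww1} that runs in time $h(d,\phi)\cdot|\str S|^c$ given a contraction sequence of red-degree at most $d$. You have simply spelled out the bookkeeping (defining $g$, the monotonicity of $h$, the fact that $\tww(\str S)$ need not be known) that the paper leaves implicit in a single sentence.
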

\begin{proof}
    Follows from Theorem~\ref{thm:approx} and~\cite[Theorem 1]{tww1} according to which a there is an algorithm which given a first-order sentence $\phi$ and a binary structure $\str S$ together with a contraction sequence of red-degree at most $d$ determines if $\str S\models\phi$ in time $h(d,\phi)\cdot |\str S|^c$, for some computable function $h$ and constant $c\in\N$.
\end{proof}

\newcommand{\maxsemigrid}{\mathrm{max\text{-}semigrid}}
\newcommand{\simplicity}{\mathrm{simplicity}}

Fix  an ordered binary structure $\str S$ and define the following two parameters:
\begin{itemize}
    \item $\maxsemigrid(\str S)$: the maximal number $n$ such that some regular $n\times n$-semigrid is an induced substructure of $\str S$,
    \item $\simplicity(\str S)$: the least number $k$ such that $\str S$ is $(k,k)$-simple.
\end{itemize} 
Clearly, both functions are computable.

For  two  functions $p,q$
mapping $\Sigma$-structures to numbers,
write $p\preceq q$ if there is a computable function $\alpha\from\N\to\N$ such that $p(\str S)\le \alpha(q(\str S))$ for all finite ordered $\Sigma$-structures $\str S$.

\begin{lemma}
    We have:
     \[\tww\preceq \simplicity\preceq \maxsemigrid\preceq \tww.\]
\end{lemma}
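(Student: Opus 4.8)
The claimed chain $\tww\preceq\simplicity\preceq\maxsemigrid\preceq\tww$ decomposes into three separate inequalities, each of which is essentially a reformulation of one of the major theorems already established (or to be established) in the paper; the role of this lemma is just to package them into a statement about the three computable parameters, and the proof is a matter of unwinding the definitions and invoking those results.

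\emph{Step 1: $\tww\preceq\simplicity$.} Fix a finite ordered binary structure $\str S$ and let $k=\simplicity(\str S)$, so $\str S$ is $(k,k)$-simple. By Theorem~\ref{thm:k-t-simple}, every $(k,t)$-simple finite ordered binary structure has twin-width $2^{2^{\Oof(kt)}}$; applying this with $t=k$ gives $\tww(\str S)\le 2^{2^{\Oof(k^2)}}$, which is a computable function of $k=\simplicity(\str S)$. Hence $\tww\preceq\simplicity$ with witness $\alpha(k)=2^{2^{\Oof(k^2)}}$.

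\emph{Step 2: $\maxsemigrid\preceq\tww$.} This is the content of the bad-behaviour results for regular semigrids. If $\maxsemigrid(\str S)=n$ then $\str S$ contains a regular $n\times n$-semigrid as an induced substructure; I would invoke Corollary~\ref{cor:fails-sigma} (together with Fact~\ref{fact:properties}\eqref{prop:induced}, closure of bounded twin-width under induced substructures) to conclude that a structure containing arbitrarily large regular semigrids has unbounded twin-width, and more precisely that there is a computable function $\beta$ with $n\le\beta(\tww(\str S))$. Concretely: the class of all regular $n\times n$-semigrids of a fixed scheme is not small and has unbounded twin-width, so for each fixed $d$ there is a largest $n$ for which a regular $n\times n$-semigrid can have twin-width $\le d$; this defines $\beta(d)$, and it is computable since $\maxsemigrid$ and $\tww$ are both computable and one can search. (The simplest self-contained route is the counting bound: a regular $n\times n$-semigrid has twin-width at least roughly $n$, since by Proposition~\ref{prop:fails} the class of such semigrids of a fixed scheme contains at least $\lfloor n/3\rfloor!$ non-isomorphic structures of size $\Oof(n^2)$, contradicting the $2^{\Oof(m)}$ bound of~\cite{tww2} for bounded-twin-width classes once $n$ is large relative to $d$.)

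\emph{Step 3: $\simplicity\preceq\maxsemigrid$.} This is the deepest inequality: it says that a structure which is not $(k,k)$-simple for small $k$ must already contain a largish regular semigrid. It is exactly the finitary, effective shadow of the main grid theorem. I would argue by contraposition in the following (compactness-free-looking but compactness-backed) form: suppose, for a contradiction, that there were no computable $\gamma$ with $\simplicity(\str S)\le\gamma(\maxsemigrid(\str S))$. Then for each $n$ there are structures $\str S$ with $\maxsemigrid(\str S)\le n$ but $\simplicity(\str S)$ arbitrarily large, i.e. for each $k$ there is such an $\str S$ that is not $(k,k)$-simple. Collect, for fixed $n$, the hereditary closure $\CC_n$ of all such $\str S$; then $\CC_n$ is not $(k,t)$-simple for any $k,t$, so by Theorem~\ref{thm:core} some quantifier-free formula defines large grids in $\CC_n$, and then by Theorem~\ref{thm:grids-binary} (equivalently Theorem~\ref{thm:grids}) $\CC_n$ contains arbitrarily large regular $m\times m$-semigrids — contradicting $\maxsemigrid(\str S)\le n$ for all $\str S\in\CC_n$, since by heredity a regular $m\times m$-semigrid occurring in the closure must occur as an induced substructure of some $\str S$ in the generating family. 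Finally, computability of $\gamma$ follows because $\simplicity$ and $\maxsemigrid$ are computable: having established that $\gamma$ exists, one obtains a computable bound by, for each value $n$, searching over all finite ordered $\Sigma$-structures $\str S$ with $\maxsemigrid(\str S)\le n$ — a priori infinitely many, but the existence of the bound plus a compactness argument (or the explicit, if non-elementary, bound extractable from the proof of Theorem~\ref{thm:core}) guarantees $\sup\{\simplicity(\str S):\maxsemigrid(\str S)\le n\}<\infty$, and this supremum is computable.

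\emph{Conclusion and main obstacle.} Composing, $\tww\preceq\simplicity\preceq\maxsemigrid\preceq\tww$. The first and third inequalities are immediate given Theorem~\ref{thm:k-t-simple} and Corollary~\ref{cor:fails-sigma} respectively; the genuine content sits in Step 3, and within Step 3 the real obstacle is purely the \emph{effectivity} claim — turning the (partly non-effective, compactness-based) grid theorem into a statement yielding a \emph{computable} function $\gamma$. I expect the cleanest way to handle this is exactly the dovetailing/search argument sketched above: once one knows abstractly that $\{\str S:\maxsemigrid(\str S)\le n\}$ has bounded $\simplicity$, the bound is computable by brute-force enumeration, since both parameters are computable and termination is guaranteed by the abstract bound. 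No new mathematical input beyond Theorems~\ref{thm:k-t-simple}, \ref{thm:core}, \ref{thm:grids-binary} and Corollary~\ref{cor:fails-sigma} is needed.
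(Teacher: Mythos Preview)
Your Steps~1 and~2 are correct. Step~1 is identical to the paper's. Step~2 takes a different route from the paper: you argue via smallness (the class of regular semigrids of a fixed scheme is not small, hence has unbounded twin-width, and one can then search for the largest $n$ whose $n\times n$-semigrid has twin-width $\le d$), whereas the paper argues via transductions (there is a fixed transduction $T$ with $T(\str K)\supseteq B_k$ for every regular $k\times k$-semigrid $\str K$, and the computable bound $\tww(T(\str S))\le\beta(\tww(\str S))$ from~\cite[Theorem~39]{bonnet2020twin} together with $\tww(B_k)\to\infty$ gives the inequality directly). Both work; the paper's route avoids the termination-of-search discussion.

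Step~3 is where you have a genuine gap. Your existence argument (if $\gamma$ did not exist, the hereditary closure $\CC_n$ would fail $(k,t)$-simplicity for all $k,t$, hence contain large semigrids by Theorems~\ref{thm:core} and~\ref{thm:grids-binary}, contradiction) is fine and is exactly what the paper relies on via its standing assumption. But your computability argument does not go through: knowing that $\sup\{\simplicity(\str S):\maxsemigrid(\str S)\le n\}$ is finite does \emph{not} make it computable, because you are searching over infinitely many structures with no stopping criterion. Neither ``a compactness argument'' nor ``an explicit bound extractable from the proof of Theorem~\ref{thm:core}'' helps --- the paper itself emphasises that the proof of Theorem~\ref{thm:core} is non-effective.

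The missing idea, which the paper supplies, is that non-simplicity has \emph{small witnesses}: a $(k,t)$-mixed minor in the adjacency matrix of $\str S$ is already exhibited by at most $2kt^2$ elements (in each of the $t^2$ zones keep $k$ distinct rows and $k$ distinct columns), so the family $\cal F_k$ of minimal structures that are not $(k-1,k-1)$-simple is finite, with all members of size at most $2(k-1)^3$, and can be enumerated effectively. One then sets $\alpha(k):=\min\{\maxsemigrid(\str T):\str T\in\cal F_k\}$, which is directly computable from the finite list $\cal F_k$. Since any $\str S$ with $\simplicity(\str S)=k$ contains some $\str T\in\cal F_k$ as an induced substructure, $\maxsemigrid(\str S)\ge\alpha(k)$; unboundedness of $\alpha$ (your existence argument) then makes $\alpha^{-1}$ a well-defined computable function witnessing $\simplicity\preceq\maxsemigrid$.
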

\begin{proof}
    The first inequality is by Theorem~\ref{thm:k-t-simple}.
    
    We show the second inequality.
    Fix $k\in\N$.
    Let $\cal F_{k}$ denote the family of all  minimal (under induced substructure)
    structures $\str S$ which are not $(k-1,k-1)$-simple. In particular,  every $\str S$ with $\simplicity(\str S)\ge k$ contains at least one structure in $\cal F_k$ as an induced substructure.

    \begin{claim}The family $\cal F_{k}$ is finite (up to isomorphism) and its representatives are effectively computable, given $k$.
    \end{claim}
    \begin{proof}
        A $(k,t)$-mixed minor in a matrix $M$ is exhibited by a set $R$ of 
         at most $kt^2$ rows and a set $C$ of at most $kt^2$ columns 
        (in each of the $t^2$ zones leave $k$ distinct rows and $k$ distinct columns).
        If $M$ is the adjacency matrix of a binary ordered structure $\str M$,
        then the rows and columns are indexed by the same set $\str M$, so $R\cup C\subset \str M$.
        Then the substructure of $\str M$ induced by $R\cup C$ has a $(k,t)$-mixed minor and has size $|\str M|\le 2kt^2$. 
        
        This shows that every structure that is not  $(k,t)$-simple contains an induced substructure of size at most $2kt^2$ which is not $(k,t)$-simple. Hence, the family of minimal structures that are not $(k,t)$-simple is finite and can be effectively enumerated by checking all structures of size $2kt^2$.
    \end{proof}
    
    Denote
    \[\alpha(k):=\min\setof{\maxsemigrid(\str S)}{\str S\in \cal F_{k}}.\]
    The function $\alpha$ is computable by the claim above. By construction, if $\simplicity(\str S)=k$ then $\str S$ has some structure in $\cal F_k$ as an induced substructure, and hence also some semigrid of size $\alpha(k)$. This proves $\maxsemigrid(\str S)\ge \alpha(\simplicity(\str S))$.
    Hence, $\simplicity(\str S)\le \alpha^{-1}(\maxsemigrid(\str S))$, proving $\simplicity\preceq \maxsemigrid$.

    \medskip
    Finally, we prove $\maxsemigrid\preceq \tww$. 

    By Lemma~\ref{lem:universal-interpretation}  and Lemma~\ref{lem:map-semigrids} there is a fixed transduction $T$ such that for any regular $k\times k$-semigrid $\str K$ over the signature $\Sigma$, the set of outputs $T(\str K)$ contains the set $B_k$ of all bipartite graphs with two parts of size $k$ The transduction $T$ nondeterministically guesses a subset of a given $k\times k$-semigrid $\str K$ corresponding to some set $S\subset\set{1,\ldots,k}\times\set{1,\ldots,k}$ and guesses the scheme $\sigma$ (now in the signature of ordered graphs) and then applies the interpretation $I$ from Lemma~\ref{lem:map-semigrids} (converting the regular semigrid over $\Sigma$ to a graph semigrid) followed by $I_\sigma$ from Lemma~\ref{lem:universal-interpretation}, converting a regular $k\times k$-semigrid $G$ with a chosen set $S\subset\set{1,\ldots,k}\times\set{1,\ldots,k}$ into the bipartite graph corresponding to $S$.

    We can furthermore assume that $T(\str S')\subset T(\str S)$ for any $\str S$ and its induced substructure $\str S'$, as $T$ may first restrict the domain of the input structure using a unary predicate.

By~\cite[Theorem 39]{bonnet2020twin}, there is a computable function $\beta$ (depending on $T$) such that $\tww(T(\str S))\le \beta(\tww(\str S))$ (in the left-hand side $T(\str S)$ is a set $X$ of structures, so $\tww(X)=\max_{\str S'\in X}\tww(
    \str S'))$.

Suppose $\maxsemigrid(\str S)=k$.
    Then $\str S$ contains a $k\times k$-regular semigrid $\str K$,
    so $T(\str S)\supseteq B_k$ and:
\[\beta(\tww(\str S))\ge \tww(T(\str S))\ge \tww(B_{k})=:\gamma(k),\]
where $\gamma(k)$ is a computable function which is unbounded, as the class of all bipartite graphs has unbounded twin-width.

Hence $k\le \gamma^{-1}(\beta(\tww(\str S)))$,
proving $\maxsemigrid\preceq\tww$.
\end{proof}

We now prove Theorem~\ref{thm:approx}.

\begin{proof}[Proof of Theorem~\ref{thm:approx}]
    Given a structure $\str S$, the algorithm proceeds as follows.
    For each $k=1,2,\ldots,n$,
     run the algorithm from Corollary~\ref{cor:algo}, until encountering the smallest number $k$   for which the algorithm returns a contraction sequence of red-degree $d=2^{2^{\Oof(k^2)}}$.
This contraction sequence is the result of our algorithm. We now provide an upper bound on $d$ in terms of $\tww(\str S)$.

We have that $\simplicity(\str S)\ge k$,
since the algorithm did not succeed for the value $k-1$.
As $\simplicity\preceq \tww$, this shows that $\tww(\str S)\ge \alpha(k)$ for some computable function, and hence $k\le \alpha^{-1}(\tww(\str S))$, and so $d=2^{2^{\Oof(k^2)}}$ 
is bounded by a computable function of $\tww(\str S)$.
\end{proof}

\section{Model-theoretic characterisations}
\label{sec:equivalences}
In this section, we present our model-theoretic characterisations of classes of bounded twin-width,
as well as prove more general results concerning arbitrary classes of structures.
Among others, this will prove Theorem~\ref{thm:core} and Theorem~\ref{thm:summary-mt0}.

We start with introducing the relevant notions from model theory.

\subsection{Monadically NIP classes}
\paragraph*{Monadically NIP classes}
A formula $\phi(\tup x,\tup y)$ defines in a structure $\str S$ a bipartite graph with parts $\str S^{\tup x}$ and $\str S^{\tup y}$,
in which two tuples $\tup a\in\str S^{\tup x}$ and $\tup b\in\str S^{\tup y}$ are adjacent if $\phi(\tup a,\tup b)$ holds in $\str S$.
\begin{definition}\label{def:nip}
	A class of structures $\CC$ is  \emph{NIP}
if for every formula $\phi(\tup x,\tup y)$ there is a finite bipartite graph $H$ which does not occur as an induced subgraph of $\phi(\str S)$, for all $\str S\in\CC$.
\end{definition}
For example, the class of all cliques is NIP, whereas the class $\CC$ of all finite graphs is not, since the edge formula $E(x,y)$ may define arbitrary bipartite graphs in $\CC$.
As another example, consider  the class of all cliques with each edge subdivided: an additional vertex placed  in the middle  of each edge. This class is NIP -- intuitively, one cannot define much more in such a graph than in a usual clique.
However, this changes if one can remove some of the newly inserted vertices and the edges incident to them -- this way, we can obtain any graph $G$ with each edge subdivided, and hence, very complex graphs can be obtained.
The same is true if instead of removing the vertices, we can color some of the vertices by adding unary predicates, and use those predicates in the formulas.

\begin{definition}[Monadic NIP]\label{def:NIP}
	A class of structures $\CC$ is \emph{monadically NIP} if every expansion $\CC'$ of $\CC$ by unary predicates is NIP (here $\CC'$ is such that every structure from $\str S'\in \CC'$ is obtained from some structure in $\str S\in \CC$ by equipping $\str S$ with arbitrarily many unary predicates). 
	\end{definition}

In particular, if $\CC$ is monadically NIP then the hereditary closure of $\CC$ is also monadically NIP, since removing vertices can be simulated by coloring them.
The following fact essentially says that it is enough to consider formulas $\phi(x,y)$ with just two variables when considering monadic NIP.


\begin{definition}[Finitely monadically NIP]
	A class of structures $\CC$ is \emph{finitely monadically NIP}
	(\emph{fmNIP})
	if every expansion of $\CC$ by finitely many unary predicates that are interpreted in structures from $\CC$ as \emph{finite sets} is NIP.
\end{definition}
The following  lemma is immediate.
\begin{lemma}\label{lem:fin-ip}\label{lem:fin-mnip-trans}
	\begin{enumerate}
		\item The class of all finite graphs is not NIP.		
		\item A class of finite structures is monadically NIP if and only if it is fmNIP.
	\end{enumerate} 
\end{lemma}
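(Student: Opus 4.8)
I would prove the two items separately; each is short.

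\emph{Item (1).} Use the edge formula $\phi(x,y)\equiv E(x,y)$, where $\tup x=(x)$ and $\tup y=(y)$ are single variables, and show that \emph{every} finite bipartite graph occurs as an induced subgraph of $\phi(\str S)$ for a suitable finite graph $\str S$. Given a finite bipartite graph $H$ with disjoint parts $X,Y$ and edge set $F\subseteq X\times Y$, let $\str S$ be the graph on vertex set $X\cup Y$ whose edges are exactly the pairs $\set{a,b}$ with $(a,b)\in F$ (a well-defined simple graph, since $X\cap Y=\emptyset$). Then $\phi(\str S)$ is the bipartite graph on two copies of $V(\str S)$ with $a$ adjacent to $b$ iff $E(a,b)$; restricting its first part to $X$ and its second part to $Y$ recovers $H$ exactly, because for $a\in X$, $b\in Y$ one has $E(a,b)$ iff $(a,b)\in F$ (here $X\cap Y=\emptyset$ rules out $(b,a)\in F$), and $\phi(\str S)$ has no edges within $X$ or within $Y$. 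Hence no finite bipartite graph is excluded from $\phi(\str S)$ over all finite graphs $\str S$, so $\phi$ witnesses that the class of all finite graphs is not NIP.

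\emph{Item (2).} The forward implication is immediate: any expansion of $\CC$ by finitely many unary predicates interpreted as finite sets is in particular an expansion of $\CC$ by unary predicates, so if all of the latter are NIP then so are all of the former; hence monadic NIP implies fmNIP. For the converse I would argue by contradiction. Suppose $\CC$ is fmNIP but some expansion $\CC'$ of $\CC$ by unary predicates, over a signature $\widehat\Sigma\supseteq\Sigma$, is not NIP; let $\phi(\tup x,\tup y)$ be a $\widehat\Sigma$-formula witnessing this, so that for every finite bipartite graph $H$ there is $\str S'\in\CC'$ with $H$ an induced subgraph of $\phi(\str S')$. Since $\phi$ mentions only finitely many symbols, only finitely many of the new predicates occur in it, say $U_1,\dots,U_m$. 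Set $\Sigma'=\Sigma\cup\set{U_1,\dots,U_m}$ and $\CC''=\setof{\str S'\!\restriction\!\Sigma'}{\str S'\in\CC'}$. As $\CC$, hence $\CC'$, consists of finite structures, each $U_i$ is interpreted in every member of $\CC''$ as a finite set, so $\CC''$ is an expansion of $\CC$ of precisely the type allowed in the definition of fmNIP; and since $\phi$ uses only $\Sigma'$-symbols, its truth value is unchanged by the reduct $\str S'\mapsto\str S'\!\restriction\!\Sigma'$, so $\phi$ still witnesses that $\CC''$ is not NIP. This contradicts fmNIP, completing the argument.

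I do not expect any genuine obstacle. The only point needing a little care is the bookkeeping in item (2): checking that discarding the unused new predicates and keeping only finite interpretations yields an admissible fmNIP-expansion and leaves the truth of the witnessing formula intact — but this is routine precisely because the structures in $\CC$ are finite and each formula involves only finitely many relation symbols.
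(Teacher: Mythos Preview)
Your proof is correct and fills in exactly the routine verification that the paper omits (the paper simply declares the lemma ``immediate'' and gives no argument). The only cosmetic point is that your remark in item~(1) about ``no edges within $X$ or within $Y$'' is superfluous, since $\phi(\str S)$ is already a bipartite graph with parts $\str S$ and $\str S$ and you are taking the induced bipartite subgraph on $X\times Y$; but this does not affect correctness.
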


The following proposition is a restatement of a result of the first author~\cite{nip-dim-1}, 
see Appendix~\ref{app:nip} for more details.
\begin{proposition}[\cite{nip-dim-1}]\label{prop:nip-2}
	The following conditions are equivalent for a class of structures $\DD$:
	\begin{itemize}
		\item $\DD$ is not NIP,
		\item there is a formula $\phi(x,y;\tup z)$ 
		such that for every $n$ there is a structure $\str M\in\DD$ and a tuple $\tup c\in\str M^{\tup z}$ such that $\phi(\str M;\tup c)\subset \str M^{2}$ defines a binary relation of VC-dimension at least $n$;
	\end{itemize}
\end{proposition}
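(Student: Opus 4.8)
The plan is to transfer the statement to the earlier work of the first author~\cite{nip-dim-1}, where the key combinatorial content already lives, and then to clean up the translation. The ``easy'' direction is immediate: if some $\phi(x,y;\tup z)$ witnesses binary relations of unbounded VC-dimension via parameters $\tup c$, then absorbing $\tup z$ into the left and right variable blocks shows that the single formula $\psi(x\tup z; y) := \phi(x,y;\tup z)$ already has the independence property along some class in $\DD$ -- more carefully, for each $n$ we obtain, in the relevant structure $\str M$, tuples $a_1,\dots,a_n$ and shattering witnesses $b_J$ so that $\psi(a_i\tup c; b_J)$ holds iff $i\in J$; hence $\DD$ is not NIP by Definition~\ref{def:nip} (taking $H$ to be a half-graph-free bipartite graph of size growing with $n$ is impossible). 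So the real content is the forward implication: from \emph{some} formula $\phi(\tup x,\tup y)$ with unbounded VC-dimension across $\DD$, produce a formula in \emph{two} object variables $x,y$ plus a parameter block $\tup z$.

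First I would invoke the main reduction of~\cite{nip-dim-1}: the failure of NIP for a formula $\phi(\tup x,\tup y)$ with $\tup x,\tup y$ arbitrary finite tuples can be reduced, at the cost of introducing parameters, to the failure of NIP for a formula with a single variable on each side. The mechanism is a standard ``coding trick'': a tuple $\tup a = (a^1,\dots,a^p)$ can be encoded by a single element together with a bounded set of parameters whenever one has enough definable scaffolding, but in full generality one instead iterates over coordinates. Concretely, if $\phi(\tup x;\tup y)$ shatters arbitrarily large sets, then by a pigeonhole/Ramsey argument on the coordinates of the shattering witnesses one finds, for each $n$, an index $i$ (uniform after passing to a subsequence) and values for all but the $i$-th coordinate of $\tup x$ and all but the $j$-th coordinate of $\tup y$ such that the induced formula in the two remaining single variables, with everything else frozen as parameters $\tup z$, still shatters a set of size $\Omega(\log n)$ or more -- and in any case grows without bound as $n\to\infty$. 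Since $\DD$ is a class (not a single model), one must take care that the frozen parameters come from the \emph{same} structure $\str M\in\DD$ as the shattered set; this is automatic because all the elements involved (the shattered tuples and their witnesses) live in one $\str M$ to begin with. Finitely many choices of ``which coordinate pair $(i,j)$'' and ``which pattern of frozen coordinates'' are possible, so one of them works for infinitely many $n$, yielding the desired single formula $\phi(x,y;\tup z)$.

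I expect the main obstacle to be bookkeeping rather than a conceptual gap: ensuring that the coordinate-reduction produces VC-dimension that is still \emph{unbounded} (not merely positive) along $\DD$, and that the parameter block $\tup z$ has fixed length independent of $n$. This is exactly the point addressed in~\cite{nip-dim-1}, so in the paper I would state Proposition~\ref{prop:nip-2} as a ``restatement'' and defer the coordinate-by-coordinate argument to Appendix~\ref{app:nip}, only sketching there how the cited result specializes: one applies it to the (monadic, or plain, as needed) theory generated by $\DD$, extracts the two-variable formula, and notes that unboundedness of VC-dimension is preserved because the reduction loses at most a fixed arithmetic factor in the dimension at each of the boundedly many coordinate steps. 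The remaining verification -- that the two bullet points are genuinely equivalent and not just mutually implied up to reparametrization -- is then a one-line consequence of the easy direction above.
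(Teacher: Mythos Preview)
Your easy direction is fine. For the hard direction, however, the paper takes a cleaner and genuinely different route than your coordinate-freezing sketch, and your sketch has a gap.

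The paper's argument is: (a) pass to the common first-order theory $T$ of $\DD$; (b) apply the cited result from~\cite{nip-dim-1} (stated there as Proposition~\ref{prop:nip-2-variables}) to $T$, obtaining a formula $\phi(x,y;\tup z)$ and a model $\str M\models T$ --- not necessarily a member of $\DD$ --- in which some instance $\phi(x,y;\tup c)$ has \emph{infinite} VC-dimension; (c) for each $n$, observe that ``$\exists\tup z$ such that $\phi(x,y;\tup z)$ has VC-dimension $\ge n$'' is a single first-order sentence $\psi_n$; since $\str M\models\psi_n$, we have $\neg\psi_n\notin T$, so some $\str S\in\DD$ satisfies $\psi_n$. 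The arity reduction is thus done once, model-theoretically, with no quantitative bound to track, and the transfer back to $\DD$ is a one-line elementary-equivalence argument.

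Your proposal instead attempts a finitary coordinate-by-coordinate reduction inside the structures of $\DD$ themselves. The claim that freezing all but one coordinate on each side ``still shatters a set of size $\Omega(\log n)$'' is unjustified: the standard arity-reduction for IP goes through indiscernible sequences in a saturated model (subadditivity of alternation number over coordinates), not through a direct pigeonhole bound, and I do not know a finitary inequality of the shape you assert. Moreover, in your final paragraph you pivot to ``apply it to the theory generated by $\DD$'', but then you are missing step~(c): the cited result hands you a model of $T$, not a member of $\DD$, and tracking ``a fixed arithmetic factor'' is neither how the reduction works nor how one gets back into $\DD$. The sentence $\psi_n$ is the missing idea.
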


\begin{corollary}
	\label{cor:nip}
	The following conditions are equivalent for 
	a class of structures $\CC$:
	\begin{enumerate}				
		\item $\CC$ does not transduce the class of all finite graphs;
		 \item $\CC$ is not monadically NIP.
	\end{enumerate}
\end{corollary}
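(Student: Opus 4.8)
\medskip
\noindent\textbf{Proof plan.}
I will deduce Corollary~\ref{cor:nip} from Proposition~\ref{prop:nip-2}. Concretely, I will establish the equivalence in the form: \emph{$\CC$ is monadically NIP if and only if $\CC$ does not transduce the class of all finite graphs} -- equivalently, $\CC$ is \emph{not} monadically NIP if and only if $\CC$ transduces the class of all finite graphs. The direction ``monadically NIP $\Rightarrow$ no transduction of all finite graphs'' is the easy one. A transduction $T\from\Sigma\to\Gamma$ factors as an interpretation $I$ applied to an expansion of $\CC$ by finitely many unary predicates, so $T(\CC)=I(\CC')$ for such an expansion $\CC'$. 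If $\CC$ is monadically NIP then $\CC'$ is NIP, and NIP is preserved by interpretations: were some $\psi(\tup x,\tup y)$ to define in $I(\str S)$ a relation of VC-dimension $\ge k$, substituting the interpretation formulas $\delta,\phi_R$ into $\psi$ and relativising quantifiers to $\delta$ would define in $\str S$ a relation of VC-dimension $\ge k$, contradicting NIP of $\CC'$. Hence $T(\CC)=I(\CC')$ is NIP, whereas the class of all finite graphs is not NIP by Lemma~\ref{lem:fin-ip}(1) (the edge formula $E(x,y)$ realises every finite bipartite graph); so $T(\CC)$ is never that class.

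For the converse, assume $\CC$ is not monadically NIP. First I would pass to a \emph{finite} unary expansion: failure of NIP in some unary expansion of $\CC$ is witnessed by a single formula, which mentions only finitely many predicate symbols, so forgetting the remaining new predicates yields an expansion $\CC'$ of $\CC$ by finitely many unary predicates that is still not NIP (VC-dimension is unaffected by the forgotten predicates). Applying Proposition~\ref{prop:nip-2} to $\DD=\CC'$ produces a formula $\phi(x,y;\tup z)$ -- with two ``main'' variables $x,y$ and a tuple $\tup z$ of parameter variables -- such that for every $n$ there are $\str M\in\CC'$ and $\tup c\in\str M^{\tup z}$ with $\phi(\str M;\tup c)\subseteq\str M^{2}$ of VC-dimension $\ge n$; unpacking, $\str M$ contains elements $x_1,\dots,x_n$ together with, for each $J\subseteq\set{1,\dots,n}$, an element $y_J$, such that $\str M\models\phi(x_i,y_J;\tup c)$ if and only if $i\in J$.

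It remains to manufacture a transduction of $\CC$ onto all finite graphs, which I would do in two composable stages. \emph{Stage 1} is a transduction $T_1$ from $\CC$: starting from a structure $\str C\in\CC$ underlying such an $\str M$, it guesses finitely many unary predicates to (i)~reconstruct $\str M\in\CC'$, (ii)~mark the coordinates of $\tup c$, so that $\phi(x,y;\tup c)$ becomes a formula $\phi^{*}(x,y)$ of the expanded language, and (iii)~mark a shattered set $X=\set{x_1,\dots,x_n}$ by a predicate $\mathrm{R}$ and the family $Y=\set{y_J:J\subseteq\set{1,\dots,n}}$ of $2^n$ witnesses by a predicate $\mathrm{L}$; it then outputs, via the domain formula $\mathrm{R}(v)\lor\mathrm{L}(v)$ and the edge formula $\bigl(\mathrm{R}(u)\land\mathrm{L}(v)\land\phi^{*}(u,v)\bigr)\lor\bigl(\mathrm{R}(v)\land\mathrm{L}(u)\land\phi^{*}(v,u)\bigr)$, keeping $\mathrm{R}$ as a unary predicate, the ``powerset bipartite graph'' $P_n$: parts of sizes $n$ and $2^n$, with $x_i$ adjacent to $y_J$ iff $i\in J$. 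Since Proposition~\ref{prop:nip-2} supplies arbitrarily large shattered sets inside $\CC'$, the outputs of $T_1$ include $P_n$ for all $n$. \emph{Stage 2} is a transduction $T_2$ from $\set{P_n:n\in\N}$ onto all finite graphs: to hit a graph $G$ on vertex set $\set{1,\dots,n}$, colour $D=\set{y_{\set{i,j}}:ij\in E(G)}\subseteq Y$ (the witness $y_{\set{i,j}}$ has neighbourhood exactly $\set{x_i,x_j}$ in $P_n$), take domain $\mathrm{R}$, and define the edge relation by $x\neq x'\land\exists y\,\bigl(D(y)\land x\sim y\land x'\sim y\bigr)$; this equals $E(G)$, since $y_{\set{k,l}}$ is adjacent to both $x_i$ and $x_j$ precisely when $\set{i,j}=\set{k,l}$. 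As transductions compose, $T_2\circ T_1$ is a transduction of $\CC$ whose outputs contain every finite graph. Together with the first direction this yields the equivalence, hence Corollary~\ref{cor:nip}.

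The step I expect to be the main obstacle is the passage, in the converse direction, from the a~priori unbounded complexity packed into ``not monadically NIP'' (an arbitrary unary expansion, a formula with arbitrarily many variables) down to the concrete two-variables-with-parameters witness $\phi(x,y;\tup z)$: this is precisely Proposition~\ref{prop:nip-2}, a result of the first author recalled from~\cite{nip-dim-1}, and is the genuinely hard input. Everything after that is transduction bookkeeping -- absorbing the parameter tuple $\tup c$ and the choice of shattered set into the colouring layer, and checking that only finitely many unary colours are used at each stage so that $T_1$ and $T_2$ are legitimate transductions -- where the only real care needed is verifying the two small identities flagged above and the stability of VC-dimension under forgetting irrelevant predicates.
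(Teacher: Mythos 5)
Your proposal takes essentially the route the paper intends: the corollary is presented as a consequence of Proposition~\ref{prop:nip-2} (beyond invoking it, the paper only remarks that the statement also follows from Baldwin--Shelah), and that proposition is exactly the ``genuinely hard input'' you single out. (You also silently correct the mismatched negations in the literal statement and prove the intended equivalence ``monadically NIP iff no transduction of all finite graphs''.) Your easy direction -- factoring a transduction as a simple interpretation of a finite unary expansion, preservation of NIP under simple interpretations by substituting and relativising, and Lemma~\ref{lem:fin-ip} for the class of all graphs -- is fine, as is the reduction of failure of monadic NIP to an expansion by finitely many unary predicates.

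One step in the converse needs a small repair. You assert that the outputs of $T_1$ include the powerset bipartite graph $P_n$, but nothing guarantees that the shattered set $X=\set{x_1,\dots,x_n}$ and the witness family $Y=\set{y_J : J\subseteq\set{1,\dots,n}}$ are disjoint: a witness $y_J$ may coincide with some $x_i$, and then the structure produced by your domain and edge formulas is not $P_n$ (such a vertex carries both predicates and acquires the union of its two neighbourhoods), so ``$T_1$ outputs $P_n$'' is not justified as stated. The fix is routine: either start from a shattered set of size $2n$, let the $R$-part be one half, and for each $J$ choose, among the $2^n$ distinct witnesses whose trace on the $R$-part equals $J$, one lying outside the $R$-part (possible since $2^n>n$); or drop the intermediate $P_n$ altogether and define the composite transduction directly, with domain $R(x)$ and edge formula $x\neq x'\land\exists y\,\bigl(D(y)\land\phi^{*}(x,y)\land\phi^{*}(x',y)\bigr)$, which computes $E(G)$ correctly even when $D$ meets $X$. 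With that adjustment the argument is complete; note also that keeping the two sides separated by the predicate and the orientation of $\phi^{*}$ lets you avoid the independent-set detour through $1$-subdivided cliques that a symmetric-relation version of this encoding would otherwise require.
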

Corollary~\ref{cor:nip} also follows from results of Baldwin and Shelah~\cite{DBLP:journals/ndjfl/BaldwinS85}.

Together with Fact~\ref{fact:properties}  this gives:
\begin{corollary}\label{cor:tww-monNIP}
	Every class of bounded twin-width is monadically NIP.
\end{corollary}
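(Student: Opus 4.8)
The plan is to read off the statement from the two ingredients assembled immediately above it. First I would invoke Fact~\ref{fact:properties}\eqref{prop:transductions}: classes of bounded twin-width are closed under first-order transductions, so if $\CC$ has bounded twin-width then $T(\CC)$ has bounded twin-width for every first-order transduction $T$. Second, I would recall that the class of all finite graphs does \emph{not} have bounded twin-width (it contains, for instance, all bipartite graphs, or all graphs of degree at most $3$, which are already unbounded). Putting these together, $\CC$ cannot transduce the class of all finite graphs: if it did, that class would inherit bounded twin-width, a contradiction.

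It then remains to apply Corollary~\ref{cor:nip}, which identifies the classes that do not transduce the class of all finite graphs with the monadically NIP classes. Since we have just shown that $\CC$ does not transduce the class of all finite graphs, Corollary~\ref{cor:nip} yields that $\CC$ is monadically NIP, which is exactly the assertion of the corollary. In symbols: bounded twin-width $\Rightarrow$ closed under transductions $\Rightarrow$ does not transduce all finite graphs $\Rightarrow$ monadically NIP.

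The only point requiring a moment of care is terminological bookkeeping: the notion of transduction used in Fact~\ref{fact:properties}\eqref{prop:transductions} (first-order transductions, in the sense of Definition~\ref{def:transduction}) must be the same one appearing in Corollary~\ref{cor:nip}, which it is, so no reconciliation is needed. Beyond that there is no obstacle — the corollary is a one-line consequence of results already in place, with all the real content sitting in Fact~\ref{fact:properties} (closure under transductions, from~\cite{bonnet2020twin}) and in Corollary~\ref{cor:nip} (the transduction characterisation of monadic NIP, via Proposition~\ref{prop:nip-2} and~\cite{nip-dim-1}).
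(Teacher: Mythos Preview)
Your proposal is correct and follows exactly the paper's approach: the paper's proof is the single line ``Together with Fact~\ref{fact:properties} this gives,'' and you have simply unpacked that line by spelling out the closure-under-transductions argument and the appeal to Corollary~\ref{cor:nip}.
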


Finitely monadically NIP classes do not define large grids (cf. Def.~\ref{def:grids}):
\begin{lemma}\label{lem:no-large-grids}
	If $\CC$ is fmNIP then $\CC$ does not define large grids.
\end{lemma}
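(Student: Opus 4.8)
\textit{Proof plan.}
I would prove the contrapositive: if $\CC$ defines large grids, then $\CC$ is not fmNIP. So fix a formula $\phi(\tup x,\tup y,z)$ witnessing that $\CC$ defines large grids in the sense of Definition~\ref{def:grids} (and assume, after renaming, that $\tup x$ and $\tup y$ are disjoint tuples of variables). The core idea is that an $m\times n$ grid lets one "read off" an arbitrary bipartite pattern on its two index sets once a single finite subset of the grid's third set $C$ is marked by a unary predicate, and that shattering a set of size $d$ only needs a grid of dimensions $d\times 2^d$, which is available since $\CC$ defines grids of \emph{every} size.

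First I would record the routine fact that grids restrict: if $(A,B,C)$ is an $m\times n$ grid defined by $\phi$ in $\str S$ and $A'\subseteq A$, $B'\subseteq B$, then setting $C'=\{c\in C:\exists(\tup a,\tup b)\in A'\times B',\ \str S\models\phi(\tup a,\tup b,c)\}$ yields a grid $(A',B',C')$ defined by $\phi$, since the restriction of a bijection to a subproduct is a bijection onto its image. Hence, fixing $d\in\N$ and applying "$\CC$ defines large grids" with parameter $2^d$, I obtain $\str S\in\CC$ in which $\phi$ defines a $d\times 2^d$ grid $(A,B,C)$. Enumerate $A=\{\tup a_1,\dots,\tup a_d\}$ and $B=\{\tup b_J:J\subseteq[d]\}$, and for $i\in[d]$, $J\subseteq[d]$ let $c_{i,J}\in C$ be the unique element of $C$ with $\str S\models\phi(\tup a_i,\tup b_J,c_{i,J})$.

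Next I would expand $\str S$ to a $(\Sigma\cup\{U\})$-structure $\str S^{+}$ by interpreting a fresh unary predicate $U$ as the \emph{finite} set $\{c_{i,J}:i\in J\}\subseteq C$, and consider the formula
\[
\theta(\tup x;\tup y)\ :=\ \exists z\,\bigl(\phi(\tup x,\tup y,z)\wedge U(z)\bigr).
\]
For $i\in[d]$, $J\subseteq[d]$: any witness $z$ of $\theta(\tup a_i;\tup b_J)$ lies in $U\subseteq C$, and within $C$ the only $z$ with $\str S\models\phi(\tup a_i,\tup b_J,z)$ is $c_{i,J}$; hence $\str S^{+}\models\theta(\tup a_i;\tup b_J)$ iff $c_{i,J}\in U$ iff $i\in J$. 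Thus $\theta$ shatters $\{\tup a_1,\dots,\tup a_d\}$ in $\str S^{+}$, i.e.\ $\theta$ has VC-dimension at least $d$ there. Letting $\CC^{+}$ be the expansion of $\CC$ by the single unary predicate $U$ interpreted as finite sets, this holds for every $d$, so $\theta$ has unbounded VC-dimension over $\CC^{+}$; consequently $\CC^{+}$ is not NIP (the shattering shows that the "power-set" bipartite graph on $[d]$ versus $2^{[d]}$ occurs in $\theta(\str S^{+})$ for suitable $\str S^+$, and every finite bipartite graph embeds as an induced subgraph of such a graph for large enough $d$; this is the standard equivalence underlying Proposition~\ref{prop:nip-2}). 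Since $\CC^{+}$ is an expansion of $\CC$ by finitely many unary predicates interpreted as finite sets, $\CC$ is not fmNIP, completing the contrapositive.

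The only genuinely non-bookkeeping point is the passage from "grids of every size" to the $d\times 2^d$ grid actually used: shattering a $d$-element set requires $2^d$ parameter instances, all of which I take from the grid's column index set $B$, so I must invoke the defining property of $\CC$ at size $2^d$ and use the grid-restriction observation. Everything else — the uniqueness of $c_{i,J}$ inside $C$, the finiteness of $U$, and the translation of "unbounded VC-dimension of $\theta$ over $\CC^{+}$" into "not NIP" — is routine.
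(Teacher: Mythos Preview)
Your proof is correct and follows essentially the same approach as the paper's own proof: both argue the contrapositive, use a $d\times 2^d$ grid, mark a finite subset $U\subseteq C$ encoding the membership relation, and observe that $\exists z\,(\phi(\tup x,\tup y,z)\wedge U(z))$ then shatters a $d$-element set. You are somewhat more explicit than the paper about two points---the restriction from an $n\times n$ grid (with $n=2^d$) down to a $d\times 2^d$ grid, and why a witness $z$ must equal $c_{i,J}$ (because $U\subseteq C$ forces $z\in C$, where uniqueness applies)---but these are exactly the details the paper leaves implicit.
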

\begin{proof}Observe that if $\phi(\tup x,\tup y,z)$ defines a grid $(A,B,C)$ in a structure $\str S$ then for every binary relation $R\subset A\times B$ there is a unary predicate $U\subset C$ such that the formula
	 $$\psi(\tup x,\tup y):=\exists z.\phi(\tup x,\tup y,c)\land U(z)$$ defines $R$, so that for all $\tup a\in A$ and $\tup b\in B$, 
	 \[(\tup a,\tup b)\in R\quad\Leftrightarrow\quad \str S\models \psi(\tup a,\tup b).\]
	 
	 In particular, 
	if $|A|=n$ and $|B|=2^n$ then for a suitably chosen $U\subset C$,
	the formula $\psi$ defines a relation of VC-dimension $n$ between $A$ and $B$.

Consider the expansion of $\CC$ by a unary predicate $U$ interpreted as finite sets.
Then the formula $\psi(\tup x,\tup y)$ above has unbounded VC-dimension on $\CC$, proving that $\CC$ is not fmNIP.
\end{proof}

\subsection{1-dimensionality}
We now introduce a wholly model-theoretic notion characterizing bounded twin-width.
For this, we first recall some basic notions from model theory.

By  a \emph{model} we mean a structure which is typically infinite, as opposed to the structures considered earlier, which were typically finite.
We give a brief account of basic notions from model theory in Appendix~\ref{app:model-theory}, although they are not needed to follow the main text below.

The \emph{elementary closure} of a class of structures
$\CC$ is the class of all models $\str M$ that satisfy every sentence $\phi$ that holds in all structures $\str S\in\CC$. 
In particular, if $\CC$ does not define large grids,
then neither does its elementary closure. This is because for any fixed $n\in\N$ the existence of an $n\times n$-grid defined by a fixed formula $\phi(\tup x,\tup y,z)$ can be expressed by a first-order sentence $\phi'$
which existentially quantifies $(|\tup x|+|\tup y|)\cdot n+n^2$ variables, corresponding to sets $A,B,C$ of $\tup x$-tuples, $\tup y$-tuples and single vertices,
and then checks that $\phi$ defines a bijection between $A\times B$ and $C$.

By the compactness theorem (cf. Thm.~\ref{thm:compactness}), if $\CC$ defines large grids, then its elementary closure contains a structure that defines a grid $(A,B,C)$
with $A$ and $B$ of arbitrarily large infinite cardinalities.


\begin{definition}[Elementary extension]
	Let $\str M,\str N$ be two models.
	Then $\str N$ is an \emph{elementary extension} of $\str M$,
	written $\str M\prec \str N$, 
	if the domain of $\str M$ is contained in the domain of $\str N$, and  for  every 
	formula $\phi(\tup x)$ and tuple $\tup a\in\str M^{\tup x}$ of elements of $\str M$,
	\[\str M\models\phi(\tup a)\text{\quad if and only if \quad}\str N\models \phi(\tup a).\]	
\end{definition}
 In other words, it doesn't matter if we evaluate formulas in $\str M$ or in $\str N$. In particular, $\str M$ and $\str N$ satisfy the same sentences.

 \medskip
 A \emph{formula $\phi(\tup x)$ with parameters} from $C\subset \str N$
is a formula using constant symbols denoting elements from $C$.
Such a formula can be evaluated in $\str N$
on a tuple $\tup a\in\str N^{\tup x}$, as expected.
Note that if $\str M\prec \str N$ and $\phi(\tup x)$ is a formula with parameters from $\str N$ and $\tup a\in\str M^{\tup x}$ then it is not necessarily the case that $\str M\models \phi(\tup a)$ if and only if $\str N\models \phi(\tup a)$, although this does hold for formulas with parameters from $\str M$.
\medskip
\begin{definition}[Independence]\label{def:independence}
	Let $\str M$ be a model and $\str N$ its elementary extension. For a tuple $\tup a\in \str N^{\tup x}$ and a set $B\subset \str N$ say that $\tup a$ is \emph{independent} from $B$ over $\str M$, denoted $\tup a\ind[\str M] B$,
	if for every formula $\phi(\tup x)$ with parameters  from  $B\cup \str M$ such that $\str N\models \phi(\tup a)$ there is some $\tup c\in \str M^{\tup x}$ such that $\str N\models \phi(\tup c)$.
\end{definition}
Abusing notation, if $B$ is enumerated by a tuple $\tup b$, 
then we may write $\tup a\ind[\str M]\tup b$.
We write $\nind[\str M]$ for the negation of the relation
$\ind[\str M]$.


\begin{example}\label{ex:order-ind}
	Let $\str N$ be $(\mathbb R,\le)$
    and let $\str M$ be 
    the union  of the open intervals 
    $]0,1[$ and $]8,9[$, equipped with the relation $\le$.
	Then $\str M\prec \str N$. This is easy to derive from the fact that $(\mathbb R,\le)$ has quantifier elimination, that is, every formula $\phi(\tup x)$ is equivalent to a quantifier-free formula. 
	\begin{figure}[h!]
		\centering
		\includegraphics[scale=0.35,page=1]{pics}
		\caption{The structures $\str M\prec \str N$, a set $B$ and two tuples, $\tup a,\tup e$, with 
		$\tup a\ind[\str M]B$ and $\tup e\nind[\str M]B$.
		}\label{fig:ind1}
	\end{figure}
	Figure~\ref{fig:ind1} illustrates independence over  $\str M$. 
	


	

	

\end{example}

\begin{example}Let
	$\str M\prec \str N$.
	Then  $\tup a\ind[\str M]{}\str M$ for every $\tup a\in\str N^{\tup x}$
	(cf. Lemma~\ref{lem:satisfaction}).
\end{example}

\begin{definition}[1-dimensionality]\label{def:1-dim}
	A model $\str M$ is \emph{1-dimensional}
	if for every $\str M\prec \str N$, tuples  $\bar a,\bar b$ of elements of $\str N$ and $c\in\str N$ a single element, 
if $\tup a\ind[\str M]{}\tup b$ then  $\tup ac\ind[\str M]{}\tup b$ or $\tup a\ind[\str M]{}\tup bc$.
A class $\CC$ of structures is \emph{1-dimensional} if every model in the elementary closure of $\CC$ is 1-dimensional.
\end{definition}

\begin{example}
	Any total order $(X,\le)$ is 1-dimensional.
As an illustration, in the situation in Fig.~\ref{fig:ind1}, consider the tuples $\tup a,\tup b$ marked therein. Then $\tup a\ind[\str M]\tup b$.
Let $c\in\str N$. If $c$ belongs to the interval $]b_1,b_2[$
then $\tup a\ind[\str M]\tup bc$. Otherwise, $\tup ac\ind[\str M]\tup b$.
\end{example}

\begin{example}\label{ex:grid-2-dim}
	Let $\str N=(\R\times \R, \sim_1,\sim_2)$ where for $i=1,2$, the relation $\sim_i$ denotes equality of the $i$th coordinates.
	Let $\str M$ be the induced substructure of $\str N$ with domain $I\times I$ for some infinite subset $I\subset \str N$. Then $\str M\prec \str N$. In the situation depicted in Fig.~\ref{fig:ind2}, $a\ind[\str M]b$ but both $ac\nind[\str M]b$ and $a\nind[\str M]bc$. So $\str M$ is not $1$-dimensional. 

	\begin{figure}[h!]
		\centering
		\includegraphics[scale=0.35,page=2]{pics}
		\caption{The structures $\str M\prec \str N$ and elements $a,b,c\in\str N$ with 
		$a\ind[\str M]b, ac\nind[\str M]b$ and $a\nind[\str M]bc$.
		}\label{fig:ind2}
	\end{figure}
\end{example}

The  following result  is essentially \cite[Lemma 2.2]{shelah:hanfnumbers}.

\begin{proposition}\label{prop:fs dichotomy}
	If a  model $\str M$  does not define large grids the it is 1-dimensional. More precisely,
	if there is a formula $\phi(\tup x,\tup y,z)$ with parameters from $\str M$ which is satisfied by $\tup a,\tup b,c$ and such that $\phi(\tup x,\tup b,z)$ and $\phi(\tup x,\tup b,c)$ are not satisfiable in $\str M$ then a boolean combination $\psi$ of instances of $\phi$ defines large grids in~$\str M$.	
\end{proposition}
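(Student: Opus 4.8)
The plan is to establish the (stronger) ``more precisely'' assertion; the first sentence is then just its contrapositive. So suppose $\str M$ is not $1$-dimensional: by Definitions~\ref{def:1-dim} and~\ref{def:independence} there are $\str M\prec\str N$, tuples $\tup a,\tup b\in\str N$ and a single $c\in\str N$ with $\tup a\ind[\str M]\tup b$ but $\tup ac\nind[\str M]\tup b$ and $\tup a\nind[\str M]\tup bc$. Unwinding the two non-independences produces formulas over $\str M$: a $\theta_1(\tup x,z;\tup y)$ with $\str N\models\theta_1(\tup a,c;\tup b)$ but $\theta_1(\tup x,z;\tup b)$ unsatisfiable in $\str M$, and a $\theta_2(\tup x;\tup y,z)$ with $\str N\models\theta_2(\tup a;\tup b,c)$ but $\theta_2(\tup x;\tup b,c)$ unsatisfiable in $\str M$. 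Put $\phi(\tup x,\tup y,z):=\theta_1(\tup x,z;\tup y)\wedge\theta_2(\tup x;\tup y,z)$. Then $\phi$ has parameters only from $\str M$, is satisfied by $(\tup a,\tup b,c)$, and both $\phi(\tup x,\tup b,z)$ and $\phi(\tup x,\tup b,c)$ are unsatisfiable in $\str M$ (a realization in $\str M$ of $\phi(\tup x,\tup b,z)$ is in particular one of $\theta_1(\tup x,z;\tup b)$, and a realization of $\phi(\tup x,\tup b,c)$ is one of $\theta_2(\tup x;\tup b,c)$, both impossible). So it suffices to prove the ``more precisely'' statement for this $\phi$, and the required $\psi$ will be built as a boolean combination of instances of $\phi$.

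Passing to a sufficiently saturated $\str N\succ\str M$, I would next ``blow up'' the configuration $(\tup a,\tup b,c)$ into an $\str M$-indiscernible array by an iterated coheir/Ramsey construction. Using $\tup a\ind[\str M]\tup b$ — i.e. $\tp(\tup a/\str M\tup b)$ finitely satisfiable in $\str M$ — extend this type to a global type finitely satisfiable in $\str M$ and build a coheir sequence $(\tup a_i)_{i<\omega}$ over $\str M\tup b$ with $\tup a_0=\tup a$; since finite satisfiability passes to subsequences, one can also arrange $\str M\tup b$-indiscernibility. For each $i$ choose $c_i$ with $\tp(\tup a_ic_i/\str M\tup b)=\tp(\tup ac/\str M\tup b)$, so that $\phi(\tup a_i,\tup b,c_i)$ holds — this is one column of the grid-to-be. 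Finally, over the base $\str M\cup\{\tup a_i:i<\omega\}$, build an indiscernible sequence of ``columns'' $\big(\tup b_j,(c_{ij})_{i<\omega}\big)_{j<\omega}$ with $\big(\tup b_0,(c_{i0})_i\big)=\big(\tup b,(c_i)_i\big)$. The result is an array $\big((\tup a_i)_i,(\tup b_j)_j,(c_{ij})_{i,j}\big)$ that is $\str M$-indiscernible (the type over $\str M$ of any finite subconfiguration depends only on the order types of the indices that occur) and satisfies $\phi(\tup a_i,\tup b_j,c_{ij})$ for all $i,j$.

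By indiscernibility the truth value of $\phi(\tup a_i,\tup b_j,c_{kl})$ depends only on $\big(\operatorname{sign}(i-k),\operatorname{sign}(j-l)\big)$, so there are boundedly many ``shapes'', the $(0,0)$ shape being ``true''. I would then show that the two unsatisfiability conditions — the one on $\phi(\tup x,\tup b,z)$ governing the row direction, the one on $\phi(\tup x,\tup b,c)$ governing the column direction — together with the coheir property (which lets configurations found in $\str N$ be reflected into $\str M$) force the surviving shapes to be non-degenerate enough that, encoding each row as a pair of consecutive elements $(\tup a_i,\tup a_{i+1})$ and each column as a pair $(\tup b_j,\tup b_{j+1})$, a fixed boolean combination $\psi(\tup x_1\tup x_2;\tup y_1\tup y_2;z)$ of the finitely many instances $\phi(\tup x_s,\tup y_t,z)$ ($s,t\in\{1,2\}$) \emph{pinpoints} the matching, i.e. defines an infinite grid with rows the pairs $(\tup a_i,\tup a_{i+1})$, columns the pairs $(\tup b_j,\tup b_{j+1})$ and grid points the $c_{ij}$. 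Since for each fixed $n$ the assertion ``$\psi$ defines an $n\times n$ grid'' is a first-order sentence with parameters from $\str M$, and $\str M\prec\str N$, it already holds in $\str M$; hence $\psi$ defines $n\times n$ grids in $\str M$ for every $n$, which is what Definition~\ref{def:grids} asks. This last reflection is the same elementarity argument used around Lemma~\ref{lem:no-large-grids}.

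The main obstacle is the shape analysis in the previous paragraph: showing that for every configuration of $\phi$-values compatible with the three hypotheses, some bounded boolean combination of instances of $\phi$ yields a genuine \emph{bijection} $A\times B\to C$, not merely a definable relation — in other words that the hypotheses really exclude the degenerate patterns ($\phi$ not depending on $z$, not discriminating rows or columns, eventually constant, and so on). This is exactly the combinatorial core of \cite[Lemma~2.2]{shelah:hanfnumbers}; the role of $\tup a\ind[\str M]\tup b$ is to supply the freedom needed to realize the coheir array, while the two unsatisfiability conditions are precisely what prevents the configuration from collapsing in the row, resp. column, direction. The remaining ingredients — the reduction of Step~1, the array extraction, and the final reflection — are routine.
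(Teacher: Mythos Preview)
Your reduction in the first paragraph (combining the two witnessing formulas into a single $\phi$) is correct and matches the paper. From there, however, you take a detour that forces you into the ``shape analysis'' you yourself flag as the main obstacle, and your construction does not actually support the array-indiscernibility you need for it. Building $(\tup a_i)_i$ indiscernible over $\str M\tup b$ and then extracting an indiscernible sequence of (infinite) columns $(\tup b_j,(c_{ij})_i)_j$ over $\str M\cup\{\tup a_i\}$ does \emph{not} make the $\tup a$-sequence indiscernible over the new columns, so the truth of $\phi(\tup a_i,\tup b_j,c_{kl})$ need not depend only on the two signs; you would need a further product-Ramsey pass to get genuine array indiscernibility, and even then you are left with a case analysis you do not carry out.

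The paper sidesteps all of this. Instead of an array of $c_{ij}$'s, it produces two \emph{mutually} indiscernible Morley sequences $(\tup a')\,{+}\,(\tup a_i)_{i>0}$ and $(\tup b_j)_{j<0}\,{+}\,(\tup b')$, built from global coheirs $p(\tup x),q(\tup y)$ and a joint coheir $r(\tup x,\tup y,z)$ extending $\tp(\tup a\tup bc/\str M)$, together with a \emph{single} point $c'$ (the $z$-part of a realization of $r$). Finite satisfiability in $\str M$ then gives, directly and with no cases, that $\phi(\tup a',\tup b',c')$ holds while $\phi$ fails along both axes. For instance, if $\phi(\tup a_i,\tup b',c')$ held for some $i>0$ then, since $\tup a_i$ realizes a coheir over $\str M\tup b'c'$, some $\tup m\in\str M$ would satisfy $\phi(\tup m,\tup b',c')$; as $\tp(\tup b'c'/\str M)=\tp(\tup bc/\str M)$ this contradicts the unsatisfiability of $\phi(\tup x,\tup b,c)$ in $\str M$. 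The other axis uses the unsatisfiability of $\phi(\tup x,\tup b,z)$ symmetrically, via finite satisfiability of $r$. With this ``true at the origin, false on both axes'' pattern for a single $c'$, Lemma~\ref{lem:two sequences} finishes: mutual indiscernibility lets one slide the origin to any $(k,l)$ to manufacture the $c_{kl}$'s \emph{a posteriori}, and the grid is defined by the explicit formula
\[
\psi(\tup x\tup x',\tup y\tup y';z)\;=\;\phi(\tup x,\tup y;z)\wedge\neg\phi(\tup x',\tup y;z)\wedge\neg\phi(\tup x,\tup y';z),
\]
with no pattern enumeration at all. In short, the paper isolates the combinatorics into the clean statement of Lemma~\ref{lem:two sequences} (one $c$, two sequences), whereas your route front-loads the whole array and is then stuck reproving that lemma by cases.
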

See Appendix~\ref{app:fmNIP} for a proof.
In particular, every class $\CC$ that does not define large grids is 1-dimensional. 

\subsection{Regular classes}
We now provide a characterization in terms of the number of types. The definition below generalizes the definition of $(k,t)$-simplicity for ordered structures. 

Let $\phi(x,\tup y)$ be a formula and $\str S$ a structure.
A \emph{$\phi$-definable disjoint family} 
is a family $\cal R$ of pairwise disjoint of subsets of $\str S$, where for each $R\in\cal R$ there is $\tup b\in\str S^{\tup y}$ with $R=\phi(\str S,\tup b)$.
For example, if $\str S$ is an ordered structure and 
$\cal R$ is a convex partition of $\str S$, then $\cal R$ is a $\phi$-definable family of pairwise disjoint sets, for $\phi(x,y_1,y_2)=y_1\le x\le y_2$.

\begin{definition}[Regularity]\label{def:regular}
A  class $\CC$ of structures is 
\emph{regular}
if the following condition holds.
Let $\phi(x;\bar y)$, $\psi(x;\bar z)$ and $\theta(\bar u;\bar v)$ be $L$-formulas. Then there are natural numbers $t$ and $k$ such that for any $\str S\in \CC$ and any
$\phi$-definable disjoint family $\cal R\subset P(\str S)$  and $\psi$-definable disjoint family $\cal L\subset P(\str S)$ with  $|\cal R|=|\cal L|\ge t$ there are $R\in \cal R$ and $L\in \cal L$ with $|\Types[\theta](R/L)|\le k$.
\end{definition}
Note that $|\Types[\theta](R/L)|\le k$ implies 
$|\Types[\hat\theta](L/R)|\le 2^k$, where $\hat\theta(\tup v;\tup u)=\theta(\tup u;\tup v)$.
So in the definition of regularity, we could 
equivalently require
both $|\Types[\theta](R/L)|\le k$ and $|\Types[\hat\theta](L/R)|\le k$,
as $k$ can be increased appropriately.

It is easy to see that 
if $\CC$ is regular class of ordered graphs 
then $\CC$ is $(k,t)$-simple for $k,t$ obtained from the definition of regularity applied to the formulas $\phi=y_1\le x\le y_2$ and $\psi=z_1\le x\le z_2$ and $\theta$ being the edge relation. This also holds for arbitrary classes of ordered binary structures, as $|\Types[\Sigma](R/L)|\le \prod_{S\in \Sigma}|\Types[S](R/L)|$.

\subsection{Main model-theoretic result}

\begin{figure}
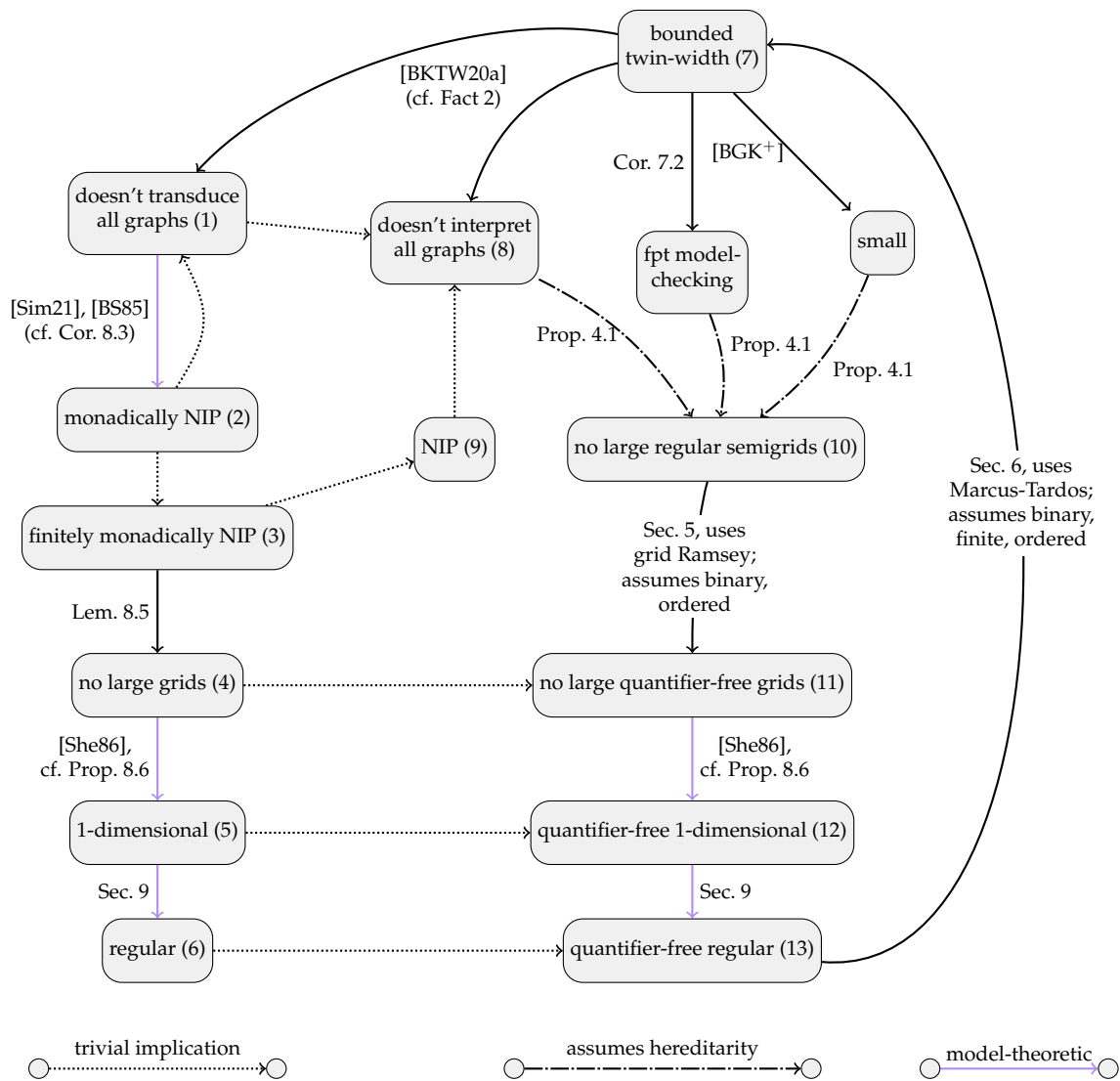

\begin{centering}
	\footnotesize
\ctikzfig{tikz}
\end{centering}\caption{Implications among the conditions in Theorems~\ref{thm:main2} and~\ref{thm:main3}. 
All implications apply to arbitrary classes of structures,
unless specified otherwise.
}\label{roadmap}
\end{figure}
We can finally state 
our main model-theoretic result. It extends Theorem~\ref{thm:summary-mt0}
stated in the introduction.
\begin{theorem}\label{thm:main2}
	Let $\CC$ be any class of structures over a relational signature and consider the following conditions:
	\begin{enumerate}
		
		\item\label{lit:all-graphs} $\CC$ does not transduce the class of all finite graphs,
		\item\label{lit:lin-monNIP} $\CC$ is monadically NIP,
		
		\item\label{lit:lin-fmNIP} $\CC$ is finitely monadically NIP,

		\item\label{lit:grid-or} $\CC$ does not define large grids,
		\item\label{lit:1-dim}$\CC$ is $1$-dimensional,
		
		\item\label{lit:regular}$\CC$ is regular.
	\end{enumerate}
	 Then the implications
	\eqref{lit:all-graphs}$\leftrightarrow$%
\eqref{lit:lin-monNIP}$\rightarrow$%
\eqref{lit:lin-fmNIP}$\rightarrow$%
\eqref{lit:grid-or}$\rightarrow$%
\eqref{lit:1-dim}$\rightarrow$%
\eqref{lit:regular} hold.
\end{theorem}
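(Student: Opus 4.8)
The plan is to assemble the chain of implications from the individual results already established in the excerpt, filling the one genuinely new link --- namely $\eqref{lit:1-dim}\rightarrow\eqref{lit:regular}$ --- and to package the model-theoretic dichotomy of Proposition~\ref{prop:fs dichotomy} so that it applies uniformly to classes rather than single models. I would organize the write-up as follows.

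First, the equivalence $\eqref{lit:all-graphs}\leftrightarrow\eqref{lit:lin-monNIP}$ is exactly Corollary~\ref{cor:nip}, so nothing new is needed there. The implication $\eqref{lit:lin-monNIP}\rightarrow\eqref{lit:lin-fmNIP}$ is immediate from the definitions, since expanding by finitely many \emph{finite} unary predicates is a special case of expanding by arbitrary unary predicates; I would state this in one line, noting that one does not even need the structures to be finite here (finite subsets are always a legitimate interpretation of a unary predicate). The implication $\eqref{lit:lin-fmNIP}\rightarrow\eqref{lit:grid-or}$ is Lemma~\ref{lem:no-large-grids}, whose proof is already given in the excerpt. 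The implication $\eqref{lit:grid-or}\rightarrow\eqref{lit:1-dim}$ follows from Proposition~\ref{prop:fs dichotomy} after a short compactness/elementary-closure argument: if some model $\str M$ in the elementary closure of $\CC$ were not $1$-dimensional, witnessed by $\str M\prec\str N$ and tuples $\tup a,\tup b,c$ with $\tup a\ind[\str M]\tup b$ but $\tup ac\nind[\str M]\tup b$ and $\tup a\nind[\str M]\tup bc$, then unwinding Definition~\ref{def:independence} produces a formula $\phi(\tup x,\tup y,z)$ with parameters from $\str M$ satisfied by $(\tup a,\tup b,c)$ such that $\phi(\tup x,\tup b,z)$ and $\phi(\tup x,\tup b,c)$ are not satisfiable in $\str M$; Proposition~\ref{prop:fs dichotomy} then yields a boolean combination $\psi$ of instances of $\phi$ defining large grids in $\str M$, hence in the elementary closure of $\CC$, hence (by the first-order expressibility of ``there is an $n\times n$-grid defined by $\psi$'', spelled out in the excerpt just before Definition~\ref{def:1-dim}) in $\CC$ itself --- contradicting $\eqref{lit:grid-or}$. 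A mild technical point to address: the parameters in $\phi$ must be absorbed either by adding them as constants or, since a class is closed under isomorphism but not under naming constants, by quantifying them existentially and noting that grid-definability is preserved; I would handle this by working in the elementary closure throughout and only transferring the \emph{parameter-free} sentence ``$\psi$ defines an $n\times n$-grid'' back down to $\CC$.

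The main new work is $\eqref{lit:1-dim}\rightarrow\eqref{lit:regular}$, and I expect this to be the principal obstacle. The approach is contrapositive and compactness-based: suppose $\CC$ is not regular, witnessed by formulas $\phi(x;\tup y)$, $\psi(x;\tup z)$, $\theta(\tup u;\tup v)$ for which no pair $(k,t)$ works; then for every $t$ and $k$ there is $\str S_{t,k}\in\CC$ with a $\phi$-definable disjoint family $\Rr$ and a $\psi$-definable disjoint family $\Ll$, both of size $\ge t$, such that every $R\in\Rr$, $L\in\Ll$ has $|\Types[\theta](R/L)|>k$. By a Ramsey/compactness argument --- choosing an indiscernible sequence of pairs $(R_i,L_i)$ and letting $t,k\to\infty$ --- I would build an infinite model $\str N$ in the elementary closure of $\CC$ carrying an infinite $\phi$-definable disjoint family $(R_i)_{i\in I}$ and an infinite $\psi$-definable disjoint family $(L_j)_{j\in J}$ such that for every $i,j$ the $\theta$-types realized in $R_i$ over $L_j$ are infinite in number; passing to a suitable elementary submodel $\str M\prec\str N$ (roughly, one containing the ``diagonal'' data but omitting enough to make the off-diagonal types fork), one exhibits elements $a$ (inside some $R_i$), a tuple $\tup b$ (parameters defining $R_i$ together with data from $L_j$), and an element $c$ such that $a\ind[\str M]\tup b$ yet the extra type information forces $ac\nind[\str M]\tup b$ and $a\nind[\str M]\tup bc$ simultaneously --- i.e.\ $\str M$ is not $1$-dimensional. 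Getting the bookkeeping of which parameters sit in $\tup b$ versus $c$, and ensuring the independences and non-independences hold on the nose, is the delicate part; the cleanest route is probably to first prove that $1$-dimensionality is equivalent to a ``$\theta$-type counting'' statement over models directly (a local version of regularity for a single formula $\theta$ and a single definable family), and then deduce the class-level statement by compactness, mirroring how $(k,t)$-simplicity was handled for ordered graphs.

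Finally, for the last sentence of the theorem (the added equivalences for binary ordered classes, with the hereditary closure having bounded twin-width / being NIP), I would close the loop using results already in place: regularity of a class of ordered binary structures implies $(k,t)$-simplicity of its finite induced substructures (the short computation noted right after Definition~\ref{def:regular}, using $|\Types[\Sigma](R/L)|\le\prod_{S\in\Sigma}|\Types[S](R/L)|$), which by Theorem~\ref{thm:k-t-simple} gives bounded twin-width of the hereditary closure; bounded twin-width implies monadic NIP by Corollary~\ref{cor:tww-monNIP} (hence NIP), closing back to $\eqref{lit:lin-monNIP}$; and NIP of the hereditary closure trivially implies NIP of the class, while the converse for the \emph{binary ordered} case goes through the grid theorem (Theorems~\ref{thm:core} and~\ref{thm:grids-binary}) plus Corollary~\ref{cor:fails-sigma}, exactly as in the proof outline of Section~\ref{sec:outline}. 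I would present this last part as a corollary-style paragraph citing those results rather than reproving anything.
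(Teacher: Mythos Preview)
Your proposal is correct and follows essentially the same route as the paper: the easy implications are dispatched exactly as you say (Corollary~\ref{cor:nip}, the trivial inclusion, Lemma~\ref{lem:no-large-grids}, Proposition~\ref{prop:fs dichotomy} plus the first-order expressibility of ``$\psi$ defines an $n\times n$ grid''), and the real work $\eqref{lit:1-dim}\rightarrow\eqref{lit:regular}$ is done contrapositively via compactness and indiscernibles, just as you outline. The paper's execution of that last step is slightly more structured than your sketch: it packages the Ramsey/compactness step into a \emph{preparation lemma} producing $\str M\prec\str N$ and tuples $\tup a_0,\tup a_1,\tup b_0$ with $\tup a_1\ind[\str M]\tup a_0\tup b_0$, disjointness of the $\phi$- and $\psi$-cells, and infinitely many $\theta$-types between $\phi(\str N,\tup a_1)$ and $\psi(\str N,\tup b_0)$; it then argues \emph{by contradiction assuming $1$-dimensionality}, using an inductive claim that $\tup a_1\tup a\ind[\str M]\tup a_0\tup b_0\tup b$ for all finite $\tup a\subset\phi(\str N,\tup a_1)$, $\tup b\subset\psi(\str N,\tup b_0)$ (each added element is forced to the correct side by disjointness), which is then defeated by a type-counting bound (your ``$\theta$-type counting'' intuition is exactly Corollary~\ref{cor:counting fs}/Lemma~\ref{lem:finite st}) --- so rather than directly exhibiting the $(\tup a,\tup b,c)$ witnessing failure of $1$-dimensionality, the paper shows the inductive closure property that $1$-dimensionality would entail and contradicts it.
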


\begin{proof}
The implications~\eqref{lit:all-graphs}$\leftrightarrow$\eqref{lit:lin-monNIP}$\rightarrow$\eqref{lit:lin-fmNIP}
follow from Corollary~\ref{cor:nip} and Lemma~\ref{lem:fin-ip}.
Implication \eqref{lit:lin-fmNIP}$\rightarrow$\eqref{lit:grid-or} is by Lemma~\ref{lem:no-large-grids}.
The implication~\eqref{lit:grid-or}$\rightarrow$\eqref{lit:1-dim} is Proposition~\ref{prop:fs dichotomy}.
The implication \eqref{lit:1-dim}$\rightarrow$\eqref{lit:regular} is relegated to Section~\ref{sec:main-proof} below.
\end{proof}


For a class $\CC$ of ordered structures over a binary signature we may
additionally prove the equivalence of all the above with $\CC$ having bounded twin-width and furthermore
improve some of the conditions above to obtain quantifier-free formulas, as follows.
See Fig.~\ref{roadmap} for a diagram of all the various considered conditions and implications among them.
\begin{theorem}		\label{thm:main3}
	Let $\CC$ be a class of ordered structures over a binary signature $\Sigma$ and $\CC'$ be the hereditary class of all its finite induced substructures.
	Then the following conditions are all equivalent to each other and to the conditions~\eqref{lit:all-graphs}-\eqref{lit:regular}:
		\begin{enumerate}\setcounter{enumi}{6}
			 \item\label{lit:tww} $\CC'$ has bounded twin-width,
			 
			 \item\label{lit:all-graphs'}$\CC'$  does not interpret the class of all finite graphs,
			 
			 \item\label{lit:nip}$\CC'$ is NIP,
			\item\label{lit:semigrids} $\CC'$ does not contain arbitrarily large regular semigrids,

			 \item\label{lit:grid-or'} no quantifier-free formula defines large grids in $\CC$,
		
			 \item(\emph{quantifier-free 1-dim}.)\label{lit:1-dim'}		let $\str M\prec \str N$ be models in the elementary closure of $\CC$, $\tup a\in\str N^{\tup x},\tup b\in\str N^{\tup y}$ tuples with  $\tup a\ind[\str M]\tup b$, and $c\in\str N$; then there is no quantifier-free formula $\phi(\tup x,\tup y,z)$ such that $\str N\models\phi(\tup a,\tup b,c)$ and such that both $\phi(\tup x,\tup b,z)$ and $\phi(\tup x,\tup b,c)$ are not satisfiable in $\str M$,
		\item\label{lit:grid-and}($(k,t)$-\emph{simple}) there are some $k,t\in\N$ such that for every structure $\str S\in \CC$ and two convex partitions $\Ll$ and $\Rr$ of $\str S$ into $t$ parts, there are $L\in\cal L$ and $R\in \cal R$ such that $|\Types[\Sigma](L/R)|\le k$ and 
		 ${|\Types[\Sigma](R/L)|\le k}$.
	\end{enumerate}
	
	\end{theorem}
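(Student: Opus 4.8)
The strategy is to close the cycle of implications by combining the results already assembled in the excerpt with the new contribution of the model-theoretic section. All the conditions \eqref{lit:all-graphs}--\eqref{lit:regular} of Theorem~\ref{thm:main2} are already known to be linearly ordered by implication. The extra content here is (a) that for ordered binary structures the chain closes back on itself, and (b) that the ``quantifier-free'' refinements \eqref{lit:grid-or'}, \eqref{lit:1-dim'} are equivalent to their general counterparts, and that these in turn feed into the combinatorial pipeline from Sections~\ref{sec:bad}--\ref{sec:grid-and-tww}. I would organize the proof as one big implication cycle, roughly:
\[
\eqref{lit:regular}\;\Rightarrow\;\eqref{lit:grid-and}\;\Rightarrow\;\eqref{lit:tww}\;\Rightarrow\;\eqref{lit:all-graphs'},\eqref{lit:nip},\eqref{lit:semigrids}\;\Rightarrow\;\eqref{lit:grid-or'}\;\Rightarrow\;\eqref{lit:1-dim'}\;\Rightarrow\;\eqref{lit:all-graphs},
\]
together with the already-established equivalences of Theorem~\ref{thm:main2}, so that everything collapses.

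For the individual arrows: \eqref{lit:regular}$\Rightarrow$\eqref{lit:grid-and} is the observation recorded right after Definition~\ref{def:regular}, applying regularity to the convex-interval formula $\phi(x;y_1,y_2)\equiv y_1\le x\le y_2$ and the same for $\psi$, with $\theta$ ranging over the (finitely many) relation symbols of $\Sigma$, and using $|\Types[\Sigma](R/L)|\le\prod_{S\in\Sigma}|\Types[S](R/L)|$. Then \eqref{lit:grid-and}$\Rightarrow$\eqref{lit:tww}: $(k,t)$-simplicity of every $\str S\in\CC$ gives $(k,t)$-simplicity of every finite induced substructure (the defining property only gets easier when passing to substructures, since a convex partition of the substructure extends to a convex partition of the ambient structure), and then Theorem~\ref{thm:k-t-simple} bounds the twin-width of $\CC'$ by $2^{2^{\Oof(kt)}}$. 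Next, bounded twin-width of $\CC'$ gives \eqref{lit:all-graphs'} and \eqref{lit:nip} directly from \cite{tww1,tww2} and Corollary~\ref{cor:tww-monNIP} (NIP is weaker than monadic NIP), and gives \eqref{lit:semigrids} by Corollary~\ref{cor:fails-sigma} (a class containing arbitrarily large regular semigrids has unbounded twin-width, contrapositive). The arrow \eqref{lit:semigrids}$\Rightarrow$\eqref{lit:grid-or'} is the contrapositive of Theorem~\ref{thm:grids-binary}: if some quantifier-free formula defined large grids in $\CC$ it would define them in the hereditary closure $\CC'$ (grids live in finite substructures), and then $\CC'$ would contain arbitrarily large regular semigrids. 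The arrow \eqref{lit:grid-or'}$\Rightarrow$\eqref{lit:1-dim'} is a quantifier-free-tracking version of Proposition~\ref{prop:fs dichotomy}: there, a Boolean combination $\psi$ of instances of $\phi$ defines large grids, and a Boolean combination of quantifier-free formulas is quantifier-free, so a quantifier-free witness to failure of \eqref{lit:1-dim'} yields a quantifier-free formula defining large grids. Finally \eqref{lit:1-dim'}$\Rightarrow$\eqref{lit:all-graphs}: here I would argue the contrapositive through Theorem~\ref{thm:main2} — if $\CC$ transduces all finite graphs it is not monadically NIP, hence not fmNIP, hence (Lemma~\ref{lem:no-large-grids} fails, i.e.) $\CC$ defines large grids; but in an \emph{ordered binary} class ``defines large grids'' upgrades to ``defines large grids by a quantifier-free formula'' — this is exactly the equivalence \eqref{lit:grid-or}$\Leftrightarrow$\eqref{lit:grid-or'} that must be extracted — and then running Proposition~\ref{prop:fs dichotomy} in a model of the elementary closure produces a quantifier-free witness against \eqref{lit:1-dim'}.

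The main obstacle, and the place where real work is needed rather than bookkeeping, is the upgrade from ``defines large grids'' to ``defines large grids by a quantifier-free formula'' for ordered binary structures — equivalently, the composite implication \eqref{lit:grid-or}$\Rightarrow$\eqref{lit:grid-or'} (one direction being trivial). A general formula $\phi(\tup x,\tup y,z)$ defining an $n\times n$ grid need not be quantifier-free, but over a binary signature one can hope to replace it: after passing to an elementary extension and a homogeneous/indiscernible arrangement of the grid (Lemma~\ref{lem:homogenize} and Ramsey), the truth value of $\phi(\tup a,\tup b,c)$ on grid elements should be determined by the finitely many atomic types among the coordinates of $\tup a,\tup b$ and $c$, since — by binarity and homogeneity — only pairwise atomic data is available and the grid structure pins down which pairwise relations occur. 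Concretely I expect one shows: on a sufficiently regular sub-grid, $\phi$ restricted to $A'\times B'\times C'$ agrees with a quantifier-free formula, essentially because the ``row'' and ``column'' coordinates can be recovered quantifier-freely from the semigrid's adjacency pattern (as in the explicit formulas $\pi_1,\pi_2,\rho$ of Lemma~\ref{lem:semigrid-formulas}). This is where the argument is genuinely specific to the binary, ordered setting and cannot be imported verbatim from Section~\ref{sec:equivalences}; I would expect it to be handled in the proof of Theorem~\ref{thm:core} together with the $\eqref{lit:1-dim}\Rightarrow\eqref{lit:regular}$ step deferred to Section~\ref{sec:main-proof}, and then cited here.
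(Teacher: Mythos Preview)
Your cycle routing has a genuine gap at the step \eqref{lit:1-dim'}$\Rightarrow$\eqref{lit:all-graphs}. You correctly identify that this would require the upgrade ``$\CC$ defines large grids $\Rightarrow$ some quantifier-free formula defines large grids in $\CC$'' (i.e.\ \eqref{lit:grid-or}$\Rightarrow$\eqref{lit:grid-or'}), and you correctly flag this as the main obstacle. But your sketch for overcoming it does not work: homogenizing the grid via Lemma~\ref{lem:homogenize} only controls the \emph{atomic} types of tuples in the grid, not the truth value of the arbitrary formula $\phi(\tup x,\tup y,z)$, which may use quantifiers ranging over the whole structure. Binarity of the signature does not force $\phi$ to depend only on pairwise atomic data once $\phi$ has quantifiers. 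So you have not established \eqref{lit:grid-or}$\Rightarrow$\eqref{lit:grid-or'} independently, and your cycle does not close.

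The paper avoids this entirely by closing the cycle along a different edge: it proves \eqref{lit:1-dim'}$\Rightarrow$\eqref{lit:grid-and} directly (this is the implication deferred to Section~\ref{sec:main-proof}). The point is that the argument for ``not regular $\Rightarrow$ not $1$-dimensional'' is run with the specific witnesses $\phi(x;y_1,y_2)\equiv y_1\le x\le y_2$ and $\psi(x;z_1,z_2)\equiv z_1\le x\le z_2$ and a quantifier-free $\theta$ (Remark~\ref{rem:qf-1}), and one then checks that every formula produced along the way is itself quantifier-free. The only nontrivial case is the formula $\zeta(\tup y;a,\tup e)\equiv\phi(a;\tup y)\land\neg\exists x.\phi(x;\tup y)\land\phi(x;\tup e)$ from Claim~\ref{cl:2}; but for intervals in a total order, ``the intervals $[y_1,y_2]$ and $[e_1,e_2]$ are disjoint'' is equivalent to the quantifier-free condition $(y_1>e_2)\lor(y_2<e_1)$, so $\zeta$ is equivalent to a quantifier-free formula. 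This yields a quantifier-free witness to the failure of \eqref{lit:1-dim'}, completing \eqref{lit:1-dim'}$\Rightarrow$\eqref{lit:grid-and}. The paper's cycle is then \eqref{lit:tww}$\rightarrow$\eqref{lit:all-graphs'}$\rightarrow$\eqref{lit:semigrids}$\rightarrow$\eqref{lit:grid-or'}$\rightarrow$\eqref{lit:1-dim'}$\rightarrow$\eqref{lit:grid-and}$\rightarrow$\eqref{lit:tww}, with the link to \eqref{lit:all-graphs}--\eqref{lit:regular} via \eqref{lit:tww}$\rightarrow$\eqref{lit:all-graphs}, \eqref{lit:lin-fmNIP}$\rightarrow$\eqref{lit:nip}$\rightarrow$\eqref{lit:all-graphs'}, and \eqref{lit:regular}$\rightarrow$\eqref{lit:grid-and}. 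So the ``quantifier-free upgrade'' you were hunting for is never proved as a stand-alone step; it falls out only \emph{a posteriori} once the full equivalence is established.
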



\begin{proof}
The implication \eqref{lit:tww}$\rightarrow$\eqref{lit:all-graphs'} is by Fact~\ref{fact:properties},
the implication
\eqref{lit:all-graphs'}
$\rightarrow$\eqref{lit:semigrids}
is by Corollary~\ref{cor:fails-sigma},
\eqref{lit:semigrids}
$\rightarrow$\eqref{lit:grid-or'} is by Theorem~\ref{thm:grids-binary}. 
The implication~\eqref{lit:grid-or'}$\rightarrow$\eqref{lit:1-dim'} 
is by the refined statement in Proposition~\ref{prop:fs dichotomy}.
The implication~\eqref{lit:1-dim'}$\rightarrow$\eqref{lit:grid-and} is proved in Section~\ref{sec:main-proof} below, together  with the implication \eqref{lit:1-dim}$\rightarrow$\eqref{lit:regular}.
The implication \eqref{lit:grid-and}$\rightarrow$\eqref{lit:tww} is by Theorem~\ref{thm:k-t-simple}. 
This proves the equivalence of the conditions~\eqref{lit:tww}-\eqref{lit:grid-and}.

Equivalence with the conditions \eqref{lit:all-graphs}-\eqref{lit:regular} follows since
\eqref{lit:tww}$\rightarrow$\eqref{lit:all-graphs} by Fact~\ref{fact:properties},
\eqref{lit:lin-fmNIP}$\rightarrow$\eqref{lit:nip}$\rightarrow$\eqref{lit:all-graphs'} are trivial,
and
 \eqref{lit:regular}$\rightarrow$\eqref{lit:grid-and} is immediate by the remark following Definition~\ref{def:regular},
as a convex partition of $\str S$ is a $\phi$-definable disjoint family of subsets of~$\str S$, for $\phi(x,y_1,y_2)=y_1\le x\le y_2$.

	\end{proof}
	

	Theorem~\ref{thm:main3} yields Theorem~\ref{thm:core}, and thus completes the proof of our two main results, Theorem~\ref{thm:intro} and  Theorem~\ref{thm:summary-mt0}.
It remains to prove implications \eqref{lit:1-dim}$\rightarrow$\eqref{lit:regular} and \eqref{lit:1-dim'}$\rightarrow$\eqref{lit:grid-and}.

\section{1-dimensional classes}\label{sec:main-proof}
In this section we prove that every 1-dimensional class is regular,  and prove the implication \eqref{lit:1-dim'}$\rightarrow$\eqref{lit:grid-and} in Theorem~\ref{thm:main3}.
We start with very briefly introducing some notions from model theory. A more comprehensive overview is presented in Appendix~\ref{app:model-theory}.

\subsection{On independence}

Let $\str M\prec \str N$.
For two sets $A,B\subset \str N$, write $A\ind[\str M] B$ if $\tup a\ind[\str M] B$ for all tuples $\tup a$ of elements of $A$.
Note that $A\ind[\str M]B$ if and only if $\tup a\ind[\str M]\tup b$ for all tuples $\tup a$ in $A$ and $\tup b$ in $B$.

\begin{lemma}\label{lem:type determination}
	Fix a formula $\theta(\tup u;\tup v)$.
	Suppose $A\ind[\str M]B$. Then for all $\tup b\in B^{\tup u}$, the type $\tp^\theta(\tup b/A)$ depends only on the type $\tp^\theta(\tup b/\str M)$. More precisely, 
	there is a function $f\from \Types[\theta](B/\str M)\to \Types[\theta](B/\str A)$ such that 
\[\tp^\theta(\tup b/A)=f(\tp^{\theta}(\tup b/\str M)).\]
\end{lemma}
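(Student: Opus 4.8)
The plan is to establish the key fact that $\tp^\theta(\tup b/A)$ depends only on $\tp^\theta(\tup b/\str M)$, and then extract the function $f$ from this. First I would fix $\tup b,\tup b'\in B^{\tup u}$ with $\tp^\theta(\tup b/\str M)=\tp^\theta(\tup b'/\str M)$, and aim to show $\tp^\theta(\tup b/A)=\tp^\theta(\tup b'/A)$. Unfolding the definitions, this amounts to showing that for every $\tup a\in A^{\tup v}$ we have $\str N\models\theta(\tup a;\tup b)\iff\str N\models\theta(\tup a;\tup b')$. Suppose not: say $\str N\models\theta(\tup a;\tup b)$ but $\str N\models\neg\theta(\tup a;\tup b')$ for some $\tup a$ in $A$. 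Consider the formula with parameters $\tup b\tup b'$ (over $B\cup\str M$, in fact over $B$): $\chi(\tup v):=\theta(\tup v;\tup b)\land\neg\theta(\tup v;\tup b')$. Then $\str N\models\chi(\tup a)$, and $\tup a$ is a tuple of elements of $A$, so (enumerating $A$ by a tuple, or using $A\ind[\str M]B$ directly since $\chi$ has parameters from $B$) the independence $\tup a\ind[\str M]B$ gives some $\tup c\in\str M^{\tup v}$ with $\str N\models\chi(\tup c)$, i.e. $\str N\models\theta(\tup c;\tup b)\land\neg\theta(\tup c;\tup b')$.

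But now $\tup c$ is a tuple of parameters from $\str M$, so the formula $\psi(\tup v\text{-instantiated})$... more precisely, consider $\theta(\tup c;\tup u)$ as a formula with parameters $\tup c$ from $\str M$. Since $\tp^\theta(\tup b/\str M)=\tp^\theta(\tup b'/\str M)$ and $\tup c\in\str M^{\tup v}$, by definition of $\theta$-type over $\str M$ we have $\str N\models\theta(\tup c;\tup b)\iff\str N\models\theta(\tup c;\tup b')$. This contradicts $\str N\models\theta(\tup c;\tup b)\land\neg\theta(\tup c;\tup b')$. Hence no such $\tup a$ exists, and $\tp^\theta(\tup b/A)=\tp^\theta(\tup b'/A)$, as desired.

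Having shown that $\tp^\theta(\tup b/A)$ is determined by $\tp^\theta(\tup b/\str M)$, I would define $f\from\Types[\theta](B/\str M)\to\Types[\theta](B/A)$ by setting $f(p)=\tp^\theta(\tup b/A)$ for any $\tup b\in B^{\tup u}$ with $\tp^\theta(\tup b/\str M)=p$; this is well-defined by the previous paragraph, and surjectivity onto $\Types[\theta](B/A)$ (if one wants it, though the statement only asks for a function) is clear since every element of $\Types[\theta](B/A)$ is of the form $\tp^\theta(\tup b/A)$ for some $\tup b\in B^{\tup u}$. Then by construction $\tp^\theta(\tup b/A)=f(\tp^\theta(\tup b/\str M))$ for all $\tup b\in B^{\tup u}$, which is the claim.

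I do not expect a serious obstacle here: the only subtlety is being careful about which side carries the parameters when invoking $\ind[\str M]$ — the definition of $\tup a\ind[\str M]B$ speaks of formulas with parameters from $B\cup\str M$ satisfied by $\tup a$, and $\chi$ above is exactly such a formula, so the application is direct. One minor point to spell out is the reduction from "$\tup a$ a tuple of elements of $A$" to the hypothesis $A\ind[\str M]B$: by the definition of $A\ind[\str M]B$ (all tuples $\tup a$ of elements of $A$ satisfy $\tup a\ind[\str M]B$), this is immediate. If the surjectivity/well-definedness bookkeeping needs the cardinality remark, note it mirrors the observation made just before Definition~\ref{def:regular} that bounds on one side of $\Types$ control the other; but for the present lemma only the function $f$ is needed.
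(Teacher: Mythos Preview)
Your proof is correct and follows essentially the same approach as the paper: both argue the contrapositive by taking a witness $\tup a\in A^{\tup v}$ distinguishing $\tp^\theta(\tup b/A)$ from $\tp^\theta(\tup b'/A)$ and using $A\ind[\str M]B$ to transport it to some $\tup m\in\str M^{\tup v}$, which then distinguishes the $\theta$-types over $\str M$. One cosmetic point: you have swapped the argument order of $\theta$ throughout (writing $\theta(\tup a;\tup b)$ where the convention $\theta(\tup u;\tup v)$ with $\tup b\in B^{\tup u}$, $\tup a\in A^{\tup v}$ calls for $\theta(\tup b;\tup a)$), so your formula should read $\chi(\tup v):=\theta(\tup b;\tup v)\land\neg\theta(\tup b';\tup v)$; this does not affect the argument.
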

\begin{proof}
	We show that if $\tp^\theta(\tup b/A)\neq \tp^\theta(\tup b'/A)$ then $\tp^\theta(\tup b/\str M)\neq \tp^\theta(\tup b/\str M)$.
	If $\tp^\theta(b/A)\neq \tp^\theta(b'/A)$ then there is some $\tup a\in A^{\tup v}$ such that $\theta(\tup b,\tup a)\triangle \theta(\tup b',\tup a)$. By $A\ind[\str M]B$ there is some $\tup m\in \str M^{\tup v}$ such that $\theta(\tup b,\tup m)\triangle \theta(\tup b',\tup m)$, implying $\tp^\theta(\tup b/A)\neq \tp^\theta(\tup b'/A)$.
\end{proof}

As $|\Types(B/\str M)|\le 2^{|\str M|}$, we get a bound on the number of types.
\begin{corollary}
	\label{cor:counting fs}
		Fix a model $\str M$ and its elementary extension $\str N$.
		Let $A,B\subset \str N$ be 
	such that $A\ind[\str M]B$. For any formula $\theta(\tup u;\tup v)$
	the set $\Types[\theta](B/A)$
	has cardinality at most $2^{|\str M|}$.
\end{corollary}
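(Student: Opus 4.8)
The plan is to derive this directly from Lemma~\ref{lem:type determination}, so the proof will be very short. I would begin by fixing the formula $\theta(\tup u;\tup v)$ together with the sets $A,B\subset\str N$ satisfying $A\ind[\str M]B$, and apply Lemma~\ref{lem:type determination} to $\theta$. It yields a function $f\from\Types[\theta](B/\str M)\to\Types[\theta](B/A)$ with $\tp^\theta(\tup b/A)=f(\tp^\theta(\tup b/\str M))$ for every $\tup b\in B^{\tup u}$. Since, by definition, every element of $\Types[\theta](B/A)$ is of the form $\tp^\theta(\tup b/A)$ for some $\tup b\in B^{\tup u}$, and therefore equals $f(\tp^\theta(\tup b/\str M))$, the map $f$ is surjective; hence $|\Types[\theta](B/A)|\le|\Types[\theta](B/\str M)|$ and it suffices to bound the number of $\theta$-types over $\str M$.

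The second step is to bound $|\Types[\theta](B/\str M)|$. For $\tup b\in B^{\tup u}$ the type $\tp^\theta(\tup b/\str M)=\setof{\tup m\in\str M^{\tup v}}{\str N\models\theta(\tup b;\tup m)}$ is a subset of $\str M^{\tup v}$, so $\Types[\theta](B/\str M)$ embeds into the powerset of $\str M^{\tup v}$ and thus has size at most $2^{|\str M^{\tup v}|}$. Because $\tup v$ is a finite tuple and $\str M$, being a model, is infinite, $|\str M^{\tup v}|=|\str M|$, and combining the two inequalities gives $|\Types[\theta](B/A)|\le 2^{|\str M|}$, as required.

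I do not anticipate any real obstacle here: all the work has already been done in Lemma~\ref{lem:type determination}, and the only mildly delicate point is the cardinal-arithmetic identity $|\str M^{\tup v}|=|\str M|$, which relies on $\str M$ being infinite; if one wished the statement to cover finite $\str M$ as well, one would instead record the bound $2^{|\str M|^{|\tup v|}}$.
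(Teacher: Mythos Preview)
Your argument is correct and is exactly the one the paper has in mind: the paper records the corollary immediately after Lemma~\ref{lem:type determination} with the single remark ``As $|\Types(B/\str M)|\le 2^{|\str M|}$, we get a bound on the number of types,'' and your proof simply spells out this sentence. Your observation about the cardinal-arithmetic step $|\str M^{\tup v}|=|\str M|$ relying on $\str M$ being infinite is a fair caveat, and the bound $2^{|\str M|^{|\tup v|}}$ you note for the general case is the honest one.
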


The above bound can be leveraged to give a finite bound on the size of a set of types, as follows.

\begin{lemma}\label{lem:finite st}
    Fix a model $\str M$ and its elementary extension $\str N.$ Let $\phi(x)$ and $\psi(x)$ be two formulas with parameters from $\str N$.
    Suppose that for every elementary extension $\str N'$ of $\str N$,
     \[\phi(\str N')\ind[\str M]\psi(\str N').\]
    Then for every formula
     $\theta(\tup u;\tup v)$ the set
    $\Types[\theta](\psi(\str N)/\phi(\str N))$ is finite.
    \end{lemma}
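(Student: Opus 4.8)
The idea is to argue by contradiction via compactness: if $\Types[\theta](\psi(\str N)/\phi(\str N))$ were infinite for some $\theta(\tup u;\tup v)$, then by Corollary~\ref{cor:counting fs} applied to a suitable elementary extension we would nonetheless have a bound of $2^{|\str M|}$ on the number of $\theta$-types of $\psi(\str N')$ over $\phi(\str N')$ for every elementary extension $\str N'$ of $\str N$; the point is to realize ``too many'' types in a single elementary extension and derive a contradiction. More precisely, the plan is as follows.

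First I would suppose toward a contradiction that $\Types[\theta](\psi(\str N)/\phi(\str N))$ is infinite. Since $\str N$ itself need not be saturated, infinitude of the type-set is witnessed by the fact that for every $n$ there exist $\tup b_1,\dots,\tup b_n\in\psi(\str N)^{\tup u}$ that are pairwise $\theta$-inequivalent over $\phi(\str N)$, i.e.\ for each $i\ne j$ there is $\tup a_{ij}\in\phi(\str N)^{\tup v}$ with $\str N\models \theta(\tup b_i,\tup a_{ij})\,\triangle\,\theta(\tup b_j,\tup a_{ij})$. Now I would pass to an elementary extension $\str N'\succ\str N$ which is $|\str M|^+$-saturated (such an extension exists by standard model theory — see Appendix~\ref{app:model-theory}). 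The set witnessing infinitely many $\theta$-types is first-order expressible level by level, so by the elementary extension property $\str N'$ contains, for every finite $n$, tuples $\tup b_1,\dots,\tup b_n$ in $\psi(\str N')$ pairwise $\theta$-inequivalent over $\phi(\str N')$. In fact, using saturation of $\str N'$ and compactness one obtains a \emph{single} infinite family $(\tup b_i)_{i<|\str M|^+}$ of elements of $\psi(\str N')^{\tup u}$ that are pairwise $\theta$-inequivalent over $\phi(\str N')$: the relevant type in the variables $(\tup b_i)_{i<|\str M|^+}$ asserting pairwise $\theta$-inequivalence over $\phi$ is finitely satisfiable (each finite fragment is realized in $\str N$ as above), hence realized in the $|\str M|^+$-saturated model $\str N'$.

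But then $\Types[\theta](\psi(\str N')/\phi(\str N'))$ has cardinality at least $|\str M|^+>2^{|\str M|}$, contradicting Corollary~\ref{cor:counting fs}, which applies because by hypothesis $\phi(\str N')\ind[\str M]\psi(\str N')$ and hence $\Types[\theta](\psi(\str N')/\phi(\str N'))$ has cardinality at most $2^{|\str M|}$. This contradiction shows $\Types[\theta](\psi(\str N)/\phi(\str N))$ is finite, completing the proof.

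The main obstacle I anticipate is purely bookkeeping with cardinalities and saturation: one needs an elementary extension $\str N'$ of $\str N$ that is sufficiently saturated (of saturation degree strictly above $2^{|\str M|}$, e.g.\ $(2^{|\str M|})^+$-saturated) so that ``$>2^{|\str M|}$ pairwise distinct $\theta$-types over a small set'' can genuinely be realized in $\str N'$, and then to check carefully that the hypothesis $\phi(\str N')\ind[\str M]\psi(\str N')$ — which is assumed to hold for \emph{every} elementary extension $\str N'$ of $\str N$ — indeed applies to this particular $\str N'$. The rest is a routine application of compactness plus the counting bound from Corollary~\ref{cor:counting fs}.
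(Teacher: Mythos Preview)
Your approach is essentially the same as the paper's: assume infinitely many $\theta$-types, use compactness to pass to an elementary extension $\str N'$ with more than $2^{|\str M|}$ such types, and contradict Corollary~\ref{cor:counting fs}. One genuine slip: the inequality $|\str M|^+>2^{|\str M|}$ is false (Cantor gives $|\str M|^+\le 2^{|\str M|}$), so the argument as written in your main paragraph does not yield a contradiction; you do catch this in your final paragraph, and with $(2^{|\str M|})^+$ in place of $|\str M|^+$ throughout the argument is fine. A minor sloppiness: $\kappa$-saturation realizes types in finitely many variables over parameter sets of size $<\kappa$, not types in $\kappa$ many variables at once, so you should really build the family $(\tup b_i)_{i<\kappa}$ by transfinite recursion, realizing one new $E_\theta$-inequivalent tuple at each stage.

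The paper's proof packages the same idea a bit more cleanly. It observes that ``having the same $\theta$-type over $\phi(\str N)$'' is \emph{definable}: the formula
\[
E_\theta(\tup u,\tup u')\ \equiv\ \psi(\tup u)\land\psi(\tup u')\land \forall\tup v\,\bigl(\phi(\tup v)\rightarrow(\theta(\tup u;\tup v)\leftrightarrow\theta(\tup u';\tup v))\bigr)
\]
is an equivalence relation whose classes are exactly the $\theta$-types of $\psi(\str N)$ over $\phi(\str N)$. If this has infinitely many classes, then Lemma~\ref{lem:many-classes} (a direct compactness argument with constants) gives an elementary extension $\str N'\succ\str N$ in which $E_\theta$ has more than $2^{|\str M|}$ classes, and Corollary~\ref{cor:counting fs} finishes. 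This avoids saturation and the transfinite construction entirely; the definability of $E_\theta$ is the one idea that streamlines your argument.
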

    \begin{proof}Fix a formula $\theta(\tup u;\tup v)$.
		Let $E_{\theta}(\tup u;\tup u')$ be the formula defining the equivalence relation on $\psi(\str N)^{\tup u}$ such that 
		\[E_\theta(\tup a,\tup a')\quad\Leftrightarrow\quad\tp^\theta(\bar a/\phi(\str N)) = \tp^\theta(\bar a'/\phi(\str N))\qquad\qquad\text{for  $\bar a,\bar a'\in\psi(\str N)^{\tup u}$}.\] More precisely, 
        \[E_\theta(\tup u;\tup u')\quad\equiv\quad \psi(\tup u)\land \psi(\tup u')\land \forall \tup v. \phi(\tup v)\implies (\theta(\tup u;\tup v)\iff \theta(\tup u';\tup v)).\]

        Suppose $E_\theta$ defines infinitely many classes in $\str N$.
        By compactness  there is an elementary extension $\str N'$ of $\str N$ such that  $E_\theta$ induces more than $2^{{|M|}}$ equivalence classes in~$\str N'$ (cf. Lemma~\ref{lem:many-classes}). 
By assumption, 
        \[\phi(\str N') \ind[\str M] \psi(\str N').\]
        By Corollary~\ref{cor:counting fs}, $\Types[\theta](\psi(\str N')/\phi(\str N'))$ has cardinality at most $2^{{|M|}}$, 
        which is in contradiction with the number of classes of $E_\theta$ in $\str N'$.
    \end{proof}
    
\begin{remark}\label{rem:triangle}
	To arrive at the conclusion of  Lemma~\ref{lem:finite st}, the assumption $\phi(\str N') \ind[\str M]\psi(\str N')$ can be weakened as follows:
	 for all $\tup b,\tup b'\in\str M^{\tup u}$ and	
	$\sigma(\tup v)=\theta(\tup b;\tup v)\triangle \theta(\tup b';\tup v)$, if $\sigma(\tup v)$ is satisfiable in $\phi(\str N')$ then it is satisfiable in $\str M$. The same proof as above works, since $\sigma$ is the only formula that appears in the proof of Lemma~\ref{lem:type determination}.
\end{remark}

\subsection{Proof of \eqref{lit:1-dim}$\rightarrow$\eqref{lit:regular} and \eqref{lit:1-dim'}$\rightarrow$\eqref{lit:grid-and}}
Let $\CC$ be a class which is not regular. We  show that $\CC$ is not 1-dimensional.
We first construct a structure $\str M$ in the elementary closure of $\CC$ that exhibits the lack of regularity in a handy way. 

\medskip

Say that a class $\CC$ is not regular \emph{as witnessed} by formulas 
$\phi(x,\tup y),\psi(x,\tup z)$, $\theta(\tup u,\tup v)$
if for all $k,t\in\N$ there are:
\begin{itemize}
	\item a structure $\str S\in\CC$,  
	\item a $\phi$-definable disjoint family $\cal R\subset P(\str S)$ with $|\cal R|\ge t$,
	\item a $\psi$-definable disjoint family $\cal L\subset P(\str S)$ with $|\cal L|\ge t$,
\end{itemize}
such that
\[|\Types[\theta](R/L)|>k\qquad\text{for all $R\in\cal R$ and $L\in\cal L$.}\]
If $\CC$ is not regular then there are some $\phi,\psi,\theta$ that witness it.
\begin{remark}\label{rem:qf-1}
	Suppose $\CC$ is a class of ordered  binary structures which is not $(k,t)$-simple for all $k,t\in\N$.
	That is,
	for all $k,t\in\N$ there is a structure $\str S\in \CC$ and two convex partitions $\Ll$ and $\Rr$ of $\str S$ into $t$ parts, there are no $L\in\cal L$ and $R\in \cal R$ such that $|\Types[\Sigma](L/R)|\le k$ and 
	${|\Types[\Sigma](R/L)|\le k}$.
	Then $\CC$ is not regular as witnessed by  $\phi(x,y_1,y_2)\equiv y_1\le x\le y_2$ and $\psi(x,z_1,z_2)\equiv z_1\le x\le z_2$ and $\theta(u,v)$ a quantifier-free $\Sigma$-formula with two variables.
\end{remark}

\begin{lemma}\label{lem:preparation}
	Suppose $\CC$ is not regular, as witnessed by formulas $\phi(x,\tup y),\psi(x,\tup z),\theta(\tup u,\tup v)$.
	Then there exist:
    \begin{itemize}
        \item  a structure $\str M$ in the elementary closure of $\CC$, 
        \item an elementary extension $\str N$ of $\str M$,
        \item tuples $\tup a_0,\tup a_1\in \str N^{\tup y}$ and $\tup b_0\in\str N^{\tup z}$,
    \end{itemize}
    such that the following properties hold:
    \begin{enumerate}
        \item  $\tp(\tup a_0/\str M)=\tp(\tup a_1/\str M)$,
        \item\label{it:many-types} 
        the set $\Types[\theta](B/A)$ is infinite, where $A=\phi(\str N,\tup a_1)$ and $B=\psi(\str N,\tup b_0)$,
        \item\label{it:ind-base} $\tup a_1 \ind[\str M]\tup a_{0}\tup b_{0}$,
        \item  $\phi(x;\tup a_0)\land \phi(x,\tup a_1)$ has no solution in $\str N$,
        \item $\psi(x;\tup b_0)$ has no solution in $\str M$.
    \end{enumerate}    
\end{lemma}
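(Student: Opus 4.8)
\textbf{Proof plan for Lemma~\ref{lem:preparation}.}

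The plan is to build $\str M$, $\str N$, and the required tuples by a compactness argument, carefully packaging the failure of regularity into a single first-order theory. First I would fix formulas $\phi(x,\tup y),\psi(x,\tup z),\theta(\tup u,\tup v)$ witnessing non-regularity, and for each $n\in\N$ pick a structure $\str S_n\in\CC$ together with disjoint families $\cal R_n=\set{\phi(\str S_n,\tup a^n_i)}_{i<n}$ and $\cal L_n=\set{\psi(\str S_n,\tup b^n_j)}_{j<n}$ such that $|\Types[\theta](R/L)|>n$ for every $R\in\cal R_n$, $L\in\cal L_n$. Passing to an ultraproduct (or a suitable elementary limit) of the $\str S_n$ along a nonprincipal ultrafilter, I obtain a structure $\str M_0$ in the elementary closure of $\CC$ equipped with sequences of parameter-tuples $(\tup a_i)_{i\in I}$, $(\tup b_j)_{j\in J}$ over an infinite index set, such that the families $\set{\phi(\str M_0,\tup a_i)}_i$ and $\set{\psi(\str M_0,\tup b_j)}_j$ are still pairwise disjoint and $\Types[\theta](\phi(\str M_0,\tup a_i)/\psi(\str M_0,\tup b_j))$ is infinite for all $i,j$ (infiniteness is preserved because ``more than $n$ $\theta$-types'' is first-order expressible for each fixed $n$). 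The disjointness of the families should be arranged to include that $\phi(x;\tup a_i)\wedge\phi(x;\tup a_{i'})$ has no solution for $i\ne i'$, which gives property~(4), and by throwing away indices I may also assume $\psi(x;\tup b_j)$ is nonempty but I will later need it \emph{not} to be satisfiable in the small model $\str M$ -- this is the delicate point, handled below.

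Next I would extract indiscernibility. By Ramsey and compactness, I can find an elementary extension of $\str M_0$ containing an infinite sequence $(\tup a_i)_{i\in\omega}$ (and a single $\tup b_0$ chosen from the $\cal L$-side) that is \emph{indiscernible} over the remaining parameters, while keeping the two combinatorial properties: the $\phi(\cdot,\tup a_i)$ are pairwise disjoint, and $\Types[\theta](\phi(\cdot,\tup a_i)/\psi(\cdot,\tup b_0))$ is infinite. Now set $\str M$ to be a small model containing $\tup b_0$ and the ``tail'' of the indiscernible sequence, and take $\tup a_0=\tup a_{\omega}$ and $\tup a_1=\tup a_{\omega+1}$ to be two fresh elements of the indiscernible sequence lying beyond everything in $\str M$, inside a larger elementary extension $\str N$. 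Indiscernibility gives $\tp(\tup a_0/\str M)=\tp(\tup a_1/\str M)$, which is property~(1). The standard fact that ``the final segment of an indiscernible sequence is independent over a model containing its initial segment'' (i.e.\ $\tup a_1\ind[\str M]\tup a_0\tup b_0$, using that $\tup b_0,\tup a_{<\omega}\subseteq\str M$ and $\tup a_0=\tup a_\omega$, $\tup a_1=\tup a_{\omega+1}$) yields property~(3) -- here I'd invoke the characterisation of coheir/finitely-satisfiable independence via indiscernibles, which is the appropriate reading of $\ind[\str M]$ from Definition~\ref{def:independence}. Property~(2), infinitude of $\Types[\theta](B/A)$ with $A=\phi(\str N,\tup a_1)$, $B=\psi(\str N,\tup b_0)$, is arranged to persist because it was true of each finite approximation and is a first-order-expressible lower bound on the number of $\theta$-classes, which survives the elementary extension to $\str N$ (and can only grow).

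The main obstacle I anticipate is simultaneously securing properties~(4) and (5) while \emph{also} keeping (2). Property~(5), that $\psi(x;\tup b_0)$ has no solution in $\str M$, is in tension with $\str M$ being large and $\tup b_0\in\str M$: generically $\psi(\str M,\tup b_0)$ is nonempty. The fix is to choose $\str M$ cleverly -- rather than taking a full elementary submodel, take $\str M$ to be an elementary substructure \emph{omitting} the set $\psi(\str N,\tup b_0)$, which is possible by an omitting-types / Löwenheim--Skolem argument provided $\psi(\str N,\tup b_0)$ is not ``essential'' to the tail of the indiscernible sequence; concretely, one builds $\str M$ as the union of a countable chain, at each stage adding witnesses for all existential formulas \emph{except} refusing to add any element of $\psi(\str N,\tup b_0)$, which works because the relevant parameters $\tup b_0$ and the tail $\tup a_{<\omega}$ can be kept disjoint from $\psi(\str N,\tup b_0)$ by the disjointness of the $\cal R$- and $\cal L$-families together with a genericity choice of $\tup b_0$. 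Dually, (4) is ensured by the disjointness of the $\phi(\cdot,\tup a_i)$'s already built into $\str M_0$, pushed up to $\str N$. Once all five are in place, the lemma is proved; in the subsequent argument (the actual proof of \eqref{lit:1-dim}$\rightarrow$\eqref{lit:regular}) one then applies Lemma~\ref{lem:finite st} together with Remark~\ref{rem:triangle} to derive a contradiction with $1$-dimensionality, using that $\tup a_1\ind[\str M]\tup b_0$ but adding a single element from $A$ or from $B$ destroys independence.
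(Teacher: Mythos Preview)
Your overall architecture --- compactness to pass to the elementary closure, then Ramsey to extract indiscernibles, then realize $\tup a_0,\tup a_1$ as fresh elements of the sequence beyond a small model --- is the right shape and matches the paper's Step~1. The gap is in how you handle $\tup b_0$, and it is fatal as written.

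You put $\tup b_0$ \emph{inside} $\str M$ and then try to arrange $\psi(\str M;\tup b_0)=\emptyset$ by an omitting-types construction. This cannot work: $\psi(x;\tup b_0)$ is a single formula, not a non-isolated type, and for property~(2) you need $B=\psi(\str N;\tup b_0)$ to be nonempty (indeed infinite). Hence $\str N\models\exists x.\psi(x;\tup b_0)$, and since $\tup b_0\in\str M\prec\str N$ this forces $\str M\models\exists x.\psi(x;\tup b_0)$, i.e.\ $\psi(\str M;\tup b_0)\neq\emptyset$. No elementary substructure containing $\tup b_0$ can omit the realizations of a consistent formula over $\tup b_0$. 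Your treatment of property~(3) also leans on $\tup b_0\in\str M$ (you write ``$\tup b_0,\tup a_{<\omega}\subseteq\str M$''), so the whole endgame collapses once $\tup b_0$ is moved out.

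The fix, and what the paper actually does, is to treat the $\tup b$-side symmetrically to the $\tup a$-side: extract \emph{two} mutually indiscernible sequences $(\tup a_i)_{i<\omega}$ and $(\tup b_j)_{j<\omega}$ with both $\{\phi(x;\tup a_i)\}_i$ and $\{\psi(x;\tup b_j)\}_j$ pairwise inconsistent, and then find a model $\str M$ over which both are Morley sequences of types finitely satisfiable in $\str M$ (this is the content of the paper's Lemma~\ref{lem:limit types}). Now $\tup b_0\notin\str M$, and property~(5) drops out cleanly: if some $m\in\str M$ satisfied $\psi(m;\tup b_0)$, then since $\tp(\tup b_0/\str M)=\tp(\tup b_1/\str M)$ also $\psi(m;\tup b_1)$ holds, contradicting pairwise inconsistency of $\psi(x;\tup b_0)$ and $\psi(x;\tup b_1)$. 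Property~(3), $\tup a_1\ind[\str M]\tup a_0\tup b_0$, then follows directly from $(\tup a_i)$ being a Morley sequence over $\str M\cup\{\tup b_j:j<\omega\}$, which is exactly finite satisfiability in $\str M$ of $\tp(\tup a_1/\str M\tup a_0\tup b_0)$ --- no separate ``tail independence'' folklore needed.
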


The proof of the lemma is a  standard application of basic tools from model theory: compactness, (mutually) indiscernible sequences and Morley sequences, which are recalled in Appendix~\ref{app:model-theory}. The proof of Lemma~\ref{lem:preparation} is in Appendix~\ref{app:preparation-lemma}. Using the lemma, we now show that $\CC$ is not $1$-dimensional.



We use the notation from Lemma~\ref{lem:preparation}.
Let $\str M$ and $\str N$ be as in Lemma~\ref{lem:preparation}. 
By (2) and  Lemma~\ref{lem:finite st} there is an elementary extension $\str N'$ of $\str N$ 
and tuples $\tup a\in \phi(\str N',\tup a_1)$,
$\tup b\in\psi(\str N',\tup b_0)$
such that $\tup a\nind[\str M]\tup b$.
We show that if $\str M$ is $1$-dimensional then
$\tup a_1\tup a\ind[\str M]\tup a_0\tup b_0\tup b$, implying $\tup a\ind[\str M]\tup b$, contrary to what was just stated.



\begin{claim}\label{cl:2}
	Let $a\in \phi(\str N',\tup a_1)$. Then $\tup a_1\nind[\str M]\tup a_0 a$.
\end{claim}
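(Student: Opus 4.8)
The plan is to prove Claim~\ref{cl:2} by producing an explicit formula, with parameters among $\tup a_0$, $a$ and $\str M$, that is satisfied by $\tup a_1$ in $\str N'$ but is realized by no tuple from $\str M$; by Definition~\ref{def:independence} this is precisely what $\tup a_1\nind[\str M]\tup a_0 a$ means. The only hypotheses from Lemma~\ref{lem:preparation} I expect to use are (1), that $\tp(\tup a_0/\str M)=\tp(\tup a_1/\str M)$, and (4), the unsatisfiability of $\phi(x;\tup a_0)\land\phi(x;\tup a_1)$. Since $\tup a_0,\tup a_1\in\str N$ and $\str M\prec\str N\prec\str N'$, elementarity upgrades (4) to $\phi(\str N',\tup a_0)\cap\phi(\str N',\tup a_1)=\emptyset$; in particular $a\notin\phi(\str N',\tup a_0)$.

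I would take
\[
\chi(\tup y)\;\equiv\;\phi(a,\tup y)\;\land\;\lnot\exists x'\bigl(\phi(x',\tup y)\land\phi(x',\tup a_0)\bigr),
\]
whose parameters are $a$ and $\tup a_0$ (both allowed). That $\str N'\models\chi(\tup a_1)$ is immediate: the first conjunct holds because $a\in\phi(\str N',\tup a_1)$, and the second because $\phi(\str N',\tup a_1)$ and $\phi(\str N',\tup a_0)$ are disjoint. The one genuinely delicate point is the choice of $\chi$: the naive candidate $\phi(a,\tup y)$ on its own need not work, since nothing prevents some $\tup c\in\str M^{\tup y}$ from having $a\in\phi(\str N',\tup c)$ (think of a ``large'' $\str M$-definable set containing $a$), so the extra conjunct — recording that $\phi(\str N',\tup y)$ is disjoint from $\phi(\str N',\tup a_0)$ — is exactly what lets $\tup a_0$ and hypothesis (4) enter the argument.

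The core step is to show $\chi$ has no realization in $\str M$. Suppose $\tup c\in\str M^{\tup y}$ with $\str N'\models\chi(\tup c)$; then $\str N'\models\phi(a,\tup c)$ and $\phi(\str N',\tup c)\cap\phi(\str N',\tup a_0)=\emptyset$. Consider $\rho(\tup y)\equiv\exists x\,\bigl(\phi(x,\tup c)\land\phi(x,\tup y)\bigr)$, a formula whose parameters $\tup c$ lie entirely in $\str M$. Taking $x:=a$ (which satisfies $\phi(a,\tup c)$ by hypothesis on $\tup c$ and $\phi(a,\tup a_1)$ since $a\in\phi(\str N',\tup a_1)$) gives $\str N'\models\rho(\tup a_1)$; since $\rho$ is a formula over $\str M$, hypothesis (1) yields $\str N'\models\rho(\tup a_0)$, i.e.\ $\phi(\str N',\tup c)\cap\phi(\str N',\tup a_0)\neq\emptyset$ — contradicting the choice of $\tup c$. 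Hence no such $\tup c$ exists, so $\chi$ witnesses $\tup a_1\nind[\str M]\tup a_0 a$.

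Beyond choosing $\chi$, the remaining work is pure bookkeeping: keeping track that everything is evaluated in $\str N'\succ\str N\succ\str M$, that the admissible parameter set for $\ind[\str M]$ with base $\{\tup a_0,a\}$ is $\{\tup a_0,a\}\cup\str M$, and that the transfer of (1) and (4) from $\str N$ to $\str N'$ is just elementarity. I do not expect to need hypotheses (2), (3) or (5) of Lemma~\ref{lem:preparation} for this claim — those feed the surrounding argument (the starting independence $\tup a_1\ind[\str M]\tup a_0\tup b_0$ and, via Lemma~\ref{lem:finite st}, the passage to an $\str N'$ with $\tup a\nind[\str M]\tup b$), not this step.
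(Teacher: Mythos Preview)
Your proof is correct and follows essentially the same approach as the paper: the formula $\chi$ you construct is exactly the paper's $\zeta(\tup y;a,\tup a_0)$, and the contradiction via property~(1) applied to the $\str M$-formula $\exists x\,(\phi(x,\tup c)\land\phi(x,\tup y))$ is the same step the paper carries out. Your additional remarks on elementarity and on which hypotheses of Lemma~\ref{lem:preparation} are needed are accurate but not present in the paper's more terse version.
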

\emph{Proof}:
Consider the formula 
\begin{align}\label{eq:zeta}
\zeta(\tup y;a,\tup a_0) := \phi(a;\tup y)\land \neg\exists x. \phi(x; \bar y)\land \phi(x; \bar a_0).
\end{align}
Then $\zeta(\tup a_1;a,\tup a_0)$ holds since $\phi(x;\tup a_1)$ and $\phi(x;\tup a_0)$ are inconsistent by (4). Assume that there is some $\tup a' \in \str M^{\tup y}$ such that $\zeta(\tup a';a,\tup a_0)$ holds. Then \[\exists x.\phi(x;\tup a')\wedge \phi(x;\tup a_1)\] holds in $\str N'$, as witnessed by $x=a$. By property (1) and as $\tup a'$ is in $\str M$, this implies that \[\exists x.\phi(x;\tup a')\wedge \phi(x;\tup a_0)\] holds in $\str N'$, contradicting  $\zeta(\tup a';a,\tup a_0)$. Thus  $\zeta(\tup y;a,\tup a_0)$ has no solution in $\str M$. 
In particular, $\tup a_1\nind[\str M]\tup a_0a$, proving the claim.

\begin{claim}\label{cl:3}Suppose $\str M$ is $1$-dimensional and let $\tup a$ be a tuple in $\phi(\str N';\tup a_1)$ and $\tup b$  a tuple in $\psi(\str N';\tup b_0)$. Then 
    \[\tup a_1\tup a\ind[\str M]\tup a_{0}\tup b_0\tup b.\]    
\end{claim}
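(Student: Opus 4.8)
The plan is to proceed by induction on the length of the tuples $\tup a$ and $\tup b$, adding one element at a time and invoking $1$-dimensionality at each step, while using Claims~\ref{cl:2} and the properties (1)--(5) from Lemma~\ref{lem:preparation} to rule out the ``bad branch'' of the dichotomy. First I would set up the base case: by property~\eqref{it:ind-base} we have $\tup a_1\ind[\str M]\tup a_0\tup b_0$, which handles the case where $\tup a$ and $\tup b$ are empty. Now I would add the elements of $\tup a$ one by one, then the elements of $\tup b$ one by one, maintaining the invariant that $\tup a_1\tup a'\ind[\str M]\tup a_0\tup b_0\tup b'$ where $\tup a'$ is an initial segment of $\tup a$ and $\tup b'$ an initial segment of $\tup b$ (adding $\tup a$ first, then $\tup b$, so that throughout the ``left'' side is $\tup a_1\tup a'$ and the ``right'' side is $\tup a_0\tup b_0\tup b'$).

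For the inductive step when adding an element $a\in\tup a$ (so $a\in\phi(\str N';\tup a_1)$): by the induction hypothesis $\tup a_1\tup a'\ind[\str M]\tup a_0\tup b_0\tup b'$, and $1$-dimensionality applied to the single new element $a$ gives that either $\tup a_1\tup a'a\ind[\str M]\tup a_0\tup b_0\tup b'$ (which is what we want) or $\tup a_1\tup a'\ind[\str M]\tup a_0\tup b_0\tup b' a$. The point is to exclude the second alternative. If it held, then in particular $\tup a_1\ind[\str M]\tup a_0\tup b_0\tup b' a$, hence $\tup a_1\ind[\str M]\tup a_0 a$; but $a\in\phi(\str N';\tup a_1)$, so Claim~\ref{cl:2} gives $\tup a_1\nind[\str M]\tup a_0 a$, a contradiction. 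Hence the first alternative holds and the invariant is preserved.

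For the inductive step when adding an element $b\in\tup b$ (so $b\in\psi(\str N';\tup b_0)$), after all of $\tup a$ has been added: now the invariant reads $\tup a_1\tup a\ind[\str M]\tup a_0\tup b_0\tup b'$, and $1$-dimensionality applied to $b$ gives either $\tup a_1\tup a b\ind[\str M]\tup a_0\tup b_0\tup b'$ — i.e., $\tup a_1\tup a\ind[\str M]\tup a_0\tup b_0\tup b' b$ after relabelling which side $b$ sits on; one must be slightly careful about the statement of $1$-dimensionality, which for $\tup c\mapsto c$ says $\tup uc\ind v$ or $\tup u\ind vc$ — so here with $\tup u=\tup a_1\tup a$, $\tup v=\tup a_0\tup b_0\tup b'$ we get $\tup a_1\tup a\, b\ind[\str M]\tup a_0\tup b_0\tup b'$ or $\tup a_1\tup a\ind[\str M]\tup a_0\tup b_0\tup b' b$. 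The second is what we want. To exclude the first, I would run a symmetric argument to Claim~\ref{cl:2}: from $\tup a_1\tup a\, b\ind[\str M]\tup a_0\tup b_0\tup b'$ we would get, restricting to the relevant sub-tuples, something contradicting property~(5) (that $\psi(x;\tup b_0)$ has no solution in $\str M$) via the witness $b$. Concretely, the formula $\psi(b;\tup z)$ evaluated at $\tup z=\tup b_0$ is a formula over the parameters $b$ which is satisfiable (by $\tup b_0$) but has no solution in $\str M$ — but wait, this needs $\tup b_0\notin\str M$ or rather that $\tup b_0$ realizes something non-finitely-satisfiable; I expect the precise bookkeeping here, matching the roles of $\tup a_0,\tup a_1$ versus $\tup b_0$, to be the main obstacle, and it is likely resolved by the asymmetry already built into Lemma~\ref{lem:preparation} (note $\tup a_0$ has the same type over $\str M$ as $\tup a_1$, while $\tup b_0$ is only assumed to avoid solutions of $\psi$ in $\str M$). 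Once the invariant is maintained through all elements of $\tup a$ and then all of $\tup b$, we conclude $\tup a_1\tup a\ind[\str M]\tup a_0\tup b_0\tup b$, which is exactly the claim. The main difficulty, as noted, is getting the ``adding $\tup b$'' step to go through — Claim~\ref{cl:2} is stated only for the $\phi$-side, so either a genuinely symmetric lemma is needed or one observes that once all of $\tup a$ is absorbed into the left side, $1$-dimensionality plus property~(5) alone suffices, perhaps after swapping the roles of the two sides.
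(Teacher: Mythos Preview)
Your overall strategy---induction on the combined length of $\tup a$ and $\tup b$, using $1$-dimensionality at each step and ruling out the bad branch via Claim~\ref{cl:2} (for the $\phi$-side) and property~(5) (for the $\psi$-side)---is exactly the paper's approach, and your treatment of the $a$-step is correct and complete.

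Your hesitation in the $b$-step, however, stems from looking at the wrong formula. You consider $\psi(b;\tup z)$ as a formula in $\tup z$ with parameter $b$ and try to argue about its realizations; this is unnecessary and leads you astray. The point is much simpler: take instead the formula $\psi(x;\tup b_0)$, viewed as a formula in the single variable $x$ with parameters $\tup b_0$ (and hence with parameters from $\str M\cup\tup b_0$). The element $b$ satisfies it, since $b\in\psi(\str N';\tup b_0)$, while by property~(5) no element of $\str M$ does. By the very definition of $\ind[\str M]$ this already gives $b\nind[\str M]\tup b_0$, and therefore the first alternative $\tup a_1\tup a\, b\ind[\str M]\tup a_0\tup b_0\tup b'$ is impossible (it would imply $b\ind[\str M]\tup b_0$). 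No analogue of Claim~\ref{cl:2} is needed on this side; property~(5) does the job in one line. With this correction your proof is complete and matches the paper's.
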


\emph{Proof}: We show the result by induction on the length of $\tup a$ and $\tup b$. The base case where $\tup a$ and $\tup b$ are empty is given by  property~(3).

Assume we know the result for $\tup a, \tup b$ and we want to add an element $b\in\psi(\str N';\tup b_0)$ to $\tup b$. By 1-dimensionality, one of the two cases holds: \[\tup a_1\tup a\,  b \ind[\str M]\tup a_0\tup b_0\tup b\qquad\text{or}\qquad \tup a_1\tup a   \ind[\str M]\tup a_0\tup b_0\tup b\,b.\] 
Note that property (5) implies $b\nind[\str M]\tup b_0$, excluding the first case,
so the second case must hold, as required.


Now assume we want to add $a\in\phi(\str N';\tup a_1)$ to $\tup a$. 
By  1-dimensionality, 
\[\tup a_1 \tup a\, a \ind[\str M]\tup a_0\tup b_0 \tup b \qquad\text{or}\qquad \tup a_1\tup a\ind[\str M]\tup a_0\tup b_0\tup b a,\]
but the second possibility is excluded by Claim~\ref{cl:2}, and the first one concludes the inductive step. This proves Claim~\ref{cl:3}.

\medskip
Since there are $\tup a\in \phi(\str N';\tup a_1)$ and $\tup b\in\psi(\str N';\tup b_0)$ with
$\tup a\nind[\str M]\tup b$ by (2) and Corollary~\ref{cor:counting fs}, Claim~\ref{cl:3} implies $\str M$ is not $1$-dimensional.

\medskip
This finishes the proof of the implication~\eqref{lit:1-dim}$\rightarrow$\eqref{lit:regular} in Theorem~\ref{thm:main2}, and completes the proof of the theorem. 
We now finish the proof of Theorem~\ref{thm:main3}.

\begin{proof}[Proof of Theorem~\ref{thm:main3}]It remains to prove 
	 the implication~\eqref{lit:1-dim'}$\rightarrow$\eqref{lit:grid-and} in Theorem~\ref{thm:main3}.
		
	 We just proved that if $\CC$ is not regular, then there are $\str M\prec \str N'$ in the elementary closure of $\CC$ and tuples $\tup a,\tup b\in \str N'$  and a single element $c\in\str N'$ such that:
	\[
		\tup a\ind[\str M]\tup b,\qquad
		\tup a c\nind[\str M]\tup b,\qquad
	    \tup a \nind[\str M]\tup b c,\]
	so $\str M$ is not $1$-dimensional. The formulas that exhibit $\tup a c\nind[\str M]\tup b$ and 
	$\tup a \nind[\str M]\tup b c$ 
	depend only on $\theta,\psi,\phi$ (and not on $\CC$), and
	are among the following:
	\begin{enumerate}
		\item the formula $\theta(\tup u;\tup b)\triangle \theta(\tup u;\tup b')$, cf. Remark~\ref{rem:triangle},
		\item the formula $\psi(x;\tup b_0)$, cf. property (5) in Lemma~\ref{lem:preparation},
		\item the formula $\zeta(\tup y; a,\tup e)\equiv 
		\phi(a;\tup y)\land \neg\exists x. \phi(x; \bar y)\land \phi(x; \tup e)
		$, cf.~\eqref{eq:zeta}.
	\end{enumerate}
In particular, if $\phi(a;\tup y)$ is the formula $y_1\le a\le y_2$, where $\le$ defines a total order in $\CC$, then $\zeta$ is equivalent to a quantifier-free formula, namely $(y_1\le a\le y_2)\land ((\tup y_1> e_2)\lor (\tup y_2< e_1))$.
 Hence, 
 in the setting of Remark~\ref{rem:qf-1},
 all the above formulas are quantifier-free $\Sigma$-formulas.
\todo{gap}
 This proves the implication~\eqref{lit:1-dim'}$\rightarrow$\eqref{lit:grid-and} in Theorem~\ref{thm:main3}, and completes its proof.
\end{proof}


\appendix

\section{Model theoretic preliminaries}\label{app:model-theory}

\subsection{Basic notions from model theory}\label{sec:basic}

\paragraph{Models and theories.} In model theory, structures are called \emph{models},
and we will therefore denote them $\str M,\str N$, etc.
They will typically be infinite.

A (first-order) \emph{theory} is a set $T$ of sentences over a fixed signature. A \emph{model of a theory} $T$ is a model $\str M$ (finite or not) which satisfies all the sentences in $T$, which is denoted $\str M\models T$. We say that $T$ \emph{has a model} if there is some model $\str M$ of $T$.

\emph{The theory} of a class of structures $\CC$ 
is the set $T$ of all sentences $\phi$ such that $\str S\models \phi$ for all $\str S\in \CC$. Trivially, every structure in $\CC$ is a model of $T$, but typically, $T$ has also other models. 
Those can be constructed using the compactness theorem:

\begin{theorem}[Compactness of first-order logic]\label{thm:compactness}
    Let $T$ be a theory such that  every finite subset $T'\subset T$ has a model. Then $T$ has a model.
\end{theorem}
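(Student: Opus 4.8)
The statement to prove is the Compactness Theorem for first-order logic: if every finite subset of a theory $T$ has a model, then $T$ itself has a model. My plan is to prove this via the standard Henkin construction, which builds a model directly out of syntactic data. I would not use ultraproducts here, since the Henkin approach is self-contained and only uses what is already available in the excerpt.

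First I would reduce to the case that $T$ is \emph{finitely satisfiable} (every finite subset has a model) and argue that it suffices to extend $T$ to a maximal finitely satisfiable theory $T^*$ over an enlarged signature that also has the \emph{Henkin property}: for every formula $\phi(x)$ with one free variable, the signature contains a constant $c_\phi$ and $T^*$ contains the sentence $(\exists x.\phi(x)) \to \phi(c_\phi)$. The enlargement of the signature is done in countably (or transfinitely, depending on cardinality) many stages, at each stage adding fresh Henkin constants for all formulas currently expressible, and adding the corresponding Henkin axioms; one checks by induction that finite satisfiability is preserved at each stage, because any finite set of the new axioms only constrains finitely many of the fresh constants, which can be interpreted as witnesses in a model of the relevant finite fragment of the old theory. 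Then I would invoke a Zorn's Lemma (or Lindenbaum) argument to extend the resulting Henkinized theory to a maximal finitely satisfiable one $T^*$; maximality gives that for every sentence $\sigma$, exactly one of $\sigma$, $\neg\sigma$ lies in $T^*$, and that $T^*$ is deductively closed in the weak sense needed below.

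Next I would build the \emph{term model}. Let the domain be the set of all closed terms of the enlarged signature, quotiented by the equivalence relation $t \sim t'$ iff $(t = t') \in T^*$ (one checks this is an equivalence relation and a congruence with respect to all function symbols, using maximality and finite satisfiability). Interpret each function symbol and each relation symbol on equivalence classes of closed terms in the obvious way, declaring $R([t_1],\dots,[t_k])$ to hold iff $R(t_1,\dots,t_k) \in T^*$. Then the core lemma — the \emph{Truth Lemma} — is proved by induction on the structure of sentences: for every sentence $\sigma$ over the enlarged signature, the term model satisfies $\sigma$ iff $\sigma \in T^*$. The atomic and negation and conjunction cases are immediate from maximality and the definition of the interpretation; the existential case $\sigma = \exists x.\phi(x)$ is exactly where the Henkin property is used, since if $\sigma \in T^*$ then $\phi(c_\phi) \in T^*$ by the Henkin axiom and maximality, so by induction the witness $[c_\phi]$ works, and conversely any witness is (the class of) a closed term. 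Finally, restricting this model to the original signature yields a model of the original $T \subseteq T^*$, which completes the proof.

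The main obstacle — or rather the one place requiring genuine care — is verifying that finite satisfiability is genuinely preserved when adding Henkin constants and axioms, and in the Zorn step, that an increasing union of finitely satisfiable theories is finitely satisfiable (this last point is trivial: a finite subset of the union already lies in one member of the chain). A secondary technical point is handling the cardinality bookkeeping if one wants the cleanest statement for arbitrary (uncountable) signatures: one iterates the Henkinization $\omega$ times over a signature of size $\kappa = |T| + \aleph_0$, and the whole construction stays within cardinality $\kappa$. None of these steps involves hard computation; they are all routine inductions and bookkeeping, so I would present them at the level of "one checks by induction on $\phi$" rather than writing out every case.
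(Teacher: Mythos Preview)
Your proposal is correct and follows the standard Henkin construction for proving compactness. However, the paper does not actually prove this theorem: it is stated in the model-theoretic preliminaries appendix as a classical background result and used as a black box throughout, so there is no proof in the paper to compare your approach against.
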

For example, let $\CC$ be a class of structures over a signature $\Sigma$, and assume that 
$\CC$ contains structures of arbitrarily large finite size.
Then the models of the theory of $\CC$ also include infinite models of arbitrarily large cardinality.
To see this, consider the theory $T$ of $\CC$
and let $\Sigma'$ extend the signature of $\CC$ 
by an arbitrary set  $C$ of constant symbols. For $c,d\in C$,
let $\phi_{cd}$ be the $\Sigma'$-sentence $c\neq d$.
Then $T\cup\setof{\phi_{cd}}{c,d\in C,c\neq d}$ satisfies the assumption of the compactness theorem, so it has a model $\str M$, and this model has at least the cardinality of $C$.

\paragraph{Elementary extensions.}
Let $\str M,\str N$ be two models such that the domain of $\str M$ is contained in the domain of $\str N$.
Then $\str N$ is an \emph{elementary extension} of $\str M$,
written $\str M\prec \str N$,
if  for  every 
formula $\phi(\tup x)$ and tuple $\tup a\in\str M^{\tup x}$ of elements of $\str M$,
\[\str M\models\phi(\tup a)\text{\quad if and only if \quad}\str N\models \phi(\tup a).\] In other words, it doesn't matter if we evaluate formulas in $\str M$ or in $\str N$.

A typical way of constructing an elementary extension of $\str M$ 
is by considering the following theory, called the \emph{elementary diagram} of $\str M$. Let $\Sigma$ be the signature of $\str M$,
and let $\Sigma'=\Sigma\cup\str M$, where the elements of $\str M$ are viewed as constant symbols.

     For a $\Sigma$-formula $\phi(\tup y)$ and tuple $\tup a\in \str M^{\tup y}$ write $\phi(\tup a)$ for the $\Sigma'$-sentence 
     obtained by replacing the variables in $\tup y$ by constants in $\str M$, according to $\tup a$. 
     Let $T$ be the $\Sigma'$-theory consisting of all sentences $\phi(\tup a)$, for all $\Sigma$-formulas $\phi(\tup x)$ and tuples $\tup a$ such that $\str M\models\phi(\tup a)$.

Pick a model $\str N'$ of $T$, and 
let $\str N$ denote the $\Sigma$-structure obtained from $\str N'$ by forgetting the constants in $\str M\subset \Sigma'$.
    The interpretation of the constants $m\in\str M$ of $\Sigma'$ in $\str N'$ yields a function $i\from \str M\to \str N$.
    By the definition of $T$,
    for any formula $\phi(\tup y)$ and tuple  $\tup a\in\str M^{\tup y}$, 
    \[\str M\models\phi(\tup a)\text{ if and only if }\str N\models \phi(i(\tup a)).\]
    Therefore, we may view (identyfing each $m\in\str M$ with  $i(m)\in\str N$) the $\Sigma$-structure $\str N$ as an elementary extension of $\str M$.
    
    Reassuming, models of the elementary diagram of $\str M$ correspond precisely to elementary extensions of $\str M$. In particular, by extending the elementary diagram of $\str M$ by an arbitrary set of constants, from compactness we get that $\str M$ has elementary extensions of arbitrarily large cardinality (unless $\str M$ is finite). More generally, we have the following.

\begin{lemma}\label{lem:many-classes}
    Let $\str M$ be a model and let $\alpha(\tup x,\tup x')$ be a formula with $|\tup x|=|\tup x'|$ defining an equivalence relation in $\str M$ with infinitely many classes. Then for every cardinality $\mathfrak n$
    there is an elementary extension $\str N\succ\str M$
    in which 
    $\alpha$ defines an equivalence relation with at least $\mathfrak n$ equivalence classes. 
\end{lemma}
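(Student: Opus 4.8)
The statement to prove is Lemma~\ref{lem:many-classes}: given a model $\str M$ and a formula $\alpha(\tup x,\tup x')$ defining an equivalence relation with infinitely many classes, for every cardinal $\mathfrak n$ there is an elementary extension $\str N \succ \str M$ in which $\alpha$ defines an equivalence relation with at least $\mathfrak n$ classes.

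\textbf{Plan of proof.} The plan is to use the compactness theorem (Theorem~\ref{thm:compactness}) together with the elementary diagram of $\str M$, exactly as in the paragraph preceding the lemma. First I would set up the expanded signature: let $\Sigma$ be the signature of $\str M$, and let $\Sigma^+ = \Sigma \cup \str M \cup \setof{c_i}{i \in \mathfrak n}$, where the elements of $\str M$ are added as constants (to force an elementary extension) and the $c_i$ are $|\tup x|$-tuples of fresh constant symbols, one tuple for each $i$ in an index set of cardinality $\mathfrak n$. Let $D$ be the elementary diagram of $\str M$ over $\Sigma \cup \str M$, i.e.\ all $\Sigma$-sentences with parameters from $\str M$ that hold in $\str M$. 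Form the theory
\[
    T = D \;\cup\; \setof{\lnot \alpha(c_i, c_j)}{i \neq j \in \mathfrak n}.
\]

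\textbf{Key step: finite satisfiability.} The heart of the argument is showing every finite $T' \subseteq T$ has a model. Such a $T'$ mentions only finitely many of the new constants $c_{i_1}, \ldots, c_{i_m}$, and its non-diagram part is contained in $\setof{\lnot\alpha(c_{i_p}, c_{i_q})}{1 \le p < q \le m}$. Since $\alpha$ defines an equivalence relation with infinitely many classes in $\str M$, we may pick $m$ elements $d_1, \ldots, d_m \in \str M$ lying in pairwise distinct $\alpha$-classes, so that $\str M \models \lnot \alpha(d_p, d_q)$ for $p \neq q$. Then $\str M$, expanded by interpreting each $c_{i_p}$ as $d_p$ (and each constant $m \in \str M$ as itself), is a model of $T'$: it satisfies $D$ since $\str M \models D$, and it satisfies the finitely many inequivalence statements by choice of the $d_p$. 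By compactness, $T$ has a model $\str N^+$; let $\str N$ be its $\Sigma$-reduct. By the standard fact about elementary diagrams (recalled in the excerpt), $\str N$ is an elementary extension of $\str M$, and the interpretations of the $c_i$ in $\str N$ witness at least $\mathfrak n$ pairwise distinct $\alpha$-classes, since $\str N \models \lnot\alpha(c_i,c_j)$ for all $i \neq j$. Finally, $\alpha$ still defines an equivalence relation in $\str N$ because ``$\alpha$ is reflexive, symmetric and transitive'' is a first-order sentence true in $\str M$, hence in $\str N$ by elementarity.

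\textbf{Main obstacle.} There is essentially no obstacle here — the argument is the textbook compactness construction. The only point requiring mild care is the finite-satisfiability step, where one must invoke that ``infinitely many classes'' gives, for each finite $m$, a set of $m$ pairwise inequivalent elements \emph{already in $\str M$}; this is immediate but is the one place the hypothesis is actually used. A secondary bookkeeping point is ensuring the fresh constants $c_i$ are genuinely disjoint from $\str M$ and from $\Sigma$ so that the diagram argument produces a bona fide elementary extension; this is routine. Thus the proof is short and self-contained modulo Theorem~\ref{thm:compactness} and the elementary-diagram discussion already given.
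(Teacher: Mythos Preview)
Your proof is correct and is essentially the same argument as the paper's: both add the elementary diagram of $\str M$ together with a family of fresh constant tuples asserted to be pairwise $\alpha$-inequivalent, and verify finite satisfiability using the infinitely many $\alpha$-classes already present in $\str M$. The only cosmetic difference is that the paper first reduces to the case $|\tup x|=1$, whereas you handle tuples of constants directly.
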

\begin{proof}
To simplify notation, assume that $|\tup x|=|\tup x'|=1$.
The  case of $|\tup x|=|\tup x'|=k>1$ proceeds similarly, or can be deduced from the case $k=1$ by extending the domain of $\str M$ by $\str M^k$ and the $k$ projection functions.

Let $\alpha(x,x')$ be  formula defining an equivalence relation ${\sim}$ in $\str M$ with infinitely many classes.
Let $\Sigma$ be the signature of $\str M$.
Fix any set of constants $C$ and let $\Sigma'$ extend  $\Sigma$  by   $C\cup\str M$,
where all the added elements are  constant symbols. For any $c,d\in C$ consider the $\Sigma'$-sentence $\phi_{cd}=\neg\alpha(c, d)$. 
Let $T$ be the $\Sigma'$-theory consisting of:
\begin{itemize}
    \item the sentences $\phi_{cd}$, for all $c\neq d$ in $C$,
    \item the elementary diagram of $\str M$.
\end{itemize} 
We show  that every $T'\subset T$ containing finitely many sentences of the form $\phi_{cd}$ has a model. Let $C'\subset C$ be the finite set of constants appearing in the sentences $\phi_{cd}\in T'$. Let $\str M'$  be the model $\str M$ together with 
each constant $c$ in $\str M\subset \Sigma'$ interpreted as the corresponding element  $c\in \str M$,
and
constants in $C'$ interpreted as pairwise $\sim$-inequivalent elements of $\str M$, and constants in $C\setminus C'$ interpreted as arbitrary elements of $\str M$. This can be done, since there are infinitely many pairwise $\sim$-inequivalent elements in $\str M$. This shows that $T'$ has a model.

By compactness, $T$ has a model $\str N'$. 
This model can be seen as an elementary extension of $T$ together with 
a set of $|C|$ elements which are pairwise inequivalent with respect to the equivalence relation defined by $\alpha$ in $\str N$. Since $C$ was taken arbitrary, this proves the lemma.
\end{proof}



\paragraph{Parameters.}
Let $\str M$ be a model over a signature $\Sigma$ and let $A\subset \str M$ be a set of elements. We may view $\str M$ 
as a model over a signature $\Sigma\cup A$, where the elements of $A$ are seen as constant symbols, interpreted in $\str M$ in the expected way: a constant $a\in A$ is interpreted as the element $a\in\str M$. We call the elements of $A$ \emph{parameters} in this context.
A $\Sigma$-formula \emph{with parameters} from $A$ is a formula over the signature $\Sigma\cup A$.

\paragraph{Types.}
A \emph{type} with variables $\tup x$ and parameters from $A$,
or a \emph{type over} $A$
is a set $p$ of formulas $\phi(\tup x)$ with parameters from $A$. We may write $p(\tup x)$ to indicate that $p$ has variables~$\tup x$.

If $p(\tup x)$ is a type over $A$ and $B\subset A$ then $p|B$ denotes the subset of $p$ consisting of all formulas with parameters from $B$.
If $\tup b\in\str M^{\tup x}$ is a tuple of elements of $\str M$ then \emph{the type}  of $\tup b$ \emph{over} $A$  in $\str M$ 
is the set of formulas $\phi(\tup x)$ with parameters from $A$ that are satisfied by $\tup b$ in $\str M$. 
This type is denoted $\tp(\tup b/A)$ or $\tp_{\tup x}(\tup b/A)$. Note that 
$\tp(\tup b/A)$ is related to the notion of $\theta$-types
as follows, for every formula $\theta(\tup x;\tup y)$ and tuple $\tup a\in A^{\tup y}$:
\[\theta(\tup x;\tup a)\in \tp(\tup b/A)
\iff \tup a\in \tp^\theta(\tup b/A).\]
In particular, $\tp(\tup b/A)$ is uniquely determined by $\setof{\tp^\theta(\tup b/A)}{\theta(\tup x;\tup y)\text{ is a formula}}$.

A type $p(\tup x)$ is \emph{satisfiable in a set} $C$ if there is some tuple $\tup c\in C^{\tup x}$ which satisfies all the formulas in $p$. 
A type $p(\tup x)$ with parameters from $A\subset \str M$ is \emph{satisfiable} if it is satisfiable 
in some elementary extension $\str N$ of $\str M$.
By compactness, this is equivalent to saying that for any finite conjunction $\phi(\tup x)$ of formulas in $p(\tup x)$ we have $\str M\models\exists \tup x.\phi(\tup x)$.

A type $p(\tup x)$ with parameters from $A$ is 
\emph{complete} if it is satisfiable and for every formula
$\phi(\tup x)$ with parameters from $A$, either $\phi$ or $\neg \phi$ belongs to $p$. Equivalently, $p(\tup x)$ is the type over $A$ of  some tuple $\tup b\in \str N^{\tup x}$, for some elementary extension $\str N$ of $\str M$.
We sometimes say that a type is \emph{partial} to emphasise that it may not be complete.
We denote the set of complete types with variables $\tup x$ and parameters from $A$ by $\St[\tup x](A)$ or simply $\St(A)$, if  $\tup x$ is understood from the context.
Note that we have ommitted the model $\str M$ from the notation. Indeed,
 if $\str M'$ is a model containing the parameters $A$
and satisfying the same sentences with parameters from $A$
as $\str M$, then $\str M$ and $\str M'$ have 
identical sets of complete types $p(\tup x)$ with parameters from $A$.
 Hence, $\St[\tup x](A)$ does not depend on $\str M$, but only on the set of sentences satisfied by $A$ in $\str M$.

\subsection{Finite satisfiability}
A (partial) type $p(\tup x)$ with parameters from $A$ is \emph{finitely satisfiable} in $C$ if every finite subset $p'\subset p$  is satisfiable in $C$. 
Note that $\tup a\ind[\str M]B$ (cf. Def.~\ref{def:independence}) if and only if $\tp(\tup a/\str MB)$ is finitely satisfiable in~$\str M$.

\begin{lemma}\label{lem:satisfaction}
    A type $p(\tup x)$ with parameters from $\str M$ 
    is finitely satisfiable in $\str M$ if and only if it is satisfiable. Consequently, $\tup a\ind[\str M]\str M$ for all $\tup a$ in an elementary extension of $\str M$.
\end{lemma}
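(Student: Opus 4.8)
\textbf{Proof plan for Lemma~\ref{lem:satisfaction}.}

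The plan is to prove the two directions separately, with the real content being the implication from finite satisfiability to satisfiability. The reverse direction is trivial: if $p(\tup x)$ is satisfiable, say by $\tup a\in\str N^{\tup x}$ for some $\str N\succ\str M$, then every finite subset $p'\subset p$ is in particular satisfied by $\tup a$; but since $\tup a$ satisfies $\bigwedge p'$ in $\str N$ and $\str M\prec\str N$, the sentence $\exists\tup x.\bigwedge p'$ holds in $\str M$ as well, hence $p'$ is satisfiable in $\str M$. Actually one has to be slightly careful here: ``satisfiable in $\str M$'' for a type with parameters from $\str M$ means realized by a tuple of elements of $\str M$, and ``finitely satisfiable in $\str M$'' was defined that way; so I should phrase this as: for each finite $p'\subset p$, the existential sentence $\exists\tup x.\bigwedge p'(\tup x)$ (with parameters from $\str M$) holds in $\str N$, hence in $\str M$ by elementarity, so some tuple \emph{in} $\str M$ satisfies $p'$.

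For the forward direction, suppose $p(\tup x)$ is finitely satisfiable in $\str M$. I would show directly that $p$ is satisfiable by exhibiting an elementary extension of $\str M$ realizing it, via compactness (Theorem~\ref{thm:compactness}). Work over the signature $\Sigma'=\Sigma\cup\str M\cup\tup x$, where the elements of $\str M$ and the variables $\tup x$ are all treated as new constant symbols. Let $T$ be the theory consisting of the elementary diagram of $\str M$ together with all the formulas in $p(\tup x)$ (now read as $\Sigma'$-sentences, with $\tup x$ as constants). Any finite subset $T_0\subset T$ mentions only finitely many formulas $\phi_1(\tup x),\dots,\phi_k(\tup x)$ from $p$ and finitely much of the elementary diagram; since $p$ is finitely satisfiable in $\str M$, there is a tuple $\tup m\in\str M^{\tup x}$ with $\str M\models\phi_i(\tup m)$ for all $i$, and interpreting the constants $\tup x$ as $\tup m$ (and the constants from $\str M$ as themselves) gives a model of $T_0$. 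By compactness $T$ has a model $\str N'$; forgetting the constants $\tup x$ and identifying $\str M$ with its image (exactly as in the construction of elementary extensions recalled in Appendix~\ref{sec:basic}) yields $\str N\succ\str M$, and the interpretation of $\tup x$ in $\str N$ is a tuple realizing $p$. Hence $p$ is satisfiable.

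For the ``consequently'' clause: given any $\tup a$ in an elementary extension $\str N\succ\str M$, by definition $\tup a\ind[\str M]\str M$ means $\tp(\tup a/\str M)$ (a type with parameters from $\str M$) is finitely satisfiable in $\str M$. But $\tp(\tup a/\str M)$ is certainly satisfiable — it is realized by $\tup a$ itself in $\str N$ — so by the equivalence just proved (applied in the easy direction, satisfiable $\Rightarrow$ finitely satisfiable in $\str M$), it is finitely satisfiable in $\str M$, giving $\tup a\ind[\str M]\str M$. I expect no real obstacle here; the only thing to be careful about is keeping straight the distinction between ``satisfiable in $\str M$'' (realized inside $\str M$) and ``satisfiable'' (realized in some elementary extension), and making sure the compactness argument in the forward direction is set up over the right expanded signature so that a realizing tuple genuinely appears as the interpretation of the variable-constants $\tup x$.
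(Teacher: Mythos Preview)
Your proposal is correct and follows essentially the same approach as the paper's own proof: both directions are argued exactly as you describe (elementarity for satisfiable $\Rightarrow$ finitely satisfiable in $\str M$, and compactness over $\Sigma\cup\str M\cup\tup x$ with the elementary diagram for the converse), and the ``consequently'' clause is derived in the same way.
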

\begin{proof}
    For the right-to-left implication, assume that $p$ is satisfied by some tuple $\tup c\in\str N^{\tup x}$ for some elementary extension $\str N$ of $\str M$. Pick a finite $p'\subset p$, and suppose $p'=\set{\phi_1,\ldots,\phi_k}$. 
    Consider the formula $\psi:=\phi_1\land\cdots\land \phi_k$.
    Note that $\psi$ may use some parameters from $\str M$.
    So we may write $\psi$ as $\psi=\psi'(\tup x,\tup a)$ 
    where $\psi'(\tup x,\tup z)$ is a formula and $\tup a\in \str M^{\str z}$. 
    
    The formula $\exists_{\tup x}\psi'(\tup x,\tup a)$ holds in $\str N$, as witnessed by $\tup c$. As $\str N$ is an elementary extension of $\str M$, this formula also holds in $\str M$.
    So there is some $\tup m\in \str M$ satisfying $\psi'(\tup b,\tup a)$. Therefore, $p'$ is satisfied by $\tup m$ in $\str M$, proving that $p$ is finitely satisfiable in $\str M$.

    \medskip
    The left-to-right implicaiton is a basic application of the compactness theorem. 
    
    Consider the signature $\Sigma'=\Sigma\cup \str M\cup\tup x$
    extending $\Sigma$ by constant symbols for each element of $\str M$ and each variable in $\tup x$.
    Let $T$ be the theory over $\Sigma'$ consisting of:
    \begin{itemize}
        \item For every formula $\phi(\tup x)\in p$, 
         the $\Sigma'$-sentence obtained from $\phi(\tup x)$ by viewing each parameter $a\in \str M$ as the constant $a\in\str M\subset \Sigma'$, and each variable $x\in\tup x$ as the constant  $x\in \tup x\subset \Sigma'$. 
        \item the elementary diagram of $\str M$.
    \end{itemize}

    Then every finite subset $T'$ of $T$ has a model.
    Indeed, let $p'$ be the set of formulas $\phi(\tup x)$ which occur (as $\Sigma'$-sentences) in $T'$.
    Since $p(\tup x)$ is finitely satisfiable in $\str M$,
    $p'(\tup x)$ is satisfied by some tuple $\tup m\in\str M^{\tup x}$. The pair $(\str M,\tup m)$ may be seen as a $\Sigma'$-structure, where a constant $m\in\str M$ is interpreted by the corresponding element of $\str M$,
    and a constant $x\in\tup x$ is interpreted as $\tup m(x)$.
    Then $(\str M,\tup m)$ is a model of $T'$.

    By compactness, $T$ has a model $\str N'$. 
    This model can be seen as an elementary extension $\str N$ of $\str M$ together with a tuple $\tup c\in\str N^{\tup x}$ of elements (obtained by the interpretation of the constants $\tup x$ in $\str N'$), such that 
$\str N\models \phi(\tup c)$ for every formula $\phi(\tup x)\in p$.
Hence, $\tup c$ satisfies $p(\tup x)$ in $\str N.$
\end{proof}

\paragraph{Finite satisfiability  and filters.}
Recall that a \emph{filter} on a set $U$ is a nonempty set $F\subset P(U)$ that is closed under taking supersets (if $A\subset B$ then $A\in F$ implies $B\in F$), under binary intersections, and does not contain the empty set. A filter is an \emph{ultrafilter} if for every $A\subset U$, either $A\in F$ or $U\setminus A\in F$. Every filter is contained in some ultrafilter, by the Kuratowski-Zorn lemma.


\medskip

Let $\str N$ be a model, $A\subset\str N$ be a set and $\tup x$ be a set of variables. Fix a filter $F$  on $A^{\tup x}$.
The \emph{average (partial) type}  over $\str N$ is the partial type denoted $\Av_{F}(\tup x)$ such that for every formula $\phi(\tup x)$ with parameters from $\str N$,
\[ \phi(\tup x)\in \Av_{F}(\tup x) \iff \{\tup a \in A^{\tup x} : \str N \models \phi(\tup a)\} \in F.\]

This is a consistent partial type: if say $\phi_1(\tup x),\ldots,\phi_n(\tup x) \in \pi(\tup x)$, then since any finitely many elements of $F$ have non-empty intersection, there is $\tup a\in A^{\tup x}$ which satisfies the conjunction $\phi_1(\tup x)\wedge \cdots \wedge \phi_n(\tup x)$. Hence this conjunction is consistent, indeed we have shown that $\Av_F(\tup x)$ is finitely satisfiable in $A$.

If $F$ is an ultrafilter on $A^{\tup x}$, then $\Av_F(\tup x)$ is a complete type: for every formula $\phi(\tup x)$, either $\phi(\tup x)\in \Av_F(\tup x)$ or $\neg \phi(\tup x)\in \Av_F(\tup x)$.

\begin{lemma}
Let $\pi(\tup x)$ be a partial type, then $\pi(\tup x)$ is finitely satisfiable in $A$  if and only if there is a filter $F$ on $A^{\tup x}$ such that $\pi(\tup x) \subseteq \Av_F(\tup x)$.
\end{lemma}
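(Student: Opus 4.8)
The plan is to prove the two implications separately; both are essentially immediate from the definitions, so I will not set up any new machinery.

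For the implication $(\Leftarrow)$, suppose $F$ is a filter on $A^{\tup x}$ with $\pi(\tup x)\subseteq\Av_F(\tup x)$. The paragraph preceding the statement already records that $\Av_F(\tup x)$ is finitely satisfiable in $A$: if $\phi_1(\tup x),\dots,\phi_n(\tup x)\in\Av_F(\tup x)$ then their sets of realizations in $A^{\tup x}$ all lie in $F$, hence so does their intersection (as $F$ is closed under binary, and therefore finite, intersections), and this intersection is nonempty because $\emptyset\notin F$; any witnessing tuple satisfies $\phi_1\wedge\cdots\wedge\phi_n$. Since finite satisfiability in $A$ is inherited by subsets of a type, $\pi(\tup x)\subseteq\Av_F(\tup x)$ is finitely satisfiable in $A$ as well.

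For the implication $(\Rightarrow)$, assume $\pi(\tup x)$ is finitely satisfiable in $A$. For each formula $\phi(\tup x)\in\pi$ set $X_\phi:=\set{\tup a\in A^{\tup x}\mid \str N\models\phi(\tup a)}$, and consider the family $\mathcal B:=\set{A^{\tup x}}\cup\set{X_\phi\mid \phi\in\pi}$ of subsets of $A^{\tup x}$. I claim $\mathcal B$ has the finite intersection property: given $\phi_1,\dots,\phi_n\in\pi$, finite satisfiability produces $\tup a\in A^{\tup x}$ with $\str N\models\phi_i(\tup a)$ for every $i$, i.e.\ $\tup a\in X_{\phi_1}\cap\cdots\cap X_{\phi_n}$ (and of course $\tup a\in A^{\tup x}$). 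Hence the upward closure $F:=\setof{Y\subseteq A^{\tup x}}{Y\supseteq B_1\cap\cdots\cap B_n\text{ for some }B_1,\dots,B_n\in\mathcal B}$ is a filter on $A^{\tup x}$; it is a proper filter since, by the finite intersection property, no finite intersection of members of $\mathcal B$ is empty, so $\emptyset\notin F$. By construction $X_\phi\in F$ for every $\phi\in\pi$, and unwinding the definition of $\Av_F$ this says exactly that $\phi\in\Av_F(\tup x)$; thus $\pi(\tup x)\subseteq\Av_F(\tup x)$.

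I do not expect any genuine obstacle. The only point needing a moment's care is the degenerate situation where $\pi$ is empty (and, more generally, ensuring the generated object is a proper filter on $A^{\tup x}$ rather than all of $P(A^{\tup x})$): this is dealt with by explicitly throwing $A^{\tup x}$ into the generating family $\mathcal B$, noting that finite satisfiability of $\pi$ in $A$ already forces $A^{\tup x}\neq\emptyset$.
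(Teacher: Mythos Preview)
Your proof is correct and follows essentially the same approach as the paper: for $(\Leftarrow)$ you invoke the already-established finite satisfiability of $\Av_F$, and for $(\Rightarrow)$ you take the filter generated by the sets of realizations $\{X_\phi:\phi\in\pi\}$, using finite satisfiability to verify the finite intersection property. The only difference is cosmetic---you add $A^{\tup x}$ to the generating family to handle the degenerate case $\pi=\emptyset$, which the paper leaves implicit.
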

\begin{proof}
We have already observed that $\Av_F(\tup x)$ is finitely satisfiable in $A$. Conversely, assume that $\pi(\tup x)$ is finitely satisfiable in $A$, then define $F_0\subseteq  P(A^{\tup x})$ by: $F_0 = \{\phi(A) : \phi(\tup x)\in \pi(\tup x)\}$. The fact that $\pi(\tup x)$ is finitely satisfiable in $A$ implies that any finitely many elements of $F_0$ have non-empty intersection. Let $F$ be the filter generated by $F_0$. Then we have $\pi(\tup x)\subseteq \Av_F(\tup x)$.
\end{proof}

\begin{lemma}
Let $p(\tup x)\in \St(\str N)$ be a complete type, then $p(\tup x)$ is finitely satisfiable in $A$ if and only if there is an ultrafilter $F$ on $A^{\tup x}$ such that $p(\tup x) = \Av_F(\tup x)$.
\end{lemma}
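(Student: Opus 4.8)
The plan is to derive this lemma directly from the two preceding lemmas about average types, together with the standard fact that every filter extends to an ultrafilter. Both directions are short.

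For the easy direction ($\Leftarrow$), suppose $F$ is an ultrafilter on $A^{\tup x}$ with $p(\tup x)=\Av_F(\tup x)$. As already observed in the discussion preceding this lemma, $\Av_F(\tup x)$ is finitely satisfiable in $A$ (any finitely many of its formulas have their $A$-solution sets in $F$, hence a common element of $A^{\tup x}$ by the finite intersection property of $F$). Therefore $p(\tup x)$ is finitely satisfiable in $A$.

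For the converse ($\Rightarrow$), suppose $p(\tup x)$ is finitely satisfiable in $A$. By the preceding lemma there is a filter $F_0$ on $A^{\tup x}$ with $p(\tup x)\subseteq \Av_{F_0}(\tup x)$; concretely $F_0$ is generated by $\{\phi(A):\phi\in p\}$. Using the Kuratowski--Zorn lemma, extend $F_0$ to an ultrafilter $F\supseteq F_0$. I would first note the monotonicity of the average type: if $F_0\subseteq F$ then $\Av_{F_0}(\tup x)\subseteq \Av_F(\tup x)$, since $\phi\in\Av_{F_0}(\tup x)$ means $\phi(A)\in F_0\subseteq F$. Hence $p(\tup x)\subseteq \Av_F(\tup x)$. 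Since $F$ is an ultrafilter, $\Av_F(\tup x)$ is a complete type, and it is consistent because it is finitely satisfiable in $A$. Now the equality $p(\tup x)=\Av_F(\tup x)$ follows from maximality of complete types: given any formula $\phi(\tup x)$, if $\phi\in\Av_F(\tup x)$ then $\neg\phi\notin\Av_F(\tup x)$ by consistency, so $\neg\phi\notin p(\tup x)$ as $p\subseteq\Av_F$, whence $\phi\in p(\tup x)$ by completeness of $p$; this gives the reverse inclusion $\Av_F(\tup x)\subseteq p(\tup x)$, and therefore $p(\tup x)=\Av_F(\tup x)$.

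There is no real obstacle here: the only point requiring a line of care is the monotonicity $F_0\subseteq F\Rightarrow\Av_{F_0}\subseteq\Av_F$, and the observation that a complete type containing another complete type must equal it. Everything else is bookkeeping built on the two lemmas immediately above and the existence of ultrafilter extensions.
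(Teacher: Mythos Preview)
Your proof is correct and follows essentially the same route as the paper: both directions invoke the preceding lemma to get a filter $F_0$ with $p\subseteq\Av_{F_0}$, extend $F_0$ to an ultrafilter $F$, observe $p\subseteq\Av_{F_0}\subseteq\Av_F$, and conclude equality from completeness of $p$ and consistency of $\Av_F$. Your write-up just spells out the monotonicity and the maximality step more explicitly than the paper does.
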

\begin{proof}
We have already seen that if $F$ is an ultrafilter on $A^{\tup x}$, then $\Av_F(\tup x)$ is a complete type over $\str N$, which is finitely satisfiable in $A$. Conversely, if $p(\tup x)\in \St(\str N)$ is finitely satisfiable in $A$, then by the previous lemma, there is a filter $F_0$ on $A^{\tup x}$ such that $p(\tup x)\subseteq \Av_{F_0}(\tup x)$. Extend $F_0$ to an ultrafilter $F$ on $A^{\tup x}$. Then $p(\tup x)\subseteq \Av_{F_0}(\tup x)\subseteq \Av_{F}(\tup x)$. But since $p(\tup x)$ is a complete type, one cannot add any formulas to it without making it inconsistent. Since $\Av_F(\tup x)$ is consistent, we must have $p(\tup x)= \Av_F(\tup x)$.
\end{proof}

\begin{lemma}\label{lem:extending fs}
Let $\pi(\tup x)$ be a partial type finitely satisfiable in $A$. Then there is a complete type $p(\tup x)\in \St(\str N)$ finitely satisfiable in $A$ which extends $\pi(\tup x)$.
\end{lemma}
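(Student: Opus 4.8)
The statement to prove is Lemma~\ref{lem:extending fs}: a partial type $\pi(\tup x)$ finitely satisfiable in $A$ extends to a complete type $p(\tup x)\in\St(\str N)$ finitely satisfiable in $A$. The cleanest route uses the two preceding filter lemmas. First I would invoke the lemma characterising finite satisfiability via filters: since $\pi(\tup x)$ is finitely satisfiable in $A$, there is a filter $F_0$ on $A^{\tup x}$ with $\pi(\tup x)\subseteq\Av_{F_0}(\tup x)$. Then extend $F_0$ to an ultrafilter $F$ on $A^{\tup x}$ (Kuratowski--Zorn). By the lemma on ultrafilters, $\Av_F(\tup x)$ is then a complete type over $\str N$, and it is finitely satisfiable in $A$ (every average type along a filter is, as already observed). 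Since filters only grow when passing to a larger filter, $\pi(\tup x)\subseteq\Av_{F_0}(\tup x)\subseteq\Av_F(\tup x)$, so $p(\tup x):=\Av_F(\tup x)$ is the desired completion.

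Alternatively, and perhaps more self-contained, one can argue directly by a Zorn's lemma / compactness argument on the space of types: consider the poset of partial types $q(\tup x)$ with $\pi(\tup x)\subseteq q(\tup x)$ that are finitely satisfiable in $A$, ordered by inclusion. A chain has an upper bound (its union is still finitely satisfiable in $A$, since any finite subset lies in one member of the chain), so by Zorn there is a maximal such $p(\tup x)$. Maximality forces completeness: for any formula $\phi(\tup x)$ with parameters from $\str N$, either $p\cup\{\phi\}$ or $p\cup\{\neg\phi\}$ is finitely satisfiable in $A$ — otherwise there would be finite $q_1,q_2\subseteq p$ with $q_1\cup\{\phi\}$ and $q_2\cup\{\neg\phi\}$ both unsatisfiable in $A$, but then $q_1\cup q_2\subseteq p$ is finite and every element of $A^{\tup x}$ satisfying it must satisfy neither $\phi$ nor $\neg\phi$, which is absurd; here I use that $p$ is finitely satisfiable in $A$. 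So $p$ already decides $\phi$, hence $p\in\St(\str N)$.

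I would present the filter-based proof since the infrastructure (the two lemmas immediately above) is already in place and makes it a two-line argument; the direct Zorn argument can be mentioned as an aside or omitted. There is no real obstacle here — the only point requiring a touch of care is making sure "finitely satisfiable in $A$" is genuinely preserved when enlarging the filter to an ultrafilter, which is immediate from $\Av_F(\tup x)$ being finitely satisfiable in $A$ for \emph{any} filter $F$, a fact established in the proof of the first filter lemma. So concretely the proof reads: by Lemma (on partial types and filters) there is a filter $F_0$ on $A^{\tup x}$ with $\pi(\tup x)\subseteq\Av_{F_0}(\tup x)$; extend $F_0$ to an ultrafilter $F$; by Lemma (on complete types and ultrafilters) $p(\tup x):=\Av_F(\tup x)$ is a complete type over $\str N$ finitely satisfiable in $A$; and $\pi(\tup x)\subseteq\Av_{F_0}(\tup x)\subseteq\Av_F(\tup x)=p(\tup x)$, as required.
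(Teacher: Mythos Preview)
Your filter-based proof is correct and is essentially identical to the paper's own argument: take a filter $F$ with $\pi(\tup x)\subseteq\Av_F(\tup x)$, extend to an ultrafilter $F'$, and set $p(\tup x)=\Av_{F'}(\tup x)$. The alternative Zorn argument you sketch is also fine but unnecessary here.
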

\begin{proof}
Let $F$ be a filter on $A^{\tup x}$ such that $\pi(\tup x)\subseteq \Av_F(\tup x)$. Let $F'$ be an ultrafilter extending $F$ and let $p(\tup x) = \Av_{F'}(\tup x)$. Then $p$ is finitely satisfiable in $A$ and extends $\pi$.
\end{proof}

\begin{lemma}\label{lem:fs invariant}
Let $p(\tup x)\in \St(\str N)$ be finitely satisfiable in $A$. Then $p$ is \emph{$A$-invariant}, that is: for any formula $\phi(\tup x;\tup y)$ and tuples $\tup b, \tup b'\in \str N^{\tup y}$, we have:
\[ \tp(\tup b/A)  = \tp(\tup b'/A) \Longrightarrow \phi(\tup x;\tup b)\in p \leftrightarrow \phi(\tup x;\tup b')\in p.\]
\end{lemma}
\begin{proof}
If $\tp(\tup b/A) = \tp(\tup b'/A)$, then the formula $\phi(\tup x;\tup b)\triangle \phi(\tup x;\tup b')$ has no solution in $A$. Since $p$ is finitely satisfiable in $A$ that formula cannot be in $p$. Hence as $p$ is a complete type, the formula $\phi(\tup x;\tup b)\leftrightarrow \phi(\tup x;\tup b')$ is in $p$ as required.
\end{proof}


\subsection{Indiscernible sequences}

\begin{definition}
Let $\str M$ be a structure and $A\subseteq \str M$. Let $I$ be a linear order. A sequence $(\tup a_i:i\in I)$ of tuples of $\str M$ is \emph{indiscernible over $A$} if for any $n<\omega$ and indices \[i_1 < \cdots < i_n \qquad i'_1 < \cdots < i'_n\] in $I$, we have \[ \tp(a_{i_1},\ldots, a_{i_n}/A) = \tp(a_{i'_1},\ldots,a_{i'_n}/A).\]
\end{definition}

Another way to state this is that the sequence $(\tup a_i:i\in I)$  is {indiscernible over $A$} if for any $n<\omega$, indices \[i_1 < \cdots < i_n \qquad i'_1 < \cdots < i'_n\] in $I$ and formula $\theta(\tup x_1,\ldots,\tup x_n)$ with parameters in $A$, we have  \[ (1) \qquad \str M\models \theta(\tup a_{i_1},\ldots, \tup a_{i_n}) \leftrightarrow \theta(\tup a_{i'_1},\ldots,\tup a_{i'_n}).\]

If $\Delta$ is a set of formulas with parameters in $A$, we will say that the sequence $(\tup a_i:i\in I)$ is \emph{$\Delta$-indiscernible} if $(1)$ holds for each $\theta$ in $\Delta$. If $\Delta$ and $I$ are both finite, then this is expressible by a single first order formula.

\begin{definition}
Two sequences $(\tup a_i:i\in I)$ and $(\tup b_j : j\in J)$ are \emph{mutually indiscernible} over $A$ if $(\tup a_i:i\in I)$ is indiscernible over $A\cup \{\tup b_j :j\in J\}$ and $(\tup b_j:j\in J)$ is indiscernible over $A\cup \{\tup a_i:i\in I\}$.
\end{definition}

An equivalent definition is that the sequences $(\tup a_i:i\in I)$  and $(\tup b_j : j\in J)$ are \emph{mutually indiscernible} over $A$  if for any $n<\omega$, indices \[i_1 < \cdots < i_n \qquad i'_1 < \cdots < i'_n\] and \[ j_1 < \cdots < j_n \qquad  j'_1 < \cdots < j'_n\]  in $I$ and any formula $\theta(\tup x_1,\ldots,\tup x_n;\tup y_1,\ldots,\tup y_n)$ with parameters in $A$, we have  \[ (2) \qquad \str M\models \theta(\tup a_{i_1},\ldots,\tup  a_{i_n};\tup b_{j_1},\ldots,\tup b_{j_n}) \leftrightarrow \theta(\tup a_{i'_1},\ldots,\tup a_{i'_n};\tup b_{j'_1},\ldots,\tup b_{j'_n}).\]

If $\Delta$ is a set of formulas with parameters in $A$, we will say that the sequences $(\tup a_i:i\in I)$ and $(\tup b_j:j<\omega)$ are  \emph{mutually $\Delta$-indiscernible} if $(2)$ holds for each $\theta$ in $\Delta$. If $\Delta$, $I$ and $J$ are finite, then this is again expressible by a single first-order formula.

In the following lemma, we use the notation $\Av_F | C$ to mean the restriction of the type $\Av_F$ to $C$. We also use the notation $\tup a_{<i}$ to mean $\{\tup a_j: j<i\}$.

\begin{lemma}
Let $A\subseteq B \subseteq \str M$. Let $F$ be an ultrafilter on $A^{\tup x}$. Let $I$ be a linear order and let $(\tup a_i:i\in I)$ be a sequence of tuples of $\str M$ such that:
\[\tup a_i \models \Av_F | B\tup a_{<i}.\]
Then the sequence $(\tup a_i:i\in I)$ is indiscernible over $B$.
\end{lemma}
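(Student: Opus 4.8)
The plan is to prove indiscernibility of $(\tup a_i : i \in I)$ over $B$ by induction on the length $n$ of tuples of indices, exploiting the defining property $\tup a_i \models \Av_F \mid B\tup a_{<i}$ and the fact that $\Av_F$ is a complete type over $\str M$ which is finitely satisfiable in $A$, hence (by Lemma~\ref{lem:fs invariant}) $A$-invariant. The statement to prove by induction on $n$ is: for any formula $\theta(\tup x_1,\ldots,\tup x_n)$ with parameters in $B$ and any increasing tuples $i_1<\cdots<i_n$ and $i_1'<\cdots<i_n'$ in $I$, we have $\str M \models \theta(\tup a_{i_1},\ldots,\tup a_{i_n}) \leftrightarrow \theta(\tup a_{i_1'},\ldots,\tup a_{i_n'})$.

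The base case $n=1$ is immediate: since $\tup a_{i} \models \Av_F \mid B$ for every $i\in I$ (this is implied by $\tup a_i \models \Av_F \mid B\tup a_{<i}$, restricting further), all the $\tup a_i$ realize the same complete type over $B$, so $\theta(\tup a_{i_1})$ and $\theta(\tup a_{i_1'})$ have the same truth value. For the inductive step, suppose the claim holds for $n-1$, and fix $\theta(\tup x_1,\ldots,\tup x_n)$ with parameters in $B$ together with two increasing tuples of indices. First I would handle the \emph{last} coordinate: write $j = i_n$ and $j' = i_n'$, and consider the formula $\psi(\tup x_n) := \theta(\tup a_{i_1},\ldots,\tup a_{i_{n-1}},\tup x_n)$, which has parameters in $B\tup a_{i_1}\cdots\tup a_{i_{n-1}}$. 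Comparing $j$ with $\max\{i_1,\ldots,i_{n-1}\}$: if $j$ is larger than all of $i_1,\dots,i_{n-1}$, then $\tup a_j \models \Av_F \mid B\tup a_{<j} \supseteq \Av_F \mid B\tup a_{i_1}\cdots\tup a_{i_{n-1}}$, so whether $\str M \models \psi(\tup a_j)$ is determined by $\Av_F$. The delicate point is that we want to reduce to the case where both $i_n$ and $i_n'$ come after all the earlier indices; this is where I'd use a standard trick: it suffices to prove the claim comparing a tuple with indices $i_1<\cdots<i_n$ against any tuple obtained from it by ``moving'' indices one at a time, and since any two increasing $n$-tuples from $I$ can be connected through a chain of such moves (and, if necessary, by first passing to a common extension of $\str M$ or enlarging $I$ so that there is enough room — but in fact here $I$ is fixed and it suffices to reorganize the comparison), the general case follows. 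More carefully: by a symmetry/connectedness argument it is enough to prove that the truth value of $\theta(\tup a_{i_1},\ldots,\tup a_{i_n})$ does not change when we replace $i_n$ by any index $j$ with $i_{n-1} < j$ (both arbitrary but larger than $i_{n-1}$), holding $i_1,\ldots,i_{n-1}$ fixed, and then iterate downward through the coordinates using the inductive hypothesis on shorter tuples. For this reduced claim: $\tup a_{i_n} \models \Av_F\mid B\tup a_{i_1}\cdots\tup a_{i_{n-1}}$ and $\tup a_j \models \Av_F\mid B\tup a_{i_1}\cdots\tup a_{i_{n-1}}$ (both $i_n$ and $j$ exceed $i_{n-1}$, hence exceed $i_1,\dots,i_{n-1}$), so both realize the restriction of the \emph{same} type $\Av_F$ to the parameter set $B\tup a_{i_1}\cdots\tup a_{i_{n-1}}$, whence $\str M\models \psi(\tup a_{i_n}) \leftrightarrow \psi(\tup a_j)$, i.e. the truth value of $\theta$ is unchanged.

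Once the last coordinate can be freely moved (as long as it stays after the first $n-1$ indices), the remaining coordinates are handled by the inductive hypothesis: given two increasing $n$-tuples $i_1<\cdots<i_n$ and $i_1'<\cdots<i_n'$, pick any index $k \in I$ with $k > \max(i_{n-1}, i_{n-1}')$; moving the last coordinate of each tuple to $k$ changes no truth values by the above, so it remains to compare $\theta(\tup a_{i_1},\ldots,\tup a_{i_{n-1}},\tup a_k)$ with $\theta(\tup a_{i_1'},\ldots,\tup a_{i_{n-1}'},\tup a_k)$; but this is an instance of the inductive hypothesis applied to the formula $\theta'(\tup x_1,\ldots,\tup x_{n-1}) := \theta(\tup x_1,\ldots,\tup x_{n-1},\tup a_k)$, which has parameters in $B\tup a_k$ — and $A$-invariance of $\Av_F$ (Lemma~\ref{lem:fs invariant}) is exactly what guarantees that enlarging the base from $B$ to $B\tup a_k$ does not break the inductive hypothesis, since $\tup a_k \models \Av_F\mid B\tup a_{<k}$ realizes a type over $B$ finitely satisfiable in $A$, so all the earlier $\tup a_{i_\ell}$, $\tup a_{i_\ell'}$ still realize matching types over $B\tup a_k$. (Here one needs that such an index $k$ exists; if $I$ has a maximum one first passes to an elementary extension in which $I$ is extended by one point realizing $\Av_F$ over everything, which does not affect indiscernibility of the original sequence. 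Alternatively, and more cleanly, reorganize the induction to move coordinates from the front rather than the back, using finite satisfiability in $A$ symmetrically — the choice is cosmetic.)

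The main obstacle, and the step deserving the most care, is the bookkeeping in the inductive step: correctly tracking which parameter sets are involved, and invoking $A$-invariance (Lemma~\ref{lem:fs invariant}) to justify that the inductive hypothesis — stated for base $B$ — can be applied after absorbing one of the $\tup a_i$'s into the parameters. Everything else (the base case, the single-coordinate move, the connectedness of the space of increasing $n$-tuples under elementary moves) is routine. No genuinely new idea beyond finite satisfiability and invariance of average types is required.
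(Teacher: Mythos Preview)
Your approach is essentially the paper's—induction on $n$, using that each $\tup a_i$ realizes $\Av_F$ over $B$ and the earlier elements, together with $A$-invariance of $\Av_F$—but you have complicated the inductive step unnecessarily. The paper's argument is direct: given increasing $(n{+}1)$-tuples $(i_1,\ldots,i_{n+1})$ and $(j_1,\ldots,j_{n+1})$, the inductive hypothesis gives $\tp(\tup a_{i_1},\ldots,\tup a_{i_n}/B)=\tp(\tup a_{j_1},\ldots,\tup a_{j_n}/B)$; since $A\subseteq B$ these agree over $A$, so by $A$-invariance $\theta(\tup x;\tup a_{i_1},\ldots,\tup a_{i_n})\in p$ iff $\theta(\tup x;\tup a_{j_1},\ldots,\tup a_{j_n})\in p$; and since $\tup a_{i_{n+1}}$ and $\tup a_{j_{n+1}}$ each realize $p$ over $B$ together with their respective predecessors, the truth values of $\theta$ on the full tuples agree. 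No common index $k$ is needed, and the worry about $I$ having a maximum never arises (indeed $k=\max(i_n,i_n')$ would always work in your version anyway).

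Your detour through a common last index $k$ is harmless, but your description of the final comparison—``apply the inductive hypothesis with base enlarged to $B\tup a_k$''—is not literally valid: the inductive hypothesis is stated over $B$, not over $B\tup a_k$. What you actually need at that point is precisely the paper's direct argument (IH over $B$, then $A$-invariance, then realization by $\tup a_k$), not an appeal to a stronger IH. Your parenthetical justification via invariance points in the right direction but is phrased backwards: the relevant fact is not that $\tup a_k$ realizes a type finitely satisfiable in $A$, but that the two $(n{-}1)$-tuples have the same type over $A$ (by IH, since $A\subseteq B$), so $A$-invariance of $p$ forces $\tup a_k$ to behave identically toward them.
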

\begin{proof}
Write $p=\Av_F$. Note that $p$ is finitely satisfiable in $A$ and a fortiori finitely satisfiable in $B$.

We prove by induction on $n$ that if $n<\omega$ and $i_1 < \cdots < i_n$, $j_1 < \cdots < j_n$ are in $I$, then $\tp(\tup a_{i_1},\ldots, \tup a_{i_n}/A) = \tp(\tup a_{j_1},\ldots,\tup a_{j_n}/A)$. For $n=1$ this follows from the fact that all $\tup a_i$ realize $\Av_F | B$, which is a complete type over $B$. Assume we know it for $n$ and take $i_1 < \cdots < i_n< i_{n+1}$, $j_1 < \cdots < j_n< j_{n+1}$ in $I$. By induction hypothesis, we have \[\tp(\tup a_{i_1},\ldots, \tup a_{i_n}/B) = \tp(\tup a_{j_1},\ldots,\tup a_{j_n}/B).\]
By Lemma \ref{lem:fs invariant}, for any formula $\theta(\tup x;\tup y_1,\ldots, \tup y_n)$ with parameters in $B$, we have:
\[ \theta(\tup x;\tup a_{i_1},\ldots,\tup a_{i_n}) \in p \iff \theta(\tup x;\tup a_{j_1},\ldots,\tup a_{j_n})\in p.\]
Now since $\tup a_{i_{n+1}} \models p| Ba_{i_1}\ldots a_{i_n}$, we have 
\[ \theta(\tup x;\tup a_{i_1},\ldots,\tup a_{i_n}) \in p \iff \str N \models \theta(\tup a_{i_{n+1}};\tup a_{i_1},\ldots,\tup a_{i_n}),\]
and similarly since $\tup a_{j_{n+1}}\models p| Ba_{j_1}\ldots a_{j_n}$, we have
\[ \theta(\tup x;\tup a_{j_1},\ldots,\tup a_{j_n}) \in p \iff \str N \models \theta(\tup a_{j_{n+1}};\tup a_{j_1},\ldots,\tup a_{j_n}).\]
Putting all of this together, we get
\[ \str N \models \theta(\tup a_{i_{n+1}};\tup a_{i_1},\ldots,\tup a_{i_n}) \iff \str N \models \theta(\tup a_{j_{n+1}};\tup a_{j_1},\ldots,\tup a_{j_n}).\]
Since the formula $\theta$ was an arbitrary formula with parameters in $B$, we deduce
\[ \tp(\tup a_{i_1},\ldots,\tup a_{i_{n+1}}/B) = \tp(\tup a_{j_1},\ldots,\tup a_{j_{n+1}}/B)\] as required.
\end{proof}

\begin{definition}
Let the type $p(\tup x)\in \St(\str M)$ be finitely satisfiable in $A\subset \str M$ and let $A\supseteq B\subset\str M$. A sequence $(\tup a_i :i\in I)$ of tuples in $\str M^{\tup x}$ such that $\tup a_i \models p|B\tup a_{<i}$ is called a \emph{Morley sequence} of $p$ over $B$.
\end{definition}

By the previous lemma, a Morley sequence of $p$ over $B$ is indiscernible over $B$.

\subsection{Building indiscernible sequences}

Indiscernible sequences are easy to find thanks to Ramsey's theorem.

\begin{definition}
Let $(\tup a_i:i<\omega)$ be a sequence of tuples in some structure $\str M$. A family $(\tup b_i:i\in I)$ indexed by a linear order $I$ is \emph{based on $(\tup a_i)_{i<\omega}$} if for any formula $\theta(x_1,\ldots,x_n)\in L$ and $i_1<\ldots <i_n$ in $I$, if $\str M\models \theta(b_{i_1},\ldots,\tup b_{i_n})$ then there are $j_1<\ldots <j_n<\omega$ such that $\str M\models \theta(\tup a_{j_1},\ldots,\tup a_{j_n})$.
\end{definition}

Note that if $(\tup a_i:i<\omega)$ is indiscernible and $(\tup b_i:i\in I)$ is based on it, then it is also indiscernible: indeed for any $i_1<\ldots <i_n$ in $I$ and any $j_1<\ldots <j_n<\omega$, we have \[\tp(\tup b_{i_1},\ldots,\tup b_{i_n}) = \tp(\tup a_{j_1},\ldots,\tup a_{j_n}).\]

\begin{proposition}\label{prop:ramsey}
Let $(\tup a_i:i<\omega)$ be a sequence of tuples in some structure $\str M$ and let $I$ be any linearly ordered set. There is an elementary extension $\str M\prec \str N$ and a sequence $(\tup b_i:i\in I)$ of tuples of $\str N$ that is {based on $(\tup a_i)_{i<\omega}$}.
\end{proposition}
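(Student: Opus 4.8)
The plan is to prove Proposition~\ref{prop:ramsey} by a standard compactness-plus-Ramsey argument. The target is an elementary extension $\str N\succ\str M$ together with a sequence $(\tup b_i : i\in I)$ based on $(\tup a_i)_{i<\omega}$. The natural move is to realize this sequence as a set of new constants satisfying an appropriate theory, and to check that every finite fragment of that theory is satisfiable by reusing tuples from $(\tup a_i)_{i<\omega}$ inside $\str M$ itself.

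First I would set up the theory. Let $\Sigma$ be the signature of $\str M$ and let $\Sigma' = \Sigma \cup \str M \cup \{c_i : i\in I\}$, adding a fresh constant $c_i$ for each $i\in I$ together with constants naming every element of $\str M$. Let $T$ be the $\Sigma'$-theory consisting of: (a) the elementary diagram of $\str M$ (this forces any model to be an elementary extension of $\str M$, as recalled in Appendix~\ref{app:model-theory}); and (b) for every formula $\theta(\tup x_1,\ldots,\tup x_n)$ over $\Sigma$ (with parameters allowed from $\str M$) and every increasing tuple $i_1 < \cdots < i_n$ in $I$, the sentence
\[
\theta(\tup c_{i_1},\ldots,\tup c_{i_n}) \;\to\; \bigl(\text{some increasing } j_1 < \cdots < j_n < \omega \text{ has } \str M \models \theta(\tup a_{j_1},\ldots,\tup a_{j_n})\bigr).
\]
Of course the right-hand disjunction over witnessing tuples is infinite, so the clean way to encode (b) is instead as a \emph{negative} axiom: for every $\theta$ such that $\str M \not\models \theta(\tup a_{j_1},\ldots,\tup a_{j_n})$ for all $j_1<\cdots<j_n<\omega$, and every $i_1<\cdots<i_n$ in $I$, include the sentence $\neg\theta(\tup c_{i_1},\ldots,\tup c_{i_n})$. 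A model of $T$ (with the new constants interpreted as the $\tup b_i$) then gives exactly a based sequence.

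Next I would verify finite satisfiability. Take a finite $T' \subseteq T$. It mentions only finitely many constants $c_{i}$, say with indices $i_1 < \cdots < i_m$ in $I$, and only finitely many $\Sigma$-formulas, involving only finitely many parameters from $\str M$. The finite fragment of the elementary diagram in $T'$ is satisfied by $\str M$ with its canonical interpretation. For the negative axioms: each such axiom $\neg\theta(\tup c_{k_1},\ldots,\tup c_{k_n})$ in $T'$ records that the pattern $\theta$ is not realized by any increasing $n$-tuple from $(\tup a_j)_{j<\omega}$. Here is where Ramsey's theorem enters: color each increasing $m$-tuple $j_1 < \cdots < j_m < \omega$ by the quantifier-free-in-the-relevant-formulas type it induces (there are finitely many colors since $T'$ is finite), and extract an infinite monochromatic subsequence $(\tup a_{j_\ell})_{\ell<\omega}$. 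Interpreting $c_{i_\ell}$ as $\tup a_{j_\ell}$ for $\ell = 1,\ldots,m$ then satisfies all the $\Sigma$-formula constraints of $T'$ simultaneously and, by construction, falsifies exactly the $\theta$'s that are not realized by any increasing tuple from $(\tup a_j)$ — so the negative axioms hold. Hence $\str M$ with this interpretation is a model of $T'$.

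By the compactness theorem (Theorem~\ref{thm:compactness}), $T$ has a model $\str N'$; letting $\str N$ be its $\Sigma$-reduct (and identifying $\str M$ with its image, as in Appendix~\ref{app:model-theory}) gives $\str M \prec \str N$, and setting $\tup b_i$ to be the interpretation of $c_i$ yields a sequence based on $(\tup a_i)_{i<\omega}$. The main obstacle — really the only non-bookkeeping point — is the Ramsey extraction in the finite-satisfiability step: one must be careful that a \emph{single} infinite monochromatic subsequence simultaneously handles all finitely many formulas appearing in $T'$, which is why the coloring must record the full (finite) ``$\Delta$-type'' of an increasing $m$-tuple, with $\Delta$ the finite set of formulas occurring in $T'$, rather than one formula at a time. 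The remark following Definition of based sequences already observes that a based sequence over an indiscernible one is again indiscernible, so no extra work is needed there.
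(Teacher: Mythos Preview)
Your proposal is correct and expands exactly the paper's one-line proof (``Follows from Ramsey and compactness''): you set up the elementary diagram plus the negative axioms encoding ``based on'', and verify finite satisfiability by interpreting the new constants as a suitable increasing subtuple of the $\tup a_j$'s. One small remark: for the proposition \emph{as stated} (merely ``based on'', not indiscernible), the Ramsey extraction is actually superfluous---any increasing $m$-tuple from $(\tup a_j)_{j<\omega}$ already satisfies every negative axiom in $T'$, since those axioms only forbid formulas that are realized by \emph{no} increasing tuple; the word ``exactly'' in your description is thus an overstatement, but harmless for the argument.
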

\begin{proof}
Follows from Ramsey and compactness.
\end{proof}

We have analogues for two sequences.

\begin{definition}
Let $(\tup a_i:i<\omega)$ and $(\tup a'_i:i<\omega)$ be two sequences of tuples in $\str M$. Two families $(\tup b_i:i\in I)$, $(\tup b'_j:j\in J)$ indexed by linear orders $I$ and $J$ are \emph{based on $(\tup a_i)_{i<\omega}$ and $(\tup a'_i)_{i<\omega}$} if for any formula $\theta(\tup x_1,\ldots,\tup x_n;\tup y_1,\ldots,\tup y_m)\in L$ and $i_1<\ldots <i_n$ in $I$ and $j_1<\ldots<j_m$ in $J$, if $\str M\models \theta(\tup b_{i_1},\ldots,\tup b_{i_n};\tup b'_{j_1},\ldots,\tup b'_{j_n})$ then there are $k_1<\ldots <k_n<\omega$ and $k'_1<\ldots<k'_m<\omega$ such that $\str M\models \theta(\tup a_{k_1},\ldots,\tup a_{k_n};\tup a_{k'_1},\ldots,\tup a_{k'_m})$.
\end{definition}

Here is a finitary version of Proposition \ref{prop:ramsey} for two sequences.

\begin{lemma}\label{lem:ramsey 2}
Let $\Delta$ be a finite set of formulas and let $m,d<\omega$. Then there is some $m_* <\omega$ such that if $(\tup a_i:i<m_*)$ and $(\tup b_i:i<m_*)$ are two sequences of $d$-tuples of a structure $\str M$, then there are $(\tup a'_i:i<m)$ and $(\tup b'_i:i<m)$ subsequences of $(\tup a_i)_{i<m_*}$ and $(\tup b_i)_{i<m_*}$ respectively such that the sequences $(\tup a'_i:i<m)$ and $(\tup b'_i:i<m)$ are mutually $\Delta$-indiscernible.
\end{lemma}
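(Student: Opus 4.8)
The plan is to prove this purely combinatorially, as a consequence of the \emph{product Ramsey theorem} (the same tool invoked for Lemma~\ref{lem:grid-ramsey}), obtaining a bound $m_*$ that depends only on $\Delta$, $d$ and $m$, uniformly over all structures $\str M$.

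First I would fix an integer $r\ge 1$ such that every formula $\theta\in\Delta$, together with its declared partition of variables, has the form $\theta(\bar x_1,\ldots,\bar x_p;\bar y_1,\ldots,\bar y_q)$ with $p,q\le r$, each $\bar x_i$ and $\bar y_j$ a block of $d$ variables. Given a structure $\str M$ and two sequences $(\bar a_i:i<m_*)$, $(\bar b_i:i<m_*)$ of $d$-tuples of $\str M$, and for each pair $(p,q)$ with $1\le p,q\le r$, define a colouring $c_{p,q}$ of $[m_*]^p\times[m_*]^q$ by letting $c_{p,q}(\{i_1<\cdots<i_p\},\{j_1<\cdots<j_q\})$ be the set of those $\theta\in\Delta$ having $p$ blocks of $\bar x$-variables and $q$ blocks of $\bar y$-variables with $\str M\models\theta(\bar a_{i_1},\ldots,\bar a_{i_p};\bar b_{j_1},\ldots,\bar b_{j_q})$. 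Each $c_{p,q}$ uses at most $2^{|\Delta|}$ colours, and there are at most $r^2$ of them.

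Next I would invoke the product Ramsey theorem in its two-coordinate form (of which Lemma~\ref{lem:grid-ramsey} is another instance): for colourings of $[m']^p\times[n']^q$ into a bounded number of colours one can pass to subsets on which the colouring is constant. Applying this once for each of the finitely many pairs $(p,q)$ with $1\le p,q\le r$, shrinking the two index sets each time, yields a value $m_*$ — an appropriate iterate of the product Ramsey bound for the parameters $|\Delta|$, $r$, $m$, and hence independent of $\str M$ — such that there exist $A,B\subseteq[m_*]$ with $|A|=|B|=m$ on which \emph{every} $c_{p,q}$ is constant, i.e.\ $c_{p,q}$ is constant on $[A]^p\times[B]^q$ for all $p,q\le r$ (the cases $p=0$ or $q=0$ being trivially constant).

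Finally, writing $A=\{i_0<\cdots<i_{m-1}\}$ and $B=\{j_0<\cdots<j_{m-1}\}$, set $\bar a'_k:=\bar a_{i_k}$ and $\bar b'_k:=\bar b_{j_k}$ for $k<m$. For any $\theta\in\Delta$ with $p$ blocks of $\bar x$ and $q$ blocks of $\bar y$, and any increasing tuples $k_1<\cdots<k_p$, $l_1<\cdots<l_q$ in $\{0,\ldots,m-1\}$, the truth of $\str M\models\theta(\bar a'_{k_1},\ldots,\bar a'_{k_p};\bar b'_{l_1},\ldots,\bar b'_{l_q})$ is determined by the constant value of $c_{p,q}$ on $[A]^p\times[B]^q$, hence does not depend on the choice of the $k$'s and $l$'s; this is exactly condition $(2)$, so $(\bar a'_k:k<m)$ and $(\bar b'_k:k<m)$ are mutually $\Delta$-indiscernible. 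The only point requiring real care is the precise formulation and invocation of the product Ramsey theorem and the bookkeeping of the iteration over the $\le r^2$ block-count pairs; the rest is routine. (Alternatively one could argue by compactness — if no $m_*$ worked, new constants for two $\omega$-sequences of $d$-tuples plus the first-order assertions that no length-$m$ subsequences are mutually $\Delta$-indiscernible would be finitely satisfiable, producing a structure contradicting the infinitary product Ramsey theorem — but that route rests on the same Ramsey-theoretic input and additionally on the expressibility of mutual $\Delta$-indiscernibility of finite sequences, so the direct argument above is preferable.)
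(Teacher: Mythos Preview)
The paper does not actually supply a proof of this lemma: it is stated as a ``finitary version of Proposition~\ref{prop:ramsey} for two sequences'' and left unproved, with Proposition~\ref{prop:ramsey 2} then derived from it by compactness. Your proposal therefore fills a genuine gap rather than paralleling an existing argument.

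Your approach is correct and is the natural one. The only substantive ingredient is the polarized (product) Ramsey theorem in the form: for every $p,q,k,m$ there is $N$ such that any $k$-colouring of $[N]^p\times[N]^q$ admits $A,B\in[N]^{(m)}$ with the colouring constant on $[A]^p\times[B]^q$. This is standard and, as you note, is the higher-arity cousin of the pair-colouring version already invoked in the paper as Lemma~\ref{lem:grid-ramsey}. The iteration over the finitely many arity-pairs $(p,q)\le r$ is routine bookkeeping, and the final identification of constant colourings with condition~(2) is immediate. The compactness alternative you sketch at the end is also fine but, as you say, rests on the same Ramsey input.
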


\begin{proposition}\label{prop:ramsey 2}
Let $(\tup a_i:i<\omega)$ and $(\tup a'_i:i<\omega)$ be two sequences of tuples in $\str M$ and let $I,J$ be two linearly ordered sets. There is an elementary extension $\str M\prec \str N$ and sequences $(\tup b_i:i\in I)$ and $(\tup b'_j:j\in J)$ of tuples of $\str N$ which are {based on $(\tup a_i)_{i<\omega}$ and $(\tup a'_i)_{i<\omega}$}.
\end{proposition}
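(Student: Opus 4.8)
The plan is to adapt the compactness proof of the one-sequence version, Proposition~\ref{prop:ramsey}: introduce new constant-tuples indexed by $I$ and by $J$, write down a first-order theory whose models are exactly the elementary extensions $\str N\succ\str M$ carrying sequences $(\tup b_i)_{i\in I}$ and $(\tup b'_j)_{j\in J}$ based on $(\tup a_i)_{i<\omega}$ and $(\tup a'_i)_{i<\omega}$, and then verify finite satisfiability. The point that makes this go through cleanly is that \emph{being based on} is a negative condition: it forbids, on strictly increasing index-tuples of the new sequences, exactly those $L$-formulas $\theta$ which are realised on \emph{no} strictly increasing index-tuples of $(\tup a_i)$ and $(\tup a'_i)$ in $\str M$. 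Which formulas are forbidden is decided at the meta-level, so no infinite disjunction appears and ordinary compactness suffices.

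Concretely, I would work over $\Sigma^+=\Sigma\cup\{c_m:m\in\str M\}\cup\{\tup d_i:i\in I\}\cup\{\tup e_j:j\in J\}$, where the $\tup d_i$, $\tup e_j$ are fresh constant-tuples matching the lengths of $\tup a_i$, $\tup a'_j$, and let $T^+$ consist of: (i) the elementary diagram of $\str M$ (over $\Sigma\cup\{c_m\}$); and (ii) for every $L$-formula $\theta(\bar x_1,\dots,\bar x_n;\bar y_1,\dots,\bar y_m)$ admitting \emph{no} indices $k_1<\dots<k_n<\omega$, $k'_1<\dots<k'_m<\omega$ with $\str M\models\theta(\tup a_{k_1},\dots,\tup a_{k_n};\tup a'_{k'_1},\dots,\tup a'_{k'_m})$, and for all $i_1<\dots<i_n$ in $I$ and $j_1<\dots<j_m$ in $J$, the sentence $\neg\theta(\tup d_{i_1},\dots,\tup d_{i_n};\tup e_{j_1},\dots,\tup e_{j_m})$. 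A model of $T^+$ gives, via its $\Sigma$-reduct and the identification of $m$ with the interpretation of $c_m$, an elementary extension $\str M\prec\str N$ together with sequences $(\tup b_i)_{i\in I}$, $(\tup b'_j)_{j\in J}$; by contraposition, clause (ii) says precisely that these sequences are based on $(\tup a_i)_{i<\omega}$ and $(\tup a'_i)_{i<\omega}$.

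For finite satisfiability: a finite $T_0\subseteq T^+$ mentions only finitely many new constants $\tup d_i$ ($i$ in a finite $I_0\subseteq I$) and $\tup e_j$ ($j$ in a finite $J_0\subseteq J$). I would interpret these inside $\str M$ by choosing strictly increasing $n_1<\dots<n_{|I_0|}<\omega$ and $n'_1<\dots<n'_{|J_0|}<\omega$ and assigning the $i$th element of $I_0$ the value $\tup a_{n_i}$ and the $j$th element of $J_0$ the value $\tup a'_{n'_j}$. Any axiom $\neg\theta(\tup d_{i_1},\dots;\tup e_{j_1},\dots)$ in $T_0$ then holds: if it failed, we would have $\str M\models\theta$ on the corresponding strictly increasing tuple of $\tup a$'s and $\tup a'$'s, which is exactly a witness contradicting the choice of $\theta$ in clause (ii). So $\str M$ with these interpretations is a model of $T_0$, and compactness (Theorem~\ref{thm:compactness}) yields a model of $T^+$, completing the proof.

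The only place where Ramsey genuinely enters is if one additionally wants $(\tup b_i)$ and $(\tup b'_j)$ to be \emph{mutually indiscernible}: then I would add to $T^+$ the usual axioms asserting mutual indiscernibility over $\emptyset$ (for each finite $\Delta$ and each pair of equal-length strictly increasing index-tuples in $I$, resp.\ in $J$, the biconditionals equating $\theta$-values, for $\theta\in\Delta$). Checking finite satisfiability of such an enlarged $T_0$ needs a single interpretation satisfying finitely many of these biconditionals at once, which is supplied by the finitary Lemma~\ref{lem:ramsey 2} applied to the truncations $(\tup a_i)_{i<m_*}$, $(\tup a'_i)_{i<m_*}$ with $\Delta$ the finite set of formulas occurring in $T_0$ and $m=\max(|I_0|,|J_0|)$; the resulting mutually $\Delta$-indiscernible subsequences still carry strictly increasing indices of $\omega$, so the negative axioms of clause (ii) in $T_0$ survive for the same reason as before. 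The hard part, such as it is, is this bookkeeping — pairing the right finite $\Delta$ and length bound with Lemma~\ref{lem:ramsey 2} and confirming that passing to Ramsey subsequences does not spoil the ``based on'' axioms — but no genuine difficulty arises.
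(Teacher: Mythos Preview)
Your proposal is correct. In fact, your first argument (pure compactness, no Ramsey) already proves the statement as written, and is a cleaner route than the paper's one-line ``Follows from Lemma~\ref{lem:ramsey 2} and compactness'': for the bare conclusion that $(\tup b_i)_{i\in I}$ and $(\tup b'_j)_{j\in J}$ are \emph{based on} the given sequences, interpreting the finitely many new constants by an increasing block of the original $\tup a_i$'s and $\tup a'_j$'s trivially satisfies every clause-(ii) axiom, so Ramsey is unnecessary. The paper's invocation of Lemma~\ref{lem:ramsey 2} is presumably either habit (this is the standard template for producing indiscernibles from arbitrary sequences) or because the intended conclusion is the stronger one with mutual indiscernibility built in --- your final paragraph handles that stronger version correctly and matches what the paper's proof sketch actually delivers. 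The only application of the proposition, in Lemma~\ref{lem:two sequences}, starts from mutually indiscernible input, and you rightly observe that ``based on'' then automatically transfers mutual indiscernibility to the output, so either version suffices there.
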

\begin{proof}
Follows from Lemma \ref{lem:ramsey 2} and compactness.
\end{proof}

\begin{lemma}
    Let $\str N$  be a model and $I=(\tup a_i:i<\omega)$ an indiscernible sequence of tuples of $\str N$. There is an elementary extension $\str N\prec \str N'$, a submodel $\str M\prec \str N'$ and an ultrafilter $F$ on $\str M^{\tup x}$ such that $I$ is a Morley sequence of $\Av_F$ over $\str M$.
\end{lemma}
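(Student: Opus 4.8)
The plan is to realize the indiscernible sequence $I=(\tup a_i : i<\omega)$ as a Morley sequence by first extracting a canonical ``limit'' type from its tail behaviour and then packaging that limit as an average type over an ultrafilter. First I would consider the \emph{EM-type} of $I$: the set of formulas $\theta(\tup x)$ over $\emptyset$ (or over any fixed base we wish to keep) such that $\theta(\tup a_{i_1},\dots,\tup a_{i_n})$ holds for all increasing tuples of indices; by indiscernibility this is well defined, and by compactness (Proposition~\ref{prop:ramsey}) I can extend $I$ inside an elementary extension of $\str N$ to an indiscernible sequence indexed by $\omega+1$ (or by a longer order), so that there is a single tuple $\tup a_\omega$ sitting at the top with $\tp(\tup a_\omega/\set{\tup a_i : i<\omega})$ finitely satisfiable in $\set{\tup a_i:i<\omega}$. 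This last point is the crucial mechanism: the type of $\tup a_\omega$ over the set $A_0:=\set{\tup a_i:i<\omega}$ is finitely satisfiable in $A_0$, because any finite conjunction of its formulas already holds of some $\tup a_j$ with $j$ large, by indiscernibility.

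Next I would build the submodel $\str M$ and the elementary extension $\str N'$. Work in a sufficiently saturated elementary extension $\str N''$ of $\str N$ (obtained by compactness) containing the enlarged indiscernible sequence. By the downward Löwenheim–Skolem theorem, choose $\str M \prec \str N''$ containing $A_0 = \set{\tup a_i : i<\omega}$ and small (of cardinality $\le |I| + |\Sigma| + \aleph_0$); then take $\str N'$ to be $\str N''$ itself (or any elementary extension of both $\str N$ and $\str M$ that is $|\str M|^+$-saturated). Now by Lemma~\ref{lem:satisfaction} and the observation above, the type $p_0 := \tp(\tup a_\omega / A_0)$ is finitely satisfiable in $A_0$, hence finitely satisfiable in $\str M$. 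By Lemma~\ref{lem:extending fs} extend $p_0$ to a complete type $p(\tup x) \in \St(\str N')$ that is still finitely satisfiable in $\str M$. By the filter characterisation (the lemma stating that a complete type finitely satisfiable in $\str M$ equals $\Av_F$ for some ultrafilter $F$ on $\str M^{\tup x}$), we get an ultrafilter $F$ on $\str M^{\tup x}$ with $p = \Av_F$.

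Finally I would check that $I$ is a Morley sequence of $\Av_F$ over $\str M$, i.e. that $\tup a_i \models \Av_F\restriction \str M\,\tup a_{<i}$ for each $i<\omega$. The key point is that $\Av_F\restriction \str M$ is the \emph{same} complete type over $\str M$ as $\tp(\tup a_i/\str M)$ for every $i$: indeed each $\tup a_i$ realises the EM-type of $I$ over $\str M$ restricted appropriately, and since $p$ extends $p_0 = \tp(\tup a_\omega/A_0)$ and $A_0 \subseteq \str M$, the restriction of $p$ to $\str M$ is determined by the $\emptyset$-EM-type plus the finitely-satisfiable-in-$A_0$ constraint, which is exactly $\tp(\tup a_i/\str M)$ by indiscernibility and finite satisfiability in $A_0$ — here I should be slightly careful and may need to arrange, by passing to $\str M$ generated over $A_0$ in a controlled way or by first replacing $I$ with a based copy, that $\tp(\tup a_i/\str M)$ does not depend on $i$; this can always be done using Proposition~\ref{prop:ramsey} to replace $I$ by an indiscernible sequence over $\str M$ based on it, which does not change the EM-type nor the conclusion. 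More directly: build a sequence $(\tup a_i' : i<\omega)$ by recursion with $\tup a_i' \models \Av_F \restriction \str M\,\tup a'_{<i}$; by the lemma on Morley sequences being indiscernible this sequence is indiscernible over $\str M$, and it has the same EM-type over $\emptyset$ as $I$ (since $\Av_F$ extends $p_0$), so by indiscernibility $I$ and $(\tup a_i')$ have the same type, whence $I$ itself is (conjugate to, hence is) such a Morley sequence after adjusting $F$ by the corresponding automorphism. The main obstacle is precisely this coherence bookkeeping — ensuring that the ``limit'' type $\Av_F$ restricts over $\str M$ to the common type of the $\tup a_i$ and that $\str M$ is chosen small enough for $F$ to live on $\str M^{\tup x}$ while still being an elementary submodel containing the sequence — but all the needed tools (compactness, Löwenheim–Skolem, Lemmas~\ref{lem:satisfaction} and~\ref{lem:extending fs}, and the ultrafilter/average-type correspondence) are already available above.
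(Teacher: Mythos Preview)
There is a genuine gap in your main construction: you choose $\str M$ to \emph{contain} $A_0=\{\tup a_i:i<\omega\}$. Once $\tup a_i\in\str M$, the complete type $\tp(\tup a_i/\str M)$ contains the formula $\tup x=\tup a_i$, so the only way $\tup a_i$ can realize $\Av_F|\str M$ is if $F$ is the principal ultrafilter at $\tup a_i$---for every $i$ simultaneously. Thus a non-constant $I$ can never be a Morley sequence of $\Av_F$ over a base that contains it. Your suggested patch of ``replacing $I$ by an indiscernible sequence over $\str M$ based on it'' does not help directly, since the lemma asks for the original $I$ to be a Morley sequence, not some copy with the same EM-type.

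The paper avoids this by extending $I$ on the \emph{right} by a tail $J=(\tup b_i:i\in\mathbb Z)$ so that $I+J$ is indiscernible, putting the ultrafilter $F_0$ on $J$ (concentrating toward $-\infty$, i.e.\ toward $I$), and then building $\str M$ to contain $J$ but \emph{not} $I$. Indiscernibility of $I+J$ gives that each $\tup a_i$ realizes $\Av_{F_0}|J\tup a_{<i}$. The passage to an honest submodel $\str M\supseteq J$ is then done exactly by your ``adjust by an automorphism'' idea---but crucially the automorphism can be taken to fix $J$ pointwise (because the freshly built Morley sequence $I'$ and $I$ have the same type over $J$, not just over $\emptyset$), so $F_0$ is preserved and one simply extends it to $\str M$. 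Your final paragraph is therefore on the right track; what is missing is separating the set carrying the ultrafilter from the sequence $I$ itself, and arranging that the automorphism fixes that set.
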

\begin{proof}
	In an elementary extension of $\str N$, we can increase the sequence to $I + J$, where $J = (\tup b_i :i \in \mathbb Z)$ so that the sequence $I+J$ is indiscernible. Let $F_0$ be an ultrafilter on $\setof{\tup b_i}{i\in \mathbb Z}$ that contains all subsets of the form $\setof{\tup b_i}{i<n}$ for $n\in \mathbb Z$. It follows from indiscernibility that the sequence $I$ is a Morley sequence of $\Av_{F_0}$ over $\setof{\tup b_i}{i\in \mathbb Z}$. Possibly up to passing to a further elementary extension, we can find an elementary submodel $\str M$ such that $I$ is a Morley sequence of $\Av_{F_0}$ over $\str M$. One can see that by compactness, or alternatively, take $\str M_0$ any model containing $\setof{\tup a'_i}{i\in \mathbb Z}$, let $I' = (\tup a'_i:i<\omega)$ be a Morley sequence of $\Av_{F_0}$ over $M_0$. Now $I$ has the same type as $I'$ over $J$, so passing to an elementary extension, there is an automorphism $\sigma$ fixing $\setof{\tup b_i}{i\in \mathbb Z}$ pointwise and sending $I'$ to $I$. Then take $M = \sigma(M_0)$.
	
	Finally, define $F$ to be the unique ultrafilter on $M$ extending $F_0$ (so a set $A$ is in $F$ if and only if it contains a set in $F_0$). Then $I$ is a Morley sequence of $\Av_F$ over $M$.
\end{proof}

\begin{lemma}\label{lem:limit types}
Let $\str N$ be a structure and let $I=(\tup a_i:i<\omega)$ and $J= (\tup b_j:j<\omega)$ two mutually indiscernible sequences of tuples of $\str N$. There is an elementary extension $\str N\prec \str N'$, a submodel $\str M\prec \str N'$ two ultrafilters $F$ and $F'$ on $\str M^{\tup x}$ such that $I$ is a Morley sequence of $\Av_F$ over $\str M \cup \{\tup a_i:i<\omega\}$ and $J$ is a Morley sequence of $\Av_{F'}$ over  $\str M\cup \{\tup b_j:j<\omega\}$.
\end{lemma}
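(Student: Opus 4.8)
The plan is to adapt the argument of the preceding lemma, applied symmetrically to the two sequences. (We establish the conclusion in the corrected form: $I$ is a Morley sequence of $\Av_F$ over $\str M\cup J$, and $J$ is a Morley sequence of $\Av_{F'}$ over $\str M\cup I$.) First, since $I$ and $J$ are infinite and mutually indiscernible, a routine compactness argument, in the spirit of Proposition~\ref{prop:ramsey 2}, lets us pass to a sufficiently saturated elementary extension $\str N'\succ\str N$ in which $I$ and $J$ extend to mutually indiscernible sequences $\tilde I=I\frown I^*$ and $\tilde J=J\frown J^*$, where the tails $I^*=(\tup a_i)_{i\in\Z}$ and $J^*=(\tup b_j)_{j\in\Z}$ are indexed by a copy of $\Z$ placed immediately after the respective $\omega$-segment. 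Let $F_0$ (resp.\ $F_0'$) be an ultrafilter on $I^*$ (resp.\ $J^*$) containing every set $\{\tup a_i:i<n\}$ (resp.\ $\{\tup b_j:j<n\}$), $n\in\Z$, and set $p=\Av_{F_0}$ and $q=\Av_{F_0'}$; these are complete types over $\str N'$, finitely satisfiable in $I^*$ and $J^*$ respectively.

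The combinatorial core is to show that $I$ is a Morley sequence of $p$ over $I^*\cup J^*\cup J$ and, symmetrically, that $J$ is a Morley sequence of $q$ over $I^*\cup J^*\cup I$. This is the order-pattern computation from the proof of the preceding lemma, now carried out with the other sequence (and the tails) present as extra parameters. In detail: fix $i<\omega$ and a formula $\phi(\tup x)$ with parameters among $I^*\cup J^*\cup J\cup\{\tup a_j:j<i\}$; by mutual indiscernibility of $\tilde I$ over $\tilde J$, the set $\{m\in\Z:\str N'\models\phi(\tup a_m)\}$ is a union of the cells of $\Z$ cut out by the finitely many $\Z$-indices of parameters of $\phi$ coming from $I^*$, and $\tup a_i$ lies, in the order pattern, in the left-most such cell $(-\infty,n_0)$ (with $n_0$ absent and the cell equal to $\Z$ if $\phi$ has no $I^*$-parameters). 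Since $F_0$ contains $(-\infty,n_0)$, this gives $\phi\in p\Leftrightarrow\str N'\models\phi(\tup a_i)$, i.e.\ $\tup a_i\models p|(I^*J^*J\{\tup a_j:j<i\})$; the statement for $J$ is identical with the roles of $\tilde I$ and $\tilde J$ exchanged.

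It remains to enlarge the base $I^*\cup J^*$ to a genuine elementary submodel without losing either Morley property. This is the analogue of the automorphism step in the preceding lemma, but keeping $I,J$ fixed: choose an elementary submodel $\str M\prec\str N'$ with $I^*\cup J^*\subseteq\str M$ such that $\tp(\str M/I^*J^*IJ)$ is finitely satisfiable in $I^*\cup J^*$. Such a coheir submodel exists: take any small model $\str M_0$ with $I^*\cup J^*\subseteq\str M_0$, extend $\tp(\str M_0/I^*J^*)$ to a complete type over $\str N'$ finitely satisfiable in $I^*\cup J^*$ by Lemma~\ref{lem:extending fs}, and let $\str M$ be a realization (an elementary copy of $\str M_0$ fixing $I^*\cup J^*$ pointwise). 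Now, for $i<\omega$ and a formula $\phi(\tup x,\tup m,\tup d)$ with $\tup m$ from $\str M$, $\tup d$ from $J\cup\{\tup a_j:j<i\}$ and $\str N'\models\phi(\tup a_i,\tup m,\tup d)$, finite satisfiability of $\tp(\str M/I^*J^*IJ)$ in $I^*\cup J^*$ supplies $\tup m^*$ from $I^*\cup J^*$ realizing $\tp(\tup m/I^*\cup\{\tup a_i\}\cup\tup d)$; then $\phi(\tup x,\tup m^*,\tup d)\in p$ by the previous paragraph, and since $p$ is $I^*$-invariant and $\tp(\tup m^*\tup d/I^*)=\tp(\tup m\tup d/I^*)$, also $\phi(\tup x,\tup m,\tup d)\in p$. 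Hence $I$ is a Morley sequence of $p$ over $\str M\cup J$, and symmetrically $J$ is a Morley sequence of $q$ over $\str M\cup I$. Finally, extend $F_0,F_0'$ to ultrafilters $F,F'$ on $\str M^{\tup x}$; as they extend $F_0,F_0'$ we get $\Av_F=p$ and $\Av_{F'}=q$, which finishes the proof.

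I expect the last step to be the main obstacle, and the point requiring care is that finite satisfiability is not symmetric: one cannot simply build the two ``fresh'' sequences one after the other over a model (the second would be Morley only over $\str M$, not over $\str M$ together with the first), so instead one makes $\str M$ itself generic --- a coheir submodel over $I^*\cup J^*$, independent from $I\cup J$ --- which is exactly what is needed to pull the parameters of $\str M$ back into $I^*\cup J^*$ simultaneously for both sequences. The order-pattern computation of the second paragraph is the combinatorial heart of the proof, but it is a routine extension of the one-sequence case.
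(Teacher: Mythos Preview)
Your argument tracks the paper's exactly through the first two stages: extending to mutually indiscernible tails $I+I^*$, $J+J^*$, taking the initial-segment ultrafilters $F_0,F_0'$, and verifying via the order-pattern computation that $I$ is Morley for $\Av_{F_0}$ over $I^*\cup J^*\cup J$ and symmetrically for $J$. You also correctly flag and fix the typo in the statement. The paper dispatches the final model-building step in one sentence (``as above''), pointing to the automorphism trick from the previous lemma; you instead attempt a direct coheir-submodel construction, and this is where the argument breaks.

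There are two concrete gaps. First, the construction of $\str M$: to invoke Lemma~\ref{lem:extending fs} on $\tp(\str M_0/I^*J^*)$ you need that type to already be finitely satisfiable in $I^*\cup J^*$, but for an arbitrary model $\str M_0\supseteq I^*\cup J^*$ this fails---any $m\in\str M_0$ whose parameter-free $1$-type is not realized in $I^*\cup J^*$ (say a predicate holding of $m$ but of no element of the sequences) is already an obstruction. In general there need not exist \emph{any} model $\str M\supseteq I^*\cup J^*$ whose type over $I^*J^*IJ$ is finitely satisfiable in $I^*\cup J^*$. Second, even granting such an $\str M$, your transfer step overclaims: finite satisfiability lets you pick $\tup m^*\in I^*\cup J^*$ matching $\tup m$ on \emph{finitely many} formulas, not on the full type over the infinite set $I^*$, so the appeal to $I^*$-invariance of $p$ to conclude $\phi(\tup x;\tup m,\tup d)\in p\Leftrightarrow\phi(\tup x;\tup m^*,\tup d)\in p$ is unjustified---membership in $\Av_{F_0}$ depends on the whole set $\{n\in\Z:\phi(\tup a'_n;\tup m,\tup d)\}$, which for arbitrary $\tup m\in\str M$ has no reason to be a finite union of intervals. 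The paper's route (build $(I_1,J_1)$ with the double Morley property over an arbitrary $\str M_0\supseteq I^*\cup J^*$, check $\tp(I_1J_1/I^*J^*)=\tp(IJ/I^*J^*)$, and move $\str M_0$ by an automorphism fixing $I^*J^*$ and sending $(I_1,J_1)$ to $(I,J)$) sidesteps both issues, though it too leaves the construction of $(I_1,J_1)$ to the reader.
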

\begin{proof}
	The proof is very similar to the previous one. First, in an elementary extension, construct sequences $I' = (\tup a'_i:i\in \mathbb Z)$ and $J' = (\tup b'_j :j\in \mathbb Z)$ so that the two sequences $I+I'$ and $J+J'$ are mutually indiscernible. This is possible by compactness. Let $F$ be an ultrafilter on $\setof{\tup a'_i}{i\in \mathbb Z}$ containing all initial segments as in the previous proof and similarly for $F'$ on $\setof{\tup b'_j}{j\in \mathbb Z}$. Then $I$ is a Morley sequence of $\Av_F$ over $\setof{\tup a'_i}{i\in \mathbb Z} \cup \setof{\tup b_j}{j<\omega} \cup \setof{\tup b'_j}{j\in \mathbb Z}$ and $J$ is a Morley sequence of $\Av_{F'}$ over $\setof{\tup a_i}{i<\omega} \cup \setof{\tup a'_i}{i\in \mathbb Z} \cup \setof{\tup b'_j}{j\in \mathbb Z}$. One can then construct the model $M$ as above.
\end{proof}

\subsection{Proof of Proposition~\ref{prop:nip-2}}\label{app:nip}
\begin{definition}
    Say that a theory $T$ is \emph{NIP} if in every model $\str M\models T$, every formula $\phi(\tup x;\tup y)$ has finite VC-dimension on $\str M$.        
\end{definition}
By compactness, a class of structures $\CC$ is NIP 
according to Definition~\ref{def:NIP}
if and only if its theory is NIP.
The following is proved in~\cite{nip-dim-1}.

\begin{proposition}\label{prop:nip-2-variables}
    Fix a theory $T$.
    Suppose that for every $\str M\models T$ and  for every formula $\phi(x,y)$ with parameters from $\str M$,
$\phi(x,y)$ has finite VC-dimension on $\str M$.
Then $T$ is NIP.
\end{proposition}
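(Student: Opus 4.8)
The plan is to argue by contraposition: assuming $T$ is not NIP, I will exhibit a formula $\theta(x;y)$ in two single variables, \emph{with} parameters, that has infinite VC-dimension in a monster model $\monster\models T$; this contradicts the hypothesis and shows $T$ is NIP. Since $T$ is not NIP, by compactness (Theorem~\ref{thm:compactness}) there is a parameter-free formula $\phi(\bar x;\bar y)$ of infinite VC-dimension on $\monster$. The whole task is then to whittle $\phi$ down to two variables while retaining infinite VC-dimension, at the cost of introducing parameters.

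First I would record the usual indiscernible-sequence reformulation of having infinite VC-dimension (the independence property, IP), a direct consequence of Ramsey's theorem and compactness in the form packaged in Appendix~\ref{app:model-theory} (Propositions~\ref{prop:ramsey} and~\ref{prop:ramsey 2}): a formula $\chi(\bar u;\bar v)$, possibly with parameters from a set $A$, has infinite VC-dimension over $A$ if and only if there is a sequence $(\bar c_i)_{i<\omega}$ of tuples in the $\bar v$-sort that is indiscernible over $A$, together with a single tuple $\bar d$ in the $\bar u$-sort, such that $\models\chi(\bar d;\bar c_i)$ precisely for the even $i$ (``infinite alternation''). Since having finite VC-dimension is symmetric in the two blocks of variables, the same statement holds with the roles of $\bar u$ and $\bar v$ exchanged; this symmetry is what lets me reduce the two blocks of variables separately.

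The heart of the proof is an \emph{arity-reduction} step: given $\phi(\bar x;\bar y)$ of infinite VC-dimension with $|\bar x|\ge 2$, find elements to substitute for all but one coordinate of $\bar x$ so that the resulting formula --- one free object variable, and the substituted elements as parameters --- still has infinite VC-dimension over those new parameters. Iterating this on $\bar x$ brings the $\bar x$-block down to a single variable; dualizing via the symmetric form above and iterating on $\bar y$ brings the $\bar y$-block down to a single variable as well, which yields the desired $\theta(x;y)$. The delicate point --- and what I expect to be the main obstacle --- is that one cannot simply take \emph{any} witness $(\bar c_i)_{i<\omega}$, $\bar d=(d_1,\dots,d_n)$ of IP as above and substitute $d_2,\dots,d_n$: substituting fixed elements can destroy infinite VC-dimension, and in any case the sequence $(\bar c_i)$ was produced indiscernible over $\emptyset$, not over $\{d_2,\dots,d_n\}$, while it cannot be made indiscernible over all of $\{d_1,\dots,d_n\}$ without forcing $\phi(\bar d;\bar c_i)$ to be constant. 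The fix is to choose the substituted values generically, along a Morley (indiscernible) sequence, and to re-extract $(\bar c_i)$ so that it becomes indiscernible over the chosen parameters while the alternating pattern survives; this is carried out using the average-type and finite-satisfiability machinery of Appendix~\ref{app:model-theory} (in particular the construction of limit types over ultrafilters as in Lemma~\ref{lem:limit types}), and it is exactly here that the hypothesis ``$\phi$ has infinite VC-dimension'' is used in an essential way --- if $\phi$ had finite alternation rank, the re-extraction would force eventual constancy and the construction would collapse.

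Once both blocks have been reduced to single variables we are left with a formula $\theta(x;y)$, $|x|=|y|=1$, with parameters from $\monster$, having infinite VC-dimension on $\monster$. As $\monster\models T$, this contradicts the assumption of Proposition~\ref{prop:nip-2-variables} (which says every such two-variable formula has finite VC-dimension on every model of $T$). Hence $T$ is NIP, proving the proposition. Apart from the re-extraction step discussed above, the remaining ingredients --- the Ramsey extractions, passage to elementary extensions, and the bookkeeping of which variables have become parameters --- are routine given the toolkit of the appendix.
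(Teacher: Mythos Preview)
The paper does not give its own proof of this proposition: it simply states ``The following is proved in~\cite{nip-dim-1}'' and then uses the result as a black box to derive Proposition~\ref{prop:nip-2}. So there is no in-paper argument to compare your attempt against.

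As for your sketch on its own merits: the overall architecture is correct and standard. Working by contraposition, passing to the alternation-over-an-indiscernible-sequence characterisation of IP, using the symmetry between the two variable blocks, and then reducing the arity of each block one coordinate at a time while absorbing the dropped coordinates as parameters---this is exactly the shape of the known proof (as in the first author's \emph{Guide to NIP theories}). You also correctly isolate the one genuinely non-trivial point: after fixing $a_2,\dots,a_n$ as parameters, the original indiscernible sequence $(\bar b_i)$ need not remain indiscernible over them, so a re-extraction is required, and one must argue that infinite alternation survives this re-extraction. Your proposed mechanism for handling this (choose the substituted coordinates generically via a coheir/limit type and re-extract) is a legitimate way to organise the argument, though it is somewhat heavier than necessary; the published proofs typically do the reduction on the $\bar y$-side via a short induction on $|\bar y|$, using only the ``eventually constant along any indiscernible sequence'' reformulation and the fact that this criterion is stable under adding parameters. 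Either route works, and your identification of where the real content lies is accurate.
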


We show how Proposition~\ref{prop:nip-2} follows.
We repeat its statement here.
\begin{proposition*}
	The following conditions are equivalent for a class of structures $\DD$:
	\begin{itemize}
		\item $\DD$ is not NIP,
		\item there is a formula $\phi(x,y;\tup z)$ 
		such that for every $n$ there is a structure $\str M\in\DD$ and a tuple $\tup c\in\str M^{\tup z}$ such that $\phi(\str M;\tup c)\subset \str M^{2}$ defines a binary relation of VC-dimension at least $n$;
	\end{itemize}
\end{proposition*}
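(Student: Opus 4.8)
The plan is to reduce the statement to Proposition~\ref{prop:nip-2-variables} (the two-variable criterion for NIP of a theory) by two applications of the compactness theorem, so that all of the genuine combinatorics is inherited from~\cite{nip-dim-1}.

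\textbf{The easy direction.} Suppose a formula $\phi(x,y;\tup z)$ is such that for every $n$ there are $\str M_n\in\DD$ and $\tup c_n\in\str M_n^{\tup z}$ with $\mathrm{VC}(\phi(\str M_n;\tup c_n))\ge n$. View $\phi$ as a formula $\psi(x;y,\tup z)$ with the two tuples of free variables $x$ and $(y,\tup z)$. If $a_1,\dots,a_n\in\str M_n^{x}$ and $(b_J)_{J\subseteq\{1,\dots,n\}}$ in $\str M_n^{y}$ witness shattering of $\phi(\str M_n;\tup c_n)$, then the left vertices $a_i$ and right vertices $(b_J,\tup c_n)$ span in the bipartite graph $\psi(\str M_n)$ a copy of the ``shattering graph'' on $n$ left vertices; since every finite bipartite graph embeds as an induced subgraph into such a shattering graph for $n$ large enough, no finite bipartite graph is omitted from all $\psi(\str S)$, $\str S\in\DD$. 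Hence $\DD$ is not NIP in the sense of Definition~\ref{def:nip} (equivalently, $\psi$ has unbounded VC-dimension over $\DD$).

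\textbf{The hard direction.} Assume $\DD$ is not NIP; in particular $\DD\neq\emptyset$. Let $T=\mathrm{Th}(\DD)$ be the set of sentences true in every structure of $\DD$. First I would check that $T$ is not NIP in the sense of the definition of NIP theory in Appendix~\ref{app:nip}. Fix a formula $\psi(\tup x,\tup y)$ of unbounded VC-dimension over $\DD$, and for each $n$ let $\tau_n$ be the (existential, hence first-order) sentence expressing that $\psi$ shatters some set of size $n$. Any finite subset of $T\cup\{\tau_n:n\in\N\}$ is satisfied in a structure of $\DD$ on which $\psi$ shatters a large enough set, so by Theorem~\ref{thm:compactness} the whole theory $T\cup\{\tau_n:n\in\N\}$ has a model $\str M^{*}$, in which $\psi$ has infinite VC-dimension. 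Thus $T$ is not NIP, and by the contrapositive of Proposition~\ref{prop:nip-2-variables} there is a model $\str M\models T$ and a formula $\phi(x,y)$ with parameters from $\str M$ of infinite VC-dimension on $\str M$. Writing $\phi(x,y)=\phi(x,y;\tup c)$ with $\phi(x,y;\tup z)$ parameter-free and $\tup c\in\str M^{\tup z}$ produces the desired formula $\phi(x,y;\tup z)$.

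\textbf{Transferring back to $\DD$.} For each $n$ let
\[\sigma_n\ :=\ \exists\tup z\,\exists x_1\cdots\exists x_n\ \bigwedge_{J\subseteq\{1,\dots,n\}}\ \exists y_J\ \bigwedge_{1\le i\le n}\bigl(\phi(x_i,y_J;\tup z)\leftrightarrow(i\in J)\bigr),\]
the first-order sentence asserting that for some choice of parameters $\tup z$ the formula $\phi$ defines a relation of VC-dimension at least $n$. Since $\phi(x,y;\tup c)$ has infinite VC-dimension on $\str M$ and $\str M\models T$, we get $\str M\models\sigma_n$, hence $\neg\sigma_n\notin T$; by definition of $T$ this means $\sigma_n$ holds in some structure $\str M_n\in\DD$. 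Any witnessing tuple $\tup c_n\in\str M_n^{\tup z}$ then satisfies $\mathrm{VC}(\phi(\str M_n;\tup c_n))\ge n$, which is exactly the second condition.

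\textbf{Main obstacle.} The substantive ingredient is Proposition~\ref{prop:nip-2-variables} imported from~\cite{nip-dim-1}: the reduction from formulas in arbitrarily many variables (all that $\DD$ not being NIP directly provides) down to a single formula $\phi(x,y)$ in two single variables. Within the present argument the only care needed is bookkeeping — that $\tau_n$ and $\sigma_n$ really are single first-order sentences, and that the two compactness steps land respectively in a model of $T$ and then in an actual structure of $\DD$, rather than merely in some further model of $T$.
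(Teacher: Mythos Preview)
Your proof is correct and follows essentially the same route as the paper's: reduce to the theory $T$ of $\DD$ via compactness, invoke Proposition~\ref{prop:nip-2-variables} to obtain a two-variable formula with parameters of infinite VC-dimension in some model of $T$, and then transfer back to $\DD$ using the sentence $\sigma_n$ (the paper's $\psi_n$). The only difference is that you spell out the compactness step ``$\DD$ not NIP $\Rightarrow$ $T$ not NIP'' explicitly, whereas the paper cites it as a one-line fact.
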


\begin{proof}
    We show the top-down implication, the other being trivial.

    Assume $\DD$ is not NIP. Then the theory $T$ of $\DD$ is not NIP. By Proposition~\ref{prop:nip-2-variables}  there is a formula $\phi(x,y;\tup z)$, a model $\str M$ of $T$ and a tuple $\tup c\in \str M^{\tup z}$ of parameters from $\str M$ such that $\phi(\str M;\tup c)$ has infinite VC-dimension on $\str M$.
    
    Fix an arbitrary $n\in\N$. Consider the sentence $\psi_n$ expressing
    \begin{quote}
        ``there exists $\tup c$ such that $\phi(x,y;\tup c)$ defines a relation of VC-dimension at least $n$.''    
    \end{quote}
    Then $\str M$ satisfies $\psi_n$. In particular, the sentence $\neg\psi_n$ is not in $T$, so there is a model $\str S\in\DD$ satisfying $\psi_n$. This proves the top-down implication in Proposition~\ref{prop:nip-2}.    
\end{proof}

\section{Proof of Lemma~\ref{lem:preparation}}\label{app:preparation-lemma}
A family $(\phi_i(x))_{i\in I}$ 
of formulas with parameters from $\str N$ is \emph{pairwise inconsistent} if for any distinct $i,j\in I$, the formula $\phi_i(x)\land\phi_j(x)$ has no solution in $\str N$.
For a sequence $\setof{\tup a_i}{i<\omega}$ and for $i\le \omega$ 
by $\tup a_{<i}$ denote the set of elements in all the tuples $\tup a_j$ with $j<i$.

We prove a stronger variant of Lemma~\ref{lem:preparation}.
\begin{lemma*}
    Suppose $\CC$ is not regular, as witnessed by formulas 
     $\phi(x,\tup y),\psi(x,\tup z)$, $\theta(\tup u,\tup v)$. Then there exist:
    \begin{itemize}
        \item  a structure $\str M$ in the elementary closure of $\CC$, 
        \item an elementary extension $\str N$ of $\str M$,
        \item a sequence $(\tup a_i:i<\omega)$ of tuples in $\str N^{\tup y}$ and a sequence $(\tup b_j:j<\omega)$ of tuples in $\str N^{\tup z}$,
    \end{itemize}
    such that the following properties hold:
    \begin{enumerate}
        \item  the tuples $\tup a_0,\tup a_1,\ldots$ have equal types over $\str M$,
		and the tuples $\tup b_0,\tup b_1,\ldots$ have equal types over $\str M$,
        \item\label{it:many-types} for all $0<i,j<\omega$, 
        the set $\Types[\theta](A/B)$ is infinite, where $A=\phi(\str N,\tup a_i)$ and $B=\psi(\str N,\tup b_i)$,
        \item\label{it:ind-base} $\tup a_i \ind[\str M]\tup a_{<i}\tup b_{<\omega}$ for $i<\omega$,
        \item the formulas $\setof{\phi(x;\tup a_i)}{i<\omega}$ are pairwise inconsistent,
        \item the formulas $\setof{\psi(x;\tup b_j)}{j<\omega}$ are pairwise inconsistent.        
    \end{enumerate}
    \end{lemma*}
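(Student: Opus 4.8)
The plan is to build the structure $\str M$ (in the elementary closure of $\CC$), its elementary extension $\str N$, and the two sequences $(\tup a_i)$, $(\tup b_j)$ by a two-stage compactness-and-Ramsey argument, following the scheme for constructing mutually indiscernible sequences and then extracting Morley sequences over a suitable submodel. First I would unpack what it means for $\CC$ to be non-regular \emph{as witnessed by} $\phi(x,\tup y),\psi(x,\tup z),\theta(\tup u,\tup v)$: for each $k,t$ there is $\str S_{k,t}\in\CC$, a $\phi$-definable disjoint family $\Rr$ of size $\ge t$ and a $\psi$-definable disjoint family $\Ll$ of size $\ge t$ with $|\Types[\theta](R/L)|>k$ for all $R\in\Rr$, $L\in\Ll$. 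Write $\Rr=\set{\phi(\str S_{k,t},\tup a)\mid \tup a\in P}$ and $\Ll=\set{\psi(\str S_{k,t},\tup b)\mid\tup b\in Q}$; note the defining tuples $\tup a$ (resp. $\tup b$) can be chosen pairwise \emph{$\phi$-distinct} (resp. $\psi$-distinct) in the sense that $\phi(x;\tup a)\land\phi(x;\tup a')$ has no solution — this is exactly disjointness of the family. So in each $\str S_{k,t}$ we have a long sequence $(\tup a_i)_{i<t}$ of tuples with pairwise inconsistent $\phi(x;\tup a_i)$, a long sequence $(\tup b_j)_{j<t}$ with pairwise inconsistent $\psi(x;\tup b_j)$, and the "many $\theta$-types" property between any $\phi(\str S;\tup a_i)$ and any $\psi(\str S;\tup b_j)$.

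Next, for fixed $t$ and a growing finite set of formulas $\Delta$, I would apply the finitary two-sequence Ramsey result (Lemma~\ref{lem:ramsey 2}) inside a large enough $\str S_{k,t}$: taking $m_*$ large relative to $m=t$, $\Delta$, and the tuple length, we can pass to subsequences $(\tup a_i')_{i<t}$, $(\tup b_j')_{j<t}$ that are \emph{mutually $\Delta$-indiscernible}. Crucially the properties (pairwise inconsistency of the $\phi$'s and $\psi$'s, and "$\Types[\theta]$ between a row-set and a column-set has size $>k$") are all preserved under passing to subsequences, since they are downward-closed: a smaller $\phi$-definable subfamily is still disjoint, and if $|\Types[\theta](R/L)|>k$ it remains so. Now let $\str N_0$ be any structure in $\CC$ realizing, for all finite $\Delta$, all finite $t$, and all finite $k$ simultaneously, the existence of such mutually $\Delta$-indiscernible finite sequences with the three properties (a single first-order sentence for each $(\Delta,t,k)$, satisfiable in $\CC$ by the above). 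By the compactness theorem (Theorem~\ref{thm:compactness}) applied to the theory of $\CC$ together with new constants for an $\omega$-indexed pair of sequences and axioms asserting full mutual indiscernibility, pairwise inconsistency of $\set{\phi(x;\tup a_i)}_{i<\omega}$ and $\set{\psi(x;\tup b_j)}_{j<\omega}$, and $|\Types[\theta](\phi(\str N,\tup a_i)/\psi(\str N,\tup b_j))|\ge n$ for every $n$ (this last is again a scheme of first-order sentences), we obtain a model $\str N$ in the elementary closure of $\CC$ containing mutually indiscernible $(\tup a_i:i<\omega)$ and $(\tup b_j:j<\omega)$ with properties (2), (4), (5) of the strengthened lemma, and with all $\tup a_i$ having equal types over $\emptyset$ (and same for $\tup b_j$) by indiscernibility.

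It remains to produce the submodel $\str M$ and establish the independence statement (3), $\tup a_i\ind[\str M]\tup a_{<i}\tup b_{<\omega}$. Here I would invoke Lemma~\ref{lem:limit types}: passing to a further elementary extension $\str N'$, there is a submodel $\str M\prec \str N'$ and ultrafilters $F,F'$ on $\str M^{\tup y}$, $\str M^{\tup z}$ such that $(\tup a_i:i<\omega)$ is a Morley sequence of $\Av_F$ over $\str M\cup\set{\tup a_i}{i<\omega}$ — more precisely, each $\tup a_i\models \Av_F\restriction \str M\tup a_{<i}\tup b_{<\omega}$ — and similarly for $(\tup b_j)$. Since $\Av_F$ is finitely satisfiable in $\str M$, the type $\tp(\tup a_i/\str M\tup a_{<i}\tup b_{<\omega})$ is finitely satisfiable in $\str M$, which is precisely $\tup a_i\ind[\str M]\tup a_{<i}\tup b_{<\omega}$, giving (3). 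The equal-types-over-$\str M$ statement (1) follows because each $\tup a_i$ realizes the same complete type $\Av_F\restriction\str M$ over $\str M$, and likewise each $\tup b_j$ realizes $\Av_{F'}\restriction \str M$. I would then note that the original (weaker) Lemma~\ref{lem:preparation} is immediate by taking $i=1$ and $j=0$ in the strengthened version, reading off $\tup a_0,\tup a_1,\tup b_0$.

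I expect the main obstacle to be the bookkeeping in the compactness step: one must be careful that the "infinitely many $\theta$-types" property — which is about the size of the set $\Types[\theta](\phi(\str N,\tup a_i)/\psi(\str N,\tup b_j))$, itself a second-order-looking quantity — is genuinely axiomatizable by a first-order scheme. The point is that "$|\Types[\theta](A/B)|\ge n$" unwinds to the existence of $n$ tuples $\tup c_1,\dots,\tup c_n\in A^{\tup u}$ that are pairwise $\theta$-distinguishable by some parameter in $B$, which is first-order over the constants naming $\tup a_i,\tup b_j$; so the scheme is fine, but the interaction of this scheme with the mutual-indiscernibility axioms (which must hold in the limit without destroying the type-counting) needs the Ramsey extraction to be done \emph{before} compactness, exactly as arranged above. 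A secondary subtlety is ensuring the $\phi$-tuples chosen to enumerate the disjoint family $\Rr$ can be assumed pairwise $\phi$-inconsistent rather than merely defining disjoint sets — but since $\phi(x;\tup a)$ and $\phi(x;\tup a')$ defining disjoint sets is equivalent to $\neg\exists x(\phi(x;\tup a)\land\phi(x;\tup a'))$, disjointness of the family is literally pairwise inconsistency, so this is automatic.
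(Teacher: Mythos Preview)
Your proposal is correct and follows essentially the same two-step approach as the paper: first use compactness together with the finitary two-sequence Ramsey lemma (Lemma~\ref{lem:ramsey 2}) to produce, in a model of the theory of $\CC$, mutually indiscernible $\omega$-sequences satisfying pairwise inconsistency and the infinite-$\theta$-types condition; then invoke Lemma~\ref{lem:limit types} to realize these as Morley sequences of finitely satisfiable types over a submodel $\str M$, from which properties (1) and (3) fall out. You also correctly isolate the one point requiring care---that ``$|\Types[\theta](A/B)|\ge n$'' is expressible by a first-order scheme in the constants naming $\tup a_i,\tup b_j$---which is exactly the check the paper's compactness argument relies on.
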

    It is clear that each of the properties (1)-(4)
    implies the corresponding property stated in Lemma~\ref{lem:preparation}.
    Properties (5) and (1) together imply that 
    $\psi(\str M,\tup b_j)=\emptyset$ 
    yielding property (5) in Lemma~\ref{lem:preparation}.
    We thus prove the statement above.

    \begin{proof}
Assume $\CC$ is not regular. We proceed in two steps. 

\medskip
\noindent
\underline{Step 1}. 
 There is a model $\str M$ in the elementary closure of $\CC$,
indiscernible sequences 
$(\tup a_i:i<\omega)$ in $\str M^{\tup y}$ and $(\tup b_j:j<\omega)$ in $\str M^{\tup z}$,
formulas $\phi(x,\tup y)$, $\psi(x,\tup z)$ and $\theta(\tup u,\tup v)$ such that:
\begin{itemize}
    \item the families $\setof{\phi(x,\tup a_i)}{i<\omega}$ and $\setof{\psi(x,\tup b_j)}{j<\omega}$ are both pairwise inconsistent
    \item for each $0\le i,j<\omega$ the set 
    $\Types[\theta](A/B)$ is infinite, where $A=\phi(\str M,\tup a_i)$ and $B=\psi(\str M,\tup b_j)$.    
\end{itemize}
\medskip

 As $\CC$ is not regular, for every natural number $m$, we can find a structure $\str M_m\in\CC$ sequences $(\tup a^m_i:i<m)$ and $(\tup b^m_j:j<m)$ of tuples of $\str M_m$ such that:
    
    $\bullet$ the two families $\{\phi(\tup x;\tup a^m_i):i<m\}$ and $\{\psi(x;\tup b^m_j):j<m\}$ are pairwise disjoint;
    
    $\bullet$ for every $i,j<m$, the set $\Types[\theta](A/B)$ has size at least $m$, where $A = \phi(\str M_m;\tup a^m_i)$ and $B = \psi(\str M_m;\tup b^m_j)$.
    
    Add constants to the signature to name two sequences $(\tup a_i:i<\omega)$ and $(\tup b_j:j<\omega)$. Consider the theory $T'$ in the extended language consisting of the following for every $m<\omega$:
    
    $\bullet_0$ all sentences which hold in all structures in $\CC$;
    
    $\bullet_{1,m}$ for every $i<j<m$, the two sets $\phi(\tup x;\tup a_i)$ and $\phi(\tup x;\tup a_j)$ are disjoint and the two sets $\psi(x;\tup b_i)$ and $\psi(x;\tup b_j)$ are disjoint;
    
    $\bullet_{2,m}$ the two sequences $(\tup a_i:i<m)$ and $(\tup b_j:j<m)$ are  indiscernible;
    
    $\bullet_{3,m}$ for every $i,j<m$ the set $\Types[\theta](A/B)$ has size at least $m$, where $A = \phi(\str M;\tup a^m)$ and $B = \psi(\str M;\tup b^m)$ and $\str M$ is the considered model.
    
    Note that all those conditions are expressible by first order formulas (infinitely many in the case of $\bullet_0$ and $\bullet_{2,m}$).
    
    We claim that $T'$ is consistent. Let $T_0\subseteq T'$ be finite. Then there is $m<\omega$ such that $T_0$ only contains formulas from $T$ along with formulas $\bullet_{1,m'}$, $\bullet_{2,m'}$ and $\bullet_{3,m'}$ for $m'\leq m$. Furthermore, there is a finite set $\Delta$ of formulas such that the formulas from $\bullet_2$ appearing in $T_0$ say at most that $(\tup a_i:i<m)$ and $(\tup b_j:j<m)$ are  $\Delta$-indiscernible.
    
    By Lemma \ref{lem:ramsey 2}, for $m_*<\omega$ is large enough, we can find a subsequences $(\tup a'_i:i<m)$ of $(\tup a^{m_*}_i:i<m)$ and a subsequence $(\tup b'_i:i<m)$ of $(\tup b^{m_*}_i:i<m_*)$ that are  $\Delta$-indiscernible. But then $M_{m_*}$ where we interpret the constants so as to name the two sequences $(\tup a'_i:i<m)$ of $(\tup a^{m_*}_i:i<m)$ is a model of $T_0$. Hence $T_0$ is consistent. As $T_0$ was an arbitrary finite subset of $T'$, we conclude by compactness that $T'$ is consistent.

    Let $\str M$ be a model of $T'$ and set $I =(\tup a_i:i<\omega)$ and $J=(\tup b_j:j<\omega)$ as interpreted in $\str M$. 
This yields the structure $\str M$ as described in Step 1.

%
    

\medskip\noindent
\underline{Step 2.} 
Apply Lemma~\ref{lem:limit types} to get an elementary extension $\str N$ of $\str M$, an elementary substructure $\str M'$ of $\str N$, such that 
$(\tup b_j:j<\omega)$ 
is a Morley sequence over  $\str M'\tup a_{<\omega}$ and 
$(\tup a_i:j<\omega)$
is a Morley sequence over $\str M'\tup b_{<\omega}$.
 In particular:
\begin{enumerate}
    \item $(\tup a_i:i>\omega)$ and 
     $(\tup b_j:j>\omega)$ are both indiscernible over $\str M'$,
     \item the families $\setof{\phi(x,\tup a_i)}{i<\omega}$ and $\setof{\psi(x,\tup b_j)}{j<\omega}$ are both pairwise inconsistent,
        \item for each $0\le i,j<\omega$ the set 
        $\Types[\theta](A/B)$ is infinite, where $A=\phi(\str N,\tup a_i)$ and $B=\psi(\str N,\tup b_j)$,          
    \item $\tup a_i\ind[\str M']\tup a_{<i}\tup b_{<\omega}$.
\end{enumerate}
This finishes the proof of Lemma~\ref{lem:preparation}.
\end{proof}

\section{Proof of Proposition~\ref{prop:fs dichotomy}}\label{app:fmNIP}
Before proving Proposition~\ref{prop:fs dichotomy},
we prove some lemmas.

The first lemma 
is a characterisation of theories that exclude large grids
using mutually indiscernible sequences.
\begin{lemma}\label{lem:two sequences}
    Assume that in some model $\str M$ there are two mutually indiscernible sequences $(\bar a_i:i<\omega)$ and $(\bar b_j:j<\omega)$ and a singleton $c$ and for some formulas $\phi(\bar x, \bar y;z)$ with parameters in $\str M$ such that:
    \begin{itemize}
    \item $\phi(\bar a_0,\bar b_0;c)$ holds,
    \item  $\neg  \phi(\bar a_i,\bar b_0;c)$ holds for all $i>0$,
     \item $\neg \phi(\bar a_0,\bar b_j,c)$ holds for all $j>0$.
    \end{itemize}
    Then the formula
    \[ \psi(\bar x\bar x',\bar y\bar y';z) \equiv \phi(\bar x,\bar y;z)\wedge \neg \phi(\bar x',\bar y;z)\wedge \neg \phi(\bar x,\bar y';z)\]
     defines large grids in $\str M$.
    \end{lemma}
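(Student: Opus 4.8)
The plan is to exhibit, for every $n$, an $n\times n$ grid defined by $\psi$ in $\str M$. The three elements $\bar a_0, \bar b_0, c$ in the hypothesis are already a $1\times 1$ instance of exactly the configuration we want: $\psi(\bar a_0\bar a_1, \bar b_0\bar b_1; c)$ holds (taking $\bar x' = \bar a_1$, $\bar y' = \bar b_1$, using the two negative clauses), whereas $\psi$ fails on off-diagonal combinations precisely because of mutual indiscernibility. So the idea is to spread the single witness $c$ out into an $n\times n$ array of witnesses $c_{k,l}$ using the fact that the two sequences can be replaced by longer (even infinite) mutually indiscernible sequences.

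Concretely, first I would note that, being mutually indiscernible, the sequences $(\bar a_i)$ and $(\bar b_j)$ can — after passing to an elementary extension if necessary — be assumed indexed by $\omega$ (or any countable linear order), still mutually indiscernible over the relevant parameters, with the same EM-type. In particular, for every pair $(k,l)$ the tuple $(\bar a_{2k}, \bar a_{2k+1}, \bar b_{2l}, \bar b_{2l+1})$ has the same type over the parameters of $\phi$ as $(\bar a_0, \bar a_1, \bar b_0, \bar b_1)$; hence there is $c_{k,l}$ with $\phi(\bar a_{2k}, \bar b_{2l}; c_{k,l})$, $\neg\phi(\bar a_{2k+1}, \bar b_{2l}; c_{k,l})$, $\neg\phi(\bar a_{2k}, \bar b_{2l+1}; c_{k,l})$. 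This is where I would be a little careful: I want the $c_{k,l}$ to exist \emph{simultaneously} inside one model. The clean way is to enlarge the index sets and to record, by compactness, the existence of such a family; alternatively, argue directly that since the $2n$-element subsequences are mutually indiscernible, one can inductively choose the $c_{k,l}$'s while preserving the configuration. Then set $\bar A = \{\, (\bar a_{2k}, \bar a_{2k+1}) : k < n \,\}$, $\bar B = \{\, (\bar b_{2l}, \bar b_{2l+1}) : l < n \,\}$, $C = \{\, c_{k,l} : k,l < n \,\}$.

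The heart of the verification is: for $k,k',l,l' < n$, the element $c_{k,l}$ satisfies $\psi$ with $(\bar a_{2k'},\bar a_{2k'+1})$ and $(\bar b_{2l'},\bar b_{2l'+1})$ if and only if $k=k'$ and $l=l'$. The ``if'' direction is the construction above. For ``only if'': if $\psi(\bar a_{2k'}\bar a_{2k'+1}, \bar b_{2l'}\bar b_{2l'+1}; c_{k,l})$ holds, then in particular $\phi(\bar a_{2k'},\bar b_{2l'};c_{k,l})$ and $\neg\phi(\bar a_{2k'+1},\bar b_{2l'};c_{k,l})$ and $\neg\phi(\bar a_{2k'},\bar b_{2l'+1};c_{k,l})$. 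Now use mutual indiscernibility of the two sequences \emph{over} the parameters together with $c_{k,l}$ — here is the subtlety: $c_{k,l}$ is not a parameter of the original indiscernible sequences. The right move is to observe that the \emph{finitary} type of $c_{k,l}$ over the $2n$-tuple $\bar a_{<2n}\bar b_{<2n}$ is controlled: if, say, $k' \neq k$ with $2k < 2k'$, then since $(\bar a_i)$ is indiscernible over $\bar b_{<2n}$ and over the single element $c_{k,l}$ would follow from a based/Morley-sequence argument, any of the positive/negative patterns that hold of $c_{k,l}$ with $\bar a_{2k}$ hold with $\bar a_{2k'}$ as well — contradicting that $\phi(\bar a_{2k},\bar b_{2l};c_{k,l})$ holds while the off-diagonal pattern forces both $\phi$ and $\neg\phi$ of similar instances. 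Spelling this out: compare the pair $(\bar a_{2k}, \bar a_{2k'})$ with $(\bar a_{2k}, \bar a_{2k+1})$ (both are increasing pairs, hence have the same type over $\bar b_{<2n}$ by indiscernibility of $(\bar a_i)$ over the $\bar b$'s, and — after extracting $c_{k,l}$ from a limit/Morley construction — over $\bar b_{<2n}c_{k,l}$ too); this yields that $\phi(\bar a_{2k'},\bar b_{2l};c_{k,l}) \leftrightarrow \phi(\bar a_{2k+1},\bar b_{2l};c_{k,l})$, i.e. $\neg\phi(\bar a_{2k'},\bar b_{2l};c_{k,l})$, contradicting the first conjunct of $\psi$. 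The case $2k' < 2k$ is symmetric (compare with a decreasing pair, or reindex), and the $l \neq l'$ case is handled the same way with the roles of $\bar a$ and $\bar b$ exchanged, using the third conjunct of $\psi$.

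The main obstacle I anticipate is exactly the indiscernibility-over-$c_{k,l}$ issue: one cannot directly invoke indiscernibility of the $\bar a$-sequence over a parameter $c_{k,l}$ that was chosen depending on the sequence. The standard remedy, already available in the paper's toolbox (Lemma~\ref{lem:limit types}, Morley sequences, and the ``based on'' machinery of Propositions~\ref{prop:ramsey} and~\ref{prop:ramsey 2}), is to first realize $(\bar a_i)$ and $(\bar b_j)$ as mutually indiscernible \emph{Morley} sequences of finitely-satisfiable types $\Av_F, \Av_{F'}$ over a common model, and then to choose the witnesses $c_{k,l}$ at ``generic'' stages, so that each $c_{k,l}$ is, up to the relevant finitary type, independent of the tails of the sequences. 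Once this setup is in place, the pattern-comparison argument above goes through, and compactness turns the family $(\bar A, \bar B, C)$ of size $n$ into an $n\times n$ grid defined by $\psi$ for every $n$, i.e. $\psi$ defines large grids in $\str M$. I would present the Morley-sequence reduction as the first paragraph of the proof, then the construction of $C$, then the two-case diagonal verification.
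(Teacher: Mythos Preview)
Your overall plan is in the right spirit, but the step you yourself flag as ``the main obstacle'' is a genuine gap, and the Morley-sequence fix you sketch does not close it. The problem is that your $c_{k,l}$ is produced from the single $4$-tuple $(\bar a_{2k},\bar a_{2k+1},\bar b_{2l},\bar b_{2l+1})$; you therefore have no control whatsoever over the truth value of $\phi(\bar a_{2k'},\bar b_{2l'};c_{k,l})$ for $(k',l')\neq(k,l)$. Realising $(\bar a_i)$ and $(\bar b_j)$ as Morley sequences over some base $\str M_0$ only buys you indiscernibility of the sequences over $\str M_0$ (and over each other), \emph{not} over the element $c_{k,l}$, which was chosen using specific members of the sequence and hence may perfectly well distinguish $\bar a_{2k}$ from $\bar a_{2k'}$. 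Your sentence ``after extracting $c_{k,l}$ from a limit/Morley construction --- over $\bar b_{<2n}c_{k,l}$ too'' has no content: the sequence cannot be made indiscernible over a parameter that was picked using that very sequence.

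The paper's argument avoids this trap entirely, and it is worth noting how. Rather than trying to make the sequences indiscernible over the witnesses, it first stretches both sequences to be $\mathbb Z$-indexed and still mutually indiscernible (using Proposition~\ref{prop:ramsey 2}), and then thins the negative part so that the pattern of $\phi(\bar a_i,\bar b_0;c)$ for $i<0$ and of $\phi(\bar a_0,\bar b_j;c)$ for $j<0$ is constant. The key point is that the integer shift $(i,j)\mapsto(i+k,j+l)$ now preserves the type of the double sequence, so in an elementary extension one can choose $c_{k,l}$ realising the \emph{entire} $(k,l)$-shift of $\tp\bigl(c/(\bar a_i)_{i\in\mathbb Z}(\bar b_j)_{j\in\mathbb Z}\bigr)$. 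Consequently $\phi(\bar a_i,\bar b_j;c_{k,l})\leftrightarrow\phi(\bar a_{i-k},\bar b_{j-l};c)$ for all $i,j$, and the verification that $\psi(\bar a_i\bar a_{i+1},\bar b_j\bar b_{j+1};c_{k,l})$ forces $(i,j)=(k,l)$ reduces to a single uniform case. No indiscernibility over $c_{k,l}$ is ever invoked; the whole pattern is known by construction. If you want to salvage your write-up, replace the Morley-sequence paragraph by this $\mathbb Z$-extension plus shift argument.
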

    \begin{proof}
    By Proposition \ref{prop:ramsey 2}, there are two sequences $(\bar a'_i:i\in \mathbb Z)$ and $(\bar b'_j:j\in \mathbb Z)$ in some elementary extension of $\str M$ which are mutually indiscernible and based on $(\bar a_i:i<\omega)$ and $(\bar b_j:j<\omega)$. Since those latter sequences are already mutually indiscernible, we have $\tp((\bar a'_{i})_{i\in \mathbb N}(\bar b'_{i})_{i\in \mathbb N}) = \tp((\bar a_i)_{i<\omega} (\bar b_i)_{i<\omega})$. It follows that we can find some $\bar a_i$, $i\in \mathbb Z_-$, and $\bar b_i$, $i\in \mathbb Z_-$ so that
    \[\tp((\bar a'_{i})_{i\in \mathbb Z}(\bar b'_{i})_{i\in \mathbb Z}) = \tp((\bar a_i)_{i\in \mathbb Z} (\bar b_i)_{i\in \mathbb Z}).\]
    In particular, the two sequences $(\bar a_i:i\in \mathbb Z)$ and $(\bar b_i:i\in \mathbb Z)$ are mutually indiscernible. By removing some points with negative indices, we may assume that either for all $i<0$, $\phi(\bar a_i,\bar b_0;c)$ holds for for all $i<0$, $\neg \phi(\bar a_i,\bar b_0;c)$ holds and similarly for $\phi(\bar a_0,\bar b_j;c)$.
    
    By mutual indiscernibility of the two sequences, for any $(k,l)\in {\mathbb Z}^2$, we can find some $c_{k,l}$ in an elementary extension $\str N$ of $\str M$ such that:
    
    \begin{itemize}
    \item $\phi(\bar a_k,\bar b_l;c_{k,l})$
    \item $\neg \phi(\bar a_i,\bar b_l;c_{k,l})$ holds for $i>k$, and 
    \item $\phi(\bar a_k,\bar b_j;c_{k,l})$ holds for $j>l$.
    \end{itemize}
    
    Note that the points $c_{k,l}$ are necessarily pairwise distinct.
    Consider the formula $\psi$ as in the statement.
    Then $\psi(\bar x\bar x',\bar y\bar y';z)$ holds of a tuple $(\bar a_i\bar a_{i+1},\bar b_j\bar b_{j+1},c_{k,l})$ if and only if $(i,j)=(k,l)$. Hence, $\psi$ defines an infinite grid in $\str N$. As $\str N$ is an elementary extension of $\str M$, the formula $\psi$ also defines arbitrarily large grids in $\str M$.
    \end{proof}

    

\begin{lemma}\label{lem:extending fs'}
    Let $\str M\prec \str N$ be models. Let $p(\tup x)$ be a type with parameters in $\str N$, finitely satisfiable in $\str M$ and $\tup a\models p|\str M$. Let $\tup c\in \str N^{\tup y}$ be any tuple. Then there is $r(\tup x,\tup y)\in \St(\str N)$ a type finitely satisfiable in $\str M$ extending $p(\tup x)\cup \tp_{\tup x\tup y}(\tup a\tup c/\str M)$.
    \end{lemma}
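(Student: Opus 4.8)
\textbf{Proof plan for Lemma~\ref{lem:extending fs'}.}

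The statement asks: given $\str M\prec \str N$, a type $p(\tup x)$ over $\str N$ finitely satisfiable in $\str M$, a realization $\tup a\models p|\str M$ in some elementary extension, and an arbitrary tuple $\tup c\in\str N^{\tup y}$, we want a complete type $r(\tup x,\tup y)\in\St(\str N)$ finitely satisfiable in $\str M$ which extends both $p(\tup x)$ and $\tp_{\tup x\tup y}(\tup a\tup c/\str M)$. The natural approach is to invoke Lemma~\ref{lem:extending fs} (every partial type finitely satisfiable in $\str M$ extends to a complete one over $\str N$ finitely satisfiable in $\str M$); so the whole content is to check that the partial type $\pi(\tup x,\tup y) := p(\tup x)\cup \tp_{\tup x\tup y}(\tup a\tup c/\str M)$ is consistent and finitely satisfiable in $\str M$.

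First I would set up a witness for finite satisfiability. Take any finite $\pi'\subseteq\pi$. It is a conjunction $\chi(\tup x)\wedge\eta(\tup x,\tup y)$ where $\chi(\tup x)$ is a finite conjunction of formulas in $p(\tup x)$ (parameters from $\str N$) and $\eta(\tup x,\tup y)$ is a finite conjunction of formulas in $\tp_{\tup x\tup y}(\tup a\tup c/\str M)$ (parameters from $\str M$). Since $\tup a\models p|\str M$ and $\tup a\tup c$ satisfies $\eta$ by definition of its type over $\str M$, we already know $\chi(\tup x)\wedge\eta(\tup x,\tup y)$ is \emph{satisfiable} — it holds of $(\tup a,\tup c)$ in the ambient extension. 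To get satisfiability \emph{in} $\str M$, I would argue as follows. Write $\chi(\tup x)=\chi'(\tup x;\tup n)$ with $\tup n\in\str N^{\tup w}$ collecting the $\str N$-parameters. Consider the formula $\sigma(\tup x,\tup y):=\eta(\tup x,\tup y)$, which has parameters only from $\str M$, and note $\str N\models\exists\tup x\,\chi'(\tup x;\tup n)\wedge(\text{some consequence})$; more directly: the formula $\vartheta(\tup w):=\exists\tup x\,\exists\tup y\,(\chi'(\tup x;\tup w)\wedge\eta(\tup x,\tup y))$ holds of $\tup n$ in $\str N$ (witnessed by $\tup a,\tup c$ in the extension, pulled down since $\tup a\tup c$ can be taken with $\tup c\in\str N$... — here I need care, because $\tup c\in\str N$ but $\tup a$ may not be). Actually the clean route: since $p(\tup x)$ is finitely satisfiable in $\str M$, $\chi(\tup x)=\chi'(\tup x;\tup n)$ is satisfied by some $\tup m_0\in\str M^{\tup x}$; but this $\tup m_0$ need not satisfy $\eta$. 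So instead I would use that $\tup a\models p|\str M$ together with a parameter-transfer argument: $\eta(\tup x,\tup y)$ only involves $\str M$-parameters, $\eta(\tup a,\tup c)$ holds, and $\chi(\tup x)=\chi'(\tup x;\tup n)\in p$, so by finite satisfiability of $p$ in $\str M$ the set $\{\tup m\in\str M^{\tup x}:\str M\models\chi'(\tup m;\tup n)\}$... — the subtlety is coupling $\tup x$ with $\tup y$.

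The correct key step, which I expect to be the main (mild) obstacle, is the following coupling observation: since $\tup a\models p|\str M$, the type $p$ is $\str M$-invariant (it is finitely satisfiable in $\str M$, hence $\str M$-invariant by Lemma~\ref{lem:fs invariant}), and the formula $\eta$ and the witness $\tup c$ live entirely over $\str M$, so the existence of $\tup x$ with $\chi(\tup x)\wedge\eta(\tup x,\tup c)$ is expressed, after replacing the $\str N$-parameters $\tup n$ of $\chi$ by variables $\tup w$, by a formula over $\str M$ applied to $\tup n$: namely $\str N\models \rho(\tup n)$ where $\rho(\tup w):=\exists\tup x\,(\chi'(\tup x;\tup w)\wedge\eta(\tup x,\tup c))$, and $\rho(\tup n)$ does hold in $\str N$ precisely because in an elementary extension $(\tup a)$ witnesses it and $\tup n,\tup c\in\str N$ so the existential statement transfers up and down between $\str N$ and its extension. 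Now $\chi'(\tup x;\tup w)\in p$ pattern gives: by finite satisfiability of $p$ in $\str M$ there is $\tup m_0\in\str M$ with $\str M\models\chi'(\tup m_0;\tup n)$; more to the point, finite satisfiability of $p$ in $\str M$ says exactly that for every finite $p_0\subseteq p$, $\bigwedge p_0$ is realized in $\str M$, and we want to realize $\bigwedge p_0\wedge\eta$ in $\str M$. Apply finite satisfiability to the \emph{larger} finite conjunction: because $\eta$ uses only $\str M$-parameters and $\eta(\tup a,\tup c)$ holds while $\tup c\in\str M$, the formula $\chi(\tup x)\wedge\eta(\tup x,\tup c)$ — viewed as a single formula in $\tup x$ with parameters $\tup n\in\str N$ and $\tup c\in\str M$ — is satisfied by $\tup a$, and the subformula $\chi(\tup x)$ lies in $p$; since finite satisfiability of $p$ in $\str M$ only controls formulas \emph{from $p$}, I instead observe that $\chi(\tup x)\wedge\eta(\tup x,\tup c)\in p$ is false in general, so the honest argument is: $\tup a\models p|\str M$ means $\tup a$ realizes every $p$-formula with $\str M$-parameters; but $\chi$ has $\str N$-parameters. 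Hence finally I would drop this and instead directly cite Lemma~\ref{lem:extending fs} after proving consistency of $\pi$ only (which is immediate, $(\tup a,\tup c)$ realizes every finite subset), plus the weaker fact that $\pi$ is finitely satisfiable in $\str M$, which follows since $p|\str M\cup\tp(\tup a\tup c/\str M) = \tp(\tup a\tup c/\str M)$ (as $\tup a\models p|\str M$, every $\str M$-formula in $p$ is already in $\tp(\tup a/\str M)\subseteq\tp(\tup a\tup c/\str M)$) and $\tp(\tup a\tup c/\str M)$ is by Lemma~\ref{lem:satisfaction} finitely satisfiable in $\str M$; the $\str N$-parameter formulas of $p$ are handled by: $p$ finitely satisfiable in $\str M$ and invariant, so any finite chunk of them together with the $\str M$-part is realized in $\str M$ by invariance transfer. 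Then Lemma~\ref{lem:extending fs} extends $\pi$ to the desired complete $r\in\St(\str N)$ finitely satisfiable in $\str M$.

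So the skeleton is: (i) define $\pi := p \cup \tp_{\tup x\tup y}(\tup a\tup c/\str M)$; (ii) show $\pi$ is finitely satisfiable in $\str M$ — the one genuine point, handled by combining invariance of $p$ over $\str M$ (Lemma~\ref{lem:fs invariant}) with finite satisfiability of $\tp(\tup a\tup c/\str M)$ in $\str M$ (Lemma~\ref{lem:satisfaction}), noting $p|\str M\subseteq\tp(\tup a/\str M)$; (iii) apply Lemma~\ref{lem:extending fs} to get a complete $r(\tup x,\tup y)\in\St(\str N)$, finitely satisfiable in $\str M$, extending $\pi$ and hence extending both $p(\tup x)$ and $\tp_{\tup x\tup y}(\tup a\tup c/\str M)$. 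The main obstacle is purely bookkeeping about which parameters (from $\str M$ versus from $\str N$) appear where, and making the invariance-transfer argument for the $\str N$-parameter part of $p$ airtight; there is no deep content.
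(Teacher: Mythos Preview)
Your skeleton is exactly the paper's: define $\pi(\tup x,\tup y):=p(\tup x)\cup\tp_{\tup x\tup y}(\tup a\tup c/\str M)$, show $\pi$ is finitely satisfiable in $\str M$, and then apply Lemma~\ref{lem:extending fs}. The gap is in step~(ii): you circle through several attempts and land on an appeal to ``invariance transfer'' that you never make precise, and in fact invariance (Lemma~\ref{lem:fs invariant}) is not the right tool here---it lets you swap $\str N$-parameters of the same $\str M$-type inside $p$, but it does nothing to couple the $\tup x$-variable with the extra $\tup y$-variable coming from $\psi\in\tp(\tup a\tup c/\str M)$. You also briefly write ``$\tup c\in\str M$'', which is false ($\tup c\in\str N^{\tup y}$) and is precisely why the naive plug-in does not work.

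The missing idea is short and decisive: given $\theta(\tup x)\in p$ and $\psi(\tup x,\tup y)\in\tp(\tup a\tup c/\str M)$, form
\[
\theta'(\tup x)\ :=\ \theta(\tup x)\ \wedge\ \exists\tup y\,\psi(\tup x,\tup y).
\]
The formula $\exists\tup y\,\psi(\tup x,\tup y)$ has parameters in $\str M$ and is satisfied by $\tup a$ (witnessed by $\tup c$), so it lies in $\tp(\tup a/\str M)=p|\str M\subseteq p$; hence $\theta'\in p$. Finite satisfiability of $p$ in $\str M$ now yields $\tup a_0\in\str M$ with $\theta'(\tup a_0)$, and since $\str M\prec\str N$ and all parameters of $\exists\tup y\,\psi(\tup a_0,\tup y)$ lie in $\str M$, there is $\tup c_0\in\str M$ with $\psi(\tup a_0,\tup c_0)$. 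Thus $(\tup a_0,\tup c_0)$ realizes $\theta\wedge\psi$ in $\str M$, proving $\pi$ is finitely satisfiable in $\str M$. Quantifying out $\tup y$ is exactly what turns the two-variable problem into a one-variable formula that $p$ controls; none of your attempted routes reaches this step.
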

    \begin{proof}
    By Lemma \ref{lem:extending fs}, it is enough to show that the partial type $\pi(\tup x;\tup y) = p(\tup x)\cup \tp_{\tup x\tup y}(\tup a\tup c/\str M)$ is finitely satisfiable in $\str M$. Take $\theta(\tup x)\in p(\tup x)$ and $\psi(\tup x;\tup y)\in \tp(\tup a\tup c/\str M)$ and we look for a realization of $\theta(\tup x)\wedge \psi(\tup x;\tup y)$ in $\str M$. Consider the formula $\theta'(\tup x) = \theta(\tup x)\wedge (\exists \tup y)\psi(\tup x;\tup y)$. This formula is in $p(\tup x)$, hence as $p$ is finitely satisfiable in $\str M$ it has a realization $\tup a_0$ in $\str M$. Since $\str M\models \theta'(\tup a_0)$ holds, there is $\tup c_0$ in $\str M$ such that $\str M\models \psi(\tup a_0;\tup c_0)$. Hence the pair $(\tup a_0,\tup c_0)$ is a realization of $\theta(\tup x)\wedge \psi(\tup x;\tup y)$ as required.
    \end{proof}


We are now ready to prove Proposition~\ref{prop:fs dichotomy}, which we reformulate below.
\begin{proposition*}
Let $\str M\prec \str N$ be models.
If there are tuples   $\tup a\in\str N^{\tup x}$, $\tup b\in\str N^{\tup y}$ be tuples and a single element $c\in\str N$ such that 
\[
\tup a\ind[\str M]\tup b,\quad \tup ac\nind[\str M]\tup b,\quad \tup a\nind[\str M]\tup bc,\]
then $\str M$ defines large grids.
More precisely, if $\alpha(\tup x,z;\tup  b)$ witnesses $\tup ac\nind[\str M]\tup b$ and $\beta(\tup x;\tup b,c)$ witnesses $\tup a\nind[\str M]\tup bc$ then a boolean combination $\psi$ of instances of  $\alpha$ and $\beta$
defines large grids in $\str M$.
\end{proposition*}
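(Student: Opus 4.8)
The plan is to force the three hypotheses into the shape required by Lemma~\ref{lem:two sequences} and then quote it. Work inside a sufficiently saturated elementary extension $\monster\succ\str N$ (so $\str M\prec\monster$), and set
\[\phi(\tup x,\tup y;z)\ :=\ \alpha(\tup x,z;\tup y)\,\wedge\,\beta(\tup x;\tup y,z),\]
a boolean combination of instances of $\alpha$ and $\beta$; note $\phi(\tup a,\tup b;c)$ holds since $\alpha(\tup a,c;\tup b)$ and $\beta(\tup a;\tup b,c)$ do, by the choice of witnesses. I will produce sequences $(\tup a_i:i<\omega)$ with $\tup a_0=\tup a$ and $(\tup b_j:j<\omega)$ with $\tup b_0=\tup b$ in $\monster$ that are mutually indiscernible over $\str M$ (this suffices, as $\phi$'s parameters lie in $\str M$), with $\neg\phi(\tup a_i,\tup b_0;c)$ for $i>0$ and $\neg\phi(\tup a_0,\tup b_j;c)$ for $j>0$. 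Lemma~\ref{lem:two sequences} then says the formula $\psi(\tup x\tup x',\tup y\tup y';z)=\phi(\tup x,\tup y;z)\wedge\neg\phi(\tup x',\tup y;z)\wedge\neg\phi(\tup x,\tup y';z)$ of that lemma — itself a boolean combination of instances of $\alpha,\beta$ — defines large grids in $\monster$; since ``$\psi$ defines an $n\times n$ grid'' is a first-order sentence, this passes down to $\str M\prec\monster$.

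Building $(\tup a_i:i<\omega)$ is the clean half. Since $\tup a\ind[\str M]\tup b$, the type $\tp(\tup a/\str M\tup b)$ is finitely satisfiable in $\str M$; extend it (Lemma~\ref{lem:extending fs}) to a global type $p^*(\tup x)$ finitely satisfiable in $\str M$. Put $\tup a_0:=\tup a$ and, for $i\geq1$, pick $\tup a_i\models p^*|\str M\tup b c\,\tup a_{<i}$. Then $(\tup a_i:i<\omega)$ is a Morley sequence of $p^*$ over $\str M\tup b$ — because $\tup a_0=\tup a$ realizes $p^*|\str M\tup b=\tp(\tup a/\str M\tup b)$, and each later $\tup a_i$ realizes $p^*$ over a superset of $\str M\tup b\,\tup a_{<i}$ — so it is indiscernible over $\str M\tup b$. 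Moreover $\neg\beta(\tup a_i;\tup b,c)$ for $i\geq1$: the formula $\beta(\tup x;\tup b,c)$ has $\tup x$ as its only free variable, its parameters ($\tup b$, $c$, those in $\str M$) are all fixed, and it has no solution in $\str M$, so it cannot lie in the $\str M$-finitely-satisfiable type $p^*$; hence $\neg\beta(\tup x;\tup b,c)\in p^*$, and as $\tup a_i\models p^*$ over $\str M\tup b c$ we get $\neg\beta(\tup a_i;\tup b,c)$, whence $\neg\phi(\tup a_i,\tup b_0;c)$. So far we have $\phi(\tup a_0,\tup b_0;c)$ and the first one-sided separation.

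The hard half is producing $(\tup b_j:j<\omega)$ together with the mutual indiscernibility. The obstruction is the asymmetry of finite satisfiability over $\str M$: the non-solvability of $\beta(\tup x;\tup b,c)$ in $\str M$ only controls how $\tup a$ can be moved with $\tup b,c$ fixed, so one cannot simply re-run the previous paragraph with the roles of $\tup a$ and $\tup b$ exchanged. The intended route is to combine $p^*$ with Lemma~\ref{lem:extending fs'}, applied to the (finitely satisfiable in $\str M$) type of the tuple $\bigl((\tup a_i)_{i<\omega},\tup b\bigr)$ over $\str M$ together with the additional element $c$, so as to keep the sequence $(\tup a_i)$ generic over an enlarged parameter set while appending a fresh copy $\tup b_1,\tup b_2,\ldots$ of $\tup b$; the non-solvability of $\alpha(\tup x,z;\tup b)$ in $\str M$ then forces, on these generic realizations, $\neg\alpha(\tup a_0,c;\tup b_j)$ and hence $\neg\phi(\tup a_0,\tup b_j;c)$ for $j>0$. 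Finally one uses the product Ramsey extraction (Proposition~\ref{prop:ramsey 2}) to replace the two sequences by genuinely mutually indiscernible ones based on them, and a compactness argument to obtain a single element replacing $c$ for the extracted sequences; since ``being based on'' reflects positive statements and each one-sided separation is, on its side, expressed by a single instance of $\alpha$ or $\beta$ while the other side is pinned down by the indiscernibility just obtained, the configuration required by Lemma~\ref{lem:two sequences} survives. I expect this last step — simultaneously securing mutual indiscernibility, both one-sided separations, and a single generic point — to be the main difficulty; with it in hand the proof concludes via Lemma~\ref{lem:two sequences} as above.
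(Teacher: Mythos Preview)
Your first half is correct and matches the paper: extend $\tp(\tup a/\str M\tup b)$ to a global $\str M$-finitely-satisfiable type $p$, build a Morley sequence $(\tup a_i)$, and use the $\beta$-witness to get $\neg\beta(\tup a_i;\tup b,c)$ for $i>0$.

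The second half, however, is genuinely incomplete, and the route you sketch is not the right one. The problem with Ramsey extraction (Proposition~\ref{prop:ramsey 2}) followed by ``a compactness argument to obtain a single element replacing $c$'' is that the extracted sequences are only \emph{based on} the originals, which reflects arbitrary formulas but loses the specific anchor points $\tup a_0,\tup b_0$ and the element $c$; there is no mechanism guaranteeing that a single $c'$ exists satisfying $\phi$ at the $(0,0)$-spot and failing it along both axes for the \emph{extracted} sequences. Your invocation of Lemma~\ref{lem:extending fs'} for ``the type of $((\tup a_i)_{i<\omega},\tup b)$ over $\str M$'' is also not doing work: any type over $\str M$ is trivially finitely satisfiable in $\str M$, and this gives no control over $\str M\tup a c$.

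The paper's key manoeuvre is to reverse the order of construction. First pick $q(\tup y)$ a global $\str M$-finitely-satisfiable extension of $\tp(\tup b/\str M)$ and build a Morley sequence $J=(\tup b_j:j\in\mathbb Z_-)$ of $q$ over $\str M$. Then use Lemma~\ref{lem:extending fs'} to produce a global $\str M$-finitely-satisfiable type $r(\tup x,\tup y,z)$ extending $p(\tup x)\cup\tp(\tup a\tup bc/\str M)$, and realize $(\tup a',\tup b',c')\models r\,|\,\str M J$. Finally build $I=(\tup a_i:i>0)$ as a Morley sequence of $p$ over $\str M J\tup a'\tup b'c'$. Now $J+(\tup b')$ and $(\tup a')+I$ are mutually indiscernible directly, with no Ramsey step. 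For the separation on the $\tup b$-side one uses the $\alpha$-witness, not the $\beta$-witness: since $\tp(\tup b_j/\str M)=\tp(\tup b/\str M)$, the formula $\alpha(\tup x,z;\tup b_j)$ has no solution $(\tup x,z)$ in $\str M$; and since $r$ is finitely satisfiable in $\str M$, so is $\tp(\tup a'c'/\str M\tup b_j)$, whence $\neg\alpha(\tup a',c';\tup b_j)$. The point you missed is that $\alpha$ is tailored to let the \emph{pair} $(\tup a,c)$ vary by finite satisfiability while $\tup b$ stays as parameter --- exactly what is needed once $\tup b_j$ is already in place and $(\tup a',c')$ is realized generically over it.
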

    \begin{proof}
    Let $p(\tup x)\in \St_{\tup x}(\str N)$ be a type finitely satisfiable in $\str M$ and extending $\tp(\tup a/\str M\tup b)$
    (cf. Lem.~\ref{lem:extending fs}).
    Let also $q(\tup y)\in\St_{\tup y}(\str N)$ be a type finitely satisfiable in $\str M$ extending $\tp(\tup b/\str M)$. 
    Now, let $p(\tup x)\otimes q(\tup y)\in \St_{\tup x\tup y}(\str N)$ be a complete type finitely satisfiable in $\str M$ extending $\tp(\tup a\tup b/\str M)$ (which is finitely satisfiable in $\str M$ by Lem.~\ref{lem:satisfaction}).
    Finally, let $r(\tup x,\tup y,z)$ be a type finitely satisfiable in $\str M$ extending $(p(\tup x)\otimes q(\tup y))\cup \tp(\tup a\tup b  c/\str M)$,
    obtained from Lemma~\ref{lem:extending fs'}.
    Reassuming, $r(\tup x,\tup y,z)$ is a type finitely satisfiable in $\str M$ extending $\tp(\tup a/\str M\tup b)\cup \tp(\tup a\tup bc/\str M)$,
    and $p(\tup x)$ and $q(\tup y)$ are its restrictions to $\tup x$ and $\tup y$, respectively.

    Let $J=(\tup b_j:j\in \mathbb Z_-)$ be a Morley sequence of $q(\tup y)$ over $\str M$, where $\mathbb Z_-$ denotes the negative integers. Let $\tup a'\tup b'c' \models r | \str MJ$. Then let $I=(\tup a_i:0<i<\omega)$ be a Morley sequence of $p$ over $J\tup a' \tup b' c'$.
    
    We then have that the sequences $J+(\tup b')$ and $(\tup a')+I$ are mutually indiscernible (the first one is a Morley sequence of $q$ over $\str M$ and the second one is a Morley sequence of $p$ over the first one). 
    
By assumption there are formulas $\alpha(\tup x,z;\tup y)$ and $\beta(\tup x;\tup y,z)$ with parameters from $\str M$ such that:
\begin{itemize}
    \item $\tup a\tup bc$ satisfies 
    $\alpha\land \beta$;
    \item $\alpha(\tup x, y;\tup b)$ is not safisfiable in $\str M$;
    \item $\beta(\tup x; \tup b,c)$ is not safisfiable in $\str M$.
\end{itemize}
The same holds true for $\tup a\tup bc$ replaced by $\tup a'\tup b'c$ as they have equal types over $\str M$.
Then:
\begin{itemize}
    \item $(\tup a',\tup b',c')$ satisfies $\alpha\land\beta$,
    \item $\neg\alpha(\tup a_i,\tup b',c)$ holds for all $0<i<\omega$,
    \item $\neg \beta(\tup a',\tup b_j,c')$ holds for all $-\omega<j<0$.
\end{itemize}
Now Lemma~\ref{lem:two sequences} applied to $\alpha\land\beta$ yields the conclusion.
    \end{proof}

\bibliographystyle{alpha}
\bibliography{ref}

\end{document}